\documentclass[a4paper,UKenglish,cleveref, autoref, thm-restate, nolipics]{lipics-v2021}
\usepackage{amsmath}
\usepackage{amssymb}
\usepackage{amsthm}
\usepackage{mathpartir}
\usepackage[T1]{fontenc}
\usepackage[utf8]{inputenc}
\usepackage[mathscr]{eucal}
\usepackage{bm}
\usepackage[all,cmtip,2cell]{xy}
\usepackage{enumerate}
\usepackage{graphicx}
\usepackage{stmaryrd}
\usepackage{rotating}
\usepackage{enumitem}
\usepackage{algorithm}
\usepackage{algorithmic}
\usepackage{hyperref}

\allowdisplaybreaks

\usepackage{makecell}

\newcounter{para}[section]

\usepackage{tikz}
\usetikzlibrary{arrows,snakes,automata,backgrounds,petri, positioning,calc,cd, automata}
\usetikzlibrary{shapes.geometric,shapes.misc,matrix,arrows.meta,decorations.markings}
\tikzset{x=1em, y=1.5ex, baseline=-0.5ex}
\tikzset{ihbase/.style={inner sep=0,circle,draw,fill=lightgray,minimum size=0.4em,node contents={}}}
\tikzset{ihblack/.style={ihbase,fill=black}}
\tikzset{ihwhite/.style={ihbase,fill=white}}
\tikzset{mat/.style={draw,fill=white,rectangle,node font=\scriptsize}}
\tikzset{ha/.style={mat,rounded rectangle,rounded rectangle left arc=none}}
\tikzset{haop/.style={mat,rounded rectangle,rounded rectangle right arc=none}}
\tikzset{blackha/.style={mat,rounded rectangle,rounded rectangle left arc=none,font=\color{white},fill=black}}
\tikzset{blackhaop/.style={mat,rounded rectangle,rounded rectangle right arc=none,font=\color{white},fill=black}}
\tikzset{anti/.style={inner sep=0,isosceles triangle,fill=black,draw=black, minimum width=0.75em, node contents={}}}
\tikzset{antiop/.style={anti,shape border rotate=180}}
\tikzset{antisq/.style={inner sep=0,rectangle,fill=black, minimum height=1em, minimum width=0.6em, node contents={}}}
\tikzset{count/.style={above,inner ysep=0.15em,font=\scriptsize}}
\tikzset{axiom/.style={above,font=\small}}
\tikzset{dir/.style={-Latex}}
\tikzset{st/.style={decoration={markings,
    mark={at position 0.5 with {\draw (0, 2pt) to (0, -2pt);}}},
    postaction=decorate}}

\newcommand{\objr}{\blacktriangleright}
\newcommand{\objl}{\blacktriangleleft}
\newcommand{\PMet}{\mathsf{PMet}}

\newcommand{\idright}{
\tikzset{x=1em, y=2.1ex}
\begin{tikzpicture}
	\begin{pgfonlayer}{nodelayer}
		\node [style=none] (0) at (-1.5, 0) {};
		\node [style=none] (1) at (0, 0) {};
		\node [style=none] (2) at (1.25, 0) {};
	\end{pgfonlayer}
	\begin{pgfonlayer}{edgelayer}
		\draw [->] (0.center) to (1.center);
		\draw (1.center) to (2.center);
	\end{pgfonlayer}
\end{tikzpicture}
}
\tikzset{x=1em, y=1.5ex}
}
\newcommand{\alphabet}{\Sigma}

\newcommand{\sem}[1]{\left\llbracket{#1}\right\rrbracket}

\newcommand{\disteq}[1]{\equiv_{#1}}

\newcommand{\arrowright}{
\tikzset{x=1em, y=2.1ex}
}
\tikzset{x=1em, y=1.5ex}
}
\newcommand{\arrowleft}{
\tikzset{x=1em, y=2.1ex}
\begin{tikzpicture}
	\begin{pgfonlayer}{nodelayer}
		\node [style=none] (1) at (-1.25, 0) {};
		\node [style=none] (2) at (0, 0) {};
		\node [style=none] (3) at (1.25, 0) {};
	\end{pgfonlayer}
	\begin{pgfonlayer}{edgelayer}
		\draw (1.center) to (2.center);
		\draw [->] (3.center) to (2.center);
	\end{pgfonlayer}
\end{tikzpicture}
}
\tikzset{x=1em, y=1.5ex}
}
\newcommand{\fv}{\mathsf{fv}}

\newcommand{\Qp}{\mathbb{Q}_{\geq 0}}

\newcommand{\genericcounitn}[2]{
  \tikz \draw (0, 0) -- node[count] {#2} (1, 0) node[ihbase,#1];
}

\newcommand{\diagbox}[3]{
\begin{tikzpicture}
	\begin{pgfonlayer}{nodelayer}
		\node [style=none] (6) at (1.75, 0) {};
		\node [style=none] (7) at (0.75, 0) {};
		\node [style=basic rounded box] (10) at (0, 0) {$#1$};
		\node [style=none] (11) at (1.5, 0.5) {\scriptsize $#3$};
		\node [style=none] (12) at (-1.5, 0.5) {\scriptsize $#2$};
		\node [style=none] (13) at (-1.75, 0) {};
		\node [style=none] (14) at (-0.75, 0) {};
	\end{pgfonlayer}
	\begin{pgfonlayer}{edgelayer}
		\draw [in=180, out=0, looseness=1.25] (7.center) to (6.center);
		\draw [in=180, out=0, looseness=1.25] (13.center) to (14.center);
	\end{pgfonlayer}
\end{tikzpicture}
}

\newcommand{\Bcomult}{
\tikzset{x=1em, y=2.1ex}
\InputIfFileExists{lr-copy.tikz}{}{\input{./tikz/lr-copy.tikz}}
\tikzset{x=1em, y=1.5ex}
}

\newcommand{\Bcounit}{
\tikzset{x=1em, y=2.1ex}
\begin{tikzpicture}
	\begin{pgfonlayer}{nodelayer}
		\node [style=black] (37) at (0.75, 0) {};
		\node [style=none] (43) at (0.25, 0) {};
		\node [style=none] (44) at (-0.5, 0) {};
	\end{pgfonlayer}
	\begin{pgfonlayer}{edgelayer}
		\draw (43.center) to (37);
		\draw [->] (44.center) to (43.center);
	\end{pgfonlayer}
\end{tikzpicture}
}
\tikzset{x=1em, y=1.5ex}
}

\newcommand{\Bmult}{
\tikzset{x=1em, y=2.1ex}
\InputIfFileExists{lr-merge.tikz}{}{\input{./tikz/lr-merge.tikz}}
\tikzset{x=1em, y=1.5ex}
}

\newcommand{\Bunit}{
\tikzset{x=1em, y=2.1ex}
\begin{tikzpicture}
	\begin{pgfonlayer}{nodelayer}
		\node [style=black] (37) at (-0.5, 0) {};
		\node [style=none] (43) at (0.25, 0) {};
		\node [style=none] (44) at (0.75, 0) {};
	\end{pgfonlayer}
	\begin{pgfonlayer}{edgelayer}
		\draw [->] (37) to (43.center);
		\draw (44.center) to (43.center);
	\end{pgfonlayer}
\end{tikzpicture}
}
\tikzset{x=1em, y=1.5ex}
}
\newcommand{\Bunitn}[1]{\Bunit^{\!\!\!\! #1}}

\tikzset{x=1em, y=1.5ex}
}

\tikzset{x=1em, y=1.5ex}
}
\newcommand{\Wcounitn}[1]{\genericcounitn{ihwhite}}

\newcommand\idzero{
\tikzset{x=1em, y=2.1ex}
\InputIfFileExists{empty-diag.tikz}{}{\input{./tikz/empty-diag.tikz}}
\tikzset{x=1em, y=1.5ex}
} 

\tikzset{x=1em, y=1.5ex}
}

\newcommand\scalar[1]{
  \tikz {
    \node[ha] (ha) {$#1$};
    \draw (ha.west) -- ++(-0.75, 0);
    \draw (ha.east) -- ++(0.75, 0);
  }
}

\tikzset{x=1em, y=1.5ex}
}

\pgfdeclarelayer{edgelayer}
\pgfdeclarelayer{nodelayer}
\pgfsetlayers{edgelayer,nodelayer,main}

\definecolor{light-gray}{gray}{.5}
\tikzstyle{right arrow} = [->]
\tikzstyle{none}=[inner sep=0pt]
\tikzstyle{plain}=[inner sep=0pt]
\tikzstyle{black}=[circle, draw=black, fill=black, inner sep=0pt, minimum size=4pt]
\tikzstyle{black-faded}=[circle, draw=light-gray, fill=light-gray, inner sep=0pt, minimum size=4pt]
\tikzstyle{white}=[circle, draw=black, fill=white, inner sep=0pt, minimum size=4.5pt]
\tikzstyle{white-faded}=[circle, draw=light-gray, fill=white, inner sep=0pt, minimum size=4.5pt]
\tikzstyle{delay}=[fill=black, regular polygon, regular polygon sides=3,rotate=-90, scale=.55]
\tikzstyle{delay-op}=[fill=black, regular polygon, regular polygon sides=3,rotate=90, scale=.55]
\tikzstyle{reg}=[draw, fill=white, rounded rectangle, rounded rectangle left arc=none, minimum height=1.2em, minimum width=1.4em, node font={\scriptsize}]
\tikzstyle{coreg}=[draw, fill=white, rounded rectangle, rounded rectangle right arc=none, minimum height=1.2em, minimum width=1.4em, node font={\scriptsize}]
\tikzstyle{basicb}=[draw, fill=white, rectangle, rounded corners, minimum height=1.6em, minimum width=1.4em]
\tikzstyle{basic box}=[draw, fill=white, rectangle, rounded corners, minimum height=1.6em, minimum width=1.4em]
\tikzstyle{smallb}=[draw, fill=white, rectangle, rounded corners, minimum height=1.2em, minimum width=1.4em, node font={\scriptsize}]
\tikzstyle{rcoreg}=[draw=red, fill=white, rounded rectangle, rounded rectangle right arc=none, minimum height=1.2em, minimum width=1.4em, node font={\scriptsize}]
\tikzstyle{regb}=[draw, fill=black, rounded rectangle, rounded rectangle left arc=none, minimum height=1.2em, minimum width=1.4em, node font={\scriptsize}]
\tikzstyle{regbw}=[draw, left color=black, right color=white, middle color=white, rounded rectangle, rounded rectangle left arc=none, minimum height=1.2em, minimum width=1.4em, node font={\scriptsize}]
\tikzstyle{regwb}=[draw, left color=white, right color=black, middle color=white, rounded rectangle, rounded rectangle left arc=none, minimum height=1.2em, minimum width=1.4em, node font={\scriptsize}]
\tikzstyle{coregb}=[draw, fill=black, rounded rectangle, rounded rectangle right arc=none, minimum height=1.2em, minimum width=1.4em, node font={\scriptsize}]
\tikzstyle{coregbw}=[draw, left color=black, right color=white, middle color=white, rounded rectangle, rounded rectangle right arc=none, minimum height=1.2em, minimum width=1.4em, node font={\scriptsize}]
\tikzstyle{coregwb}=[draw, left color=white, right color=black, middle color=white, rounded rectangle, rounded rectangle right arc=none, minimum height=1.2em, minimum width=1.4em, node font={\scriptsize}]
\tikzstyle{rn}=[circle, draw=red, fill=red, inner sep=0pt, minimum size=4pt]
\tikzstyle{wrn}=[circle, draw=red, fill=white, inner sep=0pt, minimum size=4pt]
\tikzstyle{place}=[circle, draw=black, fill=white, inner sep=0pt, minimum size=9pt]
\tikzstyle{act}=[circle, draw=black, fill=white, inner sep=0pt, minimum size=4.5pt]
\tikzstyle{coact}=[draw, fill=white, rounded rectangle, rounded rectangle right arc=none, minimum height=.7em, minimum width=.9em, node font={\scriptsize}]
\tikzstyle{basic rounded box}=[draw, fill=white, rectangle, rounded corners, minimum height=1.6em, minimum width=1.4em]
\tikzstyle{small rounded box}=[draw, fill=white, rectangle, rounded corners, minimum height=1.2em, minimum width=1.4em, node font={\scriptsize}]
\tikzset{
BWmatrix/.pic={
    \coordinate (center) at (0,0);
    \filldraw[fill=white, draw=black, line width=1pt] (.5,0) 
        [rounded corners=14pt] -- (1,0) 
        [rounded corners=14pt] -- (1,1)
        [rounded corners=0pt] -- (.5,1) 
        [rounded corners=0pt] -- cycle;
    \filldraw[fill=black, draw=black, line width=1pt] (0,0) 
        -- (.5,0) 
        -- (.5,1)
        -- (0,1) 
        -- cycle;
   },
pics/BWmatrix/.default=0.2
}

\tikzstyle{pl}=[circle,thick,draw=black!75,fill=white,minimum size=17pt]
\tikzstyle{port}=[circle, fill,inner sep=1.2pt]
\tikzstyle{transition}=[rectangle,thick,draw=black!75,
  			  fill=black!20,minimum size=7pt]

\tikzstyle{arrow}=[->]

\newcommand{\dbox}[3]{
\begin{tikzpicture}
	\begin{pgfonlayer}{nodelayer}
		\node [style=none] (6) at (1.5, 0) {};
		\node [style=basicb] (10) at (0, 0) {$#1$};
		\node [style=none] (11) at (-1.25, 0) {};
		\node [style=none] (13) at (-2, 0) {};
		\node [style=none] (14) at (2, 0) {};
		\node [style=none] (15) at (1.75, 0.75) {\scriptsize $#3$};
		\node [style=none] (16) at (-1.75, 0.75) {\scriptsize $#2$};
	\end{pgfonlayer}
	\begin{pgfonlayer}{edgelayer}
		\draw [->] (13.center) to (11.center);
		\draw (6.center) to (14.center);
		\draw (11.center) to (10);
		\draw [->] (10) to (6.center);
	\end{pgfonlayer}
\end{tikzpicture}
}

\newcommand{\traceaction}[5]{
\begin{tikzpicture}
	\begin{pgfonlayer}{nodelayer}
		\node [style=none] (0) at (-0.75, 1.25) {};
		\node [style=none] (1) at (0.75, 1) {};
		\node [style=none] (2) at (-0.25, 1.75) {};
		\node [style=none] (3) at (-0.75, -0.25) {};
		\node [style=none] (4) at (0.75, -0.25) {};
		\node [style=none] (5) at (-0.25, -0.75) {};
		\node [style=none] (6) at (-0.75, 1) {};
		\node [style=none] (7) at (0.75, 1.25) {};
		\node [style=none] (8) at (2.5, 0) {};
		\node [style=none] (9) at (0.75, 0) {};
		\node [style=none] (10) at (0.25, -0.75) {};
		\node [style=none] (11) at (0.25, 1.75) {};
		\node [style=none] (12) at (0, 0.5) {$#1$};
		\node [style=none] (13) at (4, 0.5) {\scriptsize $#3$};
		\node [style=none] (14) at (2.5, 2.5) {};
		\node [style=none] (15) at (-1.5, 2.5) {};
		\node [style=none] (16) at (-2.75, 0.5) {\scriptsize $#2$};
		\node [style=none] (17) at (-1.75, 0) {};
		\node [style=none] (18) at (-0.75, 0) {};
		\node [style=none] (20) at (2.5, 1) {};
		\node [style=none] (21) at (3.5, 2.75) {\scriptsize $#4$};
		\node [style=none] (22) at (-1.5, 1) {};
		\node [style=reg] (23) at (1.75, 1) {$#5$};
		\node [style=none] (24) at (4.25, 0) {};
		\node [style=none] (25) at (-3, 0) {};
	\end{pgfonlayer}
	\begin{pgfonlayer}{edgelayer}
		\draw [semithick, in=0, out=-90] (4.center) to (10.center);
		\draw [semithick, in=-90, out=180] (5.center) to (3.center);
		\draw [semithick, in=180, out=90] (0.center) to (2.center);
		\draw [semithick, in=90, out=0] (11.center) to (7.center);
		\draw [semithick] (2.center) to (11.center);
		\draw [semithick] (7.center) to (4.center);
		\draw [semithick] (10.center) to (5.center);
		\draw [semithick] (3.center) to (0.center);
		\draw (15.center) to (14.center);
		\draw (6.center) to (22.center);
		\draw (1.center) to (20.center);
		\draw [->, bend right=90, looseness=2.25] (20.center) to (14.center);
		\draw [->, bend right=90, looseness=2.25] (15.center) to (22.center);
		\draw (8.center) to (24.center);
		\draw [->] (9.center) to (8.center);
		\draw (17.center) to (18.center);
		\draw [->] (25.center) to (17.center);
	\end{pgfonlayer}
\end{tikzpicture}}

\newcommand{\myeq}[1]{\mathrel{\overset{\makebox[0pt]{\mbox{\normalfont\tiny\sffamily (#1)}}}{=}}}
\newcommand{\myleq}[1]{\mathrel{\overset{\makebox[0pt]{\mbox{\normalfont\tiny\sffamily (#1)}}}{\leq}}}

\newcommand{\cat}[1]{\mathcal{#1}}

\newcommand{\N}{\mathbb{N}}

\newcommand{\R}{\mathbb{R}}

\newcommand{\freeP}[1]{\mathsf{T}_{\scriptscriptstyle #1}} 

\newcommand{\Obj}{\mathcal{O}}
\newcommand{\Morph}{\mathcal{M}}

\newcommand{\id}{\mathrm{id}}

\newcommand{\from}{\mathrel{:}\,}

\newcommand{\proptimes}{\oplus}
\newcommand{\adjto}{\,\lower1pt\hbox{$\dashv$}\,}

\newcommand{\Int}[1]{\mathbf{Int}(#1)}

\newcommand{\Pfin}{P_{\text{fin}}}
\newcommand{\Tr}{\mathsf{Tr}}

\newcommand{\Syn}{\mathsf{Syn}}

\newcommand{\SMCeq}{=}

\newcommand{\Signature}{\mathcal{S}}

\makeatletter
\def\moverlay{\mathpalette\mov@rlay}
\def\mov@rlay#1#2{\leavevmode\vtop{%
\baselineskip\z@skip \lineskiplimit-\maxdimen
\ialign{\hfil$#1##$\hfil\cr#2\crcr}}}
\makeatother

\newcommand\twarr[2]{%
\mathrel{\mathop{\moverlay{\scriptstyle\xrightarrow{\,#1\,}\cr{\lower.2em\hbox{$\scriptstyle{}_{#2}$}}}}}}
\newcommand\twarrw[2]{%
\mathrel{\mathop{\moverlay{\scriptstyle\Longrightarrow\cr{\lower-.6em\hbox{$\scriptstyle{}_{#1}$}}
\cr{\lower.3em\hbox{$\scriptstyle{}_{#2}$}}}}}}

\newcommand{\dtransw}[2]{\raise1pt\hbox{$\;\twarrw{#1}{#2}\;$}}

\newcommand{\diagregexp}[1]{
\begin{tikzpicture}
	\begin{pgfonlayer}{nodelayer}
		\node [style=none] (0) at (1.5, 0) {};
		\node [style=rcoreg] (1) at (0, 0) {{\color{red} $e$}};
	\end{pgfonlayer}
	\begin{pgfonlayer}{edgelayer}
		\draw [red] (1) to (0.center);
	\end{pgfonlayer}
\end{tikzpicture}}

\usepackage{todonotes}
\usepackage{caption}
\usepackage{interval}
\newcommand{\inl}{\mathsf{inl}}
\newcommand{\inr}{\mathsf{inr}}
\newcommand{\lc}{\langle}
\newcommand{\rc}{\rangle}
\newcommand{\Expr}{\mathsf{Exp}}
\newcommand{\RegBeh}{\mathsf{RegBeh}}
\newcommand*\circlednum[1]{\tikz[baseline=(char.base)]{
            \node[shape=circle,draw,inner sep=1pt] (char) {#1};}}
\newcommand{\tr}[1]{\mathrel{\!
    \raisebox{-2pt}{
        \(\xrightarrow{#1}\)
    }\!
}}

\hideLIPIcs  

\title{A Diagrammatic Axiomatisation of Behavioural Distance of Nondeterministic Processes} 

\titlerunning{Behavioural Distance of Nondeterministic Processes, Diagrammatically} 

\author{Wojciech Różowski}{Lean FRO, United Kingdom\and \url{https://wkrozowski.github.io}}{wojciech@lean-fro.org}{https://orcid.org/0000-0002-8241-7277}{}
\author{Robin Piedeleu}{Department of Computer Science, University College London, United Kingdom\and \url{https://piedeleu.com}}{r.piedeleu@ucl.ac.uk}{https://orcid.org/0000-0002-3945-2704}{}
\author{Alexandra Silva}{Department of Computer Science, Cornell University, United States of America\and \url{https://alexandrasilva.org}}{alexandra.silva@cornell.edu}{https://orcid.org/0000-0001-5014-9784}{}
\author{Fabio Zanasi}{Department of Computer Science, University College London, United Kingdom\and \url{https://fzanasi.github.io}}{f.zanasi@ucl.ac.uk}{https://orcid.org/0000-0001-6457-1345}{}
\authorrunning{Różowski et al.} 

\Copyright{Wojciech Różowski, Robin Piedleu, Alexandra Silva, and Fabio Zanasi} 

\ccsdesc[500]{Theory of computation}
\keywords{behavioural distance, quantitative equational reasoning, string diagrams} 

\category{} 

\relatedversion{} 

\funding{ This work was partially supported by ERC grant Autoprobe (no. 101002697; Różowski, Piedeleu and Silva) and ARIA Safeguarded AI programme (Zanasi).}

\acknowledgements{The authors are grateful to Filippo Bonchi, Florence Clerc, Clemens Kupke, and Ralph Sarkis for valuable discussions that refined the paper’s presentation.}

\nolinenumbers 

\EventEditors{John Q. Open and Joan R. Access}
\EventNoEds{2}
\EventLongTitle{42nd Conference on Very Important Topics (CVIT 2016)}
\EventShortTitle{CVIT 2016}
\EventAcronym{CVIT}
\EventYear{2016}
\EventDate{December 24--27, 2016}
\EventLocation{Little Whinging, United Kingdom}
\EventLogo{}
\SeriesVolume{42}
\ArticleNo{23}

\begin{document}

\maketitle

\begin{abstract}
Behavioural distances provide a quantitative approach to comparing the states of transition systems, moving beyond traditional Boolean notions of equivalence. In this paper, we develop a sound and complete axiomatisation of behavioural distance for nondeterministic processes using Milner's charts, a model that generalises finite-state automata by incorporating variable outputs. Charts provide a compelling setting for studying behavioural distances because they shift the focus from language equivalence to bisimilarity. Their axiomatic study lays the groundwork for quantitative analysis of more expressive models, such as weighted transition systems. 

To formalise this approach, we adopt string diagrams as our syntax of choice. String diagrams closely mirror the graphical structure of charts, while providing a rigorous formalism that supports inductive reasoning and compositional semantics. Unlike traditional algebraic syntaxes, which require additional mechanisms such as binders and substitution, string diagrams offer a variable-free representation where recursion naturally decomposes into simpler components. This makes them well-suited for reasoning about behavioural distances and aligns with broader efforts to axiomatise automata-theoretic equivalences through a unified diagrammatic framework.
\end{abstract}
\section{Introduction}
In Theoretical Computer Science, it is customary to model computations as transition systems. To facilitate formal analysis of such models, considerable effort has been devoted to developing expressive syntaxes 
and compositional reasoning techniques. Notable examples include Kleene algebra~\cite{Kozen:1994:Completeness} and its extensions~\cite{katb, netkat,cnetkat}, as well as a vast body of work on process calculi~\cite{Milner:1984:Complete,Bergstra:2001:Handbook,spectrum}. A particularly important feature of such approaches is the presence of an \emph{axiomatisation}---a set of equations that relate syntactic terms that represent semantically equivalent behaviours. When an axiomatisation is available, one may reason about model behaviour via syntactic manipulation of terms, which is particularly well-suited for implementation and automation. 

In many contexts, especially when dealing with probabilistic or quantitative models, focussing on exact equivalence of behaviours is  too restrictive. Instead, it is often more meaningful to measure how far apart the behaviours of two terms are. This has motivated the development of \emph{behavioural distances}, which endow the state-spaces of transition systems with (pseudo)metric structures quantifying the dissimilarity of states~\cite{Breugel:2001:Towards,Desharnais:2004:Metrics,Baldan:2018:Coalgebraic}, and  \emph{quantitative equational theories}~\cite{Mardare:2016:Quantitative,Mio:2024:Universal}, which replace equational judgements $s = t$ between terms with quantitative ones of the form $s =_{\varepsilon} t$ capturing ``the distance between $s$ and $t$ is at most $\varepsilon$''. 

Behavioural distances have mostly been studied for probabilistic systems~\cite{Desharnais:2004:Metrics,Breugel:2001:Towards}. More recently, there has been growing interest in understanding distances in a general categorical framework and how this would yield coarser notions of equivalence for a variety of systems~\cite{Baldan:2018:Coalgebraic}. The instantiations of that framework, and in particular axiomatisations of distances, are largely unexplored, with the exception of one for deterministic automata~\cite{Rozowski:2024:Complete}. 

In this paper, we take a step further and investigate axiomatisations of behavioural distance for a {\em nondeterministic} model of computation, known as \emph{charts}~\cite{Milner:1984:Complete}. Originally introduced by Milner, charts extend \emph{finite-state nondeterministic automata} (NFA) by replacing the notion of acceptance with variable outputs. Intuitively, the distance between two charts can be quantified by, roughly, the number of steps after which their behaviours disagree---\emph{i.e.} are no longer bisimilar. This seemingly small generalisation from deterministic finite automata provides a range of challenges, stemming from the fact that the presence of non-determinism moves the semantics from language to bisimilarity, while at the same time representing a crucial step towards weighted transition systems~\cite{Larsen:2011:Metrics}, where the general theory of behavioral distances and axiomatisations thereof is relatively underexplored. 


The central contribution of this paper is an inference system for reasoning about behavioural distances of behaviours of Milner's charts. We demonstrate its \emph{soundness}~(\Cref{thm:soundness}) and \emph{completeness}~(\Cref{thm:completeness}). On the way, we gather several contributions of independent interest. First, we instantiate the abstract framework of behavioural distances in the concrete case of charts. We organise such behaviours as a symmetric monoidal category, in which they may be composed \emph{sequentially} and \emph{in parallel}. We do so relying on rich structures associated with charts, such as Conway Theories~\cite{Bloom:1993:Iteration,Esik:1999:Group}. Second, as one of the steps in the soundness argument, we give a concrete characterisation of behavioural distance between charts via Hennessy and Milner's stratification of bisimilarity~\cite{hennessy:1985:algebraic}. Finally, the completeness argument makes use of tools from fixpoint theory to simplify the calculation of behavioural distance to the point it can be mimicked via syntactic manipulation.

The syntax and equations of our complete axiomatic theory are given in terms of \emph{string diagrams}, the two-dimensional language of monoidal categories~\cite{Selinger_2010,piedeleu2023introduction}. The pictorial representation of string diagrams provides an intuitive understanding of how information flows and is exchanged between components within a system. For this reason, they have been increasingly popular as a formal language for computations and processes in areas such as quantum theory~\cite{Coecke:2008:Interacting}, concurrency~\cite{Bonchi:2019:Diagrammatic}, probabilistic programming~\cite{Piedeleu:2024:Complete}, and digital circuits~\cite{Ghica:2022:Full}.  
\begin{figure}
\begin{align*}
\scalebox{0.8}{

\tikzset{x=1em, y=2.1ex}
\InputIfFileExists{ex-intro.tikz}{}{\input{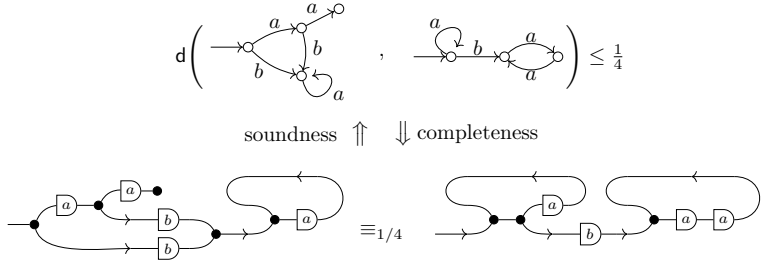}}
\tikzset{x=1em, y=1.5ex}

}
\end{align*}
\caption{Two charts at distance $\frac{1}{4}$ and their corresponding representations as string diagrams. In Milner's syntax, the chart on the left can be presented as $a.(a.0 + b.\mu v_1.a.v_1)+b.\mu v_1.a.v_1$, while the chart on the right corresponds to $\mu v_2. (a.v_2 + b.\mu v_1. a.a.v_1)$. }
\label{fig:intro}
\end{figure}
There are several reasons to favour string diagrams as our syntax of choice. 
First, they closely resemble the usual graphical representation of the transition structure of charts, while constituting a formal syntax that supports inductive reasoning and to which we can assign semantics formally.
Moreover, as Milner observed~\cite{Milner:1984:Complete}, the standard algebraic syntax of regular expressions is not expressive enough to capture all chart behaviour~\cite{Grabmayer:2022:Milner}. His solution introduced a more complex syntax with binders and names,  later studied in the process algebra community for various models, including probabilistic~\cite{Stark:2000:Complete} and quantitative~\cite{Jensen:2020:Complete} ones. In contrast, string diagrams offer a variable-free approach, eliminating the need to define substitution and recursion as a primitive operation (the latter is decomposed into simpler components). 
Finally, using string diagrams aligns our work with a broader research programme aimed at axiomatising various notions of equivalence in automata theory through a unified diagrammatic syntax~\cite{piedeleu2023finite,antoinecsl2025}. 



\noindent
\textbf{Outline. } In \Cref{sec:operations}, we introduce charts, as well as their associated notions of behavioural equivalence and distances. Then, in \Cref{sec:syntax}, we introduce the syntax of our diagrammatic calculus, for which we construct the semantics in \Cref{sec:semantics}. Next, in \Cref{sec:axioms} we present a (quantitative) equational inference system for reasoning about distances of the denotations of the terms of our calculus; we also prove its soundness and study one example in more detail. \Cref{sec:completeness} contains the main technical result of the paper, namely completeness for the proposed behavioural distance between charts. We wrap up in \Cref{sec:discussion} where we review related literature, and sketch directions for future work.  
 
\section{Preliminaries}\label{sec:operations}
In this section, we briefly review Milner's charts~\cite{Milner:1984:Complete}, whose behaviours are the central semantic object of this paper. Then, we instantiate the abstract framework of coalgebraic behavioural metrics~\cite{Baldan:2018:Coalgebraic} to the concrete case of charts. 

\noindent
\textbf{(Pre)charts and their algebraic operations. } Fix a set $V=\{v_1, v_2, \dots\}$ of \emph{variables} and $\alphabet$ of \emph{letters} respectively. A prechart is a triple $(Q,E,D)$, where $Q$ is a set of states, $D \subseteq Q \times \Sigma \times Q$ a finite labelled transition relation and $E \subseteq Q \times V$ is a finite output relation. Precharts can be thought as a generalisation of nondeterministic automata, where instead of acceptance, we deal with the notion of outputs.
 Given a $q \in Q$, we will often write $D(q) = \{(a, q') \mid (q, a, q') \in D\}$ and $E(q) = \{v \mid (q,v) \in E\}$. Moreover, when $D$ and $E$ are clear from the context, we will write $q \tr{a} q' \iff (q,a,q') \in D$ and $q \rhd v \iff (q,v) \in E$. A chart $C$ is a quadruple $(Q, s, D, E)$, where $(Q,D,E)$ is a prechart and $s \in Q$ is a distinguished start node. We call a chart finite if $Q$ is finite. There are several operations on charts that will be of interest in this paper.

\begin{itemize}[topsep=0pt]
	\item \textbf{Empty chart.} We set $0 := (\{s\},s, \emptyset, \emptyset)$.
	\item \textbf{Variable.} Given $v \in V$, we define $v := (\{s\},s,\emptyset, \{(s,v)\})$.
	\item \textbf{Prefix.} Let $C=(Q,s,D,E)$ be a chart and $a \in \Sigma$. We define $a.C := (Q \cup \{s'\}, s', D \cup \{(s',a,s)\}, E)$, where $s' \notin Q$.
	\item \textbf{Nondeterministic choice.} Let $C_i = (Q_i,s_i,D_i,E_i)$, $i \in \{1,2\}$. For simplicity of presentation, we assume $Q_1, Q_2$ to be disjoint and $s \notin Q_1 \cup Q_2$. We define $$C_1 + C_2 := (Q_1 \cup Q_2 \cup \{s\}, s, D_1 \cup D_2 \cup D', E_1 \cup E_2 \cup E')$$ where
		$D' := \{s\} \times (D_1(s_1) \cup D_2(s_2))$ and $E' := \{s\} \times (E_1(s_1) \cup E_2(s_2))$.
	\item \textbf{Substitution.} Let $\vec{C} = (C_1, \dots, C_n)$ and $C$ be disjoint charts, and let $\vec{v} = (v_1, \dots, v_n)$ be distinct variables. We define
		$$
			C[\vec{C}/\vec{v}]=\left(Q \cup \bigcup_i Q_i, \{s\}, D' \cup \bigcup_i D_i, E' \cup \bigcup_i E_i\right)
		$$
		where for $q \in Q_i$, $D'(q)=E'(q)=\emptyset$, while for $q \in Q$, 
		$D'(q)=D(q) \cup \bigcup_i \left\{ D_i(s_i) \mid v_i \in E(q)\right\}$
		and 
		$E'(q)=(E(q) \setminus \vec{v}) \cup \bigcup_i \left\{ E_i(s_i) \mid v_i \in E(q)\right\}$.
	\item \textbf{Recursion.} If $C=(Q,s,D,E)$, then we define $\mu v.C := (Q,s,D^+,E^+)$, where
	\begin{gather*} 
		D^+(q) = \begin{cases} D(q) \cup D(s) & \text{if } v \in E(q) \\ D(q) & \text{otherwise} \end{cases}\qquad E^+(q) = \begin{cases}\left(E(q) \cup E(s)\right) \setminus \{v\} & \text{if } v \in E(q) \\ E(q) & \text{otherwise} \end{cases}
	\end{gather*}
\end{itemize}
When applied to finite charts, the operations above preserve finiteness. Given a chart $C=(Q,s,D,E)$ we say that a variable $v \in V$ is \emph{live} in $C$ if there exists a path of transitions $s \tr{a_1} \dots \tr{a_n} s' \rhd v$ or call it \emph{dead} otherwise.
The canonical notion of (pre)charts having the same observational behaviour is given through the following concept.
\begin{definition}[Strong Bisimulation]
	Let $C_i = (Q_i,D_i,E_i)$, $i \in \{1,2\}$ be precharts. A bisimulation between $C_1$ and $C_2$ is a relation ${R} \subseteq Q_1 \times Q_2$, such that \circlednum{1} if $(q_1,q_2)\in R$, then $E(q_1)=E(q_2)$, \circlednum{2} if $(q_1,q_2) \in R$ and $q_1 \tr{a} q'_1$, then there exists $q'_2 \in Q_2$, such that $q_2 \tr{a} q'_2$ and $(q'_1, q'_2) \in R$ and symmetrically. If $C_1$ and $C_2$ are charts, we say that they are bisimilar (denoted $C_1 \sim C_2$) if there exists a bisimulation between their underlying precharts that relates their start nodes.
\end{definition}
It immediately follows that $\sim$ is an equivalence relation on the set of finite charts. We call this relation \emph{bisimilarity} and we will write $\Omega$ for the set of all finite charts modulo bisimilarity. We will refer to the elements of this set as \emph{regular behaviours}. Conveniently, $\sim$ is a congruence with respect to all operations described above and hence one can unambiguously extend all the mentioned operations to elements of $\Omega$~\cite[Proposition~3.2]{Milner:1984:Complete}. Let $[C_1], [C_2] \in \Omega$, such that $C_1 = (Q,s,D,E)$ and $C_2=(Q,s',D,E)$. One can make $\Omega$ into prechart itself by setting $[C_1] \tr{a} [C_2] \iff s \tr{a}_{C_1} s'$ and $[C_1] \rhd v \iff s \rhd_{C_1} v$.
\begin{definition}[Prechart homomorphism]
	 Let $C_i = (Q_i,D_i,E_i)$, $i \in \{1,2\}$ be precharts. We call a function $f \colon Q_1 \to Q_2$ a prechart homomorphism if the graph of $f$, given by $G(f) = \{(q,f(q)) \mid q \in Q_1\}$ is a bisimulation between $C_1$ and $C_2$.
\end{definition}
Fix a prechart $C=(Q,D,E)$. A map taking each state $s \in Q$ to the equivalence class $[(Q,s,D,E)]$ in $\Omega$ is a homomorphism from $C$ into prechart on $\Omega$. It can be easily observed that bisimulations and homomorphisms preserve the liveness of variables. From now on, we will abuse the notation and omit the quotient brackets when talking about elements of $\Omega$.

\noindent
\textbf{Pseudometric spaces. } We will formally give the notion of distance between regular behaviours by equipping them with a \emph{pseudometric} structure, a mild generalisation of metric spaces, where we drop the requirement of points in zero distance having to be strictly equal. This stems from the fact that in precharts we might have two states with identical behaviour (and hence zero distance), which are not strictly equal (but rather bisimilar). 

\begin{definition}
	A 1-bounded pseudometric space is a pair $(X, d_X)$, where $X$ is a set and $d_X \colon X \times X \to [0,1]$ is a function satisfying \circlednum{1} $d_X(x,x)=0$ (Reflexivity), \circlednum{2} $d_X(x,y) = d_X(y,x)$ (Symmetry) and \circlednum{3} $d_X(x,z) \leq d_X(x,y) + d_X(y,z)$ (Transitivity) for all $x,y,z \in X$. 
\end{definition}
In this paper, all pseudometric spaces are 1-bounded, and hence we will abuse the terminology and simply call them pseudometric spaces. We call a function $f \colon X \to Y$ between pseudometric spaces $(X,d_X)$ and $(Y,d_Y)$ nonexpansive, if $d_Y(f(x),f(y))\leq d_X(x,y)$ for all $x,y \in X$. It is called an isometry if it satisfies $d_Y(f(x),f(y))=d_X(x,y)$. Given two pseudometric spaces $(X,d_X)$ and $(Y, d_Y)$ one can define their product to be $(X, d_X) \times (Y, d_Y) = (X \times Y, d_{X \times Y})$, where $d_{X \times Y}((x,y),(x',y')) = \max \{d(x,x'), d(y,y')\}$ for all $x,x' \in X$ and $y,y' \in Y$. This can be easily extended to any $n$-tuple. We define $0$-tuples to be given by $1_{\bullet} = (\{\bullet\}, d_{\bullet})$, the unique single point pseudometric space, where $d_{\bullet}(\bullet,\bullet)=0$. Given a function of multiple arguments, i.e. $X_1 \to X_2 \to Y$, we will call it nonexpansive, if it is nonexpansive as a function $f \colon (X_1, d_{X_1}) \times (X_2, d_{X_2}) \to (Y, d_Y)$. Given a set $X$, we write $D_X$ for the set of all pseudometrics on the set $X$. This set carries a partial order structure, given by $d \sqsubseteq d' \iff \forall_{x,x' \in X}~d(x,x') \leq d'(x,x')$. For any $X$, $(D_X, \sqsubseteq)$ is a complete lattice~\cite[Lemma~3.2]{Baldan:2018:Coalgebraic}, where supremas can be calculated pointwise. The top element of that lattice is given by the discrete pseudometric $\top \colon X \times X \to [0,1]$ such that $\top(x,y) = 0$ if $x=y$, or $\top(x,y)=1$ otherwise.

\noindent
\textbf{Behavioural distances. } We now have all the ingredients to formalise the concept of how much apart two regular behaviours are. Notation wise, we will write $\Pfin (X)$ for the set of \emph{finite} subsets of the set $X$. Given a prechart $(Q, E, D)$, we can equivalently see it as a pair $(Q, \beta)$, where $\beta$ is a combined transition function $Q \to \Pfin(\Sigma \times Q + V)$ taking each state $q \in Q$, to the set $\beta(q) = D(q) \cup E(q)$ of possible successors, that include labelled transitions and variable outputs. Given a pseudometric space defined on a state-space of a prechart, we can \emph{lift} it to the set of possible transitions through the following construction.
\begin{definition}[Transitions lifting]\label{def:edge}
	Let $(X,d)$ be a pseudometric space. We write $d^\uparrow$ for the pseudometric on $\Sigma \times X + V$ defined by $d^\uparrow(m, n) = \frac{1}{2} d(x,y)$ if $m = (a,x)$ and $n = (a,y)$, $d^\uparrow(m,n)=0$ if $m = n$ or $d^{\uparrow}(m,n)=1$ otherwise.
\end{definition}
Similarly, we can lift distances over $X$ to distances between elements of $\Pfin (X)$.
\begin{definition}[Hausdorff lifting]\label{def:hausdorff}
	Let $(X,d)$ be a pseudometric space. We can equip $\Pfin (X)$ with a distance function $
	\mathcal{H}(d)(X,Y)= \max \{\sup_{x \in X} \inf_{y \in Y} d(x,y), \sup_{y \in Y} \inf_{x \in X} d(y,x) \}$ making $(\Pfin(X), \mathcal{H}(d))$ into a pseudometric.
\end{definition}
Given a prechart $(Q, \beta)$, whose state-space is equipped with a pseudometric $d_Q$, we can define a new pseudometric $\Phi_{\beta}(d_Q)$ that calculates the distance between any pair $q_1, q_2 \in Q$ of states, by lifting $d_Q$ to the set $\Pfin(\Sigma \times Q + V)$ and comparing $\beta(q_1)$ with $\beta(q_2)$, namely 
	$
		\Phi_\beta(d_Q)(q_1,q_2) = \mathcal{H}\left(d_Q^\uparrow\right) (\beta(q_1), \beta(q_2))
	$. This is used to define the \emph{behavioural distance}.
	\begin{restatable}{theorem}{behdist}\label{thm:beh_dist}
		Let $(Q, \beta)$ be a prechart. Then, the following properties hold: \circlednum{1} $d_Q \mapsto \Phi_\beta(d_Q)$ is a monotone mapping on the lattice $D_Q$, \circlednum{2} $\Phi_\beta$ has a least fixpoint $\mathsf{bd}_\beta$, \circlednum{3} $x \sim y \implies \mathsf{bd}_\beta(x,y) = 0$ and \circlednum{4} a homomorphism $f \colon Q \to R$ between precharts $(Q, \beta)$ and $(R, \gamma)$ is an isometry between $(Q, \mathsf{bd}_\beta)$ and $(R, \mathsf{bd}_\gamma)$.
	\end{restatable}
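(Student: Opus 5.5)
The plan is to prove the four items in order, reusing the lattice structure $(D_Q,\sqsubseteq)$ recalled above.

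For \circlednum{1}, we first check that $\Phi_\beta(d)$ is a pseudometric whenever $d$ is. This holds because $d^\uparrow$ is a pseudometric on $\Sigma\times Q+V$ (Definition~\ref{def:edge}). Reflexivity and symmetry are immediate. For the triangle inequality we split into cases on whether the three elements share a letter: when all three are of the form $(a,\cdot)$ the inequality reduces to that of $\frac12 d$, and in every other case some pair is at distance $1$, which dominates. Since $\mathcal H$ preserves pseudometrics (Definition~\ref{def:hausdorff}), $\Phi_\beta(d)$ is again a pseudometric. It is also $1$-bounded, with the usual convention $\sup\emptyset=0$ and $\inf\emptyset=1$. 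Monotonicity follows directly: $d\sqsubseteq d'$ gives $d^\uparrow\sqsubseteq d'^\uparrow$ pointwise, and $\sup$/$\inf$ are monotone, so $\mathcal H(d^\uparrow)\sqsubseteq\mathcal H(d'^\uparrow)$.

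For \circlednum{2}, we apply the Knaster--Tarski theorem to the monotone map $\Phi_\beta$ on the complete lattice $D_Q$ (\cite[Lemma~3.2]{Baldan:2018:Coalgebraic}). This gives the least fixpoint $\mathsf{bd}_\beta=\bigwedge\{d\mid \Phi_\beta(d)\sqsubseteq d\}$.

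For \circlednum{3}, we use the least-fixpoint property with a pre-fixpoint built from bisimilarity. Let $d_\sim(x,y)=0$ if $x\sim y$ and $1$ otherwise. This is a pseudometric because $\sim$ is an equivalence. We show $\Phi_\beta(d_\sim)\sqsubseteq d_\sim$. When $x\not\sim y$ this is trivial by $1$-boundedness. When $x\sim y$, we have $E(x)=E(y)$, and every $(a,x')\in D(x)$ is matched by some $(a,y')\in D(y)$ with $x'\sim y'$, so its $d_\sim^\uparrow$-distance is $0$, and symmetrically. Hence the Hausdorff distance is $0$. Therefore $\mathsf{bd}_\beta\sqsubseteq d_\sim$, which yields the claim.

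For \circlednum{4}, which I expect to be the main obstacle, the approach is to compare the two least fixpoints by transferring pre-fixpoints along $f$ in both directions. In one direction, set $d'(x,y)=\mathsf{bd}_\gamma(f x,f y)$ on $Q$. Because $G(f)$ is a bisimulation, $\beta(x)$ and $\gamma(fx)$ correspond under $f$: the outputs agree, and the transitions of $fx$ are exactly the images $(a,fx')$ of transitions of $x$. Here the bisimulation condition is needed in both directions to obtain $\gamma(fx)=\{(a,fx')\}\cup E(x)$. It follows that $\Phi_\beta(d')(x,y)=\Phi_\gamma(\mathsf{bd}_\gamma)(fx,fy)=d'(x,y)$. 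Thus $d'$ is a fixpoint, and $\mathsf{bd}_\beta\sqsubseteq d'$.

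For the converse $\mathsf{bd}_\gamma(fx,fy)\le\mathsf{bd}_\beta(x,y)$, we cannot pull back pre-fixpoints directly. Instead we use transfinite approximation: $\mathsf{bd}_\gamma=\Phi_\gamma^\alpha(0)$ for a large enough ordinal $\alpha$, and similarly for $\beta$. We prove by ordinal induction that $\Phi_\gamma^\alpha(0)(fx,fy)=\Phi_\beta^\alpha(0)(x,y)$. The successor step uses the same correspondence $\gamma(fx)=f_*\beta(x)$ computed above. The limit step is the pointwise supremum, which is compatible because suprema in $D_Q$ are pointwise. Taking $\alpha$ large enough to reach both fixpoints gives equality, which also subsumes the first direction.
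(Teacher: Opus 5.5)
Your proof is correct, but it takes a different route from the paper. The paper gives no direct argument. It observes that $\mathcal{H}$ and $(-)^\uparrow$ are liftings of $P_{\text{fin}}$ and $\Sigma\times(-)+V$ that preserve isometries (Baldan et al., Theorem~5.8). It then reads off all four items from the generic theorem on behavioural distances induced by such liftings (Baldan et al., Theorem~5.2).

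You instead verify everything by hand for this particular functor.
\begin{itemize}
\item Items 1 and 2 are routine in either approach.
\item For item 3 you exhibit the explicit pre-fixpoint $d_\sim$ and use that a least fixpoint lies below every pre-fixpoint. This proves the item directly for bisimilarity and independently of item 4.
\item For item 4 you rely on the identity $\Phi_\beta(d\circ(f\times f))=\Phi_\gamma(d)\circ(f\times f)$ in both directions; it sits behind your ``it follows that''. This identity is exactly the concrete content of ``the liftings preserve isometries''. It reduces to $\gamma(fx)=f_*\beta(x)$, to $(d\circ(f\times f))^\uparrow=d^\uparrow\circ(f_*\times f_*)$, and to $\mathcal{H}(e\circ(g\times g))(A,B)=\mathcal{H}(e)(g[A],g[B])$ for arbitrary, possibly non-injective, $g$. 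These are immediate, but they are precisely the step the paper outsources, so they deserve a line each. Your ordinal induction is then the standard fixpoint-transfer argument made explicit.
\end{itemize}

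The paper's route buys brevity and generality: it works for any isometry-preserving lifting, for instance one with a different discount factor. Yours is self-contained and does not require matching the paper's concrete conventions (the empty-set cases of $\mathcal{H}$, the factor $\frac{1}{2}$ in $d^\uparrow$) against the abstract framework.

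Finally, you can skip the transfinite step. The appendix shows that $\Phi_\beta$ is a $\frac{1}{2}$-contraction for the sup norm, so it has a unique fixpoint. Your first direction already shows that $\mathsf{bd}_\gamma\circ(f\times f)$ is a fixpoint of $\Phi_\beta$, so uniqueness gives $\mathsf{bd}_\beta=\mathsf{bd}_\gamma\circ(f\times f)$ directly.
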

	\noindent
	The theorem above allows one to define a distance between regular behaviours, by calculating the distance in the prechart structure on $\Omega$. We will simply refer to this distance as the \emph{behavioural distance} and denote it by $\mathsf{bd}$. Since prechart homomorphisms are isometries, given two finite charts $C_1$ and $C_2$, their distance is given by $\mathsf{bd}([C_1],[C_2])$, the distance between their corresponding regular behaviours. Such distance can be characterised concretely via Hennessy and Milner's \emph{stratification of bisimilarity}~\cite{hennessy:1985:algebraic}. 
	\begin{definition}[Stratification of bisimilarity]
	Let $C_i = (Q_i, D_i, E_i)$ for $i \in \{1,2\}$ be precharts. We can define a family $\{\sim^{(n)} \}_{n \in \N}$ of equivalence relations on $Q_1 \times Q_2$ given by the following. For all $(q_1,q_2) \in Q_1 \times Q_2$, we have that $q_1 \sim^{(0)} q_2$. Given $(q_1, q_2) \in Q_1 \times Q_2$, we have that $q_1 \sim ^{(n+1)} q_2$ if \circlednum{1} $E_1(q_1) = E_2(q_2)$, \circlednum{2} $q_1 \tr{a}_{C_1} q'_1$ implies that there exists $q'_2 \in Q_2$, such that $q_2 \tr{a} q'_2$ and $q'_1 \sim^{(n)} q'_2$ and symmetrically.
\end{definition}
We can now formally state the correspondence between behavioural distance and stratification of bisimilarity.
\begin{restatable}{theorem}{concrete}
	Let $C_i = (Q_i, s_i, D_i, E_i)$ for $i = \{1,2\}$ be finite charts. We have that $\mathsf{bd}([C_1],[C_2]) = 0$ if $s_1 \sim s_2$. Otherwise, $\mathsf{bd}([C_1],[C_2]) = 2^{-n}$, where $n \in \N$ is the largest such that $s_1\sim^{(n)} s_2$.
\end{restatable}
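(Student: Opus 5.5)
We work on the disjoint union prechart $C = C_1 \uplus C_2$ with transition map $\beta$. The start nodes of $C_1$ and $C_2$ map by homomorphisms into $\Omega$, and homomorphisms are isometries by \Cref{thm:beh_dist}\circlednum{4}. Hence $\mathsf{bd}([C_1],[C_2]) = \mathsf{bd}_\beta(s_1,s_2)$, and it suffices to compute the least fixpoint $\mathsf{bd}_\beta$ on a finite prechart. The relations $\sim^{(n)}$ are defined on all pairs of states of $C$ in the obvious way.

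The central claim is
\[ \mathsf{bd}_\beta(x,y) = \inf\{2^{-n} \mid x \sim^{(n)} y\}, \]
with $2^{-\infty} = 0$ when $x \sim^{(n)} y$ for all $n$. Call the right-hand side $\delta(x,y)$. We prove the two inequalities separately.

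\textbf{$\delta$ is a fixpoint of $\Phi_\beta$.} First, $\delta$ is a pseudometric: each $\sim^{(n)}$ is an equivalence relation and the chain is decreasing, which gives the ultrametric inequality. Next we compute $\Phi_\beta(\delta)(x,y)$ directly.
\begin{itemize}
\item If $E(x) \neq E(y)$, some output variable is matched only with distance $1$, so $\Phi_\beta(\delta)(x,y) = 1 = \delta(x,y)$, because $x \not\sim^{(1)} y$.
\item If $E(x) = E(y)$, the output components contribute $0$, and every $a$-transition of $x$ must be matched by an $a$-transition of $y$. (A missing match gives distance $1$, which coincides with $\delta = 1$, since then $x \not\sim^{(1)} y$.) The Hausdorff value is then $\frac{1}{2}\max_{x\tr{a}x'}\min_{y \tr{a} y'}\delta(x',y')$, symmetrised. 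Finiteness makes the sup and inf attained. This quantity is $\le 2^{-(n+1)}$ exactly when every successor is matched at level $\sim^{(n)}$, i.e.\ exactly when $x \sim^{(n+1)} y$.
\end{itemize}
So $\Phi_\beta(\delta) = \delta$. Since $\mathsf{bd}_\beta$ is the least fixpoint, $\mathsf{bd}_\beta \sqsubseteq \delta$.

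\textbf{$\delta \sqsubseteq \mathsf{bd}_\beta$.} We show by induction on $n$ that $x \not\sim^{(n)} y$ implies $d(x,y) \ge 2^{-(n-1)}$ for every fixpoint $d$ of $\Phi_\beta$.
\begin{itemize}
\item Base case $n=1$: $x \not\sim^{(1)} y$ means either the outputs differ or some letter-transition has no same-letter match. In either case $\Phi_\beta(d)(x,y)=1$.
\item Step: if $x \not\sim^{(n+1)} y$ with equal outputs, there is some $x \tr{a} x'$ such that every $y \tr{a} y'$ has $x' \not\sim^{(n)} y'$ (or symmetrically). By induction each such pair has $d(x',y') \ge 2^{-(n-1)}$, so $d(x,y) = \Phi_\beta(d)(x,y) \ge 2^{-n}$.
\end{itemize}
Thus $\delta \sqsubseteq d$ for every fixpoint $d$, and in particular $\delta \sqsubseteq \mathsf{bd}_\beta$.

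\textbf{Conclusion.} The exact value $2^{-n}$, for $n$ the largest index with $s_1 \sim^{(n)} s_2$, is $\delta(s_1,s_2)$. For the bisimilar case we still need that $s_1 \sim^{(n)} s_2$ for all $n$ implies $s_1 \sim s_2$. This holds because the charts are finitely branching: $\bigcap_n \sim^{(n)}$ is a bisimulation, by the standard Hennessy--Milner argument, since finitely many successors means some successor matches at infinitely many levels. Alternatively, the statement's first clause follows directly from \Cref{thm:beh_dist}\circlednum{3}.

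The main obstacle is the Hausdorff computation in the fixpoint step. One has to check carefully that the variable and transition cases interact correctly under the lifting $d^\uparrow$, and that the factor $\frac{1}{2}$ lines up exactly with the index shift in $\sim^{(n+1)}$.
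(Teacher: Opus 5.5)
Your proof is correct, but it takes a different route from the paper. The paper proves the characterisation for arbitrary locally finite precharts (\Cref{thm:concrete_distance}), applies it to $\Omega$, and then transports stratified bisimilarity along the homomorphisms $C_i\to\Omega$ (\Cref{lem:hom_stratification}). To obtain the exact dyadic values it relies on fixpoint iteration:
\begin{itemize}
\item $\Phi_\beta$ is a sup-norm contraction, so it has a unique fixpoint;
\item on finite precharts $\Phi_\beta$ is cocontinuous, so Kleene's theorem gives $\mathsf{bd}_\beta=\inf_p\Phi^{(p)}_\beta$;
\item the approximants take values in $\{0\}\cup\{2^{-k}\}$ and stabilise on non-bisimilar pairs;
\item $x\sim^{(k)}y\iff\mathsf{bd}_\beta(x,y)\le 2^{-k}$ is proved by induction from the fixpoint equation.
\end{itemize}
You instead transport the \emph{distance} itself, via the isometry in \Cref{thm:beh_dist}, into the finite disjoint union. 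You then pin down the least fixpoint by a sandwich. First, $\delta$ is a fixpoint, so $\mathsf{bd}_\beta\sqsubseteq\delta$. Second, the gap estimate $x\not\sim^{(n)}y\Rightarrow d(x,y)\ge 2^{-(n-1)}$, valid for every fixpoint $d$, gives $\delta\sqsubseteq\mathsf{bd}_\beta$. Because the dyadic values are built into $\delta$, you need neither uniqueness of the fixpoint, nor cocontinuity, nor Kleene iteration, nor the stabilisation lemma. Your proof is therefore shorter and more elementary. What the paper's detour buys is the approximant formula $\mathsf{bd}_\beta=\inf_p\Phi^{(p)}_\beta$ (\Cref{lem:finite_dist}), which it needs anyway for the completeness proof (\Cref{lem:approximation}, \Cref{lem:one-to-m-completeness}).

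Two small remarks. For $\mathsf{bd}_\beta\sqsubseteq\delta$ it already suffices that $\Phi_\beta(\delta)\sqsubseteq\delta$, and this follows from your equivalence directly. Your stronger claim of equality additionally uses that $\Phi_\beta(\delta)$ only takes values in $\{0\}\cup\{2^{-k}\}$; that is true, but worth saying. Also, the Hennessy--Milner fact $\bigcap_n\sim^{(n)}\subseteq\sim$ is what guarantees that a largest $n$ exists in the \emph{non}-bisimilar clause. The bisimilar clause comes from \Cref{thm:beh_dist}, as you note.
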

\section{Monoidal Syntax}
\label{sec:syntax}

We adopt the diagrammatic syntax for NFA that has appeared in a number of previous papers~\cite{piedeleu2023finite,antoinecsl2025}. 
We refer the reader to Selinger's classic survey~\cite{Selinger_2010}, or to Piedeleu and Zanasi's recent text for a more gentle introduction to the language of string diagrams~\cite{piedeleu2023introduction}.

This syntax is formalised as a product and permutation category, or prop, a structure which generalises algebraic theories. Formally, a \emph{prop} is a strict symmetric monoidal category (SMC) whose objects are words over a set of generators and whose monoidal product $\proptimes$ is given by concatenation. 
More specifically, our syntax is the free prop $\freeP{\Signature}$ over the signature $\Signature = (\Obj,\Morph)$, given by a set $\Obj$ of generating objects and a set $\Morph$  of generating morphisms $g\from v\to w$, with $v,w\in \Obj^*$ (we use $\epsilon$ to denote the empty word). Morphisms of  $\freeP{\Signature}$ can be combined in two different ways, using the composition operation $(-);(-)\from \freeP{\Signature}(u,v)\times \freeP{\Signature}(v,w)\to \freeP{\Signature}(u,w)$ or the monoidal product $(-)\proptimes(-)\from \freeP{\Signature}(v_1,w_1)\times \freeP{\Signature}(v_2,w_2)\to \freeP{\Signature}(v_1 v_2,w_1 w_2)$. We also have distinguished constants: identities $\id_w\from w\to w$, which are the unit for composition, and symmetries $\sigma^v_w\from vw\to wv$, to reorder the letters of a given object. In summary, morphisms of $\freeP{\Signature}$ can be described as terms of the $(\Obj^*,\Obj^*)$-sorted syntax generated from the constants $\Morph + \{\id_w : w\in \Obj^*\}+ \{\sigma^v_w : v,w\in \Obj^*\}$ using the operations $;$ and $\proptimes$, \emph{quotiented} by the axioms of SMCs. However, the terms of this syntax are very cumbersome to work with. 

We adopt a more convenient way to represent morphisms of $\freeP{\Signature}$, using the graphical notation of \emph{string diagrams}. In this view, a morphism $f\from v\to w$ of $\freeP{\Signature}$ is depicted as a $f$-labelled box with a $v$-labelled wire on the left and a $w$-labelled wire on the right. The operations of composition and monoidal product are represented by connecting two boxes horizontally and juxtaposing two boxes vertically, respectively:
\begin{equation*}\label{eq:composition-monoidal-product}
\scalebox{0.9}{
\tikzset{x=1em, y=2.1ex}
\InputIfFileExists{comp-sequential-fg.tikz}{}{\input{./tikz/comp-sequential-fg.tikz}}
\tikzset{x=1em, y=1.5ex}
\qquad \quad 
\tikzset{x=1em, y=2.1ex}
\InputIfFileExists{comp-parallel-fg.tikz}{}{\input{./tikz/comp-parallel-fg.tikz}}
\tikzset{x=1em, y=1.5ex}
}
\end{equation*}
Wires $
\tikzset{x=1em, y=2.1ex}
\begin{tikzpicture}
	\begin{pgfonlayer}{nodelayer}
		\node [style=none] (0) at (-1.5, 0) {};
		\node [style=none] (1) at (0.5, 0) {};
		\node [style=none] (3) at (-0.5, 0.5) {\scriptsize  $w$};
	\end{pgfonlayer}
	\begin{pgfonlayer}{edgelayer}
		\draw (0.center) to (1.center);
	\end{pgfonlayer}
\end{tikzpicture}
}
\tikzset{x=1em, y=1.5ex}
$ represent identities, the wire crossing $
\tikzset{x=1em, y=2.1ex}
\InputIfFileExists{sym-vxw.tikz}{}{\input{./tikz/sym-vxw.tikz}}
\tikzset{x=1em, y=1.5ex}
$ represents the symmetry $\sigma^v_w$, and the empty diagram $\idzero$ the identity $\id_\epsilon\from \epsilon\to \epsilon$.
\begin{definition}\label{def:syntax}
	 We call $\Syn$ the free prop over the signature given by
	\begin{itemize}
		\item two generating objects $\objl$ ("left") and $\objr$ ("right"), with their identity morphisms depicted respectively as $\arrowleft$ and $\arrowright$;
 		\item generating morphisms $
 		
\tikzset{x=1em, y=2.1ex}
\InputIfFileExists{lr-copy.tikz}{}{\input{./tikz/lr-copy.tikz}}
\tikzset{x=1em, y=1.5ex}
\quad
\tikzset{x=1em, y=2.1ex}
}
\tikzset{x=1em, y=1.5ex}
\quad
\tikzset{x=1em, y=2.1ex}
\InputIfFileExists{lr-merge.tikz}{}{\input{./tikz/lr-merge.tikz}}
\tikzset{x=1em, y=1.5ex}
\quad
\tikzset{x=1em, y=2.1ex}
}
\tikzset{x=1em, y=1.5ex}
 \quad
\tikzset{x=1em, y=2.1ex}
\InputIfFileExists{cap-down.tikz}{}{\input{./tikz/cap-down.tikz}}
\tikzset{x=1em, y=1.5ex}
 \quad
\tikzset{x=1em, y=2.1ex}
\InputIfFileExists{cup-down.tikz}{}{\input{./tikz/cup-down.tikz}}
\tikzset{x=1em, y=1.5ex}
\quad \scalar{a} \quad (a\in \Sigma)$.
 	\end{itemize}
 \end{definition}
Morphisms of $\Syn$ are thus vertical and horizontal composites of the generators above, potentially including wire crossings and identity wires, \emph{up to} the laws of symmetric monoidal categories---see Section~\ref{app:smc-axioms}. The direction of the arrows on the wires denotes the corresponding object: for example, $
\tikzset{x=1em, y=2.1ex}
\InputIfFileExists{lr-copy.tikz}{}{\input{./tikz/lr-copy.tikz}}
\tikzset{x=1em, y=1.5ex}
$ represents an operation of type $\objr\to \objr\objr$, while $
\tikzset{x=1em, y=2.1ex}
\InputIfFileExists{cap-down.tikz}{}{\input{./tikz/cap-down.tikz}}
\tikzset{x=1em, y=1.5ex}
$ has type $\objl\objr\to \epsilon$. Note that, when we have $n$ parallel wires of the same type, say $\objr$, we depict them as a single directed wire labelled by a natural number label, as $\idright^{\!\!\!\!\!\! n}$. We call \emph{inputs} the incoming wires of a diagram, and \emph{outputs} its outgoing wires; formally, the inputs (resp. outputs) of $f\from v\to w$ are the set of positions of the word $v$ which are $\objr$ (resp. $\objl$) and the position of $w$ which are $\objl$ (resp. $\objr$). 
	
\section{Monoidal Semantics}\label{sec:semantics}
In order to interpret the string diagrams described in \Cref{sec:syntax}, we construct an appropriate semantic universe out of regular behaviours. As much as the technical development makes use of category theory, we will keep the description of the formalism high-level. The fine-grained details are relegated to \Cref{apx:semantics}. 
We will write $V_n$ for the set $V_n = \{v_1, \dots, v_n\}\subseteq V$ and $\Omega(n)$ for the set of all regular behaviours whose live variables are contained in the set $V_n$. For any $m,n \in \N$, we will write $\RegBeh(n,m)$ for the set of $n$-tuples of elements of $\Omega(m)$. There are several operations involving these sets that are of interest:
\begin{itemize}[topsep=0pt]
	\item \textbf{Identity.} For every $n \in \N$, we define $\id_n \in \RegBeh(n,n)$ as  $\id_n=\vec{v}_n=(v_1, \dots, v_n)$. When $n$ is clear from the context, we will abuse the notation and simply write $\vec{v}$ instead.
	 \item \textbf{Sequential composition.} Given $f \in \RegBeh(m,n)$ and $g \in \RegBeh(n,p)$, we can define their sequential composition $f;g \in \RegBeh(m,p)$ to be given by $(f_1[\vec{g} / \vec{v}], \dots, f_m[\vec{g}/ \vec{v}])$, where $\vec{v}=(v_1, \dots, v_n)$. The sequential composition is associative, with identity being a neutral element when composed both on the left and right (see \Cref{lem:comp_associative} in the appendix of the full version of the paper).
	\item \textbf{Initial object.} For every $n \in \N$, there is a unique element $0_n \in \RegBeh(0,n)$ given by the empty tuple.
	\item \textbf{Pairing.} Given $f \in \RegBeh(k,m)$ and $g \in \RegBeh(l,m)$, we define their pairing $\langle f , g \rangle \in \RegBeh (k + l,m)$, by setting $\langle f , g \rangle = (f_1, \dots, f_k, g_1, \dots g_l)$.
	\item \textbf{Parallel composition.} Given $f \in \RegBeh(k,l)$ and $g \in \RegBeh(m,n)$, we can define their parallel composition $f \oplus g \in \RegBeh(k + m, l + n)$, by setting $$f \oplus g = (f_1, \dots, f_k, g_1[(v_{l + 1},\dots, v_{l + n})/\vec{v}], \dots, g_m[(v_{l + 1},\dots, v_{l + n})/\vec{v}])$$
	\item \textbf{Codiagonal.} For any $n \in \N$, there is a $n+n$-tuple $\nabla_n \in \RegBeh(n + n, n)$ called codiagonal given by $\nabla_n = \langle\id_n, \id_n\rangle$.
	\item \textbf{Dagger.} For any $f \in \RegBeh(1, p + 1)$, we can define $f^\dagger \in \RegBeh(1,p)$ to be given by $f^\dagger = \mu v_{p+1}.f$. Following~\cite[Remark~3.2]{Esik:1999:Group}, we can inductively extend the dagger to the map taking each $f \in \RegBeh(n, p + n)$ to $f^\dagger \in \RegBeh(n, p)$. 
	\end{itemize}
\textbf{Categorical structure. } 
We now define a category $\RegBeh$, whose objects are natural numbers and morphisms $f \colon m \to n$ are elements $f \in \RegBeh(m,n)$. This category has a rich structure that will be useful when defining the semantics of our diagrammatic language.
\begin{theorem}
	The category $\RegBeh$ has the following properties:
	\begin{itemize}[topsep=0pt]
		\item $\RegBeh$ has all finite coproducts. 
		\item $(\RegBeh, \oplus, 0)$ is a (co-Cartesian) strict symmetric monoidal category.
		\item $\RegBeh$ equipped with a dagger is a Conway theory~\cite{Esik:1999:Group}.
		\item Each morphism $g \colon p + n \to q + n$ has a trace $\Tr^{n}_{p,q}(g) \colon p \to q$ defined in terms of $(-)^\dagger$. This equipment makes $\RegBeh$ into a traced monoidal category~\cite{Joyal:1996:Traced}. 
	\end{itemize}
\end{theorem}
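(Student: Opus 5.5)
Everything reduces to the Lawvere-theory structure of $\RegBeh$: a morphism $f\colon m\to n$ is an $m$-tuple of regular behaviours in $n$ variables, and composition is simultaneous substitution. So I plan to transport known facts about Milner's charts (congruence of $\sim$ for all operations, and the substitution and unfolding laws) into categorical statements.

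For \textbf{coproducts}, take the injections $\iota_1 = (v_1,\dots,v_m)\colon m\to m+n$ and $\iota_2 = (v_{m+1},\dots,v_{m+n})$, with copairing given by the pairing $\langle f,g\rangle$. Then $\iota_1;\langle f,g\rangle = f$ holds because substituting $f_i$ for $v_i$ in the chart $v_i$ yields $f_i$ up to bisimilarity. Uniqueness holds because any $h\colon m+n\to p$ is literally the tuple of its components, and these are recovered as $\iota_1;h$ and $\iota_2;h$. The initial object is $0$, via the unique empty tuple $0_n$. Associativity and units of composition are \Cref{lem:comp_associative}; the key ingredient is associativity of chart substitution up to bisimulation, $C[\vec D/\vec v][\vec E/\vec w]\sim C[\vec D[\vec E/\vec w]/\vec v]$. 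I would prove this by exhibiting an explicit bisimulation that identifies the copies of states in both constructions.

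For the \textbf{monoidal structure}, observe that $f\oplus g = [f;\iota_1, g;\iota_2]$ is the coproduct of morphisms. It is therefore functorial and a co-Cartesian monoidal product. Strictness holds because objects are natural numbers with $+$ and tuples concatenate on the nose. The symmetry is the permutation tuple $(v_{n+1},\dots,v_{n+m},v_1,\dots,v_n)$.

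For the \textbf{Conway theory} claim, I would follow Ésik's reduction. It suffices to verify the scalar parameter identity, composition (dinaturality) identity, and double dagger identity for $f\colon 1\to p+1$; the vector versions then follow from Bekić's pairing identity, which holds by the inductive definition of the dagger (\cite[Remark~3.2]{Esik:1999:Group}). The three chart-level facts needed are:
\begin{itemize}
\item $(\mu v.C)[\vec D/\vec w]\sim \mu v.(C[\vec D/\vec w])$ for $v$ not free in $\vec D$, which is the parameter identity.
\item $\mu v.C[D/u] \sim C[\mu u.D[C/v]/u]$-style unfolding, which is the composition identity.
\item $\mu v.\mu w.C\sim \mu v.C[v/w]$, which is the diagonal identity.
\end{itemize}
Each is proved by an explicit bisimulation between the constructed charts. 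These are essentially Milner's laws from \cite{Milner:1984:Complete}, where the relevant rules ($\mu$-unfolding, the substitution lemmas) are shown sound for bisimilarity.

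The \textbf{trace} is defined as $\Tr^n_{p,q}(g) = (\iota_1; g)\,;\,[\id_q, (\iota_2;g)^\dagger]$, the standard construction of a trace in a co-Cartesian Conway theory. The traced monoidal axioms then follow from the general theorem that an iteration (Conway) theory with coproducts yields a trace via this formula, in the form due to Hasegawa/Ésik; no further chart reasoning is needed.

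I expect the main obstacle to be the composition identity for the dagger. There the recursion creates back-edges into substituted charts, and the bisimulation must relate states of $\mu v.(C[D/u])$ with those of the unfolded version while handling variable outputs redirected through $D^+$ and $E^+$. I would first try defining the relation by cases on which component a state originates from, and check the output and transition clauses separately for states that output $v$.
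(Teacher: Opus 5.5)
Your proposal is correct and follows essentially the paper's route. The paper also uses explicit injections with pairing, associativity of composition via a substitution lemma proved by a bisimulation, the three scalar Conway identities obtained from Milner's and Sewell's $\mu$-laws and extended via the pairing identity (Ésik's Remark~3.2), and the trace supplied by the Hasegawa--Haghverdi correspondence. The only cosmetic difference is the trace formula: you take the dual of Hasegawa's, $\iota_1;g;[\id_q,(\iota_2;g)^\dagger]$, whereas the paper uses Haghverdi's form, one dagger over all $p+n$ components precomposed with $\inl_{p,n}$, and the two agree in any Conway theory.
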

\noindent
\textbf{Pseudometric structure.} $\RegBeh$ additionally carries a well-behaved pseudometric structure. For all $m,n \in \N$ each set $\RegBeh(m,n)$ can be made into a pseudometric space by equipping it with a distance function, given by $d^{m,n}((f_1, \dots, f_m), (g_1, \dots, g_m)) = \sup_{1 \leq i \leq m} \left\{ \mathsf{bd}(f_i, g_i) \right\}$.
Intuitively, we calculate the distance between $m$-tuples of regular behaviours, by taking the pointwise behavioural distance of elements of tuples and then taking the maximum. In the corner case, when both tuples are empty, then they are simply at distance zero. This equipment satisfies the following property.
\begin{restatable}{proposition}{propnexp}
	Equipping each set $\RegBeh(m,n)$ of morphisms of $\RegBeh$ with a pseudometric defined above makes a sequential composition, pairing, parallel composition, dagger and trace into nonexpansive maps. 
\end{restatable}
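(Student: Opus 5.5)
Our plan is to reduce everything to two facts about single regular behaviours: substitution and recursion are nonexpansive with respect to $\mathsf{bd}$. We use the concrete characterisation of $\mathsf{bd}$ via the stratification $\sim^{(n)}$ stated above. With it, nonexpansiveness of an operation amounts to a combinatorial statement: if the arguments are pairwise $\sim^{(n)}$-related, then so are the results. A distance $2^{-n}$ between tuples means that every component pair is related by $\sim^{(n)}$, and the relations $\sim^{(n)}$ are decreasing in $n$. Distance $0$ is the limiting case, handled by bisimilarity, or equally by the fact that $\sim^{(n)}$ holds for all $n$ precisely when $\mathsf{bd}=0$.

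Pairing and the initial maps are immediate, since the distance on tuples is a pointwise supremum. Parallel composition is pairing combined with a renaming substitution of variables, and renaming is an isometry (it is a bijection on outputs). So the core is sequential composition. We must show that if $f_i \sim^{(n)} f'_i$ for all $i$ and $g_j \sim^{(n)} g'_j$ for all $j$, then $f_i[\vec g/\vec v] \sim^{(n)} f'_i[\vec g'/\vec v]$. We prove this by induction on $n$, working on the concrete substitution chart. The relation we use pairs states $q \in Q_f$, $q'\in Q_{f'}$ with $q\sim^{(k)}q'$ together with states of the $g$-charts paired with $g'$-states at level $k$. Every state $q$ outputting $v_j$ inherits the transitions and outputs of the start of $g_j$. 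Clause \circlednum{1} of $\sim^{(k+1)}$ on the $f$-part gives equal output sets. Hence $q$ and $q'$ inherit from the same $g_j$ and $g'_j$, which are $\sim^{(n)}$-related and so $\sim^{(k)}$-related for $k\le n$. The inherited transitions then match by the hypothesis on $g$, and the original transitions match by the hypothesis on $f$. Without loss of generality all charts are disjoint, and we may pass to $\Omega$ because homomorphisms are isometries (the behavioural-distance theorem, part \circlednum{4}).

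For the dagger we first treat $f\in\RegBeh(1,p+1)$, i.e. $\mu v.C$. The key claim is that $C\sim^{(n)} C'$ implies $\mu v.C \sim^{(n)} \mu v.C'$. In $\mu v.C$ a state $q$ with $v\in E(q)$ acquires the transitions of the start $s$ and the outputs $E(s)\setminus\{v\}$. We prove by induction on $k\le n$ that $q\sim^{(k)}_C q'$ implies $q\sim^{(k)}_{\mu} q'$ on the unfolded charts. The obstacle is that the inherited transitions come from $s,s'$, which are related only at level $n$, not at the current level $k$. But $k \le n$, and the decreasing chain ensures $s\sim^{(k-1)}s'$ for their successors, so the induction closes. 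Note that the hypothesis needed is about the original level $n$ at the start states, and we use the monotonicity $\sim^{(n)}\subseteq\sim^{(k)}$. We expect this bookkeeping, keeping the induction parameter $k$ separate from the fixed $n$, to be the main technical point. There is also the subtlety that the statement uses the unpointed prechart form of stratification, with $\sim^{(n)}$ on arbitrary state pairs.

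Finally, the extension of the dagger to $f\in\RegBeh(n,p+n)$ is defined inductively (Bekić-style, following \cite{Esik:1999:Group}) from the scalar dagger, composition, pairing and identities. So nonexpansiveness follows by induction on $n$ from the cases already established, since composites of nonexpansive maps are nonexpansive and identities are constant. The trace $\Tr^n_{p,q}(g)$ is likewise defined as a composite of $(-)^\dagger$ with sequential composition and coproduct structure applied to $g$. It is therefore nonexpansive as a composite of nonexpansive maps.
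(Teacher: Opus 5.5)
Your proposal is correct and follows the paper's proof essentially step by step. The paper also shows that substitution and $\mu$ preserve each $\sim^{(n)}$ and turns this into nonexpansiveness via the concrete characterisation of $\mathsf{bd}$; it then handles parallel composition as pairing plus a renaming substitution, extends to the vector dagger by induction on $n$ using the scalar pairing identity, and gets the trace from the dagger--trace correspondence. The one difference is minor: the paper works with Milner's operational semantics of expressions and derives the $\mu$-case from the substitution lemma applied to the unfolding $e'[\mu v.e/v]$, whereas you argue on the concrete chart $\mu v.C$ with an inner induction on $k\le n$, which does not need the substitution lemma.
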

\begin{remark}
    Although we do not pursue such a perspective in this paper, a categorically inclined reader may observe that equipping $\RegBeh$ with a pseudometric structure yields an enrichment over $\mathsf{PMet}$---the monoidal category of pseudometric spaces with the categorical product as its monoidal product. Since $\mathsf{PMet}$ is a concrete category (i.e., it admits a faithful functor to $\mathsf{Set}$), the conditions for enrichment simplify significantly: they reduce to verifying that all the relevant operations (composition, monoidal product, etc.) are nonexpansive on each homset and that the components of the natural transformations that define the monoidal structure are nonexpansive maps, which is exactly what we have shown above. For further details on pseudometric enrichment of monoidal categories and quantitative equational theories, we refer the reader to the recent work of Lobbia et al.~\cite{Lobbia:2024:Quantitative}.
\end{remark}
\noindent
\textbf{Bidirectional maps and loops. } Each morphism $f \colon m \to n$ of $\RegBeh$ can be informally thought of as a process that has \emph{directionality} to it, i.e. it takes $m$ inputs and produces $n$ outputs. Informally speaking, the trace operator provides a \emph{global} feedback operation. At the same time, the syntax of our diagrammatic language is \emph{bidirectional} and the notion of feedback is introduced \emph{locally} by bending the wires. To reconcile these points of view, we rely on the \textbf{Int} construction~\cite{Joyal:1996:Traced}, which takes a traced monoidal category $\cat{C}$ and completes it into a compact closed category $\Int{\cat{C}}$, a categorical structure with sequential and parallel composition equipped with duals (allowing to swap directionality) and adjoints (allowing to form local loops representing feedback)~\cite{kellylaplaza}. $\Int{\cat{C}}$ carries the same information as $\cat{C}$, but represents it in an alternative, bidirectional way. 
We now briefly describe $\Int{\RegBeh}$ (see \Cref{sec:apx_int} for more detail). 
	\begin{itemize}[topsep=0pt]
		\item The objects of $\Int{\RegBeh}$ are pairs $(m,n)$ of natural numbers. 
		\item A morphism $f \colon (k,l) \to (m,n)$, representing a process with $k$ left inputs, $l$ left outputs, $m$ right outputs and $n$ right inputs is a map $f \colon k + m \to l + n$ in $\RegBeh$, i.e. we group inputs and outputs together. Composition of $f \colon (k,l) \to (m,n)$ and $g \colon (m,n) \to (p,q)$ is defined by forming a trace that resolves the feedback involving $m$ and $n$.
		\item The parallel composition of $f \colon (m,n) \to (p,q)$ and $g \colon (m',n') \to (p',q')$ is given by the map $f \otimes g \colon (m + m', n + n') \to (p + p', q + q')$ that is defined via parallel composition in $\RegBeh$ combined with an appropriate reordering of elements of tuples involved.
		\item A dual of the object $(m,n)$ of $\Int{\RegBeh}$ is given by $(n,m)$. Intuitively, inputs become swapped with outputs. For each object $(m,n)$ of $\Int{\RegBeh}$, there is a unit map $\eta_{(m,n)} \colon (0,0) \to (m+n, m+n)$ and counit $\epsilon_{(m,n)} \colon (m+n, m+n) \to (0,0)$. These represent the bending of the wires on the right and left respectively. Following~\cite{Joyal:1996:Traced}, one can equip $\Int{\RegBeh}$ with a trace operator defined in terms of the units and counits.
		\item $\Int{\RegBeh}$ inherits the pseudometric structure from $\RegBeh$, by setting $d^{(k,l),(m,n)}$ to be given by $d^{k + n, l +n}$ (defined before).
	\end{itemize}
	The following theorem intuitively states that on \emph{directional} processes $\Int{\RegBeh}$ is exactly the same as $\RegBeh$. 
	\begin{theorem}[{\cite[Proposition~5.1]{Joyal:1996:Traced}}]\label{thm:trace_embeds_int}
		There is a full and faithful traced monoidal functor $N \colon \RegBeh \to \Int{\RegBeh}$ that takes each $f \colon n \to m$ in $\RegBeh$ to $f \colon (n,0) \to (m,0)$
	\end{theorem}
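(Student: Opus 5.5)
The statement is Joyal, Street and Verity's general result, specialised to $\RegBeh$. Our plan is therefore to check that $\RegBeh$ meets its hypotheses, and then to spell out the functor concretely so that fullness and faithfulness can be read off. The hypotheses are that $\RegBeh$ is a traced symmetric monoidal category, which is the fourth item of the preceding theorem. There the trace $\Tr^{n}_{p,q}$ is defined from the Conway dagger and the Conway identities are shown to give the Joyal--Street--Verity axioms. We would define $N$ on objects by $n \mapsto (n,0)$. On a morphism $f \colon n \to m$ we would use the fact that a morphism $(n,0) \to (m,0)$ of $\Int{\RegBeh}$ is by definition a map $n + 0 \to 0 + m$ of $\RegBeh$. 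Since $\RegBeh$ is strict and $0$ is the monoidal unit, this is literally a map $n \to m$, so we set $N(f) = f$.

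\textbf{Functoriality.} Identities in $\Int{\RegBeh}$ on $(n,0)$ are $\id_{n+0}$ composed with symmetries involving the zero object, and these symmetries are identities. So $N(\id_n) = \id_n$. For composition, given $f \colon (k,0) \to (m,0)$ and $g \colon (m,0) \to (p,0)$, the $\Int{}$-composite is the trace over the feedback wires of the $n$-component. Here $n = 0$, so it is $\Tr^{0}(\ldots)$. By the vanishing axiom $\Tr^0 = \id$, and the symmetries that reorder the zero-width wires are identities. Hence the composite reduces to $f;g$.

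\textbf{Monoidal and traced structure.} The tensor $(k,0)\otimes(m,0) = (k+m,0)$, and the reordering permutation used in the $\Int{}$-tensor is trivial when the second components vanish. So $N(f\oplus g) = f \otimes g$ strictly. For the trace we would unfold the canonical trace on $\Int{\RegBeh}$, which is built from $\eta$ and $\epsilon$ on $(n,0)$, and show that it collapses to $\Tr^{n}$ in $\RegBeh$. This uses yanking and the naturality (sliding) axioms of the trace in $\RegBeh$.

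\textbf{Full and faithful.} The hom-set $\Int{\RegBeh}((n,0),(m,0))$ equals $\RegBeh(n+0,0+m) = \RegBeh(n,m)$, and $N$ acts as the identity on it. So $N$ is a bijection on each hom-set.

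The only non-bookkeeping step is the preservation of trace. That diagram chase with units and counits is exactly Joyal--Street--Verity's Proposition~5.1, so we would cite it rather than redo it, the argument being valid in any traced symmetric monoidal category.
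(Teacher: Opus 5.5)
Your proposal is correct and takes the paper's route. The paper gives no separate argument: it obtains the traced structure on $\RegBeh$ from the Conway dagger via the Hasegawa--Haghverdi correspondence and then invokes Joyal--Street--Verity's Proposition~5.1, exactly as you do, and your checks of functoriality, strict monoidality and $\Int{\RegBeh}((n,0),(m,0))=\RegBeh(n,m)$ just unpack that citation.

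One correction: the paper's composition in $\Int{\RegBeh}$ traces $\alpha;(f\otimes g);\beta$ over $B^-\otimes B^+ = 0+m$. So $N(f);N(g)$ is a $\Tr^{m}$, not a $\Tr^{0}$, and reducing it to $f;g$ needs naturality and yanking of the trace, not just vanishing. Your $\Tr^{0}$ argument is valid only under Joyal--Street--Verity's convention, which traces over $B^-$ alone.
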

	Moreover, the pseudometric structure interacts well with the operations of $\Int{\RegBeh}$.
	\begin{restatable}{proposition}{propnexpint}\label{cor:sem_enriched}
		The sequential and parallel composition in $\Int{\RegBeh}$ is nonexpansive. Moreover, the fully faithful functor $N \colon \RegBeh \to \Int{\RegBeh}$ is locally an isometry, i.e. for all $f,g \colon m \to n$, we have that $d^{m,n}(f,g)=d^{(m,0), (n,0)}(N(f),N(g))$.
	\end{restatable}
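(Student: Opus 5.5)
The plan is to reduce everything to the nonexpansiveness results already established for $\RegBeh$ in the earlier proposition on nonexpansive operations. We unfold the definitions of the $\Int{\RegBeh}$ operations as composites of $\RegBeh$ operations, and we observe that the metric on $\Int{\RegBeh}$ is literally the metric of the underlying homset of $\RegBeh$.

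\textbf{Isometry of $N$.} This is immediate from the definitions. $N(f)\colon (m,0)\to(n,0)$ is the same tuple $f\in\RegBeh(m+0,n+0)$. By definition, $d^{(m,0),(n,0)}$ is $d^{m+0,n+0}=d^{m,n}$. Hence $d^{m,n}(f,g)=d^{(m,0),(n,0)}(N(f),N(g))$ on the nose.

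\textbf{Parallel composition.} For $f\colon(m,n)\to(p,q)$ and $g\colon(m',n')\to(p',q')$, the product $f\otimes g$ is, following Joyal--Street--Verity, the composite
\[
\sigma \,;\, (f\oplus g) \,;\, \sigma',
\]
where $\sigma,\sigma'$ are fixed symmetry morphisms in $\RegBeh$ reordering wires. Precomposition and postcomposition with a fixed morphism are nonexpansive, since sequential composition is nonexpansive as a two-argument map with max-metric and the fixed argument contributes distance $0$. The map $\oplus$ is also nonexpansive. Hence
\[
d(f\otimes g,f'\otimes g')\le d(f\oplus g,f'\oplus g')\le\max\{d(f,f'),d(g,g')\}.
\]
One subtlety is that the metric on $\Int{\RegBeh}$ is defined in terms of the underlying $\RegBeh$ homset. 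We should check that the indexing convention stated (namely $d^{k+n,l+n}$ for $(k,l)\to(m,n)$, which is presumably a typo for $d^{k+n,l+m}$) agrees with the hom-set of the underlying map. Since the distance is a sup over tuple components, a reordering of components (the permutation $\sigma'$ acting on outputs) leaves it invariant. Substitution of variables by a permutation of variables (the $\sigma$ on inputs) also preserves $\mathsf{bd}$, being a bijective renaming and hence inducing a chart isomorphism.

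\textbf{Sequential composition.} For $f\colon(k,l)\to(m,n)$ and $g\colon(m,n)\to(p,q)$, the composite is
\[
\Tr^{n}\bigl(\sigma;(f\oplus g);\sigma'\bigr)
\]
with an appropriate trace over the feedback wires. Nonexpansiveness of the trace, of $\oplus$, and of composition with fixed symmetries then gives
\[
d(f;g,f';g')\le\max\{d(f,f'),d(g,g')\}.
\]

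The main obstacle is bookkeeping rather than mathematics. One has to pin down the exact formula for composition in $\Int{\RegBeh}$ as given in the appendix, including which wires are traced out and the precise symmetries. One must also verify that every ingredient appearing there is covered by the earlier proposition or is a fixed structural morphism. The one genuinely semantic point is that composing with symmetries is an isometry, which I would argue via nonexpansiveness in both directions, since symmetries are invertible.
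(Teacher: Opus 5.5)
Your proposal is correct and follows the paper's own proof. The paper likewise derives nonexpansiveness of composition and tensor in $\Int{\RegBeh}$ from the nonexpansiveness of the trace, of sequential composition and of $\oplus$ in $\RegBeh$, and reads off the isometry of $N$ directly from the definition of the metric; you are also right that the index should be $d^{k+n,l+m}$. Your extra discussion of symmetries acting isometrically is unnecessary, since pre/postcomposing with a fixed morphism is already nonexpansive by the sequential-composition lemma, and in that discussion you have the roles swapped: precomposing with a symmetry permutes tuple components, while postcomposing renames variables.
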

	\noindent
	\textbf{Functorial semantics.} We are ready to state the semantics of our diagrammatic language $\sem{-} \colon \Syn \to \Int{\RegBeh}$ as a symmetric monoidal functor from $\Syn$ to $\Int{\RegBeh}$. Since the syntax is a freely generated prop, in order to interpret arbitrary string diagrams it is enough to just define the interpretation of the generating morphisms of $\Syn$. We have:
	\begin{gather*}
	\sem{
\tikzset{x=1em, y=2.1ex}
\InputIfFileExists{lr-copy.tikz}{}{\input{./tikz/lr-copy.tikz}}
\tikzset{x=1em, y=1.5ex}
}=  N(v_1 + v_2)  \qquad 
	\sem{
\tikzset{x=1em, y=2.1ex}
}
\tikzset{x=1em, y=1.5ex}
} =  N(0)  \qquad
	\sem{
\tikzset{x=1em, y=2.1ex}
}
\tikzset{x=1em, y=1.5ex}
} =  N(())  \qquad
	\sem{
\tikzset{x=1em, y=2.1ex}
\InputIfFileExists{lr-merge.tikz}{}{\input{./tikz/lr-merge.tikz}}
\tikzset{x=1em, y=1.5ex}
} = N(\nabla_1) \\
	\sem{\scalar{a}} =  N(a.v_1)  \qquad 
	\sem{
\tikzset{x=1em, y=2.1ex}
\InputIfFileExists{cap-down.tikz}{}{\input{./tikz/cap-down.tikz}}
\tikzset{x=1em, y=1.5ex}
} = \epsilon_{(1, 0)} \qquad
	\sem{
\tikzset{x=1em, y=2.1ex}
\InputIfFileExists{cup-down.tikz}{}{\input{./tikz/cup-down.tikz}}
\tikzset{x=1em, y=1.5ex}
} = \eta_{(1, 0)} \qquad
\end{gather*}
We interpret $
\tikzset{x=1em, y=2.1ex}
\InputIfFileExists{lr-copy.tikz}{}{\input{./tikz/lr-copy.tikz}}
\tikzset{x=1em, y=1.5ex}
$ as nondeterministic choice, $
\tikzset{x=1em, y=2.1ex}
}
\tikzset{x=1em, y=1.5ex}
$ as the behaviour of the empty chart, while $
\tikzset{x=1em, y=2.1ex}
}
\tikzset{x=1em, y=1.5ex}
$ corresponds to the empty tuple. The generator $
\tikzset{x=1em, y=2.1ex}
\InputIfFileExists{lr-merge.tikz}{}{\input{./tikz/lr-merge.tikz}}
\tikzset{x=1em, y=1.5ex}
$ implements variable substitution: given a pair of processes $(t_1, t_2)$ with outputs $v_1$ and $v_2$ respectively and composing it with $
\tikzset{x=1em, y=2.1ex}
\InputIfFileExists{lr-merge.tikz}{}{\input{./tikz/lr-merge.tikz}}
\tikzset{x=1em, y=1.5ex}
$ on the right \emph{merges} the two output variables, replacing $v_2$ with $v_1$ to yield $(t_1, t_2[v_1/v_2])$. For each letter $a \in \Sigma$ of the alphabet, we view $\scalar{a}$ as the prefixing operation. Finally, $
\tikzset{x=1em, y=2.1ex}
\InputIfFileExists{cap-down.tikz}{}{\input{./tikz/cap-down.tikz}}
\tikzset{x=1em, y=1.5ex}
$ and $
\tikzset{x=1em, y=2.1ex}
\InputIfFileExists{cup-down.tikz}{}{\input{./tikz/cup-down.tikz}}
\tikzset{x=1em, y=1.5ex}
$ are interpreted using counit and unit of $\Int{\RegBeh}$, allowing one to create loops. Such loops correspond to expressions in Milner's syntax featuring the recursion operator $\mu$.

\section{Axiomatisation}
\label{sec:axioms}

Our main aim in this paper is to find a set of (quantitative) equations to reason about semantic distance directly at the level of the diagrams themselves. To do so, we distinguish two different relations on diagrams of the same type:
\begin{itemize}[topsep=0pt]
\item An equational theory intended to capture (strong) bisimilarity of regular behaviours, allowing us to simplify the diagrams whose distance is being compared.
\item A quantitative equational theory intended to capture the behavioural distance of ~\Cref{sec:semantics}, that is the subject of the completeness theorem (\Cref{thm:completeness}) described in \Cref{sec:completeness}. Note that this theory contains the equational axioms as rules for distance zero.
\end{itemize}
\noindent
\textbf{Equational theory. } Our equational theory is the smallest congruence (w.r.t to vertical and horizontal compositions) that includes the axioms of Fig.~\ref{fig:equational-axioms}. In practice, this means that, if we find a sub-diagram that matches one side of an axiom in a larger diagram, we can replace it with the other side of the axiom (the left and right-hand side of any axiom have the same type)~\cite[Section 2.1]{piedeleu2023introduction}. 
\begin{figure}[h!]
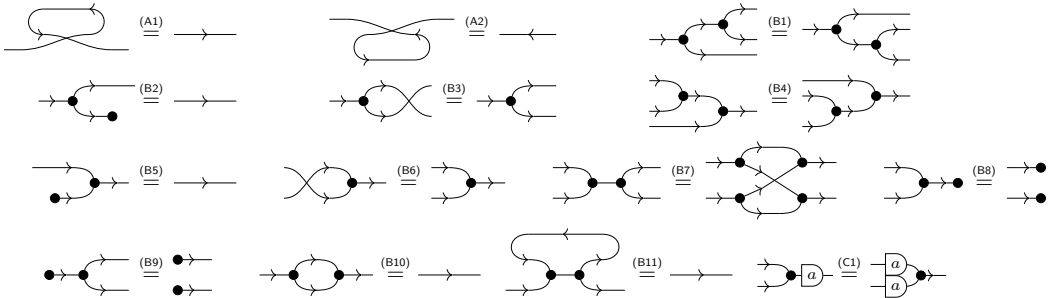

{
\scalebox{0.85}{ 
\[
\begin{aligned}

\tikzset{x=1em, y=2.1ex}
\InputIfFileExists{yanking-bis.tikz}{}{\input{./tikz/yanking-bis.tikz}}
\tikzset{x=1em, y=1.5ex}
\;&\myeq{A1}\;\arrowright \qquad\qquad
\tikzset{x=1em, y=2.1ex}
\InputIfFileExists{yanking.tikz}{}{\input{./tikz/yanking.tikz}}
\tikzset{x=1em, y=1.5ex}
\;\myeq{A2}\;\arrowleft\qquad \qquad
 
\tikzset{x=1em, y=2.1ex}
\InputIfFileExists{co-associativity.tikz}{}{\input{./tikz/co-associativity.tikz}}
\tikzset{x=1em, y=1.5ex}
 \; \myeq{B1} \; 
\tikzset{x=1em, y=2.1ex}
\InputIfFileExists{co-associativity-1.tikz}{}{\input{./tikz/co-associativity-1.tikz}}
\tikzset{x=1em, y=1.5ex}
\\
  
\tikzset{x=1em, y=2.1ex}
\InputIfFileExists{right-co-unitality.tikz}{}{\input{./tikz/right-co-unitality.tikz}}
\tikzset{x=1em, y=1.5ex}
 &\myeq{B2} \; \arrowright\qquad\qquad 

\tikzset{x=1em, y=2.1ex}
\InputIfFileExists{co-commutativity.tikz}{}{\input{./tikz/co-commutativity.tikz}}
\tikzset{x=1em, y=1.5ex}
\; \myeq{B3}\; 
\tikzset{x=1em, y=2.1ex}
\InputIfFileExists{large-copy.tikz}{}{\input{./tikz/large-copy.tikz}}
\tikzset{x=1em, y=1.5ex}
\qquad\qquad 
\tikzset{x=1em, y=2.1ex}
\InputIfFileExists{associativity.tikz}{}{\input{./tikz/associativity.tikz}}
\tikzset{x=1em, y=1.5ex}
 \; \myeq{B4} \; 
\tikzset{x=1em, y=2.1ex}
\InputIfFileExists{associativity-1.tikz}{}{\input{./tikz/associativity-1.tikz}}
\tikzset{x=1em, y=1.5ex}

\\

\tikzset{x=1em, y=2.1ex}
\InputIfFileExists{right-unitality.tikz}{}{\input{./tikz/right-unitality.tikz}}
\tikzset{x=1em, y=1.5ex}
\; &\myeq{B5} \; \arrowright \qquad 
\tikzset{x=1em, y=2.1ex}
\InputIfFileExists{commutativity.tikz}{}{\input{./tikz/commutativity.tikz}}
\tikzset{x=1em, y=1.5ex}
\; \myeq{B6}\; 
\tikzset{x=1em, y=2.1ex}
\InputIfFileExists{large-merge.tikz}{}{\input{./tikz/large-merge.tikz}}
\tikzset{x=1em, y=1.5ex}
 \qquad

\tikzset{x=1em, y=2.1ex}
\InputIfFileExists{bimonoid.tikz}{}{\input{./tikz/bimonoid.tikz}}
\tikzset{x=1em, y=1.5ex}
\; \myeq{B7} \;
\tikzset{x=1em, y=2.1ex}
\InputIfFileExists{bimonoid-1.tikz}{}{\input{./tikz/bimonoid-1.tikz}}
\tikzset{x=1em, y=1.5ex}
\qquad

\tikzset{x=1em, y=2.1ex}
\InputIfFileExists{merge-delete.tikz}{}{\input{./tikz/merge-delete.tikz}}
\tikzset{x=1em, y=1.5ex}
\; \myeq{B8} \;
\tikzset{x=1em, y=2.1ex}
\InputIfFileExists{merge-delete-1.tikz}{}{\input{./tikz/merge-delete-1.tikz}}
\tikzset{x=1em, y=1.5ex}

\\

\tikzset{x=1em, y=2.1ex}
\InputIfFileExists{copy-co-delete.tikz}{}{\input{./tikz/copy-co-delete.tikz}}
\tikzset{x=1em, y=1.5ex}
\; &\myeq{B9} \;
\tikzset{x=1em, y=2.1ex}
\InputIfFileExists{copy-co-delete-1.tikz}{}{\input{./tikz/copy-co-delete-1.tikz}}
\tikzset{x=1em, y=1.5ex}

\qquad

\tikzset{x=1em, y=2.1ex}
\InputIfFileExists{idempotence.tikz}{}{\input{./tikz/idempotence.tikz}}
\tikzset{x=1em, y=1.5ex}
\; \myeq{B10} \;\arrowright \quad 
\tikzset{x=1em, y=2.1ex}
\InputIfFileExists{feedback.tikz}{}{\input{./tikz/feedback.tikz}}
\tikzset{x=1em, y=1.5ex}
\; \myeq{B11} \; \arrowright\quad {\small 
\tikzset{x=1em, y=2.1ex}
\InputIfFileExists{co-copy-atom.tikz}{}{\input{./tikz/co-copy-atom.tikz}}
\tikzset{x=1em, y=1.5ex}
} \myeq{C1}\: {\small 
\tikzset{x=1em, y=2.1ex}
\InputIfFileExists{co-copy-atom-1.tikz}{}{\input{./tikz/co-copy-atom-1.tikz}}
\tikzset{x=1em, y=1.5ex}
}
\end{aligned}
\]
}
}
\caption{Equational axioms for regular behaviours.}
\label{fig:equational-axioms}
\end{figure}

Axioms \textsf{A1}-\textsf{A2} are those of \emph{compact closed categories}~\cite{kellylaplaza} and allow us to bend and straighten wires at will, only keeping track of their directions. Crucially, they also allow us to manipulate feedback loops; in Milner's syntax, they are the structural rules that let one rearrange a system of recursive equations without changing its solution.

 \textsf{B1-B3} encode the fact that $\Bcomult$ and $\Bcounit$ form a \emph{cocommutative comonoid}. These guarantee that nondeterministic choice behaves like a commutative, associative and unital operation in our diagrammatic syntax. \textsf{B1} corresponds to associativity $t_1 + (t_2 + t_3) = (t_1 + t_2) + t_3$, \textsf{B2} to the right unit law $t + 0 = t$, and \textsf{B3} to commutativity $t_1 + t_2 = t_2 + t_1$. At the same time, \textsf{B4-B6} are the dual of the previous three, and make $\Bmult$ and $\Bunit$ into a \emph{commutative monoid}. 
 These guarantee that output wires behave like the variables of our diagrammatic syntax: read in Milner's notation, \textsf{B4} states that successive merges associate, so that merging $v_3$ into $v_2$ and then $v_2$ into $v_1$ yields the same result as merging $v_3$ into $v_2$ after $v_2$ has already been merged into $v_1$; \textsf{B5} states that renaming an empty tuple has no effect; and \textsf{B6} states that merging is symmetric in the two variables being identified. 
 
 \textsf{B7}-\textsf{B9} make the previous monoid-comonoid pair into a \emph{bimonoid} expressing how nondeterministic choice and variable-merging interact. Concretely, \textsf{B7} states that merging two variables and then forming a nondeterministic choice can be expressed as a pair of nondeterministic choices followed by pairwise merging. \textsf{B8} says that the merging operation followed by a deadlock is the same as a pair of deadlocked processes, and \textsf{B9} says that the empty tuple of processes is itself unchanged by variable merging. \textsf{B10} makes nondeterministic choice an idempotent operation $t + t = t$. 
 
 Axiom \textsf{B11} allows one to remove unguarded loops, capturing the standard equation $\mu v_1. v_1 = 0$. Finally, \textsf{C1} encodes the fact that merging tuples of variables interacts as expected with prefixing: $(a.v_1, a.v_1)=(a.v_1, a.v_2[v_1/v_2]) $. Note that, if we replace the $\Bmult$ with $\Bcomult$, the resulting equality (prefixes distribute over nondeterministic sum) is not valid, which is precisely the law that fails for bisimilarity, but holds for trace equivalence~\cite{piedeleu2023finite}.
\begin{restatable}[Soundness]{lemma}{soundnessequality}
For any two diagrams $f,g\from v\to w$ of $\Syn$, if $f=g$ then $\sem{f}=\sem{g}$. 
\end{restatable}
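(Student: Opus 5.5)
Since the equational theory is the smallest congruence (with respect to $;$ and $\oplus$) containing the axioms of Fig.~\ref{fig:equational-axioms}, and $\sem{-}\colon \Syn \to \Int{\RegBeh}$ is a symmetric monoidal functor, the kernel $\{(f,g) \mid \sem{f}=\sem{g}\}$ is itself a congruence that contains the SMC laws. It therefore suffices to check that each axiom holds under $\sem{-}$; the result then follows by a routine induction on the derivation of $f=g$. I would split the axioms into three groups.

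\textbf{Compact closed axioms \textsf{A1}--\textsf{A2}.} These are the snake equations for $\eta_{(1,0)}$ and $\epsilon_{(1,0)}$. They hold in any $\Int{\cat{C}}$ by the construction of Joyal, Street and Verity~\cite{Joyal:1996:Traced}, since $\Int{\RegBeh}$ is compact closed. This step needs no computation.

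\textbf{Directional axioms \textsf{B1}--\textsf{B10} and \textsf{C1}.} Every generator involved is in the image of $N$. By Theorem~\ref{thm:trace_embeds_int}, $N$ is a faithful monoidal functor, so it suffices to verify the corresponding equalities in $\RegBeh$, i.e.\ equalities of tuples of regular behaviours, using the definitions of composition (substitution) and $\oplus$. \textsf{B1}--\textsf{B3} and \textsf{B10} reduce to associativity, commutativity, unit and idempotence of $+$ up to bisimilarity. For example, \textsf{B2} is $(v_1+v_2)[(v_1,0)/\vec v] = v_1 + 0 \sim v_1$. \textsf{B4}--\textsf{B6} and \textsf{B8}, \textsf{B9} are facts about tuples of variables and the initial object. \textsf{B7} asks that $(v_1+v_2,\,v_1+v_2)$ arise both ways, which is a direct substitution calculation. \textsf{C1} reads $(a.v_1, a.v_1) = (a.v_1, a.v_2)[v_1/v_2]$, which is immediate. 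Each of these checks is finite and is done by exhibiting obvious bisimulations between the small charts involved.

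\textbf{Feedback axiom \textsf{B11}.} This is the expected main obstacle, because it mixes the unit and counit with $\Bcomult$ and so lives genuinely in $\Int{\RegBeh}$. The plan is as follows.
\begin{enumerate}
\item Rewrite the left-hand side, using the standard fact that in $\Int{\cat{C}}$ the trace is given by the units and counits, as $N(\Tr^1_{1,1}(g))$, where $g$ is the interpretation of the copy node as a morphism $2\to 2$ in $\RegBeh$. Since $N$ is traced, this is legitimate.
\item Unfold $\Tr$ in terms of $(-)^\dagger$. The result is computed by a $\mu$ applied to a behaviour of the form $v_{1}+v_{2}$, with the looped variable output directly: $\mu v_2.(v_1+v_2)$.
\item Use the definition of recursion on charts. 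The start state outputs $v_2$, so $D^+(s)=D(s)$ and $E^+(s)=(\{v_1,v_2\}\cup\{v_1,v_2\})\setminus\{v_2\}=\{v_1\}$. The result is bisimilar to $v_1$, which is $\sem{\arrowright}$. This is the semantic counterpart of $\mu v.v=0$.
\end{enumerate}
The care needed is in matching the exact orientation and indexing conventions of $\Int{\RegBeh}$ and the dagger extension. I would settle this by carefully expanding the composite $\eta;(\text{copy}\oplus\id);(\id\oplus\epsilon)$ using the explicit trace formula from \Cref{sec:apx_int}.
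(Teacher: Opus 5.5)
Your proposal is correct and follows the paper's own proof. Both check each axiom: \textsf{A1}--\textsf{A2} via the compact closed (yanking) structure of $\Int{\RegBeh}$, the left-to-right axioms as tuple computations in $\RegBeh$ through $N$, and \textsf{B11} via the trace--dagger correspondence, which reduces it to $\mu v_2.(v_1+v_2)=v_1$. One small slip: the \textsf{B11} left-hand side is the trace of merge followed by copy, i.e.\ of $\nabla_1;(v_1+v_2)=(v_1+v_2,v_1+v_2)\colon 2\to 2$ (equivalently the dagger of $(v_1+v_2)$), not a loop on the copy node alone, so your composite $\eta;(\text{copy}\oplus\id);(\id\oplus\epsilon)$, which has type $\epsilon\to\objr$, should also include the merge and the external input.
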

\noindent
\textbf{Quantitative theory. } We define $E_q$ to be the set of triples of the form $(s, \varepsilon, t)$, where $s,t$ are string diagrams of the same type and $\varepsilon \in \Qp$, to be the least set closed under the rules of Fig.~\ref{fig:quantitative-axioms}. We will call elements of that set \emph{derivable equations}. 
\begin{figure}[h!]
{
\mprset{vskip=1ex}
\scriptsize
\begin{equation*}
\inferrule{\dbox{f}{m}{n} \disteq{\varepsilon} \dbox{g}{m}{n}\qquad \vec{a} \in\Sigma^m}{
\tikzset{x=1em, y=2.1ex}
\InputIfFileExists{prefix-f.tikz}{}{\input{./tikz/prefix-f.tikz}}
\tikzset{x=1em, y=1.5ex}
 \disteq{\epsilon/2} 
\tikzset{x=1em, y=2.1ex}
\InputIfFileExists{prefix-g.tikz}{}{\input{./tikz/prefix-g.tikz}}
\tikzset{x=1em, y=1.5ex}
}{\mathsf{(Pref)}} 
\qquad 	\inferrule{\diagbox{f}{v}{w} \disteq{\varepsilon} \diagbox{g}{v}{w}\qquad \varepsilon' \geq \varepsilon}{\diagbox{g}{v}{w} \disteq{\varepsilon'} \diagbox{f}{v}{w}}{\mathsf{(Max)}}
\end{equation*}
\vspace{1em}
\begin{equation*}
\inferrule{\diagbox{f}{v}{w} = \diagbox{g}{v}{w}}{\diagbox{f}{v}{w}\disteq{0} \diagbox{g}{v}{w}}{\mathsf{(Refl)}}
\qquad
	\inferrule{\diagbox{f}{v}{w} \disteq{\varepsilon} \diagbox{g}{v}{w}}{\diagbox{g}{v}{w} \disteq{\varepsilon} \diagbox{f}{v}{w}}{\mathsf{(Sym)}}
\end{equation*}
\vspace{1em}
\begin{equation*}
	\inferrule{\left\{\diagbox{f}{v}{w} \disteq{\varepsilon'} \diagbox{g}{v}{w}\right\}_{\varepsilon' > \varepsilon}}{\diagbox{f}{v}{w} \disteq{\varepsilon} \diagbox{g}{v}{w}}{\mathsf{(Cont)}}
\qquad
	\inferrule{\diagbox{f}{v}{w} \disteq{\varepsilon_1} \diagbox{g}{v}{w}\qquad \diagbox{g}{v}{w} \disteq{\varepsilon_2} \diagbox{h}{v}{w}}{\diagbox{f}{v}{w} \disteq{\varepsilon_1 + \varepsilon_2} \diagbox{h}{v}{w}}{\mathsf{(Triang)}}
\end{equation*}
\vspace{1em}
\begin{equation*}
	\inferrule{\diagbox{f}{u}{v}\disteq{\varepsilon_1} \diagbox{h}{u}{v}\qquad \diagbox{g}{v}{w} \disteq{\varepsilon_2} \diagbox{i}{v}{w}}{
\tikzset{x=1em, y=2.1ex}
\InputIfFileExists{comp-sequential-fg.tikz}{}{\input{./tikz/comp-sequential-fg.tikz}}
\tikzset{x=1em, y=1.5ex}
  \disteq{\max\{\varepsilon_1,\varepsilon_2\}} 
\tikzset{x=1em, y=2.1ex}
\InputIfFileExists{comp-sequential-hi.tikz}{}{\input{./tikz/comp-sequential-hi.tikz}}
\tikzset{x=1em, y=1.5ex}
}{\mathsf{(Seq)}}\qquad
	\inferrule*[width=20pt]{ }{\diagbox{f}{v}{w} \disteq{1} \diagbox{g}{v}{w}}{\mathsf{(Top)}}
\end{equation*}
\vspace{1em}
\begin{equation*}
	\inferrule{\diagbox{f_1}{v_1}{w_1} \disteq{\varepsilon_1} \diagbox{g_1}{v_1}{w_1}\qquad \diagbox{f_2}{v_2}{w_2} \disteq{\varepsilon_2} \diagbox{g_2}{v_2}{w_2}}{
\tikzset{x=1em, y=2.1ex}
\InputIfFileExists{comp-parallel-fg.tikz}{}{\input{./tikz/comp-parallel-fg.tikz}}
\tikzset{x=1em, y=1.5ex}
  \disteq{\max\{\varepsilon_1,\varepsilon_2\}} 
\tikzset{x=1em, y=2.1ex}
\InputIfFileExists{comp-parallel-hi.tikz}{}{\input{./tikz/comp-parallel-hi.tikz}}
\tikzset{x=1em, y=1.5ex}
}{\mathsf{(Tens)}}
	\qquad
	\inferrule{\qquad}{\;\Bunitn{n} \;\disteq{0}\; 
\tikzset{x=1em, y=2.1ex}
\InputIfFileExists{co-del-d.tikz}{}{\input{./tikz/co-del-d.tikz}}
\tikzset{x=1em, y=1.5ex}
\;}{\mathsf{(Codel)}}
\end{equation*}
}
\caption{Quantitative axioms for regular behaviours.}
\label{fig:quantitative-axioms}
\end{figure}
For any two diagrams $f,g\from v\to w$ of $\Syn$, we say a quantitative equation $f\disteq{\varepsilon} g$ is \emph{valid} if $d^{\sem{v},\sem{w}}(\sem{f}, \sem{g}) \leq \varepsilon$. Analogously, an inference rule is valid if, whenever all quantitative equations in the premise are valid, then the equation in the conclusion in the equation is also valid. 

We now briefly explain each of the rules of our inference system depicted on \Cref{fig:quantitative-axioms}. \textsf{(Refl)}, \textsf{(Sym)} and \textsf{(Triang)} respectively capture reflexivity, symmetry and triangle inequality of pseudometric spaces. Importantly, rule \textsf{(Refl)} allows one to state that equal diagrams (modulo strictly equational rules described above) are at distance zero of each other. \textsf{(Top)} allows us to state that any two diagrams are at most within distance $1$ of each other, while \textsf{(Max)} allows one to always weaken the bound on the distance at which two diagrams are.  \textsf{(Cont)} is the key analytic inference rule here: it allows us to conclude that two diagrams are within distance $\varepsilon$, provided we can show that they are within all distances that are strictly greater than $\varepsilon$, thereby passing to the limit. 

Next, \textsf{(Seq)} and \textsf{(Tens)} relate the horizontal and vertical compositions of diagrams to the distance. There are two domain-specific rules that we use. \textsf{(Pref)} witnesses the fact that prefixing by the same actions decreases the distance by half. 
Note that it applies to diagram $\objr^m\to\objr^n$ and that we use $
\tikzset{x=1em, y=2.1ex}
\InputIfFileExists{m-a-m.tikz}{}{\input{./tikz/m-a-m.tikz}}
\tikzset{x=1em, y=1.5ex}
$ to denote the vertical composition of $m$ many $\scalar{a}$ generators, for any $m\in\mathbb{N}$. We note that the factor $\frac{1}{2}$ used in \Cref{def:edge} can be replaced with arbitrary real number $k \in \interval{0}{1}$, but this would require appropriately changing the rule \textsf{(Pref)}, as it is the case in the quantitative axiomatisation of the behavioural distance of DFA~\cite{Rozowski:2024:Complete}. Finally, \textsf{(Codel)} axiom encodes that any diagram with no initial state has no behaviour and is therefore at distance zero of the empty chart.

Through a straightforward structural induction, one can show the soundness of the proposed quantitative rules.
\begin{restatable}[Quantitative soundness]{theorem}{soundness}\label{thm:soundness}
Every derivable equation $f \disteq{\varepsilon} g$ is valid.
\end{restatable}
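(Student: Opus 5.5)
We argue by induction on the derivation of $f \disteq{\varepsilon} g$ in $E_q$, showing that each rule of Fig.~\ref{fig:quantitative-axioms} is valid. Throughout, validity is measured in $\Int{\RegBeh}$ with the inherited pseudometric $d^{\sem{v},\sem{w}}$, which on a homset $(k,l)\to(m,n)$ is the sup-of-pointwise behavioural distance $d^{k+n,l+m}$ on the underlying tuples. The generic rules are immediate. \textsf{(Refl)} follows from the equational soundness lemma together with $d(x,x)=0$. \textsf{(Sym)} and \textsf{(Triang)} hold because $d^{m,n}$ is a pseudometric, being a pointwise maximum of the pseudometric $\mathsf{bd}$ of Theorem~\ref{thm:beh_dist}. \textsf{(Max)} is monotonicity of $\leq$. \textsf{(Top)} holds because $\mathsf{bd}$ is $1$-bounded. \textsf{(Cont)} holds because, if $d\leq\varepsilon'$ for every rational $\varepsilon'>\varepsilon$, then $d\leq\varepsilon$ by density of $\Qp$.

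For \textsf{(Seq)} and \textsf{(Tens)} we invoke Proposition~\ref{cor:sem_enriched}: sequential and parallel composition in $\Int{\RegBeh}$ are nonexpansive with respect to the product (max) metric. Since $\sem{-}$ is a monoidal functor, $\sem{f;g}=\sem{f};\sem{g}$ and $\sem{f\oplus g}=\sem{f}\otimes\sem{g}$. Hence $d(\sem{f;g},\sem{h;i})\leq\max\{d(\sem f,\sem h),d(\sem g,\sem i)\}\leq\max\{\varepsilon_1,\varepsilon_2\}$, and the same argument handles \textsf{(Tens)}.

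For \textsf{(Codel)} we compute both sides directly. Both denote morphisms with no left-to-right components of the same type. Their underlying $\RegBeh$ tuples are therefore either both empty, so the distance is $0$ by convention, or they coincide after unfolding the $\Bcounit$/$\Bunit$ interpretations ($N(0)$ and $N(())$). In either case the distance is $0$.

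The main obstacle is \textsf{(Pref)}. Its diagrams have type $\objr^m\to\objr^n$, so by Proposition~\ref{cor:sem_enriched} (the functor $N$ is locally an isometry) we can work directly in $\RegBeh(m,n)$. There the prefixed diagram denotes the tuple $(a_1.f_1,\dots,a_m.f_m)$ (precomposition with the vertical stack of $\scalar{a_i}$ gives $a_i.v_i;\sem f$), and similarly for $g$. It then suffices to show the following claim for each component: $\mathsf{bd}(a.t,a.u)=\tfrac12\mathsf{bd}(t,u)$, or at least $\leq$. To prove it, we use that $\mathsf{bd}$ is a fixpoint of $\Phi_\beta$ on the prechart $\Omega$. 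The successor set of $a.t$ is the singleton $\{(a,t)\}$, and likewise for $a.u$, so the Hausdorff lifting collapses to $d^\uparrow((a,t),(a,u))=\tfrac12\mathsf{bd}(t,u)$. Taking the supremum over $i$ then gives the bound $\varepsilon/2$. The only remaining care is checking that the semantics of the prefix stack is indeed the componentwise prefix, which follows from the definitions of $\oplus$ and substitution. Alternatively, the claim follows from the stratification characterisation, since $a.t\sim^{(n+1)}a.u$ holds if and only if $t\sim^{(n)}u$.
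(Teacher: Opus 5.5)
Your proposal is correct and matches the paper's proof: a rule-by-rule validity check using the pseudometric axioms, equational soundness for \textsf{(Refl)}, nonexpansiveness of composition and tensor in $\Int{\RegBeh}$ (\Cref{cor:sem_enriched}) for \textsf{(Seq)}/\textsf{(Tens)}, the local isometry of $N$ to move \textsf{(Pref)} into $\RegBeh$, and, for \textsf{(Codel)}, the fact that $\RegBeh(0,n)$ contains only the empty tuple (initiality of $0$). The only minor difference is in \textsf{(Pref)}: you get the contraction directly from the fixpoint equation $\mathsf{bd}=\Phi_\beta(\mathsf{bd})$ on $\Omega$ with singleton successor sets, whereas the paper derives $\mathsf{bd}(a.e,a.f)\leq\frac{1}{2}\mathsf{bd}(e,f)$ from the stratification characterisation (\Cref{cor:prefixing_discounts}), which is the alternative you also mention.
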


\begin{example}
We revisit the charts from \Cref{fig:intro}, which in Milner's syntax read $a.(a.0 + b.\mu v_1.a.v_1)+b.\mu v_1.a.v_1$ on the left and $\mu v_2. (a.v_2 + b.\mu v_1. a.a.v_1)$ on the right. We show that their distance is bounded by $\frac{1}{4}$, which is the same as showing that they agree up to depth of two. The derivation below shows how this bound is obtained \emph{syntactically}, by manipulating the diagrammatic representations through the rules of our inference system. We use compositionality to our advantage and break them into two parts which we will compose later with the \textsf{(Comp)} rule. In \circlednum{1}, we show that $a.(a.0 + b.v_1) + b.v_1$ and $\mu v_2.(a.v_2 + b.v_1)$ are in distance $\frac{1}{4}$. It is the case, as both processes can be distinguished at the depth of three; in particular, taking three successive $a$-transitions is possible in the right diagram, while in the left one, one reaches a deadlock after two $a$-transitions. In the proof, we use \textsf{(Pref)} rule twice, as we show that we agree for all prefixes of the depth of two. \\
	{
\circlednum{1}
\scalebox{0.65}{
\[
\begin{gathered}
\inferrule*[Right={\textsf{(Refl;Triang)}}]{
\inferrule*[Right={\textsf{(Comp)}}]{
\inferrule*[Right={\textsf{(Pref)}}]{
\inferrule*[Right={\textsf{(Refl;Triang)}}]{
\inferrule*[Right={\textsf{(Comp)}}]{
\inferrule*[Right={\textsf{(Pref)}}]{
\inferrule*[Right={\textsf{(Top)}}]{ }{

\tikzset{x=1em, y=2.1ex}
\InputIfFileExists{delete-zero.tikz}{}{\input{./tikz/delete-zero.tikz}}
\tikzset{x=1em, y=1.5ex}
 \disteq{1} 
\tikzset{x=1em, y=2.1ex}
\InputIfFileExists{a-star.tikz}{}{\input{./tikz/a-star.tikz}}
\tikzset{x=1em, y=1.5ex}
}}
{
\tikzset{x=1em, y=2.1ex}
\InputIfFileExists{a-zero.tikz}{}{\input{./tikz/a-zero.tikz}}
\tikzset{x=1em, y=1.5ex}
 \disteq{1/2} 
\tikzset{x=1em, y=2.1ex}
\InputIfFileExists{a-a-star.tikz}{}{\input{./tikz/a-a-star.tikz}}
\tikzset{x=1em, y=1.5ex}
}}
{
\tikzset{x=1em, y=2.1ex}
\InputIfFileExists{copy-a-delxid.tikz}{}{\input{./tikz/copy-a-delxid.tikz}}
\tikzset{x=1em, y=1.5ex}
\;\myeq{B6;B5}\;
\tikzset{x=1em, y=2.1ex}
\InputIfFileExists{copy-a-zeroxid-merge.tikz}{}{\input{./tikz/copy-a-zeroxid-merge.tikz}}
\tikzset{x=1em, y=1.5ex}
 \disteq{1/2} 
\tikzset{x=1em, y=2.1ex}
\InputIfFileExists{a-star-unroll.tikz}{}{\input{./tikz/a-star-unroll.tikz}}
\tikzset{x=1em, y=1.5ex}
 \;\myeq{Unroll}\; 
\tikzset{x=1em, y=2.1ex}
\InputIfFileExists{a-star.tikz}{}{\input{./tikz/a-star.tikz}}
\tikzset{x=1em, y=1.5ex}
 }}
{
\tikzset{x=1em, y=2.1ex}
\InputIfFileExists{copy-a-delxid.tikz}{}{\input{./tikz/copy-a-delxid.tikz}}
\tikzset{x=1em, y=1.5ex}
 \disteq{1/2} 
\tikzset{x=1em, y=2.1ex}
\InputIfFileExists{a-star.tikz}{}{\input{./tikz/a-star.tikz}}
\tikzset{x=1em, y=1.5ex}
}}
{
\tikzset{x=1em, y=2.1ex}
\InputIfFileExists{a-copy-a-delxid.tikz}{}{\input{./tikz/a-copy-a-delxid.tikz}}
\tikzset{x=1em, y=1.5ex}
 \disteq{1/4} 
\tikzset{x=1em, y=2.1ex}
\InputIfFileExists{a-a-star.tikz}{}{\input{./tikz/a-a-star.tikz}}
\tikzset{x=1em, y=1.5ex}
}}
{
\tikzset{x=1em, y=2.1ex}
\InputIfFileExists{ex-diag-1-l.tikz}{}{\input{./tikz/ex-diag-1-l.tikz}}
\tikzset{x=1em, y=1.5ex}
\;\myeq{C1}\;
\tikzset{x=1em, y=2.1ex}
\InputIfFileExists{ex-diag-1-l-alt.tikz}{}{\input{./tikz/ex-diag-1-l-alt.tikz}}
\tikzset{x=1em, y=1.5ex}
 \disteq{1/4} 
\tikzset{x=1em, y=2.1ex}
\InputIfFileExists{a-star-unroll-b.tikz}{}{\input{./tikz/a-star-unroll-b.tikz}}
\tikzset{x=1em, y=1.5ex}
\;\myeq{Unroll}\;
\tikzset{x=1em, y=2.1ex}
\InputIfFileExists{a-star-b.tikz}{}{\input{./tikz/a-star-b.tikz}}
\tikzset{x=1em, y=1.5ex}
 }}
{
\tikzset{x=1em, y=2.1ex}
\InputIfFileExists{ex-diag-1-l.tikz}{}{\input{./tikz/ex-diag-1-l.tikz}}
\tikzset{x=1em, y=1.5ex}
 \disteq{1/4} 
\tikzset{x=1em, y=2.1ex}
\InputIfFileExists{a-star-b.tikz}{}{\input{./tikz/a-star-b.tikz}}
\tikzset{x=1em, y=1.5ex}
}
\end{gathered}
\]
}
}

In the rules labelled with $(\mathsf{Refl};\mathsf{Triang})$ we have used the strict equality of diagrams to simplify the quantitative equations. The equalities marked with \textsf{(Unroll)} follow from \Cref{lem:unroll}, which we state in the next section. Then, for the second part of our diagrams, we show that $\mu v_1. a.v_1$ is bisimilar to $\mu v_1. a.a.v_1$. In \circlednum{2}, we first show that we can easily conclude that both diagrams are in a distance bounded by $\frac{1}{2}$. \\

\hspace{0.5em}
{
\begin{minipage}{.65\textwidth}
\circlednum{2}\\
	\scalebox{0.7}{
\[
\begin{gathered}
\inferrule*[Right={\textsf{(Refl;Triang)}}]{
\inferrule*[Right={\textsf{(Refl;Triang)}}]{
\inferrule*[Right={\textsf{(Pref)}}]{
\inferrule*[Right={\textsf{(Top)}}]{ }{

\tikzset{x=1em, y=2.1ex}
\InputIfFileExists{ex-diag-1-r.tikz}{}{\input{./tikz/ex-diag-1-r.tikz}}
\tikzset{x=1em, y=1.5ex}
\disteq{1} 
\tikzset{x=1em, y=2.1ex}
\InputIfFileExists{ex-diag-2-r.tikz}{}{\input{./tikz/ex-diag-2-r.tikz}}
\tikzset{x=1em, y=1.5ex}
}}
{
\tikzset{x=1em, y=2.1ex}
\InputIfFileExists{a-ex-diag-1-r.tikz}{}{\input{./tikz/a-ex-diag-1-r.tikz}}
\tikzset{x=1em, y=1.5ex}
\disteq{1/2} 
\tikzset{x=1em, y=2.1ex}
\InputIfFileExists{a-ex-diag-2-r.tikz}{}{\input{./tikz/a-ex-diag-2-r.tikz}}
\tikzset{x=1em, y=1.5ex}
}}
{
\tikzset{x=1em, y=2.1ex}
\InputIfFileExists{ex-diag-1-r.tikz}{}{\input{./tikz/ex-diag-1-r.tikz}}
\tikzset{x=1em, y=1.5ex}
\;\myeq{C1}\;
\tikzset{x=1em, y=2.1ex}
\InputIfFileExists{a-ex-diag-1-r-alt.tikz}{}{\input{./tikz/a-ex-diag-1-r-alt.tikz}}
\tikzset{x=1em, y=1.5ex}
\disteq{1/2} 
\tikzset{x=1em, y=2.1ex}
\InputIfFileExists{a-ex-diag-2-r-alt.tikz}{}{\input{./tikz/a-ex-diag-2-r-alt.tikz}}
\tikzset{x=1em, y=1.5ex}
}}
{
\tikzset{x=1em, y=2.1ex}
\InputIfFileExists{ex-diag-1-r.tikz}{}{\input{./tikz/ex-diag-1-r.tikz}}
\tikzset{x=1em, y=1.5ex}
\disteq{1/2} 
\tikzset{x=1em, y=2.1ex}
\InputIfFileExists{ex-diag-2-r.tikz}{}{\input{./tikz/ex-diag-2-r.tikz}}
\tikzset{x=1em, y=1.5ex}
}
\end{gathered}
\]}
\end{minipage}
\begin{minipage}{.34\textwidth}
	\circlednum{3}\quad
	\scalebox{0.7}{\[
\tikzset{x=1em, y=2.1ex}
\InputIfFileExists{ex-diag-1-r.tikz}{}{\input{./tikz/ex-diag-1-r.tikz}}
\tikzset{x=1em, y=1.5ex}
\disteq{\varepsilon} 
\tikzset{x=1em, y=2.1ex}
\InputIfFileExists{ex-diag-2-r.tikz}{}{\input{./tikz/ex-diag-2-r.tikz}}
\tikzset{x=1em, y=1.5ex}
\]}\\
	\vspace{0.5em}
	\\
	\circlednum{4}\quad
	\scalebox{0.7}{\[
\tikzset{x=1em, y=2.1ex}
\InputIfFileExists{ex-diag-1-r.tikz}{}{\input{./tikz/ex-diag-1-r.tikz}}
\tikzset{x=1em, y=1.5ex}
\disteq{0} 
\tikzset{x=1em, y=2.1ex}
\InputIfFileExists{ex-diag-2-r.tikz}{}{\input{./tikz/ex-diag-2-r.tikz}}
\tikzset{x=1em, y=1.5ex}
\]}\\
\end{minipage}

}
Then, using the same reasoning, we can inductively show \circlednum{3} for $\varepsilon = 1/2^n$ for any $n\in\mathbb{N}$, and thus for any $\varepsilon>0$. Finally, the \textsf{(Cont)} rule allows us to conclude \circlednum{4}. Finally, combining \circlednum{1} and \circlednum{4} together with the \textsf{(Comp)} rule, allows us to recover the equality we wanted to show:\\
\scalebox{0.85}{\[
\tikzset{x=1em, y=2.1ex}
\InputIfFileExists{ex-diag-1.tikz}{}{\input{./tikz/ex-diag-1.tikz}}
\tikzset{x=1em, y=1.5ex}
\disteq{1/4} 
\tikzset{x=1em, y=2.1ex}
\InputIfFileExists{ex-diag-2.tikz}{}{\input{./tikz/ex-diag-2.tikz}}
\tikzset{x=1em, y=1.5ex}
\]}
\end{example}

\section{Completeness}
\label{sec:completeness}
We finally arrive at the main technical section of the paper, where we gradually present a sequence of results leading to the completeness of our axioms for the behavioural distance. First, we show that we can safely focus solely on $\objr^m \to \objr^n$ diagrams. Then, through an analytic proof relying on \textsf{(Cont)} rule, we show that each diagram can be \emph{co-copied}. Consequently, we can decompose any $\objr^m \to \objr^n$ diagram into a collection of $\objr \to \objr^n$ diagrams, which correspond to individual charts. We argue that each of those $\objr \to \objr^n$ diagrams has a normal form from which one can extract a finite prechart structure. The distance between states of this prechart precisely captures the distance between the diagrams being related and very importantly, admits a simple characterisation that can be simulated through the inference rules of our system, eventually leading to the completeness result.

\noindent
\textbf{Left-to-right diagrams. } The following result allows us to turn any bidirectional diagram into a left-to-right one by appropriately \emph{bending} the wires using $
\tikzset{x=1em, y=2.1ex}
\InputIfFileExists{cap-down.tikz}{}{\input{./tikz/cap-down.tikz}}
\tikzset{x=1em, y=1.5ex}
$ and $
\tikzset{x=1em, y=2.1ex}
\InputIfFileExists{cup-down.tikz}{}{\input{./tikz/cup-down.tikz}}
\tikzset{x=1em, y=1.5ex}
$. We will see later that this process does not change distances between diagrams. 
\begin{restatable}{lemma}{compact}\label{lem:compact}
There are bijections between the sets $\Syn(v_1\!\objl\! v_2,w)$ and $\Syn(v_1v_2,w\!\objr)$, and between $\Syn(v,w_1\!\objl\! w_2)$ and $\Syn(v\!\objr,w_1w_2)$, \emph{i.e.} between sets of string diagrams of the form
$$

\tikzset{x=1em, y=2.1ex}
\InputIfFileExists{wrong-way-left.tikz}{}{\input{./tikz/wrong-way-left.tikz}}
\tikzset{x=1em, y=1.5ex}
\;\text{ and }\; 
\tikzset{x=1em, y=2.1ex}
\InputIfFileExists{right-way-right.tikz}{}{\input{./tikz/right-way-right.tikz}}
\tikzset{x=1em, y=1.5ex}
\quad\text{as well as between}\quad
 
\tikzset{x=1em, y=2.1ex}
\InputIfFileExists{wrong-way-right.tikz}{}{\input{./tikz/wrong-way-right.tikz}}
\tikzset{x=1em, y=1.5ex}
\;\text{ and }\; 
\tikzset{x=1em, y=2.1ex}
\InputIfFileExists{right-way-left.tikz}{}{\input{./tikz/right-way-left.tikz}}
\tikzset{x=1em, y=1.5ex}
 
$$
where $v,w, v_i, w_i$ are words over $\{\objr,\objl\}$.
\end{restatable}
\noindent
A \emph{block} is simply a diagram freely composed from a restricted set of generators (possibly including identities and symmetries). In this paper, we will make use of two special kinds of diagrams that can be factored into blocks:
\begin{itemize}[topsep=0pt]
	\item A \emph{matrix-diagram} is a diagram $\objr^m\to\objr^n$ that factors as a composition of a block of $
\tikzset{x=1em, y=2.1ex}
\InputIfFileExists{lr-copy.tikz}{}{\input{./tikz/lr-copy.tikz}}
\tikzset{x=1em, y=1.5ex}
, 
\tikzset{x=1em, y=2.1ex}
}
\tikzset{x=1em, y=1.5ex}
$, another of $\scalar{a}$ for $a\in \Sigma$, and a last one of $
\tikzset{x=1em, y=2.1ex}
\InputIfFileExists{lr-merge.tikz}{}{\input{./tikz/lr-merge.tikz}}
\tikzset{x=1em, y=1.5ex}
, 
\tikzset{x=1em, y=2.1ex}
}
\tikzset{x=1em, y=1.5ex}
$. A matrix-diagram is \emph{guarded} when any path from left to right port encounters at least one $\scalar{a}$. In \Cref{lem:guarded-precompose} in the appendix of the full version of the paper, we show that one can generalise \textsf{(Pref)} inference rule to arbitrary guarded matrix-diagrams, rather than just vectors of $\scalar{a}$. In other words, prepending a guarded matrix-diagram to any pair of diagrams contracts the distance between them. 
	\item A \emph{relation-diagram} is a diagram $\objr^m\to\objr^n$ that factors as a composition of a block of $
\tikzset{x=1em, y=2.1ex}
\InputIfFileExists{lr-copy.tikz}{}{\input{./tikz/lr-copy.tikz}}
\tikzset{x=1em, y=1.5ex}
, 
\tikzset{x=1em, y=2.1ex}
}
\tikzset{x=1em, y=1.5ex}
$ followed by the block of $
\tikzset{x=1em, y=2.1ex}
\InputIfFileExists{lr-merge.tikz}{}{\input{./tikz/lr-merge.tikz}}
\tikzset{x=1em, y=1.5ex}
, 
\tikzset{x=1em, y=2.1ex}
}
\tikzset{x=1em, y=1.5ex}
$.
\end{itemize} 
Intuitively, matrix-diagrams are representations of labelled transition relations, while relation-transition diagrams are representations of the output relations. This idea can be captured formally through the notion of a \emph{representation}. For a diagram $d\from \objr^m\to \objr^n$, a \emph{representation} is a pair $(a,o)$ of a guarded matrix-diagram $c\from \objr^{\ell+m}\to \objr^{\ell+m}$ and a relation-diagram $o\from \objr^{\ell+m}\to \objr^n$, such that
\begin{equation}\label{eq:rep}
\scalebox{0.8}{
\dbox{d}{m}{n} \; = \; 
\tikzset{x=1em, y=2.1ex}
\InputIfFileExists{automata-rep-star.tikz}{}{\input{./tikz/automata-rep-star.tikz}}
\tikzset{x=1em, y=1.5ex}
\; := \;
\tikzset{x=1em, y=2.1ex}
\InputIfFileExists{automata-rep.tikz}{}{\input{./tikz/automata-rep.tikz}}
\tikzset{x=1em, y=1.5ex}

}
\end{equation} 
\noindent
Using the rules of ${=}$, we can rearrange the any diagram into the form described above.
\begin{restatable}{theorem}{representation}
\label{thm:representation}
Any diagram $\objr^m\to \objr^n$ has a representation.
\end{restatable}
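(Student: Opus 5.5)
We prove \Cref{thm:representation} by structural induction on diagrams $\objr^m\to\objr^n$, after reducing to a compositional statement that also covers bidirectional subdiagrams via \Cref{lem:compact}. Since an arbitrary diagram $\objr^m\to\objr^n$ may have subterms of mixed type (cups and caps), we first use \Cref{lem:compact} and axioms \textsf{A1}--\textsf{A2} to normalise it. Every diagram equals one in which all occurrences of the cup and cap are collected into a single feedback loop around a left-to-right core $\objr^{k+m}\to\objr^{k+n}$ built only from the copy/discard generators, the merge/codiscard generators and the $\scalar{a}$ generators. This is the usual ``trace normal form'' for compact closed props freely generated over an otherwise one-directional signature: pull each cap to the left and each cup to the right through the symmetries, using naturality of the symmetry and the yanking equations. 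So it suffices to show two things: (i) every left-to-right core diagram has a representation, and (ii) the diagrams admitting a representation are closed under tracing an $\objr$ wire.

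For (i) we proceed by induction on the term structure, showing closure under generators, $\oplus$ and sequential composition.
\begin{itemize}
\item \textbf{Generators.} Copy, discard, merge and codiscard are relation-diagrams, so they have representations with $\ell=0$. For $\scalar{a}$ we take $\ell=1$: the matrix-diagram sends the input to a new state via $\scalar{a}$, and the output relation reads that state.
\item \textbf{Monoidal product.} Place the two representations side by side and permute the state wires; the result is again guarded.
\item \textbf{Sequential composition.} Given $d=(c,o)$ and $d'=(c',o')$, plug the output relation $o$ of the first into the inputs of the second. Then use the bimonoid axioms \textsf{B1}--\textsf{B9} and \textsf{B10} to push the relation-diagram $o$ past the matrix $c'$. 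Matrix-diagrams compose with relation-diagrams to give matrix-diagrams, since copy distributes over $\scalar{a}$ in the needed direction via \textsf{C1}, and merge passes through $\scalar{a}$ on the right by naturality. The composite state space is $\ell+\ell'$, and the new matrix remains guarded because every transition still passes through some $\scalar{a}$.
\end{itemize}

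Step (ii) is the main obstacle. Tracing an output wire $v$ back into an input introduces paths that reach the new input and then, through the output relation $o$, pass straight back to $v$ without any $\scalar{a}$. This is an \emph{unguarded} loop. Our approach is to split the relation part $o$ on the traced wire into a guarded component and a direct component.
\begin{itemize}
\item The guarded component is routed into the states' transitions. Concretely, we substitute, which corresponds to the \textsf{(Unroll)} manipulation, using sliding of the trace and the bimonoid laws.
\item The direct identity-like loop from the traced input back to $v$ reduces, after copying and merging by \textsf{B10} idempotence, to the unguarded feedback of \textsf{B11}, which is removed by that axiom.
\end{itemize}
The delicate part is the bookkeeping showing that, after these rewrites, every remaining path through the loop meets a $\scalar{a}$. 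Once that is established, the traced diagram fits the shape of \eqref{eq:rep} with the same number of states, and the induction is complete.
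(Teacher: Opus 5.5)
Your route differs from the paper's. It can be completed, but as written the step that carries the content is only announced, not proved, and some justifications are wrong. The paper defers to Piedeleu and Zanasi, whose argument starts from the trace canonical form (\Cref{lem:traceform}): $d$ is written as one feedback around a relation-diagram, with every $\scalar{a}$ sitting on a fed-back wire. Every cycle then meets a prefix by construction. The guarded matrix (feedback block followed by the prefixes) and the output relation (output block) are essentially read off, with no induction on terms and no elimination of unguarded edges. Your Thompson-style closure argument (generators, $\oplus$, $;$, single-wire trace) instead creates unguarded edges and then removes them. That is more modular and shows exactly where \textsf{B11} is needed. But it requires an $\epsilon$-elimination argument in both the $;$ case and the trace case, and you give it in neither.

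\textbf{(a) Misuse of \textsf{C1}.} \textsf{C1} moves a \emph{merge} across $\scalar{a}$. A copy can never be moved across $\scalar{a}$: that is exactly the law the paper notes is unsound for bisimilarity. Nor do you need it. A relation-diagram followed by a matrix-diagram becomes a matrix-diagram by pushing merges past the copy block (\textsf{B7}, \textsf{B8}) and then past the prefixes (\textsf{C1}). The opposite order does not give a matrix-diagram in general (e.g.\ $\scalar{a}$ followed by a copy).

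\textbf{(b) The $;$ case.} Here $o$ lands on the merge nodes of the loop of $d'$, not on $c'$. You must eliminate the unguarded edges from states of $d$ into input states of $d'$ as in (c), then fuse the two loops. The composite has $\ell+\ell'+n$ internal states, not $\ell+\ell'$. Alternatively, drop this case: $f;g$ is the trace over the middle wires of $f\oplus g$ followed by a symmetry (by \textsf{A1} and naturality of symmetries), so generators, $\oplus$ and single-wire traces suffice.

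\textbf{(c) The missing idea in (ii).} Tracing output $v$ into input state $s_0$ creates unguarded edges $q\to s_0$ exactly for the states $q$ with $o(q,v)$, and all of them point at the \emph{single} state $s_0$. The steps are then:
(1) Apply \textsf{B7} at the merge/copy node of $s_0$, and co-copy only $s_0$'s copy-trees in $c$ and in $o$ (\textsf{B7}, \textsf{B8}, \textsf{C1}). The fed-back $v$-wire now enters its own copies of those trees.
(2) If $o(s_0,v)$, the unguarded self-edge has become a merge-then-copy feedback on the $v$-wire. After (co)associativity it is deleted by \textsf{B10} and \textsf{B11}.
(3) Through $v$, every such $q$ now reaches only $s_0$'s transitions $c_{s_0}$ and outputs $o_{s_0}$, none of them unguarded. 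So one round suffices: this is why your ``bookkeeping'' terminates.
(4) The $v$-wire now runs forward, so \textsf{A1} straightens it into an ordinary wire inside the body of the main loop, and $s_0$ becomes an internal state.
Do these substitutions locally on copy-trees and absorb codiscards at merge nodes with \textsf{B5}, since no listed axiom lets a codiscard cross $\scalar{a}$. Finally, $o$ has no ``guarded component'': it is a relation-diagram and wholly unguarded. What you actually split is the behaviour of $s_0$.
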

\noindent
The matrix-diagram in the representation that is being fed through feedback, can be \emph{unrolled}.
\begin{restatable}{lemma}{unrolling}\label{lem:unroll}
For any matrix-diagram $d\from \objr^n\to \objr^n$, \scalebox{0.7}{
$
\tikzset{x=1em, y=2.1ex}
\InputIfFileExists{d-star.tikz}{}{\input{./tikz/d-star.tikz}}
\tikzset{x=1em, y=1.5ex}
 \;= \; 
\tikzset{x=1em, y=2.1ex}
\InputIfFileExists{d-star-unroll.tikz}{}{\input{./tikz/d-star-unroll.tikz}}
\tikzset{x=1em, y=1.5ex}
$}.
\end{restatable}
\noindent
We informally state the following connection between representations and charts. 
\begin{remark}
	The diagrammatic syntax presented in our paper is \emph{expressive}, that is, the semantics of every chart can be equivalently encoded as an appropriate diagram. More explicitly, to encode a chart $(Q,s,D,E)$ we first define its transition relation $D$ as a matrix-diagram $c$ and its outputs $E$ as a relation diagram $o$; we then compose them together as in~\eqref{eq:rep} following the definition of representations above. Semantically, this corresponds to taking a dagger (\Cref{lem:regbehconway}), which solves the system of equations (described by $c$ and $o$) using $\mu$ operation on charts, recovering the original expressivity result for Milner's operations on finite charts~\cite[Corollary~5.8]{Milner:1984:Complete}, as summarised in \Cref{sec:operations}. Co-deleting all input wires except the one corresponding to the initial state $s \in Q$, yields a diagram whose semantics coincides precisely with the semantics of the chart of interest.
\end{remark}
\textbf{Co-copying.} Bringing each diagram to a form corresponding to its representation, combined with the usage of unrolling (\Cref{lem:unroll}) and generalisation of \textsf{(Pref)} to guarded matrix-diagrams (\Cref{lem:guarded-precompose}) allows to show that the following two diagrams are arbitrarily close and hence by \textsf{(Cont)} are in zero distance
\begin{restatable}{theorem}{globalcocopy}
\label{thm:co-copy-delete}
For any diagram $d\from \objr^m\to \objr^n$, we have that \scalebox{0.8}{$
\tikzset{x=1em, y=2.1ex}
\InputIfFileExists{global-merge.tikz}{}{\input{./tikz/global-merge.tikz}}
\tikzset{x=1em, y=1.5ex}
\quad \disteq{0} \quad 
\tikzset{x=1em, y=2.1ex}
\InputIfFileExists{global-merge-1.tikz}{}{\input{./tikz/global-merge-1.tikz}}
\tikzset{x=1em, y=1.5ex}
$}.
\end{restatable}
\noindent
\vspace{-2pt}
A very important consequence of the above combined with the usage of \textsf{(Codel)} rule is the fact that each $\objr^m\to \objr^n$ diagram can be separated to a collection of $\objr\to \objr^n$ intuitively corresponding to individual entries of tuples being manipulated in $\RegBeh$.
\begin{lemma}
\label{lem:injections}
Let $f\from \objr^m\to \objr^n$ and define $f_i$ to be the diagram $\objr\to \objr^n$ obtained by composing all but the $i$-th input of $f$ with $\Bunit$ (co-deleting all inputs except the $i$-th one). We have that

\[\dbox{f}{m}{n} \;\disteq{0} \;
\tikzset{x=1em, y=2.1ex}
\InputIfFileExists{d-decomposition.tikz}{}{\input{./tikz/d-decomposition.tikz}}
\tikzset{x=1em, y=1.5ex}
\]
\end{lemma}
Moreover, in \Cref{lem:distance_on_merge} in the appendix of the full version of the paper, we show that the behavioural distance between the denotations of arbitrary diagrams $f,g \colon \objr^m \to \objr^n$ is simply the maximum of the component-wise distances between each $f_i$ and $g_i$ for $i \in \{1, \dots, m\}$. From now on, we will temporarily shift focus to $\objr \to \objr^n$ diagrams.

\noindent
\textbf{One-to-n diagrams. } Each of the $\objr \to \objr^n$ diagrams represents a behaviour of the single state of the prechart structure on $\Omega$ or equivalently, defines a chart. Turns out that appropriately combining these diagrams corresponds to operations on charts described in \Cref{sec:operations}.  
\begin{restatable}{lemma}{sumconvolutions}\label{lem:sum-convolution}
For any two $c,d\from\objr\to\objr^n$, we have that \scalebox{0.8}{$
\sem{
\tikzset{x=1em, y=2.1ex}
\InputIfFileExists{convolution-cxd.tikz}{}{\input{./tikz/convolution-cxd.tikz}}
\tikzset{x=1em, y=1.5ex}
} = \sem{e} + \sem{f}$}.
\end{restatable}
\noindent
This operation of combining two $\objr \to \objr^n$ string diagrams (that we call \emph{convolution}) can easily be extended to any finite collection of $\objr \to \objr^n$ diagrams and is well defined up to permutations and removing duplicates, while staying at distance zero; (see \Cref{lem:convolution-aci}). Thanks to the ability to turn each diagram $f \colon \objr^m \to \objr^n$ into its representation (\Cref{thm:representation}), as well as global co-copying (\Cref{lem:injections}), we can express each the subdiagrams $f_i \colon \objr \to \objr^n$ as  convolution.
\begin{restatable}{lemma}{fundamental}\label{lem:fundamental}
	For any diagram $f \colon \objr^m \to \objr^n$ and $f_i$, $1 \leq i \leq m$ defined as above, for all $i \in \{1, \dots,m\}$, we can derive

	\[
	\dbox{f_i}{}{n} \;\disteq{0} 
\tikzset{x=1em, y=2.1ex}
\InputIfFileExists{state-eq.tikz}{}{\input{./tikz/state-eq.tikz}}
\tikzset{x=1em, y=1.5ex}

	\]
	
	\noindent
	where, for $1\leq j\leq \ell$, each $v_{q_j}\from \objr\to \objr^n$ is a diagram encoding the output variables to which the $i$-th input wire of $f$ is directly connected, that is, without going through any $\scalar{a}$ generator (in particular, each $v_{q_j}$ is a monoidal product of a single identity with $n-1$ $\Bunit$ generators).
\end{restatable}
\noindent
The informal intuition is that each of the $f_i \colon \objr \to \objr^n$ diagrams represents a state of a prechart and the behaviour of each such state is the union of all possible labelled transitions to other states and variable outputs. This can be made formal, by extracting a prechart over the set $Q_f = \{f_1, \dots, f_m\}$, whose transition function $\beta$ is given by the following; we define $f_i \tr{a} f_{j}$ iff $\dbox{f_i}{}{n}$ contains $
\tikzset{x=1em, y=2.1ex}
\InputIfFileExists{prefix-transition.tikz}{}{\input{./tikz/prefix-transition.tikz}}
\tikzset{x=1em, y=1.5ex}
$ and similarly $f_i \rhd v_s$ iff $\dbox{f_i}{}{n}$ contains $\dbox{v_s}{}{n}$. The behavioural distance between states of this prechart, precisely captures the behavioural distance between each of the $f_i$ diagrams.
\begin{restatable}{lemma}{extracteddist}\label{cor:dist-chart-diag}
	For all $f_i, f_j \in Q_f$, we have that $\mathsf{bd}_{\beta}(f_i,f_j) = \mathsf{bd}(\sem{f_i}, \sem{f_j})$
\end{restatable}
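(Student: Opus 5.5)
We will show that the map $\phi\colon Q_f\to\Omega$, $f_i\mapsto\sem{f_i}$, is a prechart homomorphism from $(Q_f,\beta)$ into the canonical prechart on $\Omega$. Since homomorphisms are isometries for behavioural distance (\Cref{thm:beh_dist}, item \circlednum{4}), and $\mathsf{bd}$ is by definition the behavioural distance on $\Omega$, this gives $\mathsf{bd}_\beta(f_i,f_j)=\mathsf{bd}(\sem{f_i},\sem{f_j})$ directly.

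To prove that $\phi$ is a homomorphism, we compute $\sem{f_i}$ from the derivable equation of \Cref{lem:fundamental}. By quantitative soundness (\Cref{thm:soundness}), a derivable equation at distance $0$ yields $\mathsf{bd}=0$. Since $\mathsf{bd}$ on $\Omega$ is the least fixpoint $\mathsf{bd}_\beta$ and the elements of $\Omega$ are bisimilarity classes, we need distance $0$ to imply equality in $\Omega$. This holds because distance $0$ on finite charts means $s_1\sim^{(n)}s_2$ for all $n$, by the stratification theorem, and for finite (image-finite) charts this implies bisimilarity. So $\sem{f_i}$ equals the semantics of the convolution on the right-hand side of \Cref{lem:fundamental}.

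By \Cref{lem:sum-convolution}, extended to finite convolutions, this semantics is a nondeterministic sum. The summands are of two kinds: terms $a.\sem{f_j}$, one for each $a$-prefixed branch leading into the $j$-th state, and terms $v_s$, one for each $v_{q_k}$. Identifying the summand $a.\sem{f_j}$ requires showing that precomposing $f_j$ with $\scalar{a}$ has semantics $a.v_1;\sem{f_j}=a.\sem{f_j}$, which is a direct unfolding of the functorial semantics and substitution. In the prechart on $\Omega$, $\sum_k a_k.t_k+\sum_s v_s$ has exactly the transitions $\tr{a_k}t_k$ and the outputs $\{v_s\}$, read off the definitions of prefix and sum. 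Hence $\sem{f_i}\tr{a}\sem{f_j}$ in $\Omega$ exactly for the pairs with $f_i\tr{a}f_j$ in $Q_f$, and $\sem{f_i}\rhd v$ exactly when $f_i\rhd v$. The graph of $\phi$ is therefore a bisimulation, both forward and backward, so $\phi$ is a homomorphism.

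The main obstacle is making this correspondence precise. We must check that the branches appearing in \Cref{lem:fundamental} are, up to distance $0$, exactly $\scalar{a}$ followed by the diagram $f_j$, and not some other diagram with the same wiring. This needs a careful reading of how $f_j$ sits inside the representation, using \Cref{lem:unroll}. We must also confirm that duplicates and permutations in the convolution, handled via \Cref{lem:convolution-aci}, do not alter the read-off transition set; this is fine because $\beta$ produces finite sets.
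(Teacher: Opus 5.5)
Your proof is correct and follows the paper's route. The paper also shows that $f_i\mapsto\sem{f_i}$ is a prechart homomorphism into $\Omega$, reading $\sem{f_i}$ off the normal form via \Cref{lem:fixpoint} (soundness plus \Cref{lem:sum-convolution}), and then applies item \circlednum{4} of \Cref{thm:beh_dist}. Your remark that distance $0$ forces equality in $\Omega$, via the stratification characterisation and Hennessy--Milner, makes explicit a step the paper leaves implicit when it turns a derivable $\disteq{0}$ into an equation between semantics.
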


\noindent
\textbf{Completeness.} The behavioural distance between states of finite charts, including those derived from the normal form of diagrams (\Cref{lem:fundamental}), can be given a simpler characterisation via a decreasing chain of approximants.
\begin{restatable}{lemma}{finitedist}\label{lem:finite_dist}
	Let $(Q, \beta)$ be a finite prechart. The behavioural distance between any pair $q_1, q_2 \in Q$ of states can be calculated by $\mathsf{bd}_\beta(q_1,q_2) = \inf_{p \in \N} \left \{ \Phi^{(p)}_\beta(q_1, q_2) \right\}$, where $\Phi^{(0)}_\beta$ is a discrete pseudometric and for any $p \in \N$, we define $\Phi^{(p+1)}_\beta = \Phi_\beta \left( \Phi^{(p)}_\beta\right)$.
\end{restatable}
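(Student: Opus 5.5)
The plan is to show that the decreasing chain of iterates $\Phi^{(p)}_\beta$, started from the top element of $D_Q$, has as its infimum a fixpoint of $\Phi_\beta$, and that this fixpoint is the least one. The argument runs in four steps.

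First, the chain is decreasing. By \Cref{thm:beh_dist}, $\Phi_\beta$ is monotone on the complete lattice $(D_Q,\sqsubseteq)$. Since $\Phi^{(0)}_\beta=\top$ is the top element, we have $\Phi^{(1)}_\beta\sqsubseteq\Phi^{(0)}_\beta$. Monotonicity and induction then give $\Phi^{(p+1)}_\beta\sqsubseteq\Phi^{(p)}_\beta$ for all $p$. Define $d_\omega=\inf_p \Phi^{(p)}_\beta$, computed pointwise. It is a pseudometric, because a pointwise limit of a decreasing sequence of pseudometrics still satisfies reflexivity, symmetry and the triangle inequality in the limit.

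Second, $d_\omega$ is a fixpoint; this is the only real work. Monotonicity gives $\Phi_\beta(d_\omega)\sqsubseteq\Phi^{(p+1)}_\beta$ for every $p$, so $\Phi_\beta(d_\omega)\sqsubseteq d_\omega$. For the converse I will use that $\beta(q)$ is a finite set and $Q$ is finite. Fix $q_1,q_2$. The value $\mathcal{H}(d^\uparrow)(\beta(q_1),\beta(q_2))$ is a max of mins over finitely many pairs $(m,n)\in\beta(q_1)\times\beta(q_2)$. Each entry $d^\uparrow(m,n)$ is either a constant ($0$ or $1$) or $\tfrac12 d(x,y)$ for fixed states $x,y$. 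Hence $\Phi_\beta(d)(q_1,q_2)$ is a finite max-min expression in finitely many values $d(x,y)$. Such an expression is continuous, and in particular it commutes with limits of decreasing sequences. Therefore
\[
\Phi_\beta(d_\omega)(q_1,q_2)=\lim_p \Phi_\beta(\Phi^{(p)}_\beta)(q_1,q_2)=\lim_p\Phi^{(p+1)}_\beta(q_1,q_2)=d_\omega(q_1,q_2).
\]
The point to be careful about is exactly this interchange of limit with $\sup\inf$. It holds because both sets are finite, so no uniform-convergence argument is needed. Finiteness of $Q$ is not even required here, only finiteness of each $\beta(q)$.

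Third, $d_\omega$ is the least fixpoint. A naive argument would use that $\mathsf{bd}_\beta$ is the least fixpoint and that $\top$ lies above it, so by monotonicity every iterate lies above $\mathsf{bd}_\beta$. This gives only $\mathsf{bd}_\beta\sqsubseteq d_\omega$, and we need the reverse inequality. I therefore plan to identify $d_\omega$ concretely. By induction on $p$, $\Phi^{(p)}_\beta(q_1,q_2)\le 2^{-p}$ whenever $q_1\sim^{(p)}q_2$, and it equals $1$ when $E(q_1)\neq E(q_2)$. More precisely, I will show that $\Phi^{(p)}_\beta(q_1,q_2)=0$ if $q_1\sim^{(p)}q_2$ fails for no $p$, and otherwise equals $2^{-k}$ with $k=\min(p,\max\{n: q_1\sim^{(n)}q_2\})$. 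This is checked directly from \Cref{def:edge} and \Cref{def:hausdorff}, matching the induction steps of the stratification. Taking the infimum over $p$ gives $d_\omega(q_1,q_2)=2^{-n}$ for the largest $n$ with $q_1\sim^{(n)} q_2$, and $d_\omega(q_1,q_2)=0$ if no largest $n$ exists. On a finite prechart, $\bigcap_n\sim^{(n)}$ is bisimilarity, because finite branching makes the Hennessy--Milner stratification converge. The concrete characterisation theorem stated earlier then gives exactly $d_\omega=\mathsf{bd}_\beta$.

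Fourth, as a fallback if that bookkeeping is awkward, I would show directly that $d_\omega\sqsubseteq d$ for any fixpoint $d$. The route is to note that the kernel of $d_\omega$ is a bisimulation, by finite branching, and that on it every fixpoint vanishes by \Cref{thm:beh_dist}\circlednum{3}. Then, on non-bisimilar pairs, an induction on the stratification depth shows that every fixpoint is at least $2^{-n}$.
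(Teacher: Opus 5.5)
Your proof is correct only if Step~4, rather than the end of Step~3, is the argument that closes it; it also takes a different route from the paper. The paper never analyses the iterates. It shows that $\Phi_\beta$ is a $\frac12$-contraction in the sup norm, because the Hausdorff lifting is nonexpansive and the transition lifting halves distances. Hence $\Phi_\beta$ has a \emph{unique} fixpoint, so $\mathsf{bd}_\beta$ is also the greatest fixpoint. Kleene's theorem for greatest fixpoints (\Cref{thm:kleene}), with cocontinuity supplied by \Cref{lem:cocontinuous}, then gives $\mathsf{bd}_\beta=\inf_p\Phi^{(p)}_\beta$ pointwise; your Step~2 is that cocontinuity lemma.

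Without uniqueness, your Step~3 closes by citing the concrete characterisation theorem of Section~\ref{sec:operations}, and that is circular. The paper derives that theorem from \Cref{thm:concrete_distance}, whose positive-distance case goes through \Cref{lem:chain_stabilises} and \Cref{cor:kleene_locally_finite}. Both rest on the very identity $\mathsf{bd}_\beta=\inf_p\Phi^{(p)}_\beta$ you are proving. The only non-circular ingredient available, \Cref{lem:bound_on_stratified_bisim}, places $\mathsf{bd}_\beta(q_1,q_2)$ only in $(2^{-(n+1)},2^{-n}]$.

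Your Step~4 repairs this and should be the main argument. For any fixpoint $d$, induction on $m$ gives $d(q_1,q_2)\ge 2^{-m}$ whenever $q_1\not\sim^{(m+1)}q_2$:
\begin{itemize}
\item if an output or a label is present on one side only, the Hausdorff infimum is $1$;
\item otherwise some $a$-successor of one state is $\not\sim^{(m)}$ to every $a$-successor of the other, so by induction it is at $d$-distance at least $2^{-(m-1)}$ from each, and the lifting halves this to $2^{-m}$.
\end{itemize}
Combine this with your easy bound $\Phi^{(p)}_\beta(q_1,q_2)\le 2^{-p}$ for $q_1\sim^{(p)}q_2$. This yields $d_\omega\sqsubseteq d$ for every fixpoint $d$, so $d_\omega$ is the least one.

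Two small corrections. First, on the kernel of $d_\omega$ there is nothing to prove, since $d_\omega=0$ there. Item~(3) of \Cref{thm:beh_dist} speaks only about $\mathsf{bd}_\beta$, not about every fixpoint, and you do not need it. Second, your exact formula for bisimilar pairs fails at each finite stage. Two distinct states each carrying only an $a$-self-loop have $\Phi^{(p)}_\beta=2^{-p}$, not $0$; only the infimum vanishes, which is all you use.

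On the trade-off: the contraction route is short and generic, and it gives uniqueness of the fixpoint for free. Your route is elementary and needs only finite branching. It also delivers the concrete $2^{-n}$ characterisation directly, since $d_\omega$ is itself a fixpoint and both bounds apply to it. This makes \Cref{lem:distance_power_of_two} and \Cref{lem:chain_stabilises} unnecessary downstream.
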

\noindent
The proof of the fact above makes use of the fact that the map $\Phi_\beta$ has a \emph{unique fixpoint} and when dealing with finite precharts, we can establish necessary preconditions allowing to use Kleene's fixpoint theorem for the greatest fixpoint. Details of these arguments can be found in \Cref{apx:dist}. The characterisation described above is particularly useful, as upper bounds on each of the approximants can be derived syntactically through the means of axiomatic manipulation.
\begin{restatable}{lemma}{approximation}\label{lem:approximation}
	Let $f\from \objr^m\to \objr^n$, $f_i$, $1\leq i \leq m$ and $(Q_f, \beta)$ be defined as above. For all $f_g, f_h \in Q_f$, all $p \in \N$ and any $\varepsilon \geq \Phi^{(p)}_\beta({f_g}, {f_h})$, we have that
	$\dbox{f_g}{}{n} \disteq{\varepsilon} \dbox{f_h}{}{n} $ is derivable.
\end{restatable}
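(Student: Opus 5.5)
We argue by induction on $p$, proving the claim simultaneously for all pairs $f_g, f_h \in Q_f$. By \textsf{(Max)}, it suffices to derive $f_g \disteq{\Phi^{(p)}_\beta(f_g,f_h)} f_h$; any larger $\varepsilon$ then follows by weakening. For the base case $p=0$, $\Phi^{(0)}_\beta$ is the discrete pseudometric. If $f_g = f_h$ we conclude by \textsf{(Refl)}, and otherwise the value is $1$ and we conclude by \textsf{(Top)}.

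For the inductive step, assume the statement for $p$ and all pairs, and write $\delta = \Phi^{(p+1)}_\beta(f_g,f_h) = \mathcal{H}\big((\Phi^{(p)}_\beta)^\uparrow\big)(\beta(f_g),\beta(f_h))$. If $\delta = 1$ we use \textsf{(Top)}. Otherwise $\delta<1$, so by the definition of the lifting every element of $\beta(f_g)$ is matched in $\beta(f_h)$ at lifted distance at most $\delta<1$, and symmetrically. This forces the output sets $E(f_g)$ and $E(f_h)$ to coincide. It also means each transition $f_g \tr{a} f_j$ is matched by some $f_h \tr{a} f_k$ with $\frac12\Phi^{(p)}_\beta(f_j,f_k)\le\delta$, and symmetrically.

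We now use \Cref{lem:fundamental} to rewrite both $f_g$ and $f_h$, at distance $0$, as convolutions of the terms $\scalar{a};f_j$ (one per transition) and the output diagrams $v_s$. Since convolution is well defined up to permutation and duplicates at distance $0$ (\Cref{lem:convolution-aci}), we build two auxiliary convolutions indexed by a common list of pairs. Each pair is either
\begin{itemize}
\item a transition $f_g\tr{a} f_j$ together with a chosen matching $f_h\tr{a}f_k$, or
\item a transition of $f_h$ together with a chosen matching transition of $f_g$, or
\item a shared output $v_s$.
\end{itemize}
Duplicating summands by idempotence, the left convolution equals $f_g$ at distance $0$ and the right one equals $f_h$ at distance $0$. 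The two convolutions then differ summand-wise in pairs $(\scalar{a};f_j, \scalar{a};f_k)$ or in identical $v_s$.

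For each transition pair, the induction hypothesis gives $f_j \disteq{\Phi^{(p)}_\beta(f_j,f_k)} f_k$. Applying \textsf{(Pref)} with $m=1$ yields $\scalar{a};f_j \disteq{\frac12\Phi^{(p)}_\beta(f_j,f_k)} \scalar{a};f_k$, and \textsf{(Max)} raises this bound to $\delta$. Output pairs are at distance $0$ by \textsf{(Refl)}. We then stack all pairs with \textsf{(Tens)}, taking the maximum of the bounds, which is at most $\delta$. Precomposing with the copy block and postcomposing with the merge block, both via \textsf{(Refl)} and \textsf{(Seq)}, gives the two convolutions at distance $\delta$. Finally, two applications of \textsf{(Triang)} with the distance-$0$ equations from \Cref{lem:fundamental} yield $f_g \disteq{\delta} f_h$.

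The main obstacle is this last step. We must ensure that the convolution in \Cref{lem:fundamental} is built from the $f_j$ themselves, that is, that the summand after $\scalar{a}$ is literally the diagram $f_j$, up to distance-$0$ rewriting, and not some larger residual diagram. This is what makes the induction hypothesis applicable to the successors. We expect it to follow from the representation form together with \Cref{lem:injections}, which turns the fed-back wires into co-deleted copies of $f$, i.e.\ into the $f_j$. The degenerate case of empty $\beta$, which gives the empty convolution, is handled directly by \textsf{(Refl)}.
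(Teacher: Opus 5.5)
Your proof is correct and follows essentially the same route as the paper: induction on $p$, \textsf{(Refl)}/\textsf{(Top)} in the base case, and in the step the normal form of \Cref{lem:fundamental}, \textsf{(Pref)} with the hypothesis on matched successors, \textsf{(Tens)}/\textsf{(Seq)} over paired summands, and \textsf{(Triang)}. The paper packages the step as \Cref{lem:transition_approximation} and \Cref{lem:hausdorff_approx}, taking an optimal coupling from the duality characterisation of $\mathcal{H}$ where you build one by hand from minimisers. The ``main obstacle'' you flag is not an issue here: \Cref{lem:fundamental} states that the successors are the $f_j\in Q_f$, and $\beta$ is defined from that normal form. Your reduction to the exact bound via \textsf{(Max)} is also fine, since $\Phi^{(p)}_\beta$ only takes dyadic rational values.
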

\noindent
The key idea of the lemma above is that the behaviour of lifting of pseudometric from states to edges (\Cref{def:edge}) and Hausdorff lifting (\Cref{def:hausdorff}) used in the definition of $\Phi$ can be simulated using the rules of our deduction system. Using \textsf{(Cont)} rule and characterisation from \Cref{lem:finite_dist}, we obtain a completeness result for distances between $f_i \colon \objr \to \objr^n$ components of $f \colon \objr^m \to \objr^n$.
\begin{restatable}{lemma}{onemcompleteness}\label{lem:one-to-m-completeness}
	Let $f\from \objr^m\to \objr^n$ and $f_i$, $1\leq i \leq m$ be defined as above. For all $g,h\in\{1,\dots,m\}$, any valid equation $\dbox{f_g}{}{n} \disteq{\varepsilon} \dbox{f_h}{}{n}$ is derivable.
\end{restatable}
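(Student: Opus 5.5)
Since $f_g \disteq{\varepsilon} f_h$ is valid, the definition of validity gives $d^{1,n}(\sem{f_g},\sem{f_h}) = \mathsf{bd}(\sem{f_g},\sem{f_h}) \leq \varepsilon$. By \Cref{cor:dist-chart-diag} this quantity equals $\mathsf{bd}_\beta(f_g,f_h)$, the behavioural distance in the finite prechart $(Q_f,\beta)$ extracted from the normal form of \Cref{lem:fundamental}. The plan is to approximate this distance from above by the iterates $\Phi^{(p)}_\beta$, derive each of these upper bounds syntactically, and pass to the limit with \textsf{(Cont)}.

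We show that $f_g \disteq{\varepsilon'} f_h$ is derivable for every $\varepsilon' > \varepsilon$. Since $(Q_f,\beta)$ is finite, \Cref{lem:finite_dist} gives $\mathsf{bd}_\beta(f_g,f_h) = \inf_{p\in\N}\Phi^{(p)}_\beta(f_g,f_h)$. For $\varepsilon' > \varepsilon \geq \mathsf{bd}_\beta(f_g,f_h)$, the infimum property yields some $p$ with $\Phi^{(p)}_\beta(f_g,f_h) < \varepsilon'$. Then \Cref{lem:approximation}, applied with this $p$ and the bound $\varepsilon'$, makes $f_g \disteq{\varepsilon'} f_h$ derivable.

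There is one technical point: the quantitative judgements are indexed by $\varepsilon\in\Qp$. So the premises of \textsf{(Cont)} range over rational $\varepsilon'>\varepsilon$, and the above applies to each of them. We may also assume $\varepsilon\leq 1$, since otherwise \textsf{(Top)} and \textsf{(Max)} give the result directly. Applying \textsf{(Cont)} to the family $\{f_g \disteq{\varepsilon'} f_h\}_{\varepsilon'>\varepsilon}$ then yields $f_g \disteq{\varepsilon} f_h$.

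The main obstacle is not in this argument but in the results it invokes. The substance lies in \Cref{lem:approximation}, which simulates the liftings $\Phi_\beta$ by convolution, \textsf{(Pref)} and \textsf{(Triang)}, and in \Cref{lem:finite_dist}, which needs the unique-fixpoint argument. Given those, the only delicate point above is that $\Phi^{(p)}_\beta$ need not reach $\mathsf{bd}_\beta$ in finitely many steps. For bisimilar states, for instance, the iterates take the values $2^{-p}$, so the distance $0$ itself can only be obtained through \textsf{(Cont)}.
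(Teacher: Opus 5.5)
Your proposal is correct and follows the paper's proof essentially step for step. In both, validity and \Cref{cor:dist-chart-diag} together with the approximant characterisation give, for each $\varepsilon'>\varepsilon$, some $p$ with $\Phi^{(p)}_\beta(f_g,f_h)<\varepsilon'$; \Cref{lem:approximation} then derives that bound, and \textsf{(Cont)} concludes (the paper cites \Cref{cor:kleene_locally_finite} where you cite \Cref{lem:finite_dist}, which is immaterial since $Q_f$ is finite). One small nit: your closing claim that the iterates for bisimilar states take the values $2^{-p}$ is in general only an upper bound, since they can reach $0$ after finitely many steps, but this does not affect the argument.
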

\noindent
Relying on \Cref{lem:injections}, we can reduce the problem of deriving distance between arbitrary $\objr^m \to \objr^n$ diagrams, to the case of $\objr \to \objr^n$ solved above. This yields completeness result for left-to-right diagrams. 
\begin{restatable}{theorem}{completeness}\label{lem:mncompleteness}
Let $f,g \colon \objr^m \to \objr^n$. Any valid equation $\dbox{f}{m}{n} \disteq{\varepsilon} \dbox{g}{m}{n}$ is derivable.
\end{restatable}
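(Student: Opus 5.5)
The idea is to reduce the statement to \Cref{lem:one-to-m-completeness} by putting $f$ and $g$ \emph{side by side} in a single diagram. Given a valid equation $f\disteq{\varepsilon} g$ with $f,g\from\objr^m\to\objr^n$, I form the diagram $h\from\objr^{2m}\to\objr^n$ obtained by placing $f$ above $g$ and merging their outputs pairwise, i.e.\ $h=(f\oplus g);M$, where $M\from \objr^{2n}\to\objr^n$ is the block of $\Bmult$ generators that merges the $k$-th output of $f$ with the $k$-th output of $g$. Co-deleting all inputs but one gives the components $h_i$. For $i\le m$, I check with \textsf{B5}, \textsf{B6} and \textsf{(Codel)} (the co-deleted $g$-part is at distance $0$ from $\Bunit^{\,n}$, which is then absorbed by the merges) that $h_i\disteq{0} f_i$. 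Similarly, $h_{m+i}\disteq{0} g_i$.

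\textbf{Step 1: the component bounds are valid.} By \Cref{lem:distance_on_merge}, $d(\sem f,\sem g)=\max_i \mathsf{bd}(\sem{f_i},\sem{g_i})$. Validity of $f\disteq{\varepsilon}g$ therefore gives $\mathsf{bd}(\sem{f_i},\sem{g_i})\le\varepsilon$ for each $i$. By soundness (\Cref{thm:soundness}) the equations $h_i\disteq{0}f_i$ and $h_{m+i}\disteq{0}g_i$ are valid, so each equation $h_i\disteq{\varepsilon}h_{m+i}$ is valid too.

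\textbf{Step 2: derive each component bound.} Since $h_i$ and $h_{m+i}$ are components of the same diagram $h$, \Cref{lem:one-to-m-completeness} shows that $h_i\disteq{\varepsilon}h_{m+i}$ is derivable. Chaining it via \textsf{(Triang)} and \textsf{(Sym)} with the derived zero-distance equations, and using $\varepsilon+0=\varepsilon$, I obtain $f_i\disteq{\varepsilon}g_i$ for every $i$.

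\textbf{Step 3: reassemble.} By \Cref{lem:injections}, $f\disteq{0}F$, where $F$ is the diagram built from $f_1,\dots,f_m$ placed in parallel, followed by merging corresponding outputs with $\Bmult$. Likewise $g\disteq{0}G$. Because $F$ and $G$ are the same context applied to $(f_i)_i$ and $(g_i)_i$, repeated \textsf{(Tens)} and \textsf{(Seq)}, with \textsf{(Refl)} on the merge block, give $F\disteq{\max_i\varepsilon}G$, which is $F\disteq{\varepsilon}G$. Two further uses of \textsf{(Triang)} yield $f\disteq{\varepsilon}g$.

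\textbf{Main obstacle.} The delicate step is the first one. \Cref{lem:one-to-m-completeness} only compares components of a \emph{single} diagram, so the joint diagram $h$ must have the right components up to distance $0$. This needs \textsf{(Codel)} to be applied with care: co-deleting every input of $g$ must give a diagram at distance $0$ from $\Bunit^{\,n}$, and its outputs must then vanish into the merges by \textsf{B5}. Once this is settled, the remaining steps are routine bookkeeping with the rules.
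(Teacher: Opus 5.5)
Your proposal is correct and follows the paper's proof. Like the paper, you place $f$ and $g$ side by side with merged outputs, apply \Cref{lem:one-to-m-completeness} to that single diagram to obtain $f_i \disteq{\varepsilon} g_i$, and then reassemble via \textsf{(Tens)}, \textsf{(Seq)}, \Cref{lem:injections} and \textsf{(Triang)}. Two steps the paper leaves implicit are spelled out in your version: identifying the joint diagram's components with $f_i$ and $g_i$ up to distance $0$ using \textsf{(Codel)} with \textsf{B5}/\textsf{B6}, and using soundness to transfer validity to those components.
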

\noindent
Since arbitrary diagrams can be turned into $\objr^m \to \objr^n$ diagrams by appropriately composing $
\tikzset{x=1em, y=2.1ex}
\InputIfFileExists{cup-down.tikz}{}{\input{./tikz/cup-down.tikz}}
\tikzset{x=1em, y=1.5ex}
$ and $
\tikzset{x=1em, y=2.1ex}
\InputIfFileExists{cap-down.tikz}{}{\input{./tikz/cap-down.tikz}}
\tikzset{x=1em, y=1.5ex}
$, while preserving distances between diagrams, we arrive at the desired result.
\begin{restatable}[Quantitative completeness]{theorem}{generalcompleteness}\label{thm:completeness}
	Let $f,g \colon v \to w$ be two arbitrary diagrams. Any valid equation  $\diagbox{f}{v}{w} \disteq{\varepsilon} \diagbox{g}{v}{w}$ is derivable.
\end{restatable}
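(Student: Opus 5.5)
The plan is to reduce to the left-to-right case, \Cref{lem:mncompleteness}, by bending wires with \Cref{lem:compact} and checking that this bending neither changes semantic distances nor blocks derivations. Fix $f,g\colon v\to w$ with $d^{\sem{v},\sem{w}}(\sem{f},\sem{g})\leq\varepsilon$. Applying the bijections of \Cref{lem:compact} repeatedly, once per $\objl$-wire occurring in $v$ or $w$, gives diagrams $\hat f,\hat g\colon \objr^m\to\objr^n$. Concretely, $\hat f$ is $f$ with cups $\eta$ attached to bend each left-to-left $\objl$ input wire of $v$ into a right output, and caps $\epsilon$ attached to bend each $\objl$ wire of $w$ into an input. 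Symmetries are then added so that everything is ordered as $\objr^m\to\objr^n$. The inverse bijection is again composition with caps, cups and symmetries: the yanking axioms \textsf{A1}--\textsf{A2} give $f=K(\hat f)$, where $K(-)$ denotes a fixed context built only from $\id$, $\sigma$, cups and caps.

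\textbf{Step 1: semantic distances are preserved.} I would show $d(\sem{\hat f},\sem{\hat g})=d(\sem f,\sem g)$. In $\Int{\RegBeh}$ a morphism $(k,l)\to(p,q)$ is literally a map $k+q\to l+p$ of $\RegBeh$, and its distance is defined as the distance of that underlying map. Bending a wire with $\eta_{(1,0)}$ or $\epsilon_{(1,0)}$ only moves one index from the "left" to the "right" side of the object pair. So the underlying $\RegBeh$-tuple stays the same up to a fixed permutation of components and a fixed renaming of variables, as can be checked from the definitions of the unit, counit and composition-via-trace in \Cref{sec:apx_int}. The $\sup$-metric $d^{m,n}$ is invariant under both operations, and $N$ is locally an isometry (\Cref{cor:sem_enriched}), so $\sem{\hat f}$ and $\sem{\hat g}$ are at the same distance. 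A softer fallback is also available: nonexpansiveness of composition and tensor in $\Int{\RegBeh}$ gives $d(\sem{\hat f},\sem{\hat g})\leq d(\sem f,\sem g)$ directly, since bending is composition with fixed morphisms. That inequality is already the direction needed, because it shows $\hat f\disteq{\varepsilon}\hat g$ is valid.

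\textbf{Step 2: derive and transport back.} By \Cref{lem:mncompleteness}, $\hat f\disteq{\varepsilon}\hat g$ is derivable. Applying \textsf{(Refl)} to the identity context on the fixed wires, together with \textsf{(Seq)} and \textsf{(Tens)}, gives $K(\hat f)\disteq{\max\{\varepsilon,0\}}K(\hat g)$, i.e.\ distance $\varepsilon$. Since $K(\hat f)=f$ and $K(\hat g)=g$ by \textsf{A1}--\textsf{A2}, the equations $f\disteq{0}K(\hat f)$ and $K(\hat g)\disteq{0}g$ follow by \textsf{(Refl)}. Two uses of \textsf{(Triang)} then yield $f\disteq{\varepsilon}g$.

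\textbf{Expected obstacle.} The main work is Step 1: verifying carefully that the $\Int{}$ unit and counit act on underlying $\RegBeh$-maps by reindexing only. I would handle it via the fallback inequality above, which needs only \Cref{cor:sem_enriched}. Step 2 is purely syntactic bookkeeping.
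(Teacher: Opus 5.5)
Your proposal is correct and follows the paper's own route. You bend the $\objl$-wires with cups and caps to reach $\objr^m\to\objr^n$ diagrams, use nonexpansiveness of composition and tensor in $\Int{\RegBeh}$ (\Cref{cor:sem_enriched}) to transfer validity, and conclude with \Cref{lem:mncompleteness}. Your explicit transport back to $f\disteq{\varepsilon}g$ via \textsf{(Seq)}, \textsf{(Tens)}, \textsf{(Refl)}, \textsf{A1}--\textsf{A2} and \textsf{(Triang)} fills in a step the paper leaves implicit, and you are right that only $d(\sem{\hat f},\sem{\hat g})\leq d(\sem f,\sem g)$ is needed, not the full isometry the paper asserts.
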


\section{Discussion}
\label{sec:discussion}

In this paper, we presented a sound and complete quantitative axiomatisation of the behavioural distance of Milner's charts~\cite{Milner:1984:Complete}. We have relied on a compositional, string diagrammatic syntax~\cite{piedeleu2023finite,antoinecsl2025} and equipped it with a quantitative inference system for reasoning about bounds on behavioural distance, inspired by recent advances in metric universal algebra~\cite{Mardare:2016:Quantitative,MiliusU19,Mio:2024:Universal}.

Originally introduced for probabilistic systems~\cite{Breugel:2001:Towards,Desharnais:2004:Metrics}, behavioural distances have recently been generalised to a broad range of systems modelled via the abstract framework of universal coalgebra~\cite{Rutten:2000:Universal}, leveraging pseudometric liftings of functors~\cite{Baldan:2018:Coalgebraic}. The notion of behavioural distance for charts used in the paper is an instance of this coalgebraic framework. Our concrete characterisation closely resembles the metric on trees studied by Nivat~\cite{Nivat:1979:Infinite}. A similar characterisation was studied by Golson and Rounds~\cite{Golson:1984:Connections}, who instead examined de Bakker and Zucker's metric domain for nondeterministic processes~\cite{Bakker:1982:Processes}, also derived via a fixpoint construction involving the Hausdorff distance. 

The idea of reasoning about distances between string diagrams has been explored before in quantum theory~\cite{kissinger2017pictureperfectquantumkeydistribution,Breiner:2019:Graphical,HLarsen2021} and probability theory~\cite{Perrone:2024:Markov}. However, in contrast with the growing body of work on cartesian quantitive algebra~\cite{Mardare:2016:Quantitative,Mio:2024:Universal, Bacci:2024:Sum}, a systematic foundation to axiomatising distances between string diagrams appeared only very recently, in the work of Lobbia et al~\cite{Lobbia:2024:Quantitative}. Besides the basic examples provided in~\cite{Lobbia:2024:Quantitative}, our work is the first to propose a complete axiomatisation of a quantitative calculus of string diagrams. The approach of~\cite{Lobbia:2024:Quantitative} is based on enriched category theory: similarly, one could observe that the equipment of our semantic category with pseudometric structures making sequential and parallel composition nonexpansive yields the enrichment in the category of pseudometric spaces. However, our axiomatisation relies on the domain-specific implicational rule \textsf{(Pref)} and an axiom schema \textsf{(Codel)} that cannot be expressed in the framework of Lobbia et al~\cite{Lobbia:2024:Quantitative}, which only supports quantitative equations. Reconciling those rule formats with the general framework of Lobbia et al~\cite{Lobbia:2024:Quantitative} is an interesting direction for future research.

Axiomatising behavioural distances have been originally studied through ad-hoc inference systems~\cite{Larsen:2011:Metrics,Argenio:2014:Axiomatizing}. The introduction of quantitative equational theories made more principled approaches possible, leading to axiomatisations of behavioural distance for probabilistic systems~\cite{Bacci:2018:Complete1,Bacci:2018:Algebraic}. In recent work, R\'{o}\.{z}owski~\cite{Rozowski:2024:Complete} extended these results within a coalgebraic framework, focusing on the simple case of DFA, which enjoys a straightforward algebraic representation via the syntax of Kleene Algebra. While we rely here on the general pattern of the completeness proof from that work, the case of Milner's charts was significantly more involved, requiring the ability to simulate the behaviour of Hausdorff lifting syntactically.

In constructing the semantic category, we have used the fact that charts form a Conway theory~\cite{Esik:1999:Group}, studied in the literature on parametrised fixpoint operators~\cite{Haghverdi:2000:Categorical,Simpson:2000:Complete,Abramsky:2002:Geometry}, and which can be seen as a relaxation of iteration theories~\cite{Bloom:1993:Iteration}. The connection between charts and these structures was previously investigated by Bloom et al~\cite{Bloom:1993:IterationTrees} and Sewell~\cite{Sewell:1995:Algebra}, while the interplay of parametrised fixpoint operators with traced monoidal categories was studied by Hasegawa~\cite{Hasegawa:1997:Models}, Haghverdi~\cite{Haghverdi:2000:Categorical}, and Simpson and Plotkin~\cite{Simpson:2000:Complete} independently.

Our work constitutes a necessary first step towards a similar diagrammatic treatment of behavioural distance for quantitative automata, such as probabilistic and weighted systems, for which distances are a more suitable way of reasoning rather than Boolean equivalences. The string diagrammatic point of view would enable a desirable, compositional treatment that reflects well the underlying operational models, in a way that is not available through conventional syntaxes. Another promising direction for future work would be to consider Guarded Kleene Algebra with Tests (GKAT)~\cite{Smolka:2020:Guarded,Schmid:2021:Guarded}, an efficiently decidable language for reasoning about equivalence of uninterpreted programs. The completeness proof of GKAT relies on a metric argument that could be internalised within an inference system like the one introduced in this paper. Moreover, the syntax of GKAT is insufficiently expressive, as it can only describe a part of the behaviours of its underlying operational model -- in fact there is no finite purely algebraic syntax that could do so~\cite{Cate:2025:Algebras}. A string diagrammatic treatment could allow us to express all such behaviours and obtain a simpler yet more expressive completeness result, that would in turn enable axiomatic reasoning about decompilation algorithms, that were recently shown to be expressible via GKAT automata (but not in GKAT)~\cite{Zhang:2025:Efficient}.



\bibliography{icalp}

\appendix

\section{Appendix}

\subsection{Algebra of Regular Behaviours}
In the appendix, we rely on a slightly different, but equivalent~\cite{Milner:1984:Complete}, characterisation of the set $\Omega$ of all regular behaviours. In this section, we elaborate on this characterisation and related conventions. In particular, we will write ${\Expr}/{\sim}$ instead of $\Omega$, as we will view this set as a bisimilarity quotient of a certain specification language that can be given operational semantics by equipping it with a prechart structure. The syntax of the language corresponds to operations described in \Cref{sec:operations} and is given by the following syntax
$$e,f \in \Expr ::= 0 \mid v \in V \mid a.e \mid e + f \mid \mu v. e$$
where $V=\{v_1, v_2, \dots\}$ and $\alphabet$ be sets of \emph{variables} and \emph{letters} respectively. Given an expression $f$ containing a variable $v$, we say that $v$ is \emph{free} in $f$, if it appears outside of the scope of the $\mu v.e$ operator or say that it is \emph{bound} otherwise. Given an expression $e \in \Expr$, we write $\fv(e) \subseteq V$ for the set of its free variables. Given a set $X$, we define $\mathcal{B}X$ to be $\Pfin(V + \alphabet \times X)$. Recall that each prechart can be equivalently seen as a pair $(X, \beta \colon X \to \mathcal{B}X)$. Given a prechart $(X, \beta)$, we write $x \tr{a}_\beta y \iff (a,y) \in \beta(x)$ and $x \rhd_\beta v_i \iff v_i \in \beta(x)$. When $\beta$ is obvious from the context, we will omit writing it in the subscript.  One can easily verify that $\sim_\beta$ is a prechart bisimulation itself and we will refer to it as the \emph{greatest bisimulation} on $(X,\beta)$. When $\beta$ is obvious from the context we will omit writing it in the subscript. 
\begin{definition}
	Let $(X, \beta)$ be a prechart and let $Y \subseteq X$. We call $(Y, \beta)$ a subprechart of $(X, \beta)$, if the canonical inclusion map $i \colon Y \to X$ is a prechart homomorphism. 
\end{definition}
Given a prechart $(X, \beta)$ and a state $x$ we write $\langle x \rangle_{\beta} \subseteq X$ for the set of states reachable from $x$ (note that this definition can be easily extended to sets of states). Equipping in with $\beta$ restricted to $\langle x \rangle_\beta$ defines a subprechart $(\langle x \rangle_\beta, \beta)$ of $(X, \beta)$. 		
\begin{definition}[{\cite{Milner:1984:Complete}}]\label{def:subset}
	Given vectors $\vec{v}$ of binders and $\vec{e}$ of expressions of the same size, we define a syntactic substitution operator $[\vec{e}/\vec{v}] \colon \Expr \to \Expr$ by the following
	\begin{gather*}
		v[\vec{e}/\vec{v}] = \begin{cases}
			\vec{e}_i & \text{if }v=\vec{v}_i\\
			v & \text{otherwise}
		\end{cases}
		\qquad
		(a.e)[\vec{e}/\vec{v}] = a.(e[\vec{e}/\vec{v}])\\
		(e + f)[\vec{e}/\vec{v}] = e[\vec{e}/\vec{v}] + f[\vec{e}/\vec{v}]\\
		(\mu w.e)[\vec{e}/\vec{v}] = \begin{cases}
			\mu w. (e[\vec{e}/\vec{v}]) & \text{if } w \text{ is not in } \vec{v} \text{ nor free in } \vec{e}\\
			\mu w. (e[z/w][\vec{e}/\vec{v}]) & \text{otherwise for some } z \text{ not in } \vec{v} \text{ nor free in } \vec{e}\\
		\end{cases}
	\end{gather*}
\end{definition}
\begin{definition}[{\cite{Milner:1984:Complete}}]\label{def:operational_semantics}
	Let $(\Expr, \partial)$ be a prechart whose transition function (called \emph{derivative}) is a least one satisfying the following inference rules
	\begin{gather*}
		\inferrule{e \tr{a} e' }{a.e \tr{a} e'} \qquad \inferrule{ }{v \rhd v} \qquad \inferrule{e \tr{a} e'}{e + f \tr{a} e'} \qquad \inferrule{f \tr{a} f'}{e + f \tr{a} f'}\\
		\inferrule {e \rhd v}{e + f \rhd v} \qquad \inferrule {f \rhd v}{e + f \rhd v} \qquad \inferrule{e \rhd v\quad v \neq w}{\mu w.e \rhd v } \qquad \inferrule{e \tr{a} e'}{\mu v. e \tr{a} e'[\mu v .e / v]}
	\end{gather*}
\end{definition}
The syntactic prechart $\langle e \rangle_\partial$ is locally finite.
\begin{lemma}[{\cite[Proposition~5.1]{Milner:1984:Complete}}]
	For all $e \in \Expr$, $\langle e \rangle_\partial$ is finite.
\end{lemma}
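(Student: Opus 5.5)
The plan is to prove, by structural induction on $e$, a slightly stronger statement that is stable under substitution. Substitution appears in the derivative of $\mu v.e$, so we cannot just prove $\langle e \rangle_\partial$ finite directly. Following Milner, for each $e$ we define a finite set $D(e) \subseteq \Expr$ of candidate derivatives such that $e \in D(e)$ and $D(e)$ is closed under $\partial$. The second requirement means: whenever $f \in D(e)$ and $f \tr{a} f'$, we have $f' \in D(e)$. Once this is done, $\langle e \rangle_\partial \subseteq D(e)$, and $\langle e \rangle_\partial$ is therefore finite.

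The sets are defined by recursion on $e$:
\begin{gather*}
D(0) = \{0\}, \qquad D(v) = \{v\}, \qquad D(a.e) = \{a.e\} \cup D(e), \\
D(e+f) = \{e+f\} \cup D(e) \cup D(f), \qquad D(\mu v.e) = \{\, g[\mu v.e/v] \mid g \in D(e) \,\}.
\end{gather*}
Each of these is clearly finite given finiteness of the sets for the subterms. Note that $\mu v.e = v[\mu v.e/v]$ is not literally in the image unless $v \in D(e)$, so the last clause should be amended to $\{\mu v.e\} \cup \{\, g[\mu v.e/v] \mid g \in D(e) \,\}$.

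Closure is checked case by case against the rules of \Cref{def:operational_semantics}. The cases $0$, $v$, $a.e$ and $e+f$ follow directly from the rules and the induction hypothesis. For example, every derivative of $e+f$ is a derivative of $e$ or of $f$, hence lies in $D(e) \cup D(f)$.

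The main obstacle is the $\mu$ case, which needs an auxiliary lemma about how derivatives commute with substitution. Writing $\sigma = [\mu v.e/v]$, the lemma says: if $g\sigma \tr{a} h$, then either $g \tr{a} g'$ with $h = g'\sigma$, or $g \rhd v$ and $\mu v.e \tr{a} h$. It is proved by induction on the derivation of the transition, with alpha-renaming handled via the capture-avoiding clause of \Cref{def:subset}; I expect this bookkeeping to be the fiddly part. The lemma also requires the standard substitution identity $(g'[\mu w.g/w])\sigma = (g'\sigma)[\mu w.(g\sigma)/w]$, for $w$ distinct from $v$ and not free in $\mu v.e$.

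With the lemma in hand, the $\mu$ case goes as follows. The derivatives of $\mu v.e$ itself have the form $e'\sigma$ with $e \tr{a} e'$, so $e' \in D(e)$ by closure of $D(e)$, and hence $e'\sigma \in D(\mu v.e)$. For a derivative of some $g\sigma$ with $g \in D(e)$, the lemma gives two cases. In the first, the derivative is $g'\sigma$ with $g' \in D(e)$ by closure. In the second, it is a derivative of $\mu v.e$, which was just handled. Hence $D(\mu v.e)$ is closed and finite, completing the induction.
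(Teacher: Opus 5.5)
Your argument is correct. The paper does not prove this lemma itself; it only imports it from Milner (Proposition~5.1). So your construction of finite sets $D(e)$ that contain $e$ and are closed under $\partial$ is a self-contained version of the standard argument behind that citation, not a competing route. Your auxiliary lemma is exactly the inversion direction of the operational description of substitution that the paper also imports without proof as \Cref{rem:semantic-substitution} (Milner's Proposition~5.4). You may cite it rather than reprove it.

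If you do prove it, you leave one ingredient implicit: derivatives never acquire free variables, i.e.\ $e \tr{a} e'$ implies $\fv(e')\subseteq\fv(e)$, by a routine induction on the rules. You need this in the $\mu$ case of the auxiliary lemma. Take $g=\mu w.g_0$ with $g_0\rhd v$ and $\mu v.e\tr{a}k$. The induced transition of $(\mu w.g_0)\sigma=\mu w.(g_0\sigma)$ goes to $k[\mu w.(g_0\sigma)/w]$. This equals $k$, and so lands in $D(\mu v.e)$, only because $w\notin\fv(\mu v.e)\supseteq\fv(k)$. The same fact gives $h\sigma=h$ when the binder of $g$ is $v$ itself. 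Finally, everything here, including the statement, must be read modulo $\alpha$-conversion. The capture-avoiding clause of \Cref{def:subset} fixes the fresh name only up to choice, so without this quotient neither substitution nor $\partial$ is well defined as a function. Your proof implicitly adopts this convention.
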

\begin{lemma}[{\cite[Proposition~7]{Sewell:1995:Algebra}}]\label{lem:congruence}
	$\sim$ is a congruence on $\Expr$ with respect to all operations of the algebra of regular behaviours.
\end{lemma}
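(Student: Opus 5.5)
We prove that $\sim$ is preserved by each operation of the algebra: prefixing $a.(-)$, sum $(-)+(-)$, recursion $\mu v.(-)$, and substitution $(-)[\vec{e}/\vec{v}]$. In each case we exhibit an explicit relation on $\Expr$ and check that it is a bisimulation, possibly up to $\sim$, for the prechart $(\Expr,\partial)$ of Definition~\ref{def:operational_semantics}.

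\textbf{Prefix and sum.} Suppose $e\sim e'$ and $f\sim f'$.
\begin{itemize}
\item For prefixing, take $R=\{(a.e,a.e')\}\cup{\sim}$. The only derivatives of $a.e$ and $a.e'$ are read off from the prefix rule, and they land in related states. Neither term has an output, since there is no output rule for prefix.
\item For sum, take $R=\{(e+f,e'+f')\}\cup{\sim}$. The outputs and transitions of a sum are exactly the unions of those of its summands, so every move of $e+f$ is a move of $e$ or of $f$. That move is matched by $e'$ or $f'$, and the resulting pair lies in $\sim$.
\end{itemize}

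\textbf{Substitution.} Suppose $e\sim e'$ and $\vec{g}\sim\vec{g}'$ componentwise. The candidate relation is
\[
R=\{(h[\vec{g}/\vec{v}],\,h'[\vec{g}'/\vec{v}]) \mid h\sim h'\}\cup{\sim}.
\]
First we establish a derivative lemma, by induction on the derivation in Definition~\ref{def:operational_semantics} and on the structure of $h$:
\begin{itemize}
\item $h[\vec{g}/\vec{v}]\tr{a}k$ holds iff either $h\tr{a}h_1$ with $k=h_1[\vec{g}/\vec{v}]$, or $h\rhd v_i$ and $g_i\tr{a}k$;
\item the outputs of $h[\vec{g}/\vec{v}]$ are $(E(h)\setminus\vec{v})\cup\bigcup_{v_i\in E(h)}E(g_i)$.
\end{itemize}
The $\mu$ case of this lemma needs the standard interchange of substitutions, $e[\mu w.e/w][\vec{g}/\vec{v}]=e[\vec{g}/\vec{v}][\mu w.e[\vec{g}/\vec{v}]/w]$, after $\alpha$-renaming so that $w$ is fresh for $\vec{v}$ and $\vec{g}$. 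With the lemma in hand, $R$ is directly checked to be a bisimulation: moves from $h$ are matched via $h\sim h'$, and moves from some $g_i$ are matched via $g_i\sim g_i'$, using $E(h)=E(h')$.

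\textbf{Recursion.} This is the main obstacle, because the unfolding $e_1[\mu v.e/v]$ does not stay within terms of the form $\mu v.(-)$. Suppose $e\sim e'$. We take
\[
R=\{(h[\mu v.e/v],\,h'[\mu v.e'/v]) \mid h\sim h'\},
\]
which contains the pair $(\mu v.e,\mu v.e')$ via $h=v$, $h'=v$, and aim to show that $R$ is a bisimulation up to $\sim$.

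By the derivative lemma and the rules for $\mu$, a transition of $h[\mu v.e/v]$ arises in one of two ways:
\begin{itemize}
\item from a transition $h\tr{a}h_1$; then $h'\tr{a}h_1'$ with $h_1\sim h_1'$, and the resulting pair lies in $R$;
\item from $h\rhd v$ together with $e\tr{a}e_1$, yielding $e_1[\mu v.e/v]$; then $h'\rhd v$ and $e'\tr{a}e_1'$ with $e_1\sim e_1'$, and the resulting pair is again in $R$.
\end{itemize}
Outputs are matched using $E(h)=E(h')$ and $E(e)=E(e')$ with $v$ removed.

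The delicate point is \emph{unguarded} recursion. If $e\rhd v$, the derivative of $\mu v.e$ already folds in the transitions of $e$, and the analysis above must account for this correctly. We will resolve it by first proving, by induction on $h$, that the transitions of $h[\mu v.e/v]$ are exactly the union described above, in agreement with the chart-level definition of $D^+$ and $E^+$ in Section~\ref{sec:operations}.

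If this direct proof becomes unwieldy, we fall back on Milner's chart-level congruence result, \cite[Proposition~3.2]{Milner:1984:Complete}. To use it, we show that the map sending an expression to its chart is a prechart homomorphism, and then transfer the congruence along that homomorphism.
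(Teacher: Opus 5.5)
Your argument is correct. It differs from the paper, which gives no proof and simply imports the result from Sewell~\cite[Proposition~7]{Sewell:1995:Algebra}; you instead build explicit bisimulations operator by operator.

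\begin{itemize}
\item \textbf{Substitution lemma.} Your derivative lemma for substitution is exactly Milner's Proposition~5.4, which the paper records as Remark~\ref{rem:semantic-substitution}. You could cite it rather than reprove it.
\item \textbf{Recursion.} Your relation pairs $h[\mu v.e/v]$ with $h'[\mu v.e'/v]$ for all $h\sim h'$. This is the right idea: the two different substituends never have to be known bisimilar in advance.
\item \textbf{Unguarded recursion is not an obstacle.} With the paper's rule $e\tr{a}e_1 \Rightarrow \mu v.e\tr{a}e_1[\mu v.e/v]$, the transitions of $\mu v.e$ depend only on those of $e$, whether or not $e\rhd v$. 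So your two cases are exhaustive and every successor pair lands in $R$ itself. No up-to-$\sim$ reasoning is needed, and the fallback to Milner's chart-level Proposition~3.2 is unnecessary. As sketched, that fallback is also vague: you would have to show that the expression-to-chart translation commutes with every operation up to $\sim$.
\item \textbf{Unstated facts in the $\mu$-case.} Besides working modulo $\alpha$-equivalence, you use that derivatives do not create free variables. When $h=\mu w.e_0$, $e_0\rhd v_i$ and $g_i\tr{a}k$, the step of $\mu w.(e_0[\vec g/\vec v])$ goes to $k[\mu w.(e_0[\vec g/\vec v])/w]$. This equals $k$ only because $w\notin\fv(g_i)\supseteq\fv(k)$. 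The analogous fact $E(g_i)\subseteq\fv(g_i)$ handles the output clause.
\end{itemize}

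What your route buys is a self-contained proof that exploits the paper's structural form of the $\mu$ rule, so plain structural induction on $h$ suffices. What the citation buys is brevity and independence from that presentation. With Milner's unfolding rule, which the paper says is interchangeable, you would have to induct on derivations instead.
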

\begin{lemma}[{\cite[Proposition~5.8]{Rutten:2000:Universal}}]\label{lem:quotient_chart}	
	We can equip ${\Expr}/{\sim}$ with a transition function given by
	$$
	\inferrule{e \tr{a}_\partial e'}{[e]_\sim \tr{a}_{\overline\partial} [e']_\sim} \qquad \inferrule{e \rhd_{\partial} v_i}{[e]_\sim \rhd_{\overline{\partial}} v_i}
	$$
	This map is a unique transition function on ${\Expr}/{\sim}$ that makes the quotient map $[-]_{\sim} \colon \Expr \to {\Expr}/{\sim}$ into prechart homomorphism.
\end{lemma}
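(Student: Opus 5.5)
We need to prove the statement immediately above: there is a unique transition function $\overline{\partial}$ on ${\Expr}/{\sim}$ that makes the quotient map $[-]_\sim$ a prechart homomorphism, and it is given by the displayed rules.

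The plan is to follow the standard coalgebraic argument for quotients by the greatest bisimulation. It has three parts: well-definedness, the homomorphism property, and uniqueness.

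\emph{Well-definedness.} We define $\overline{\partial}([e]_\sim) = \{(a,[e']_\sim) \mid e \tr{a}_\partial e'\} \cup \{v_i \mid e \rhd_\partial v_i\}$, which is exactly $\mathcal{B}([-]_\sim)(\partial(e))$. We must check that this does not depend on the representative. So suppose $e \sim f$; we use the fact, noted earlier, that $\sim$ is itself a bisimulation.
\begin{itemize}
\item By clause \circlednum{1}, $e$ and $f$ have the same outputs, so the variable parts agree.
\item By clause \circlednum{2}, every $e \tr{a} e'$ is matched by some $f \tr{a} f'$ with $e' \sim f'$, and symmetrically. Hence the two sets $\{(a,[e']_\sim)\}$ and $\{(a,[f']_\sim)\}$ coincide.
\end{itemize}
Finiteness of $\overline{\partial}([e]_\sim)$ is inherited from finiteness of $\partial(e)$.

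\emph{Homomorphism.} We show that the graph $G = \{(e,[e]_\sim)\}$ is a bisimulation.
\begin{itemize}
\item Outputs agree by construction.
\item Forward clause: if $e \tr{a} e'$, then $[e]_\sim \tr{a} [e']_\sim$ and $(e',[e']_\sim) \in G$.
\item Backward clause: if $[e]_\sim \tr{a} X$, then by definition there are $e_0 \sim e$ and $e_0 \tr{a} e_0'$ with $X = [e_0']_\sim$. Since $e \sim e_0$, there is $e \tr{a} e'$ with $e' \sim e_0'$. Thus $[e']_\sim = X$ and $(e',X) \in G$.
\end{itemize}
This step is the only place requiring care: the backward clause needs us to transfer a transition from an arbitrary representative to $e$ itself, and that is exactly the bisimulation property of $\sim$.

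\emph{Uniqueness.} Let $\gamma$ be any transition function on ${\Expr}/{\sim}$ for which $[-]_\sim$ is a homomorphism. Being a homomorphism for a function $f$ means $\gamma(f(e)) = \mathcal{B}f(\partial(e))$, where $\mathcal{B}f$ applies $f$ to the targets of transitions and leaves variables fixed. We first check this equivalence with the graph-bisimulation definition.
\begin{itemize}
\item Outputs: clause \circlednum{1} says $E(e) = E(f(e))$.
\item Forward: clause \circlednum{2} gives $\mathcal{B}f(\partial e) \subseteq \gamma(f e)$.
\item Backward: each $\gamma$-transition $f(e) \tr{a} y$ is matched by some $e \tr{a} e'$ with $f(e') = y$, giving the reverse inclusion.
\end{itemize}
Applying this with $f = [-]_\sim$ yields $\gamma([e]_\sim) = \mathcal{B}[-]_\sim(\partial(e))$. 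Surjectivity of the quotient map then gives $\gamma = \overline{\partial}$.
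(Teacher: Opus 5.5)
Your proof is correct. The paper gives no argument of its own and defers to Rutten's Proposition~5.8, which is stated for an arbitrary $\mathsf{Set}$-functor. The standard argument there works in the Aczel--Mendler setting:
\begin{itemize}
\item a bisimulation is a span of coalgebra homomorphisms $\pi_1,\pi_2\colon R\to\Expr$ with some structure map $\rho$ on $R$;
\item a homomorphism is a map $f$ with $\gamma\circ f=\mathcal{B}f\circ\partial$;
\item the quotient structure comes from the fact that $[-]_\sim$ is the coequaliser of $\pi_1,\pi_2$ in sets, and $\mathcal{B}[-]_\sim\circ\partial$ also coequalises them, since $\mathcal{B}[-]_\sim\circ\partial\circ\pi_i=\mathcal{B}([-]_\sim\circ\pi_i)\circ\rho$.
\end{itemize}

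You instead unfold everything for the concrete functor $\mathcal{B}=\Pfin(V+\Sigma\times -)$ and the relational bisimulations the paper actually uses. Your well-definedness step is exactly that coequalising condition, and your uniqueness step is uniqueness of the mediating map, obtained from surjectivity of $[-]_\sim$. The abstract route buys generality and never inspects individual transitions. Yours buys self-containment with respect to the paper's own definitions.

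This matters because the paper defines a prechart homomorphism as a map whose graph is a relational bisimulation, not as a commuting square. Citing Rutten therefore silently relies on the equivalence of the two notions. Your explicit three-part check (equal outputs, forward inclusion, backward inclusion) supplies exactly that bridge. Once you have it, your separate verification that the graph of $[-]_\sim$ is a bisimulation becomes redundant: it follows immediately from $\overline{\partial}([e]_\sim)=\mathcal{B}[-]_\sim(\partial(e))$.
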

From now on, we will overload the notation and simply write $e$ for the equivalence class $[e]_{\sim}$. Note that because of \Cref{lem:congruence} all operations of algebra of regular behaviours are well defined on that quotient. 
\begin{remark}
	The quotient prechart on ${\Expr}/{\sim}$ and corresponding operations of algebra of regular behaviours are in one-to-one correspondence with the prechart structure on $\Omega$ and operations defined in \Cref{sec:operations} in the main body of the paper. 
\end{remark}
\begin{remark}[{\cite{Sewell:1995:Algebra}}]
	The last rule (that defines the transition behaviour of the $\mu$ recursion operator) can be replaced by the following:
	\begin{gather*}
		\inferrule{e[\mu v. e / v] \tr{a} e'}{\mu v. e \tr{a} e'}
	\end{gather*}
\end{remark}

\begin{remark}[{\cite[Proposition~5.4.]{Milner:1984:Complete}}]\label{rem:semantic-substitution}
	Syntactic substitution can be described operationally using the following rules
	\begin{gather*}
		\inferrule{e \rhd v\quad f\tr{a} f'}{e[f / v] \tr{a} f'}\qquad \inferrule{e \tr{a} e'}{e[f / v] \tr{a} e'[f/v]}\\ \inferrule{e \rhd w \quad w \neq v }{e[f/v] \rhd w} \qquad \inferrule{e \rhd v \quad f \rhd w}{e[f/v] \rhd w}
	\end{gather*}
\end{remark}
\begin{lemma}\label{lem:subst_lemma}
	For all $e, f_1, \dots, f_m, g_1, \dots, g_m \in \Expr$ and vectors $\vec{v}=(v_{i_1}, \dots, v_{i_m}), \vec{w}=(v_{j_1}, \dots, v_{j_n})$, such that all free variables of $e$ are contained in $\vec{v}$ and all free variables of $\vec{f}$ are contained in $\vec{w}$, we have that 
	$$
	(e[\vec{f}/ \vec{v}])[\vec{g}/\vec{w}] =e [(f_1[\vec{g}/\vec{w}], \dots, f_m[\vec{g}/\vec{w}])/\vec{v}] 
	$$
\end{lemma}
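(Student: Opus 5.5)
The proof goes by structural induction on $e$, keeping $\vec{v}$, $\vec{w}$, $\vec{f}$ and $\vec{g}$ as parameters. We read the equality either syntactically up to $\alpha$-equivalence or, more robustly, as bisimilarity; by \Cref{lem:congruence} both sides are well defined on $\Expr/{\sim}$. The first step is to fix the convention on bound variables. Before substituting, we $\alpha$-rename every binder $\mu u$ in $e$ so that $u$ does not occur in $\vec{v}$, in $\vec{w}$, or free in any $f_i$ or $g_j$. Then the second clause of \Cref{def:subset} never fires, and substitution commutes with binders.

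The base cases are direct. For $e = 0$ both sides are $0$. For $e = v$ a variable, the hypothesis $\fv(e) \subseteq \vec{v}$ forces $v = v_{i_k}$ for some $k$. The left side is then $f_k[\vec{g}/\vec{w}]$, which is exactly the $k$-th component on the right. The cases $a.e'$ and $e_1 + e_2$ are immediate from the homomorphic clauses of \Cref{def:subset} together with the induction hypothesis. Here we note that the free variables of $e'$, $e_1$ and $e_2$ are still contained in $\vec{v}$.

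The main obstacle is the case $e = \mu u.e'$, because now $\fv(e') \subseteq \vec{v} \cup \{u\}$ and the induction hypothesis does not apply directly. The plan is to strengthen the statement so the induction goes through. Extend $\vec{v}$ to $\vec{v}u$ and $\vec{f}$ to $\vec{f}u$, and similarly $\vec{w}$ to $\vec{w}u$ and $\vec{g}$ to $\vec{g}u$. This makes $u$ substitute for itself at both stages. The freshness of $u$ ensures that $(e'[\vec{f}/\vec{v}])$ has free variables in $\vec{w}u$. The strengthened induction hypothesis then gives
\[
(e'[\vec{f}u/\vec{v}u])[\vec{g}u/\vec{w}u] = e'[(\vec{f}[\vec{g}/\vec{w}],\, u)/\vec{v}u].
\]
Since $u$ is fresh, substituting $u$ for $u$ is the identity, and we can wrap both sides in $\mu u$ using the binder clause of \Cref{def:subset}. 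So the statement actually proved by induction is the version in which identity substitutions on extra fresh variables are allowed. Alternatively, we can drop the hypothesis $\fv(e) \subseteq \vec{v}$ altogether and instead require that $\vec{w}$ be disjoint from $\fv(e) \setminus \vec{v}$, so that free variables outside $\vec{v}$ pass through both substitutions unchanged.

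If a fully syntactic argument becomes delicate because of $\alpha$-conversion, the fallback is to prove the equality up to bisimilarity. We would exhibit the relation
\[
\{((e[\vec{f}/\vec{v}])[\vec{g}/\vec{w}],\; e[\vec{f}[\vec{g}/\vec{w}]/\vec{v}])\} \cup {\sim}
\]
and check it is a bisimulation using the operational rules of \Cref{rem:semantic-substitution}. A transition of $e$ itself is preserved by both substitutions. A transition obtained through $e \rhd v_{i_k}$ leads to a transition of $f_k[\vec{g}/\vec{w}]$ on both sides. Outputs are matched in the same way, case by case.
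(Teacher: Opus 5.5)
Your proposal is correct, but your primary argument takes a different route from the paper's. The paper argues only up to bisimilarity, and coinductively. It shows that the diagonal together with all pairs $\bigl((e[\vec{f}/\vec{v}])[\vec{g}/\vec{w}],\, e[(f_1[\vec{g}/\vec{w}],\dots,f_m[\vec{g}/\vec{w}])/\vec{v}]\bigr)$, ranging over all data satisfying the free-variable hypotheses, is a bisimulation. It does this by case analysis on outputs and transitions using the rules of \Cref{rem:semantic-substitution}, which is exactly your fallback.

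Your main route is the classical syntactic substitution lemma by structural induction on $e$. It yields the stronger conclusion of equality up to $\alpha$-equivalence, and it does not depend on the operational semantics. Its cost is $\alpha$-conversion bookkeeping. Your claim that the second clause of \Cref{def:subset} never fires is slightly too strong. You only renamed the binders of $e$, but $[\vec{g}/\vec{w}]$ (and, in the $\mu u$ case, the extra $u\mapsto u$) also acts on binders inside the $f_i$. This is harmless, since you only need equality up to $\alpha$. To transport the result to $\Expr/{\sim}$ you then need the standard fact that $\alpha$-equivalent expressions are bisimilar.

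Your ``strengthening'' in the $\mu$ case is also not really one. The lemma is already quantified over all vectors, so the induction hypothesis applies directly to $\vec{v}u,\vec{f}u,\vec{w}u,\vec{g}u$. Your alternative invariant is the cleanest formulation: drop $\fv(e)\subseteq\vec{v}$ and require only $\vec{w}\cap(\fv(e)\setminus\vec{v})=\emptyset$.

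The paper's coinductive proof avoids binders entirely and proves exactly what \Cref{lem:comp_associative} needs. However, it implicitly relies on two facts your syntactic proof does not need. First, it uses the rules of \Cref{rem:semantic-substitution} extended to simultaneous substitution. Second, it assumes that derivatives of $e$ have free variables among those of $e$, which is what makes the relation closed under transitions.
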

\begin{proof}
	Let $\Delta = \{(e,e) \in \Expr\}$ be the diagonal relation. We define a relation ${R} \subseteq {\Expr \times \Expr}$, given by the following:
	\begin{align*}
		R = \Delta \cup \{&((e[\vec{f}/ \vec{v}])[\vec{g}/\vec{w}], e [(f_1[\vec{g}/\vec{w}], \dots, f_m[\vec{g}/\vec{w}])/\vec{v}])\\ &\mid e, f_1, \dots, f_m, g_1, \dots, g_n \in \Expr, \vec{v}=(v_{i_1}, \dots, v_{i_m}), \vec{w}=(v_{j_1}, \dots, v_{j_n}),\\ &\fv(e) \subseteq 		\vec{v},  \fv(\vec{f}) \subseteq \vec{w}\}
	\end{align*}

	 We claim that $R$ is a bisimulation. For pairs $(e,e) \in \Delta$, the conditions of bisimulation are immediately satisfied. 
	
	For the remaining pairs, assume that $(e[\vec{f}/ \vec{v}])[\vec{g}/\vec{w}] \rhd u$. In such a case at least one of the following is true:
	\begin{itemize}
		\item $e[\vec{f}/ \vec{v}] \rhd u$
		\item $e[\vec{f}/\vec{v}] \rhd w_j$ for $w_j \in \vec{w}$ and $g_j \rhd u$
	\end{itemize}
	Since all free variables of $e$ are contained in $\vec{v}$ and all free variables of $\vec{f}$ are contained in $\vec{w}$, we have that all free variables of $e[\vec{f}/\vec{v}]$ are also contained in $\vec{w}$, which makes the first case impossible. Through a similar line of reasoning, we can deduce that $e \rhd v_i$ for some $v_i \in \vec{v}$ and $f_i \rhd w_j$. Since $g_j \rhd u$, we have that $f_i[\vec{g}/\vec{w}] \rhd u$. Finally, we have that $e[(f_1[\vec{g}/\vec{w}], \dots, f_n[\vec{g}/\vec{w}]), \vec{v}] \rhd u$.
	
	Assume that $(e[\vec{f}/ \vec{v}])[\vec{g}/\vec{w}] \tr{a} h$. Then, at least one of the following is true:
	\begin{itemize}
		\item $e[\vec{f}/\vec{v}] \rhd w_j$ and $g_j \tr{a} h$.
		\item $h = h'[\vec{g}/\vec{w}]$, such that $e[\vec{f}/\vec{v}] \tr{a} h'$ 
	\end{itemize}
	In the first case, through a similar line of reasoning as before, we can conclude that $e \rhd{ v_i}$ for some $v_i \in \vec{v}$ and $f_i \rhd w_j$. Hence, $f_i [\vec{g}/\vec{w}] \tr{a} h $. Finally, we can deduce that $e[(f_1[\vec{g}/\vec{w}], \dots, f_m[\vec{g}/\vec{w}])/\vec{v}] \tr{a} h$. Obviously, $(h,h) \in R$.
	
	In the second case, we have that $e[\vec{f}/\vec{v}] \tr{a} h'$. There are two subcases, that need to be considered
	\begin{itemize}
		\item $e \rhd v_i$ and $f_i \tr{a} h'$
		\item $h' = h''[\vec{f}/ \vec{w}]$ and $e \tr{a} h''$ 
	\end{itemize}
	In the first subcase, we have that $f_i[\vec{g}/ \vec{w}] \tr{a} h'[\vec{g}/\vec{w}]$ and hence $$e[(f_1[\vec{g}/ \vec{w}], \dots, f_m[\vec{g}/ \vec{w}])/\vec{v}] \tr{a} h'[\vec{g}/\vec{w}]$$ or equivalently $e[(f_1[\vec{g}/ \vec{w}], \dots, f_m[\vec{g}/ \vec{w}])/\vec{v}] \tr{a} h$. As before, of course $(h,h) \in R$. 
	
	Finally, moving on to the second subcase, we have that $e \tr{a} h''$ and hence $$e[(f_1[\vec{g}/ \vec{w}], \dots, f_m[\vec{g}/ \vec{w}])/\vec{v}] \tr{a} h''[(f_1[\vec{g}/ \vec{w}], \dots, f_m[\vec{g}/ \vec{w}])/\vec{v}]$$ Recall that $(e[\vec{f}/ \vec{v}])[\vec{g}/\vec{w}] \tr{a} h$ and $h = (h''[\vec{f}/\vec{v}])[\vec{g}/\vec{v}]$. Both of those reachable expressions are actually in the relation $R$. The remaining conditions of bisimulation, can be shown via a symmetric argument. 
\end{proof}
\subsection{Behavioural distance of precharts}\label{apx:dist}
\behdist*
\begin{proof}
	$\mathcal{H}$ and $(-)^\uparrow$ are liftings for the functors $\Pfin$ and $\Sigma \times (-) + V$ respectively, that preserve isometries~\cite[Theorem~5.8]{Baldan:2018:Coalgebraic}. The rest follows from~\cite[Theorem~5.2]{Baldan:2018:Coalgebraic}.
\end{proof}
\begin{remark}[{\cite[Example~5.31]{Baldan:2018:Coalgebraic}}]\label{rem:hausdorff_duality}
Let $(X, d)$ be a pseudometric space and let $A, B \in \Pfin(X)$. Let $\Gamma(A,B)$ denote the set of relational couplings of $A$ and $B$, namely elements $R \in \Pfin(A \times B)$, such that $\pi_1 (R)=A$ and $\pi_2(R)=B$. The Hausdorff distance between $A$ and $B$ can be alternatively presented as:
$$
	\mathcal{H}(d)(A,B) = \inf \left\{\sup_{(x,y) \in R} d(x,y) \mid R \in \Gamma(A,B) \right\}
$$
\end{remark}
We will make use of the fact that the set of pseudometrics can be equipped with a norm. We write $\overline{\R} = [-\infty, \infty]$ for the set of extend reals. For any set $X$, the set of functions $\overline{\R}^{X\times X}$, which is the superset of $D_X$, can be seen as a Banach space by means of the sup norm $\|d\| = \sup_{x,y \in X} |d(x,y)|$.
	
\begin{lemma}[{\cite{Breugel:2012:Closure}}]
	Hausdorff lifting $\mathcal{H} \colon D_X \to D_{\Pfin(X)}$ is nonexpansive with respect to the metric induced by the sup norm. Namely,
	$$\|\mathcal{H}(d) - \mathcal{H}(d')\| \leq \|d - d' \|$$ 
\end{lemma}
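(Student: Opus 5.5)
The plan is a direct pointwise estimate: set $\delta = \|d - d'\|$, show that $\mathcal{H}(d)(A,B) \leq \mathcal{H}(d')(A,B) + \delta$ for all $A,B \in \Pfin(X)$, and then swap the roles of $d$ and $d'$. Together these give $|\mathcal{H}(d)(A,B) - \mathcal{H}(d')(A,B)| \leq \delta$ for every pair $(A,B)$. Taking the supremum over all pairs yields $\|\mathcal{H}(d)-\mathcal{H}(d')\| \leq \|d-d'\|$.

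The key observation is that $\delta$ bounds the difference pointwise: $d(x,y) \leq d'(x,y) + \delta$ for all $x,y \in X$. Each operation in \Cref{def:hausdorff} is monotone and commutes with adding a constant, so the inequality passes through every stage of the formula.
\begin{itemize}
\item Fix nonempty $A,B$ and $x \in A$. Taking the infimum over $y \in B$ gives $\inf_{y\in B} d(x,y) \leq \inf_{y \in B} d'(x,y) + \delta$.
\item Taking the supremum over $x \in A$ gives the same bound for the first half of the Hausdorff formula.
\item The symmetric half is handled identically.
\item Finally, $\max\{u+\delta, w+\delta\} = \max\{u,w\}+\delta$, so the bound survives the outer maximum.
\end{itemize}
An alternative route is the coupling presentation of \Cref{rem:hausdorff_duality}. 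For each coupling $R \in \Gamma(A,B)$ we have $\sup_{(x,y)\in R} d(x,y) \leq \sup_{(x,y)\in R} d'(x,y) + \delta$, and the infimum over $R$ preserves this. That version handles both halves of the formula at once.

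The only step I expect to need care is the degenerate case where $A$ or $B$ is empty, since $\Pfin(X)$ contains $\emptyset$.
\begin{itemize}
\item If both sets are empty, both distances are $0$.
\item If exactly one is empty, one of the inner infima ranges over the empty set. Under the paper's $1$-bounded convention this infimum is $1$, i.e.\ the top value, so the Hausdorff distance is $1$ for $d$ and $d'$ alike.
\end{itemize}
In these cases the values of $\mathcal{H}(d)$ and $\mathcal{H}(d')$ therefore do not depend on the underlying pseudometric at all, and their difference is $0 \leq \delta$. I would state this convention explicitly (inf over $\emptyset$ is $1$, sup over $\emptyset$ is $0$) before running the main estimate. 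With it in place, the argument above covers all remaining pairs.
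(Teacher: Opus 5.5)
Your proof is correct. The paper gives no argument of its own for this lemma; it only cites van Breugel. So there is no internal route to compare against: your direct estimate is the standard self-contained proof and supplies what the paper outsources. You propagate $d \le d' + \delta$ through the inner infimum, the supremum and the outer maximum, then swap $d$ and $d'$.

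Your empty-set conventions (an infimum over $\emptyset$ is $1$, a supremum over $\emptyset$ is $0$) agree with the ones the paper uses implicitly in the proof of \Cref{lem:distance_power_of_two}. With these conventions you do not need to split off the degenerate cases at all. The same chain of inequalities then produces only the harmless bounds $1 \le 1+\delta$ (inner infimum over an empty $B$) and $0 \le \delta$ (supremum over an empty $A$).

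The coupling variant via \Cref{rem:hausdorff_duality} also works. There $\Gamma(A,B)=\emptyset$ exactly when one of $A,B$ is empty, which again gives the value $1$ on both sides.
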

\begin{lemma}
	$(-)^\uparrow \colon D_X \to D_{\Sigma \times X + V}$, the lifitng for $\Sigma \times (-) + V$ is contractive with respect to the metric induced by the sup norm. Namely,
	$\|d^\uparrow - d'^\uparrow \| \leq \frac{1}{2} \|d - d' \|$ 
\end{lemma}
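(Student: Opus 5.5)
The statement to prove is that $(-)^\uparrow$ is $\frac12$-contractive for the sup norm: for any two pseudometrics $d,d'\in D_X$ we must show $\|d^\uparrow - d'^\uparrow\| \leq \frac{1}{2}\|d-d'\|$. The plan is a direct pointwise computation from \Cref{def:edge}. Since the sup norm is a supremum over pairs $(m,n)$ in $(\Sigma\times X+V)^2$, it suffices to fix such a pair and show $|d^\uparrow(m,n)-d'^\uparrow(m,n)| \leq \frac12\|d-d'\|$. Taking the supremum over $(m,n)$ then gives the claim.

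The case analysis follows the clauses of \Cref{def:edge}. Note first that the clauses overlap: when $m=n=(a,x)$, both the ``$\frac12 d(x,x)$'' and the ``$0$'' clause apply. They agree there by reflexivity of $d$, so this overlap causes no ambiguity and the lifting is well defined.

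\textbf{Case 1: same letter.} If $m=(a,x)$ and $n=(a,y)$ for a common letter $a$, both liftings use the first clause. Then
\[
|d^\uparrow(m,n)-d'^\uparrow(m,n)| = \tfrac12\,|d(x,y)-d'(x,y)| \leq \tfrac12\|d-d'\|.
\]

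\textbf{Case 2: equal variables.} If $m=n\in V$, both values are $0$, so the difference is $0$.

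\textbf{Case 3: any other pair.} This covers different letters, two distinct variables, and one letter-transition against one variable. Here both liftings take the value $1$, independently of $d$ and $d'$, so the difference is again $0$.

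In every case the pointwise difference is at most $\frac12\|d-d'\|$, which completes the argument. There is no real obstacle. The only point needing care is that the constant clauses ($0$ and $1$) do not depend on the underlying pseudometric, so they contribute nothing to the difference. This also matters for the intended application: together with the preceding nonexpansiveness of $\mathcal{H}$, it makes $\Phi_\beta = \mathcal{H}\circ(-)^\uparrow$ a $\frac12$-contraction, and hence by Banach's theorem gives the unique fixpoint used in \Cref{lem:finite_dist}. If $\|d-d'\|$ is allowed to be infinite in $\overline{\R}$, the inequality holds trivially; for elements of $D_X$ it is at most $1$ anyway.
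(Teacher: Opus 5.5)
Your proof is correct and matches the paper's: the same pointwise case split on the clauses of \Cref{def:edge}, where only the same-letter case gives a nonzero difference, and that difference is bounded by $\frac{1}{2}\|d-d'\|$. The one difference is that you bound $|d^\uparrow(m,n)-d'^\uparrow(m,n)|$ directly, whereas the paper assumes $d'\sqsubseteq d$ ``for simplicity''; since two pseudometrics need not be comparable, your version is slightly cleaner.
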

\begin{proof}
	For the sake of simplicity, assume that $d' \sqsubseteq d$, and hence $d'^\uparrow \sqsubseteq d^\uparrow$. It suffices that we show that for all $u,w \in \Sigma \times X + V$, we have that $ d^\uparrow(u,w) - d'^\uparrow(u,w) \leq \frac{1}{2}\|d-d'\|$. Recall that in all cases, except when $u=(a,x)$ and $w=(a,y)$ for some $a \in \Sigma$ and $x,y \in X$, $d^\uparrow(u,w)=d'^\uparrow(u,w)$ and hence  $d^\uparrow(u,w)-d'^\uparrow(u,w) = 0 \leq \frac{1}{2} \|d - d'\|$. In the remaining case, we have that
	\begin{align*}
		d^\uparrow((a,x),(a,y)) - d'^\uparrow((a,x),(a,y)) &= \frac{1}{2}d(x,y) - \frac{1}{2}d'(x,y) \\
		&\leq \frac{1}{2} \|d'-d\|
	\end{align*}
	which completes the proof.
\end{proof}
\begin{lemma}
	$\Phi_\beta \colon D_X \to D_X$ is contractive with respect to the metric induced by the sup norm, namely
	$$
	\|\Phi_\beta(d)-\Phi_\beta(d')\| \leq \frac{1}{2}\|d-d'\|
	$$
\end{lemma}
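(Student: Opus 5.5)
The plan is to write $\Phi_\beta$ as a composite of the two liftings whose norm behaviour was just established, and then compose the bounds. By definition,
\[
\Phi_\beta(d)(q_1,q_2) = \mathcal{H}\left(d^\uparrow\right)(\beta(q_1),\beta(q_2)),
\]
so $\Phi_\beta(d)$ is obtained from $d$ in three steps. First apply the transitions lifting $(-)^\uparrow \colon D_X \to D_{\Sigma\times X+V}$. Then apply the Hausdorff lifting $\mathcal{H} \colon D_{\Sigma\times X+V}\to D_{\Pfin(\Sigma\times X+V)}$. Finally, reindex along $\beta\times\beta$, i.e.\ precompose with the map $(q_1,q_2)\mapsto(\beta(q_1),\beta(q_2))$.

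The key steps, in order, are as follows.

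\begin{enumerate}
\item \textbf{Reindexing is nonexpansive.} Precomposition with any function $g \colon X\times X\to Y\times Y$ is nonexpansive for the sup norm:
\[
\sup_{q_1,q_2}\left|e(g(q_1,q_2)) - e'(g(q_1,q_2))\right| \;\le\; \sup_{y_1,y_2}\left|e(y_1,y_2)-e'(y_1,y_2)\right|,
\]
because the left-hand supremum ranges over a subset of the points in the right-hand one.

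\item \textbf{Chain the bounds.} Using the nonexpansiveness of $\mathcal{H}$ (the lemma from~\cite{Breugel:2012:Closure}, instantiated at the space $\Sigma\times X+V$) and then the contractivity lemma for $(-)^\uparrow$, we get
\[
\|\Phi_\beta(d)-\Phi_\beta(d')\| \le \|\mathcal{H}(d^\uparrow)-\mathcal{H}(d'^\uparrow)\| \le \|d^\uparrow-d'^\uparrow\| \le \tfrac12\|d-d'\|.
\]
\end{enumerate}

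The main obstacle is making sure the Hausdorff nonexpansiveness lemma applies exactly in the form needed. We need it for pseudometrics on $\Sigma\times X+V$, with the convention that the Hausdorff distance involving empty sets is handled uniformly, because $\beta(q)$ may be empty. The convention appears to be $\sup\emptyset=0$ and $\inf\emptyset=1$, in which case the distance between the empty set and a nonempty set is $1$ for both $d$ and $d'$, and the difference vanishes.

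If the cited lemma is in doubt, I would instead prove the nonempty case directly. For each $x\in A$ and $y \in B$ we have $d(x,y)\le d'(x,y)+\|d-d'\|$. Taking $\inf$ over $y$, then $\sup$ over $x$, and then the $\max$ of the two directions, gives $\mathcal{H}(d)(A,B)\le\mathcal{H}(d')(A,B)+\|d-d'\|$, and the same holds with $d$ and $d'$ swapped. Composing the three estimates then finishes the proof.
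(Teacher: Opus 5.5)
Your proposal is correct and takes essentially the paper's route. The paper also bounds the pointwise difference at $(\beta(q_1),\beta(q_2))$ by $\|\mathcal{H}(d^\uparrow)-\mathcal{H}(d'^\uparrow)\|$ and then chains the nonexpansiveness of $\mathcal{H}$ with the $\tfrac12$-contractivity of $(-)^\uparrow$. Working with absolute values lets you skip the paper's simplifying assumption $d'\sqsubseteq d$, which is not genuinely without loss of generality but is also not needed. Your empty-set convention ($\inf\emptyset=1$, $\sup\emptyset=0$) matches how the paper treats $\mathcal{H}$ elsewhere.
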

\begin{proof}
	For the sake of simplicity, assume that $d' \sqsubseteq d$ and hence $\Phi_\beta(d') \sqsubseteq \Phi_{\beta}(d)$. It suffices to show that for all $x,y \in X$, we have that $\mathcal{H}(d^\uparrow)(\beta(x), \beta(y)) -  \mathcal{H}(d'^\uparrow)(\beta(x), \beta(y))  \leq \frac{1}{2}\|d-d'\|$. We can combine the previous results and for arbitrary $x,y \in X$ obtain the following
	\begin{align*}
		\mathcal{H}(d^\uparrow)(\beta(x), \beta(y)) -  \mathcal{H}(d'^\uparrow)(\beta(x), \beta(y)) & \leq \|\mathcal{H}(d^\uparrow) - \mathcal{H}(d'^\uparrow)\| \\
		&\leq \|d^\uparrow - d'^\uparrow\|\\
		& \leq \frac{1}{2} \|d - d'\|
	\end{align*}
\end{proof}
As a consequence, we have the following
\begin{corollary}
	$\Phi_\beta$ has a unique fixpoint.
\end{corollary}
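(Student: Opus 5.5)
The corollary follows from the Banach fixpoint theorem applied to $\Phi_\beta$ on $D_X$ equipped with the sup-norm metric $\rho(d,d') = \|d-d'\|$. By the previous lemma, $\Phi_\beta$ is a contraction with constant $\frac{1}{2}$. Two things remain: that $(D_X,\rho)$ is a nonempty complete metric space, and that $\Phi_\beta$ maps $D_X$ into itself. Uniqueness is then elementary. If $d_1,d_2$ are both fixpoints, then
$$\|d_1-d_2\| = \|\Phi_\beta(d_1)-\Phi_\beta(d_2)\| \le \tfrac12 \|d_1-d_2\|.$$
Since every element of $D_X$ is $[0,1]$-valued, $\|d_1-d_2\|\le 1$ is finite, so the inequality forces $\|d_1-d_2\|=0$, i.e.\ $d_1=d_2$.

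Existence is already guaranteed by \Cref{thm:beh_dist}, which gives the least fixpoint $\mathsf{bd}_\beta$. Combined with the uniqueness argument, this yields the corollary without needing completeness at all. The steps are therefore as follows:
\begin{enumerate}
\item Note that $\Phi_\beta(d)\in D_X$ whenever $d\in D_X$. This holds because the transitions lifting and the Hausdorff lifting produce pseudometrics, as stated in \Cref{def:edge} and \Cref{def:hausdorff}.
\item Invoke \Cref{thm:beh_dist} to obtain the fixpoint $\mathsf{bd}_\beta$.
\item Apply the contraction estimate above to show that any two fixpoints coincide.
\end{enumerate}

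For robustness, I would also check directly that Banach's theorem applies. $D_X$ is a closed subset of the bounded functions $X\times X\to[0,1]$ under the sup norm, since reflexivity, symmetry and the triangle inequality are all preserved under uniform limits. So Banach's theorem applies independently and yields the same fixpoint.

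I expect no real obstacle. The only subtle point is that the ambient space is $\overline{\R}^{X\times X}$, where the norm could take infinite values. Restricting to $D_X$, where all distances lie in $[0,1]$, removes this issue, because it keeps $\|d_1-d_2\|$ finite and so allows the cancellation in the inequality above.
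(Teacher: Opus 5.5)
Your proposal is correct and follows the paper's route: the paper obtains the corollary directly from the preceding lemma, which shows that $\Phi_\beta$ is a $\frac{1}{2}$-contraction for the sup-norm metric, and this is exactly the contraction argument you give. Taking existence from the least fixpoint of \Cref{thm:beh_dist}, rather than from completeness of $D_X$, is a harmless shortcut, and your remark that restricting to $[0,1]$-valued functions keeps $\|d_1-d_2\|$ finite handles the only subtlety.
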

	We will argue that for (locally) finite precharts  we can give a simpler characterisation of the behavioural distance relying on Kleene's theorem for the greatest fixpoint.
	Recall that $\omega$-cochain is a sequence $\{d_i\}_{i \in \N}$, such that for all $i \in \N$, we have that $d_i \sqsupseteq d_{i+1}$. 
\begin{theorem}[Kleene fixpoint theorem]\label{thm:kleene}
	Let $(L, \sqsubseteq)$ be a partial order where every $\omega$-cochain has an infimum and let $f \colon L \to L$ be an endomap that is cocontinuous, namely for all $\omega$-cochains $\{l_i\}_{i \in \N}$, we have that $f\left(\inf_{i \in \N} l_i \right) = \inf_{i \in \N} f(l_i)$. The greatest fixpoint of $f$ can be characterised as
	$$
	\mathsf{gfp}(f) = \inf_{i \in \N} f^{(i)}
	$$
	where $f^{(0)} = \top$ and $f^{(n+1)} = f^{(n)}(f)$
\end{theorem}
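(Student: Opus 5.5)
This is the dual (greatest fixpoint) form of Kleene's theorem, so I will prove it directly from the two hypotheses. I do not need any Knaster--Tarski machinery. I read the statement with the intended meaning $f^{(0)} = \top$ and $f^{(n+1)} = f(f^{(n)})$; the theorem implicitly assumes $L$ has a top element $\top$.

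\emph{Step 1: the iterates form an $\omega$-cochain.} I first show that $f$ is monotone. Given $x \sqsupseteq y$, the sequence $x, y, y, y, \dots$ is an $\omega$-cochain with infimum $y$. Cocontinuity then gives $f(y) = \inf\{f(x), f(y)\}$, so $f(y) \sqsubseteq f(x)$. Next, by induction on $i$ I show $f^{(i)} \sqsupseteq f^{(i+1)}$. The base case $\top \sqsupseteq f(\top)$ is trivial. The inductive step follows by applying the monotone map $f$ to the inductive hypothesis. Hence $\{f^{(i)}\}_{i\in\N}$ is an $\omega$-cochain, and its infimum $d := \inf_i f^{(i)}$ exists by assumption.

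\emph{Step 2: $d$ is a fixpoint.} By cocontinuity,
\[ f(d) = \inf_{i\in\N} f(f^{(i)}) = \inf_{i\in\N} f^{(i+1)}. \]
The shifted sequence $\{f^{(i+1)}\}_{i}$ is the original cochain with its first element $\top$ removed. Every element of the original cochain is above some element of the shifted one, and vice versa, so the two have the same lower bounds and hence the same infimum. Therefore $f(d) = d$.

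\emph{Step 3: $d$ is the greatest fixpoint.} Let $x$ be any fixpoint. By induction $x \sqsubseteq f^{(i)}$ for all $i$. The base case is $x \sqsubseteq \top$. For the inductive step, monotonicity gives $x = f(x) \sqsubseteq f(f^{(i)}) = f^{(i+1)}$. So $x$ is a lower bound of the cochain, and therefore $x \sqsubseteq d$.

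There is no real obstacle here. The only points needing care are extracting monotonicity from cocontinuity, done in Step 1 with the eventually-constant cochain, and the reindexing of the infimum in Step 2. If monotonicity is taken as part of the intended hypotheses, Step 1 simplifies accordingly.
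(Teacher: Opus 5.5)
Your proof is correct. It is the standard argument for the dual (greatest-fixpoint) form of Kleene's theorem, and the paper does not prove this statement at all: it only invokes it as a known result. Your readings of the statement are the right ones:
\begin{itemize}
\item $f^{(n+1)} = f(f^{(n)})$;
\item $L$ is tacitly assumed to have a top element $\top$;
\item cocontinuity asserts that $f(\inf_i l_i)$ is the infimum of $\{f(l_i)\}_i$, which is what justifies your eventually-constant cochain argument for monotonicity.
\end{itemize}
In the paper's actual applications ($\Phi_\beta$ on the complete lattice $D_X$), monotonicity is already given by \Cref{thm:beh_dist}. So your Step 1 is needed only to make the abstract statement self-contained.
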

\begin{lemma}\label{lem:cocontinuous}
	For a finite prechart $(X,\beta)$, $\Phi_\beta$ is cocontinuous.
\end{lemma}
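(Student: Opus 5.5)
To prove \Cref{lem:cocontinuous}, we must show that for every $\omega$-cochain $\{d_i\}_{i\in\N}$ in $D_X$ we have $\Phi_\beta(\inf_i d_i) = \inf_i \Phi_\beta(d_i)$. Here infima of cochains are computed pointwise, $d_\infty(x,y) = \inf_i d_i(x,y) = \lim_i d_i(x,y)$; this is again a pseudometric, since reflexivity, symmetry and the triangle inequality are preserved by pointwise limits of decreasing sequences. By monotonicity of $\Phi_\beta$ (\Cref{thm:beh_dist}) we immediately get $\Phi_\beta(d_\infty) \sqsubseteq \Phi_\beta(d_i)$ for every $i$, hence $\Phi_\beta(d_\infty) \sqsubseteq \inf_i \Phi_\beta(d_i)$. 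The work is in the reverse inequality, which we check pointwise at a fixed pair $x,y\in X$.

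The key observation is that, because the prechart is finite and $\beta(x)$, $\beta(y)$ are finite sets, everything is a finite max/min expression. Explicitly,
\[
\Phi_\beta(d)(x,y) = \max\Bigl\{\max_{m\in\beta(x)}\min_{n\in\beta(y)} d^\uparrow(m,n),\ \max_{n\in\beta(y)}\min_{m\in\beta(x)} d^\uparrow(n,m)\Bigr\},
\]
where each $d^\uparrow(m,n)$ is either a constant ($0$ or $1$) or $\tfrac12 d(x',y')$ for some states $x',y'$. Thus $\Phi_\beta(d)(x,y) = F(\{d(x',y')\}_{x',y'\in X})$ for a fixed function $F\colon[0,1]^{X\times X}\to[0,1]$ built from finitely many $\max$, $\min$, constants and scaling by $\tfrac12$. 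Such an $F$ is continuous with respect to the (finite-dimensional) product topology, so $F(\lim_i d_i) = \lim_i F(d_i)$. Since $d_i \to d_\infty$ pointwise on the finite set $X\times X$, and $\Phi_\beta(d_i)(x,y)$ is decreasing in $i$ (again by monotonicity), its limit equals its infimum. This gives $\Phi_\beta(d_\infty)(x,y) = \inf_i \Phi_\beta(d_i)(x,y)$.

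Alternatively, one can avoid the continuity argument by using the contraction lemma above: $\|\Phi_\beta(d_i) - \Phi_\beta(d_\infty)\| \le \tfrac12\|d_i - d_\infty\|$. On a finite $X$, pointwise convergence of the decreasing sequence $d_i$ is uniform, since it is a maximum over finitely many pairs, so $\|d_i - d_\infty\|\to 0$ and hence $\Phi_\beta(d_i)\to\Phi_\beta(d_\infty)$ uniformly. This is probably the cleanest route, since it reuses results already established.

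The only delicate point, and the place where finiteness is essential, is this passage from pointwise to uniform convergence (equivalently, the exchange of $\inf_i$ with the inner $\min$ and $\max$ over successors). For infinite $X$ a $\min$ over infinitely many successors need not commute with a decreasing limit, so I would state explicitly that $X\times X$ is finite and use Dini-style uniformity there. Everything else is routine.
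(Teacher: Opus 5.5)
Your proof is correct. The paper gives no self-contained argument for this lemma; it only says the proof is identical to the corresponding cocontinuity lemma in R\'{o}\.{z}owski's earlier paper on DFA, so there is no detailed route to match step by step.

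Your two arguments trade off as follows. The first writes $\Phi_\beta(d)(x,y)$ as a fixed finite $\max$/$\min$/halving expression in finitely many values $d(x',y')$ and passes to the limit. It is the elementary, local interchange of the cochain's infimum with the operations defining $\Phi_\beta$. The second reuses the sup-norm contraction estimate that the appendix proves just before this lemma, where it serves only to give uniqueness of the fixpoint. That route frees you from the concrete shape of $d^\uparrow$ and of the Hausdorff formula, including the conventions for empty successor sets. The price is that it genuinely needs $X\times X$ to be finite, so that pointwise convergence upgrades to uniform convergence.

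One correction to your closing remark: a ``min over infinitely many successors'' cannot occur. Since $\beta$ lands in $\Pfin(\Sigma\times X+V)$, the sets $\beta(x)$ and $\beta(y)$ are finite for every prechart. Your first argument therefore never uses finiteness of $X$; continuity in finitely many coordinates of the product topology is enough. In fact it shows that $\Phi_\beta$ is cocontinuous for arbitrary precharts. Finiteness of $X$ is essential only in the uniform-convergence step of your second argument. As a side effect, the passage to finite subprecharts in the later locally-finite corollary is not needed for this step.
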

\begin{proof}
	Identical proof to~\cite[Lemma~20]{Rozowski:2024:Complete}.
\end{proof}
We can combine the above results into the following statement.
\finitedist*
\begin{proof}
	Since $\Phi_\beta$ has a unique fixpoint, we can rely on a Kleene fixpoint theorem for the greatest fixpoint (\Cref{thm:kleene}), which is preconditioned on $\Phi_\beta$ being cocontinuous, which is true for finite precharts (\Cref{lem:cocontinuous}). The final formula is given by the fact that in the lattice of pseudometrics, the infima of $\omega$-cochains can be calculated pointwise~\cite[Lemma~6]{Rozowski:2024:Complete}.
\end{proof}
Moreover, this can be extended beyond finite charts to the locally finite ones.
\begin{corollary}\label{cor:kleene_locally_finite}
	For any locally finite prechart $(X,\beta)$, the distance between $x, y \in X$, can be calculated by:
	\[
		\mathsf{bd}_\beta(x,y) = \inf_{i \in \N} \left(\Phi_\beta^{(i)}(x,y)\right)
	\]
\end{corollary}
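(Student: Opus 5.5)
The plan is to reduce the locally finite case to the finite case of \Cref{lem:finite_dist} by restricting to the reachable part. Fix $x,y \in X$ and set $Y = \langle x \rangle_\beta \cup \langle y \rangle_\beta$. Local finiteness makes $Y$ finite. Moreover $(Y,\beta)$ is a subprechart: every successor of a state in $Y$ is again reachable, so $\beta$ restricts to $Y \to \Pfin(\Sigma\times Y + V)$. Consequently the inclusion $i\colon Y \to X$ is a prechart homomorphism. By \Cref{thm:beh_dist}\circlednum{4}, $i$ is an isometry, so $\mathsf{bd}_\beta(x,y) = \mathsf{bd}_{\beta|_Y}(x,y)$. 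By \Cref{lem:finite_dist} applied to the finite prechart $(Y,\beta|_Y)$, the latter equals $\inf_{i}\Phi^{(i)}_{\beta|_Y}(x,y)$.

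It remains to show that the approximants agree on $Y$: for all $p\in\N$ and $u,w\in Y$,
\[
\Phi^{(p)}_{\beta|_Y}(u,w)=\Phi^{(p)}_\beta(u,w).
\]
I would prove this by induction on $p$.
\begin{itemize}
\item For $p=0$, both sides are the discrete pseudometric, so they agree.
\item For the step, $\Phi^{(p+1)}_\beta(u,w)=\mathcal{H}\bigl((\Phi^{(p)}_\beta)^\uparrow\bigr)(\beta(u),\beta(w))$. This quantity depends only on the values of $\Phi^{(p)}_\beta$ on pairs of successors of $u$ and $w$. Those successors lie in $Y$, so the induction hypothesis lets us replace them by the values of $\Phi^{(p)}_{\beta|_Y}$.
\end{itemize}
Taking the infimum over $p$ then gives the claimed formula.

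The only delicate point is that $\Phi^{(p)}_\beta$ is defined on all of $X\times X$, possibly infinite, so one must check that the lifting really only inspects successors. This is immediate from \Cref{def:edge} and \Cref{def:hausdorff}: the Hausdorff sup/inf ranges over $\beta(u)$ and $\beta(w)$ only. An alternative route, if preferred, is to argue cocontinuity of $\Phi_\beta$ directly. The $\Phi_\beta(d)(u,w)$ depends only on finitely many values of $d$, so pointwise infima of cochains commute with the finite max/min. Kleene's theorem (\Cref{thm:kleene}) together with uniqueness of the fixpoint then applies globally. I expect the restriction argument to be the cleaner of the two.
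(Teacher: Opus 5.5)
Your proposal is correct and follows the same route as the paper. You restrict to the finite reachable subprechart $\langle x,y\rangle_\beta$, use that the inclusion homomorphism is an isometry (\Cref{thm:beh_dist}), apply \Cref{lem:finite_dist}, and then note that the approximants of the restriction agree with those of $\beta$ on reachable states. Your induction on $p$ spells out rigorously the paper's one-line remark that each approximant only uses states in $\langle x,y\rangle_\beta$.
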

\begin{proof}
Let $\beta'$ denote $\beta$ restricted to $\langle x , y \rangle_\beta$.
	Since $(X, \beta)$ is locally finite, then its subprechart $(\langle x , y \rangle_\beta, \beta')$ is finite. Since homomorphisms are isometries, calculating distance between $x$ and $y$ in $(X, \beta)$ is the same as calculating it in $(\langle x , y \rangle_\beta, \beta)$. Because of \Cref{lem:cocontinuous}, $\Phi_\beta$ is cocontinuous (when restricted to $\langle x , y \rangle_{\beta'}$) and hence we can employ \Cref{thm:kleene}. Since the infima in the lattice of pseudometrics can be calculated pointwise~\cite[Lemma~6]{Rozowski:2024:Complete}, we have that
	$$
	\mathsf{bd}_\beta(x,y) = \mathsf{bd}_{\beta'}(x,y) = \inf_{i \in I} \left( \Phi^{(i)}_{\beta'}(x,y)\right)
	$$
	Since $\beta'$ is a restriction of $\beta$ to ${\langle x , y \rangle_\beta}$ and each $\Phi^{(i)}_{\beta'}$ makes only use of the states in $\langle x , y \rangle_\beta$, we can rewrite the above as $
	\mathsf{bd}_\beta(x,y) = \inf_{i \in \N} \left(\Phi_\beta^{(i)}(x,y)\right)
	$ as desired.
\end{proof} 

\begin{lemma}\label{lem:distance_power_of_two}
	Let $(X,\beta)$ be a locally finite prechart. For all $x,y \in X$, $i \in \N$, either $\Phi_\beta^{(i)}=0$ or there exists $k \in \N$, such that $\Phi_{\beta}^{(i)}(x,y)=2^{-k}$
\end{lemma}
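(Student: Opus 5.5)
We argue by induction on $i$, proving the slightly stronger claim that for every $i \in \N$ the pseudometric $\Phi^{(i)}_\beta$ takes values only in the set $P = \{0\} \cup \{2^{-k} \mid k \in \N\}$. The key observation is that $P$ is closed under the two operations used to build $\Phi_\beta$ from its argument: halving and taking finite suprema/infima.

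The base case $i = 0$ is immediate. $\Phi^{(0)}_\beta$ is the discrete pseudometric $\top$, whose values are $0$ and $1 = 2^{-0}$.

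For the inductive step, assume $\Phi^{(i)}_\beta$ takes values in $P$, and fix $x,y \in X$. We first examine the lifted pseudometric $(\Phi^{(i)}_\beta)^\uparrow$ on $\Sigma \times X + V$ from \Cref{def:edge}. On two elements $m,n$ it equals one of three things:
\begin{itemize}
\item $0$, when $m = n$;
\item $1$, when $m$ and $n$ are neither equal nor transitions with a common label;
\item $\frac{1}{2}\Phi^{(i)}_\beta(x',y')$, when $m = (a,x')$ and $n = (a,y')$.
\end{itemize}
Since $\frac{1}{2}\cdot 0 = 0$ and $\frac{1}{2}\cdot 2^{-k} = 2^{-(k+1)}$, all these values lie in $P$.

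Next we apply the Hausdorff lifting of \Cref{def:hausdorff} to the sets $\beta(x)$ and $\beta(y)$. Both sets are finite, because $\beta$ lands in $\Pfin$. Hence every $\sup$ and $\inf$ in the definition of $\mathcal{H}$ ranges over a finite index set, and is therefore attained as a maximum or minimum of finitely many values in $P$. The two corner cases need a convention.
\begin{itemize}
\item If exactly one of the two sets is empty, then one of the terms $\inf_{y\in\emptyset}$ is taken over the empty set. Under the 1-bounded convention this term equals $1 \in P$.
\item If both sets are empty, then every $\sup$ is taken over the empty set and gives $0 \in P$.
\end{itemize}
With these conventions fixed, $\Phi^{(i+1)}_\beta(x,y) = \mathcal{H}\bigl((\Phi^{(i)}_\beta)^\uparrow\bigr)(\beta(x),\beta(y))$ is a max of mins of elements of $P$. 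Because $P$ is totally ordered, this value is again an element of $P$, which completes the induction.

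The only delicate point is this treatment of empty sets and empty sups and infs in the Hausdorff lifting. I would make the convention explicit, taking values in the 1-bounded lattice $[0,1]$, so that the distance between an empty set and a nonempty one is $1$. I would also note that local finiteness is not even needed here: finiteness of each $\beta(x)$ alone ensures that the extrema are attained.
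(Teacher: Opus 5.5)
Your proof is correct and takes the same route as the paper's: induction on $i$, with the discrete pseudometric as base case. In the inductive step you observe that the transition lifting yields only $0$, $1$, or half of a previous value, and that the Hausdorff lifting over the finite sets $\beta(x),\beta(y)$, with $\inf\emptyset = 1$ and $\sup\emptyset = 0$, can only pick out values from that set. Your induction hypothesis, stated for all pairs at once, is slightly more careful than the paper's, which fixes $x,y$ before inducting but then applies the hypothesis to the successor pairs $(x',y')$.
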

\begin{proof}
	Let $x,y \in X$. We proceed by induction on $i$. 
	
	\begin{itemize}
		\item If $i = 0$, then $\Phi^{(0)}_\beta(x,y)=1=2^{0}$, if $x\not=0$ or $\Phi^{(0)}_\beta(x,y)=0$, otherwise. 
		\item If $i = j + 1$, then unrolling the definition of $\Phi_\beta^{(j+1)}$ yields the following:
			$$\Phi^{j + 1}_\beta(x,y) = \max \left\{\sup_{u \in \beta(x)} \inf_{w \in \beta(y)} {\Phi^{(j)}_\beta}^\uparrow(u,w), \sup_{w \in \beta(y)} \inf_{u \in \beta(x)}{\Phi^{(j)}_\beta}^\uparrow(w,u)\right\}$$
			For any two transitions $u = (a,x')$ and $w=(a,y')$ with the same prefix, the following holds:
			$${\Phi^{(j)}_\beta}^\uparrow(u,w)=\frac{1}{2} {\Phi^{(j)}_\beta}(x',y')$$
			We can apply the induction hypothesis, which states that one of the following is true:
			\begin{itemize}
				\item${\Phi^{(j)}_\beta}(x',y')=0$, which entails that ${\Phi^{(j)}_\beta}^\uparrow(u,w)=0$.
				\item There exists a $k \in N$, such that ${\Phi^{(j)}_\beta}(x',y')=2^{-k}$. This implies that ${\Phi^{(j)}_\beta}^\uparrow(u,w)=2^{-(k+1)}$.
			\end{itemize}
			Since the infima range over finite sets, their values are either $1=2^0$ if the sets are empty or are one of the values from the set, which we have shown to be of the desired form. Similarly, suprema range over finite sets and are either $0$ for empty sets or are on of the values from the set, which are in the desired form. Taking the maximum of values in the desired form, still results in a value in the desired form.\qedhere
	\end{itemize}
\end{proof}

\begin{lemma}\label{lem:chain_stabilises}
	Let $(X, \beta)$ be a locally finite prechart and let $x,y \in X$, such that $\neg (x \sim y)$. There exists $i \in \N$, such that $\Phi_\beta^{(i)}(x,y)=\Phi^{(i+1)}_\beta(x,y)$
\end{lemma}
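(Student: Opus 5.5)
The plan is to combine three facts: the approximants $\Phi^{(i)}_\beta(x,y)$ form a decreasing sequence, each of them lies in the discrete set $\{0\}\cup\{2^{-k}\mid k\in\N\}$, and their infimum is strictly positive when $\neg(x\sim y)$. A decreasing sequence in that set which stays above a positive bound takes only finitely many values. Hence two consecutive terms must coincide, which is exactly the statement.

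\textbf{Step 1 (monotonicity of the chain).} I will show $\Phi^{(i+1)}_\beta\sqsubseteq\Phi^{(i)}_\beta$ for all $i$, by induction on $i$.
\begin{itemize}
\item Base case: $\Phi^{(0)}_\beta=\top$ is the top of the lattice $D_X$, so $\Phi^{(1)}_\beta\sqsubseteq\Phi^{(0)}_\beta$ holds trivially.
\item Inductive step: since $\Phi_\beta$ is monotone (\Cref{thm:beh_dist}), $\Phi^{(i+1)}_\beta\sqsubseteq\Phi^{(i)}_\beta$ implies $\Phi^{(i+2)}_\beta\sqsubseteq\Phi^{(i+1)}_\beta$.
\end{itemize}
Evaluating at $(x,y)$ gives a non-increasing real sequence. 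By \Cref{lem:distance_power_of_two}, each term is either $0$ or of the form $2^{-k}$. By \Cref{cor:kleene_locally_finite}, its infimum is $\mathsf{bd}_\beta(x,y)$.

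\textbf{Step 2 (positivity of the distance).} This is the main obstacle. I will prove the contrapositive: $\mathsf{bd}_\beta(x,y)=0$ implies $x\sim y$. To do so, I show that $R=\{(x',y')\mid \mathsf{bd}_\beta(x',y')=0\}$ is a bisimulation.

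Take $(x',y')\in R$. Since $\mathsf{bd}_\beta$ is a fixpoint of $\Phi_\beta$, we have $\mathcal{H}(\mathsf{bd}_\beta^\uparrow)(\beta(x'),\beta(y'))=0$. Because $\beta(x')$ and $\beta(y')$ are finite, every infimum in the Hausdorff formula is attained, or equals $1$ if the set is empty. So every $u\in\beta(x')$ has some $w\in\beta(y')$ with $\mathsf{bd}_\beta^\uparrow(u,w)=0$, and symmetrically. Unfolding \Cref{def:edge}:
\begin{itemize}
\item If $u=v$ is a variable, then $w$ must be the same variable $v$, since any other element is at distance $1$. Hence $E(x')=E(y')$.
\item If $u=(a,x'')$, then $w$ must have the form $(a,y'')$ with $\tfrac12\mathsf{bd}_\beta(x'',y'')=0$. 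Hence $(x'',y'')\in R$.
\end{itemize}
The symmetric clauses follow in the same way, so $R$ is a bisimulation. Therefore $\neg(x\sim y)$ gives $\mathsf{bd}_\beta(x,y)=\delta>0$.

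\textbf{Step 3 (conclusion).} Every term of the sequence is at least $\delta$. By Step 1 each term therefore lies in the finite set $\{2^{-k}\mid 2^{-k}\ge\delta\}$. A non-increasing sequence in a finite set can strictly decrease only finitely often. So there is some $i$ with $\Phi^{(i)}_\beta(x,y)=\Phi^{(i+1)}_\beta(x,y)$. In fact the sequence is eventually constant, with value $\mathsf{bd}_\beta(x,y)$.
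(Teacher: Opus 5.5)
Your proof is correct and follows essentially the paper's argument: the approximants $\Phi^{(i)}_\beta(x,y)$ form a non-increasing sequence in $\{0\}\cup\{2^{-k}\mid k\in\N\}$ whose infimum $\mathsf{bd}_\beta(x,y)$ is positive, so they cannot strictly decrease forever. The paper argues by contradiction, while you argue directly via the finiteness of $\{2^{-k}\mid 2^{-k}\geq\delta\}$. Your Step 2, showing that the kernel of $\mathsf{bd}_\beta$ is a bisimulation, also makes explicit the implication $\mathsf{bd}_\beta(x,y)=0\Rightarrow x\sim y$, which the paper uses tacitly when it writes ``contradicting the assumption''.
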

\begin{proof}
	Assume that for all $i \in \N$, $\Phi_\beta^{(i)}(x,y)\neq\Phi^{(i+1)}_\beta(x,y)$. Essentially, that means we have an infinite, strictly decreasing $\omega$-cochain $\{\Phi^{(i)}_\beta(x,y)\}_{i \in I}$. By \Cref{lem:distance_power_of_two}, we know that the values of the chain are either $0$ or $2^{-k}$. Since all the values of the chain are nonegative, if any of it is equal to zero $0$, we reach a contradiction, as the chain would have to contain values strictly below $0$. Hence, we can safely assume that the chain is in the form $\{1,\frac{1}{2}, \frac{1}{4}, \dots\}$. But in such a case its infimum is $0$, contradicting the assumption.
\end{proof}
\begin{corollary}\label{lem:finite_witness_for_positive_distance}
	Let $(X, \beta)$ be a locally finite prechart and let $x,y \in X$. If $\mathsf{bd}_\beta(x,y) > 0$, then there exists an $i \in \N$, such that $\mathsf{bd}_\beta(x,y) = \Phi^{(i)}_\beta(x,y)$
\end{corollary}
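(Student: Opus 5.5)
We combine \Cref{cor:kleene_locally_finite} with \Cref{lem:chain_stabilises}. By \Cref{cor:kleene_locally_finite}, $\mathsf{bd}_\beta(x,y)=\inf_{i\in\N}\Phi^{(i)}_\beta(x,y)$. Since $\Phi^{(0)}_\beta=\top$ is the top pseudometric and $\Phi_\beta$ is monotone (\Cref{thm:beh_dist}), the sequence $\{\Phi^{(i)}_\beta\}_{i\in\N}$ is an $\omega$-cochain. Hence the real sequence $\Phi^{(i)}_\beta(x,y)$ is non-increasing, and its infimum is its limit.

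Assume $\mathsf{bd}_\beta(x,y)>0$. By item \circlednum{3} of \Cref{thm:beh_dist}, bisimilar states are at distance $0$, so $\neg(x\sim y)$. By \Cref{lem:distance_power_of_two}, every value $\Phi^{(i)}_\beta(x,y)$ lies in $\{0\}\cup\{2^{-k}\mid k\in\N\}$. The values are also bounded below by $\mathsf{bd}_\beta(x,y)>0$, so they cannot be $0$. Only finitely many powers of two lie in $[\mathsf{bd}_\beta(x,y),1]$. A non-increasing sequence taking values in a finite set is eventually constant: there are $i_0$ and $c$ with $\Phi^{(i)}_\beta(x,y)=c$ for all $i\ge i_0$.

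The infimum of such a sequence is $c$, so $\mathsf{bd}_\beta(x,y)=c=\Phi^{(i_0)}_\beta(x,y)$, and $i=i_0$ is the required witness.

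The main point to handle with care is that stabilisation at a single step, which is what \Cref{lem:chain_stabilises} gives, is not by itself enough: a non-increasing chain could pause and then drop again. For this reason the argument uses the finiteness of the value set above the positive lower bound, rather than relying on \Cref{lem:chain_stabilises} alone.
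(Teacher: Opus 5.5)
Your argument is correct, but it reaches the conclusion by a different and more direct route than the paper.

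The paper's route is this. From $\mathsf{bd}_\beta(x,y)>0$ it obtains $\neg(x\sim y)$. It then invokes \Cref{lem:chain_stabilises} to get a single $i$ with $\Phi^{(i)}_\beta(x,y)=\Phi^{(i+1)}_\beta(x,y)$, and asserts that the infimum equals the value at that point.

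Your concern about that last step is legitimate. $\Phi^{(i+2)}_\beta(x,y)$ depends on $\Phi^{(i+1)}_\beta$ at other pairs, so one equal step does not by itself force the sequence to stay constant. For this particular chain a stall does turn out to be permanent: relating the approximants to stratified bisimilarity, as in \Cref{lem:bound_on_stratified_bisim}, gives the values $1,\frac12,\dots,2^{-n},2^{-n},\dots$, with $n$ maximal such that $x\sim^{(n)}y$. The paper, however, does not make this argument.

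You instead use the positive infimum itself as a uniform lower bound. \Cref{lem:distance_power_of_two} then confines every term to the finite set $\{2^{-k}\mid 2^{-k}\geq\mathsf{bd}_\beta(x,y)\}$. This avoids \Cref{lem:chain_stabilises}, the Hennessy--Milner reasoning implicit in its proof, and even the fact $\neg(x\sim y)$. In particular, your appeal to item \circlednum{3} of \Cref{thm:beh_dist} is never used.

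You could also drop the cochain step. A nonempty subset of a finite set attains its infimum, so some $\Phi^{(i)}_\beta(x,y)$ equals $\mathsf{bd}_\beta(x,y)$ directly.
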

\begin{proof}
	If $\mathsf{bd}_\beta(x,y) >0$, we have that $\neg (x \sim y)$ and using \Cref{lem:chain_stabilises}, we know that the chain of approximants stabilises and hence the infimum of the chain is equal to the point where it stabilises.
\end{proof}
\begin{lemma}\label{lem:bound_on_stratified_bisim}
	Let $(X, \beta)$ be a prechart. For any $x,y \in X$, we have that 
	$$
	x \sim^{(k)} y \iff \mathsf{bd}_{\beta}(x,y) \leq 2^{-k}
	$$
\end{lemma}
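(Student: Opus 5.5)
We prove the stronger statement that $x \sim^{(k)} y$ iff $\Phi^{(k)}_\beta(x,y) \leq 2^{-k}$, and then pass from the approximants to $\mathsf{bd}_\beta$.

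\textbf{Step 1: approximants.} We show by induction on $k$ that, for all $x,y$ and all $j \geq k$,
\[
x \sim^{(k)} y \iff \Phi^{(j)}_\beta(x,y) \leq 2^{-k}.
\]
For $k=0$ both sides hold trivially, since every approximant is bounded by $1$.

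For the step $k+1$, unfold $\Phi^{(j)}_\beta = \Phi_\beta(\Phi^{(j-1)}_\beta)$ with $j-1 \geq k$. By the definition of the Hausdorff lifting, $\Phi^{(j)}_\beta(x,y) \leq 2^{-(k+1)} < 1$ holds iff every element $u \in \beta(x)$ has some $w \in \beta(y)$ with $(\Phi^{(j-1)}_\beta)^\uparrow(u,w) \leq 2^{-(k+1)}$, and symmetrically. The suprema and infima are over finite sets, so they are attained.

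Since the transitions lifting takes value $1$ on mismatched pairs, such a $w$ must be the same variable when $u$ is a variable. When $u=(a,x')$, we need $w=(a,y')$ with $\tfrac12\Phi^{(j-1)}_\beta(x',y') \leq 2^{-(k+1)}$, i.e. $\Phi^{(j-1)}_\beta(x',y') \leq 2^{-k}$. By the induction hypothesis this is exactly $x' \sim^{(k)} y'$.

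Putting the two cases together gives $E(x)=E(y)$ plus the mutual transfer condition, which is precisely $x \sim^{(k+1)} y$.

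\textbf{Step 2: passing to $\mathsf{bd}_\beta$.} For $(\Rightarrow)$, the approximants form a decreasing cochain with infimum $\mathsf{bd}_\beta$ (\Cref{cor:kleene_locally_finite}). Hence $x \sim^{(k)} y$ implies $\mathsf{bd}_\beta(x,y) \leq \Phi^{(k)}_\beta(x,y) \leq 2^{-k}$.

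For $(\Leftarrow)$, there are two cases.
\begin{itemize}
\item If $\mathsf{bd}_\beta(x,y)>0$, then by \Cref{lem:finite_witness_for_positive_distance} the value equals some $\Phi^{(i)}_\beta(x,y)$, and the chain is constant from some index on. So $\Phi^{(j)}_\beta(x,y) \leq 2^{-k}$ for some $j \geq k$, and Step 1 gives $x\sim^{(k)} y$.
\item If $\mathsf{bd}_\beta(x,y)=0$, then $\Phi^{(j)}_\beta(x,y)$ tends to $0$. It therefore eventually drops below $2^{-k}$ for some $j\ge k$, and Step 1 again gives $x \sim^{(k)} y$.
\end{itemize}

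\textbf{Main obstacle.} The main obstacle is the finiteness and local-finiteness needed for Kleene's characterisation. The statement as given only assumes a prechart, but $\beta$ is $\Pfin$-valued and the relevant reachable subprechart is finite. If this fails, I would restrict to the locally finite case used throughout, namely the prechart on $\Omega$ and precharts extracted from diagrams, which suffices for the applications. A minor point is the extra generality over $j$ in Step 1, which avoids relying on the chain being exactly indexed.
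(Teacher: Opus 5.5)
Your Step 1 is exactly the paper's induction, except that the paper runs it directly on $\mathsf{bd}_\beta$. Where you use the recursion $\Phi^{(j)}_\beta=\Phi_\beta(\Phi^{(j-1)}_\beta)$, the paper uses the fixpoint equation $\mathsf{bd}_\beta=\Phi_\beta(\mathsf{bd}_\beta)$ from \Cref{thm:beh_dist}. Your argument goes through verbatim with $\mathsf{bd}_\beta$ in place of both $\Phi^{(j)}_\beta$ and $\Phi^{(j-1)}_\beta$. It only uses three facts: mismatched elements of $\Sigma\times X+V$ are at lifted distance $1$; matching transitions are at half the distance of their targets; and the infima in the Hausdorff lifting range over the finite sets $\beta(y)$, so they are attained. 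This proves the lemma for every prechart and makes Step~2 unnecessary. The ``main obstacle'' you identify is created entirely by the detour through approximants.

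As written, that detour has two genuine defects.

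\emph{Generality.} Your claim that the relevant reachable subprechart is finite is false. $\Pfin$-valuedness of $\beta$ only gives finite branching: a chain $x_0\tr{a}x_1\tr{a}x_2\tr{a}\cdots$ of distinct states is finitely branching yet reaches infinitely many states. Meanwhile \Cref{cor:kleene_locally_finite} and \Cref{lem:finite_witness_for_positive_distance} are only available for locally finite precharts. So you prove only that special case, not the lemma as stated.

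\emph{Circularity.} Your appeal to \Cref{lem:finite_witness_for_positive_distance} is circular within the paper's development. Its proof goes through \Cref{lem:chain_stabilises}, whose final ``contradicting the assumption'' tacitly uses $\mathsf{bd}_\beta(x,y)=0\Rightarrow x\sim y$. The paper obtains that implication only in \Cref{thm:concrete_distance}, and it does so using the very lemma you are proving. In your case $\mathsf{bd}_\beta(x,y)>0$ you can bypass this. The approximants take values in $\{0\}\cup\{2^{-m}\mid m\in\N\}$ (\Cref{lem:distance_power_of_two}), so a decreasing chain of them with positive infimum is eventually constant.

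Note also that $(\Rightarrow)$ needs no Kleene characterisation. Applying the monotone map $\Phi_\beta$ $k$ times to $\mathsf{bd}_\beta\sqsubseteq\top$ gives $\mathsf{bd}_\beta\sqsubseteq\Phi^{(k)}_\beta$ in any prechart.

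With these repairs your route is sound for locally finite precharts. The direct fixpoint induction is still shorter and strictly more general.
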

\begin{proof}
	By induction on $k$. The base case is trivial, as we immediately have that $x \sim^{(0)} y$ and $\mathsf{bd}_\beta(x,y) \leq 2^{0} = 1$.
	
	For the inductive step assume that for some $k' \in \N$, the induction hypothesis holds. First, assume that $x \sim^{(k'+1)} y$. Unrolling the definition of stratification of bisimilarity, we have that
	\begin{itemize}
		\item If $x \rhd v$, then $y \rhd v$
		\item If $x \tr{a} x'$, then there exists $y'$ such that $y \tr{a} y'$ and $x \sim^{(k')} y$.
	\end{itemize}
	and symmetrically.
	
	We want to show that $\mathsf{bd}_\beta(x,y) \leq 2^{-(k'+1)}$. We can unroll the definition of $\mathsf{bd}_\beta$ and rewrite the desired result as
	$$
	\sup_{u \in \beta(x)} \left( \inf_{w \in \beta(y)} \mathsf{bd}_\beta^\uparrow(u,w)\right)  \leq 2^{-(k'+1)}\quad \wedge \quad \sup_{w \in \beta(y)} \left( \inf_{u \in \beta(x)} \mathsf{bd}_\beta^\uparrow(w,u)  \right)\leq 2^{-(k'+1)}
	$$
	We focus on the left hand side of the conjunction above, as the right hand side is symmetric. We are aiming to show that
	$$
	\forall {u \in \beta(x)}\ldotp~\inf_{w \in \beta(y)} \mathsf{bd}_\beta^\uparrow(u,w)  \leq 2^{-(k'+1)} 
	$$
	Let $u \in V$. By the assumption, we know that also $u \in B(y)$, which means that $$\inf_{w \in \beta(y)} \mathsf{bd}_\beta^\uparrow(u,w) = 0 \leq 2^{-(k'+1)}$$
	
	Now, let $u \in \Sigma \times X$, i.e. $u = (a,x')$. By the assumption, we know that there exists $y' \in X$, such that $(a,y') \in \beta(y)$ and $x' \sim^{(k')} y'$. By induction hypothesis, we know that $\mathsf{bd}_{\beta}(x',y') \leq 2^{-k'}$. Hence, we have that $\mathsf{bd}_{\beta}^\uparrow((a,x'),(a,x')) \leq \frac{1}{2} \cdot  2^{-k'} = 2^{-(k'+1)}$. Hence, we again have that 
		$$
		\forall {u \in \beta(x)}\ldotp~\inf_{w \in \beta(y)} \mathsf{bd}_\beta^\uparrow(u,w)  \leq 2^{-(k'+1)} 
		$$
		
		Now, for the converse assume that $\mathsf{bd}_\beta(x,y) \leq 2^{-(k'+1)}$. Through a similar line of reasoning, as before, we have that
		$$
		\forall{u \in \beta(x)}\ldotp~\inf_{w \in \beta(y)} \mathsf{bd}_\beta^\uparrow(u,w) \leq 2^{-(k'+1)} 
		$$
		Assume that $x \rhd v$, i.e. $v \in \beta(x)$. Assume that $\neg (y \rhd v)$. That means that for all $w \in \beta(y)$, we have that $\mathsf{bd}_\beta^\uparrow(v,w) = 1$ and hence $\inf_{w \in \beta(y)} \mathsf{bd}_\beta^\uparrow(v,w)=1$, which contradicts the assumption as $1>2^{-{(k'+1)}}$. Hence, $y \rhd v$.
		
		Now, assume that $x \tr{a} x'$, i.e. $(a,x') \in \beta(y)$. Through a similar argument as before, we know that there must exist $(a,y')\in \beta(y)$, such that $\mathsf{bd}_\beta^\uparrow((a,x'),(a,y')) \leq 2^{-(k'+1)}$. Unrolling the definitions of $\mathsf{bd}_\beta^\uparrow$, we obtain $\frac{1}{2} \cdot \mathsf{bd}_{\beta}(x',y') \leq 2^{-(k'+1)}$ and hence $\mathsf{bd}_{\beta}(x',y') \leq 2^{-k'}$. Using the induction hypothesis, we get that $x' \sim^{(k')} y'$ as desired. The remaining part of the proof is symmetric and hence is omitted.
\end{proof}
\begin{theorem}\label{thm:concrete_distance}
Let $(X,\beta)$ be a locally finite prechart and let $x,y \in X$. The behavioural distance between $x$ and $y$ is given by:
	$$\mathsf{bd}_\beta(x,y) = \begin{cases}
		0 & \text{ if } x \sim y\\
		2^{-n} & \text{ if } n \text{ is the largest number such that } x\sim^{(n)} y
	\end{cases}$$
\end{theorem}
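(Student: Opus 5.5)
We split on whether $x \sim y$ and combine \Cref{lem:bound_on_stratified_bisim} with the power-of-two structure of the approximants.

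\textbf{Case $x \sim y$.} Here $\mathsf{bd}_\beta(x,y)=0$ follows directly from item \circlednum{3} of \Cref{thm:beh_dist}.

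\textbf{Case $\neg(x\sim y)$.} First we need the largest $n$ with $x\sim^{(n)}y$ to exist. Note that $x\sim^{(0)}y$ always holds, and the relations $\sim^{(k)}$ are decreasing in $k$; the latter is a routine induction on $k$. If $x\sim^{(k)}y$ held for every $k$, then \Cref{lem:bound_on_stratified_bisim} would give $\mathsf{bd}_\beta(x,y)\le 2^{-k}$ for all $k$, hence $\mathsf{bd}_\beta(x,y)=0$. So it suffices to show that positive distance is forced when $x\not\sim y$. On a locally finite prechart we have $\mathsf{bd}_\beta(x,y)=\inf_i\Phi^{(i)}_\beta(x,y)$ by \Cref{cor:kleene_locally_finite}. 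By \Cref{lem:chain_stabilises}, for non-bisimilar states this cochain satisfies $\Phi^{(i)}_\beta(x,y)=\Phi^{(i+1)}_\beta(x,y)$ for some $i$. To get positivity, we would argue that for a finite prechart stabilisation at $x,y$ propagates: we use the standard fact that $\bigcap_k \sim^{(k)}$ is a bisimulation on finite-branching precharts (Hennessy–Milner), which can be checked directly because each $\beta(x)$ is finite. This yields $x\sim y$ whenever $x\sim^{(k)} y$ for all $k$, a contradiction. Hence the set $\{k : x\sim^{(k)}y\}$ is a finite initial segment $\{0,\dots,n\}$.

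\textbf{Pinning down the value.} With $n$ maximal, \Cref{lem:bound_on_stratified_bisim} gives $\mathsf{bd}_\beta(x,y)\le 2^{-n}$ and $\mathsf{bd}_\beta(x,y)> 2^{-(n+1)}$, the latter because $\neg(x\sim^{(n+1)}y)$. Since $\mathsf{bd}_\beta(x,y)>0$, \Cref{lem:finite_witness_for_positive_distance} gives $\mathsf{bd}_\beta(x,y)=\Phi^{(i)}_\beta(x,y)$ for some $i$. By \Cref{lem:distance_power_of_two}, this value is $0$ or a power $2^{-k}$, and it is nonzero. The only power of two in the interval $(2^{-(n+1)},2^{-n}]$ is $2^{-n}$, so $\mathsf{bd}_\beta(x,y)=2^{-n}$.

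\textbf{Main obstacle.} The delicate step is the Hennessy–Milner argument that $\bigcap_k\sim^{(k)}$ is a bisimulation. For a transition $x\tr{a}x'$, each $k$ yields some $y'_k$ with $y\tr{a}y'_k$ and $x'\sim^{(k)}y'_k$. Because $\beta(y)$ is finite, one successor $y'$ recurs for infinitely many $k$. Downward closure of the $\sim^{(k)}$ then gives $x'\sim^{(k)}y'$ for all $k$. Variable outputs are handled by clause \circlednum{1} at any level $k\ge1$. This is the only place where finite branching is essential.
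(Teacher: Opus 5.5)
Your proof is correct and follows the same route as the paper. You use \Cref{thm:beh_dist} for the bisimilar case and the Hennessy--Milner fact that $\bigcap_k \sim^{(k)}$ is a bisimulation to get a maximal $n$ (you prove this inline by pigeonhole on the finite set $\beta(y)$, where the paper cites it), then \Cref{lem:bound_on_stratified_bisim}, \Cref{lem:finite_witness_for_positive_distance} and \Cref{lem:distance_power_of_two} to pin $\mathsf{bd}_\beta(x,y)$ to the unique power of two in $(2^{-(n+1)},2^{-n}]$. You can delete the half-sentence about stabilisation via \Cref{lem:chain_stabilises} ``propagating'': the Hennessy--Milner argument alone gives the maximal $n$, and positivity then follows from $\mathsf{bd}_\beta(x,y) > 2^{-(n+1)}$.
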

\begin{proof}
	For the first case, because of \Cref{thm:beh_dist}, if $x \sim y$, then $\mathsf{bd}_{\beta}(x,y)=0$. For the converse, if $\mathsf{bd}_{\beta}(x,y)=0$, we have that $x \sim^{(k)} y$ for all $k \in \N$ and hence by~\cite[Theorem~2.1]{hennessy:1985:algebraic}, it holds that $x \sim y$.
	
	 In the second case, because of \Cref{lem:chain_stabilises}, we know that if $\mathsf{bd}_\beta(x,y) > 0$, then the behavioural distance is equal to some element of the chain of approximants. By \Cref{lem:distance_power_of_two}, we know that all non-zero elements of that chain are equal to $2^{-k}$, for some $k \in \N$. Combining it with \Cref{lem:bound_on_stratified_bisim} yields the desired result. 
	 For the converse, if $n \in N$ is the largest number such that $x \sim^{(n)} y$, then because of \Cref{lem:bound_on_stratified_bisim}, we have that $\mathsf{bd}_\beta(x,y)\leq 2^{-n}$. Assume that $\mathsf{bd}_\beta(x,y)=0$. In such a case, using \Cref{lem:bound_on_stratified_bisim}, we could conclude that $x \sim^{(n+1)} y$, which would lead to contradiction. Hence, $\mathsf{bd}_\beta(x,y)>0$. Because of \Cref{lem:chain_stabilises}, we have that $\mathsf{bd}_\beta(x,y)$ is equal to some power of two. Combining that with \Cref{lem:bound_on_stratified_bisim} again yields the desired result.\qedhere
\end{proof}
\begin{lemma}\label{lem:hom_stratification}
	Let $(X, \beta)$, $(Y, \gamma)$ and $(Z, \zeta)$ be arbitrary precharts, such that there exist homomorphisms $f \colon X \to Z$ and $g \colon Y \to Z$.
	For all $p \in \N$, $(x,y) \in X \times Y$, we have that if $x \sim^{(p)} y$ then $f(x) \sim^{(p)} g(y)$.
\end{lemma}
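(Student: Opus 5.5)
We argue by induction on $p$, proving the statement simultaneously for all pairs $(x,y) \in X \times Y$. Here $\sim^{(p)}$ on the left relates states of $(X,\beta)$ and $(Y,\gamma)$, and on the right it relates states of $(Z,\zeta)$ with themselves. The base case $p=0$ is immediate, since $\sim^{(0)}$ relates every pair of states.

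For the inductive step, assume the claim for $p$ and suppose $x \sim^{(p+1)} y$.

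\textbf{Outputs.} Since the graph of $f$ is a bisimulation, condition \circlednum{1} of bisimulation gives $\beta$-outputs of $x$ equal to the $\zeta$-outputs of $f(x)$. Similarly, the outputs of $y$ coincide with those of $g(y)$. Since $x \sim^{(p+1)} y$ gives $E(x)=E(y)$, we obtain $E(f(x)) = E(g(y))$.

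\textbf{Transitions.} Take a transition $f(x) \tr{a} z'$ in $Z$. We have to find $z''$ with $g(y) \tr{a} z''$ and $z' \sim^{(p)} z''$. We proceed in four steps.
\begin{enumerate}
\item Use the \emph{symmetric} (backward) clause of the bisimulation $G(f)$: there is $x'$ with $x \tr{a} x'$ and $f(x') = z'$. Here we need the backward direction, not just the forward one.
\item From $x \sim^{(p+1)} y$, obtain $y'$ with $y \tr{a} y'$ and $x' \sim^{(p)} y'$.
\item Use the forward clause of $G(g)$: $g(y) \tr{a} g(y')$.
\item The induction hypothesis gives $f(x') \sim^{(p)} g(y')$, that is, $z' \sim^{(p)} g(y')$.
\end{enumerate}
So $z'' = g(y')$ works. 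The symmetric case, starting from a transition of $g(y)$, is handled in the same way with the roles of $f$ and $g$ exchanged.

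The only delicate point is the first step: we must use the fact that a homomorphism's graph is a bisimulation in both directions. This is what lets every $Z$-transition out of $f(x)$ be traced back to an $X$-transition out of $x$.
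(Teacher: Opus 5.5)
Your proof is correct, but it takes a different route from the paper's. The paper argues modularly. Since the graph of a homomorphism is a bisimulation, Hennessy and Milner's result gives $x \sim^{(p)} f(x)$ and $y \sim^{(p)} g(y)$ for every $p$. The paper then chains $f(x) \sim^{(p)} x \sim^{(p)} y \sim^{(p)} g(y)$, using that each $\sim^{(p)}$ is an equivalence relation.

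You instead run a single induction on $p$ that uses the clauses of the bisimulations $G(f)$ and $G(g)$ directly:
\begin{itemize}
\item the backward clause of $G(f)$ reflects a transition of $f(x)$ to one of $x$;
\item the hypothesis $x \sim^{(p+1)} y$ matches it from $y$;
\item the forward clause of $G(g)$ pushes it to $g(y)$.
\end{itemize}
In effect you fuse the three links of the paper's chain into one inductive step.

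The paper's version is shorter because it reuses standard facts. However, it quietly relies on symmetry and transitivity of $\sim^{(p)}$ across three different state spaces $X$, $Y$, $Z$, whereas $\sim^{(p)}$ is only defined on $Q_1 \times Q_2$ for a pair of precharts. Making that step rigorous requires passing to a disjoint union, or proving a cross-prechart transitivity lemma by an induction much like yours. Your argument is self-contained and avoids that issue entirely.
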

\begin{proof}
	Recall that if $f \colon X \to Y$ is a homomorphism, then $G(f) = \{(x,f(x)) \mid x \in X\}$ is a bisimulation, which implies that for all $p \in \N$, $x \in X$ we have that $x \sim^{(p)} f(x)$~\cite[Theorem~2.1]{hennessy:1985:algebraic}. Similarly, for all $p \in \N$, $y \in Y$, we have that $y \sim^{(p)} g(y)$. Since relations constituting the stratification of bisimilarity are all equivalence relations~\cite{hennessy:1985:algebraic}, we know that if $x \sim^{(p)} y$, $x \sim^{(p)} f(x)$ and $y \sim^{(p)} g(y)$ jointly imply that $f(x) \sim^{(p)} g(y)$ for all $p \in \N$ and $(x,y) \in X \times Y$, which completes the proof.
\end{proof}
\concrete*
\begin{proof}
	Immediately follows from \Cref{thm:concrete_distance} and \Cref{lem:hom_stratification}.
\end{proof}

\section{Axioms of SMCs}
\label{app:smc-axioms}
\begin{equation*}
\begin{gathered}
{
\tikzset{x=1em, y=2.1ex}
\InputIfFileExists{smc/sequential-associativity.tikz}{}{\input{./tikz/smc/sequential-associativity.tikz}}
\tikzset{x=1em, y=1.5ex}
 \SMCeq 
\tikzset{x=1em, y=2.1ex}
\InputIfFileExists{smc/sequential-associativity-1.tikz}{}{\input{./tikz/smc/sequential-associativity-1.tikz}}
\tikzset{x=1em, y=1.5ex}
} \qquad {
\tikzset{x=1em, y=2.1ex}
\InputIfFileExists{smc/parallel-associativity.tikz}{}{\input{./tikz/smc/parallel-associativity.tikz}}
\tikzset{x=1em, y=1.5ex}
 \SMCeq 
\tikzset{x=1em, y=2.1ex}
\InputIfFileExists{smc/parallel-associativity-1.tikz}{}{\input{./tikz/smc/parallel-associativity-1.tikz}}
\tikzset{x=1em, y=1.5ex}
}\qquad  {
\tikzset{x=1em, y=2.1ex}
\InputIfFileExists{smc/interchange-law.tikz}{}{\input{./tikz/smc/interchange-law.tikz}}
\tikzset{x=1em, y=1.5ex}
\SMCeq
\tikzset{x=1em, y=2.1ex}
\InputIfFileExists{smc/interchange-law-1.tikz}{}{\input{./tikz/smc/interchange-law-1.tikz}}
\tikzset{x=1em, y=1.5ex}
 }
 \\
{
\tikzset{x=1em, y=2.1ex}
\InputIfFileExists{smc/unit-right.tikz}{}{\input{./tikz/smc/unit-right.tikz}}
\tikzset{x=1em, y=1.5ex}
 \SMCeq \diagbox{c}{}{} \SMCeq 
\tikzset{x=1em, y=2.1ex}
\InputIfFileExists{smc/unit-left.tikz}{}{\input{./tikz/smc/unit-left.tikz}}
\tikzset{x=1em, y=1.5ex}
}
\qquad\quad
{ 
\tikzset{x=1em, y=2.1ex}
\InputIfFileExists{smc/parallel-unit-above.tikz}{}{\input{./tikz/smc/parallel-unit-above.tikz}}
\tikzset{x=1em, y=1.5ex}
 \SMCeq \diagbox{c}{}{} \SMCeq  
\tikzset{x=1em, y=2.1ex}
\InputIfFileExists{smc/parallel-unit-below.tikz}{}{\input{./tikz/smc/parallel-unit-below.tikz}}
\tikzset{x=1em, y=1.5ex}
}
\\
{
\tikzset{x=1em, y=2.1ex}
\InputIfFileExists{smc/sym-natural.tikz}{}{\input{./tikz/smc/sym-natural.tikz}}
\tikzset{x=1em, y=1.5ex}
 \SMCeq 
\tikzset{x=1em, y=2.1ex}
\InputIfFileExists{smc/sym-natural-1.tikz}{}{\input{./tikz/smc/sym-natural-1.tikz}}
\tikzset{x=1em, y=1.5ex}
}
\qquad
{
\tikzset{x=1em, y=2.1ex}
\InputIfFileExists{smc/sym-iso.tikz}{}{\input{./tikz/smc/sym-iso.tikz}}
\tikzset{x=1em, y=1.5ex}
 \SMCeq 
\tikzset{x=1em, y=2.1ex}
\begin{tikzpicture}
	\begin{pgfonlayer}{nodelayer}
		\node [style=none] (0) at (2, -0.75) {};
		\node [style=none] (1) at (-2, -0.75) {};
		\node [style=none] (2) at (-2, 0.5) {};
		\node [style=none] (3) at (2, 0.5) {};
	\end{pgfonlayer}
	\begin{pgfonlayer}{edgelayer}
		\draw (0.center) to (1.center);
		\draw (3.center) to (2.center);
	\end{pgfonlayer}
\end{tikzpicture}}
\tikzset{x=1em, y=1.5ex}
}
\end{gathered}
\end{equation*}

\section{Semantics}\label{apx:semantics}
\begin{restatable}{lemma}{regbehcategory}\label{lem:comp_associative}
	Let $f \colon m \to n$, $g \colon n \to p$, $h \colon p \to q$. We have that: 
	\begin{enumerate}
		\item $(f;g);h = f;(g;h)$
		\item $\id_m ; f = f$
		\item $f ; \id_n = f$
	\end{enumerate}	
\end{restatable}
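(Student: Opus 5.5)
We unfold sequential composition componentwise and reduce all three claims to properties of substitution on $\Expr/{\sim}$. By definition, $(f;g);h$ has $i$-th component $(f_i[\vec g/\vec v])[\vec h/\vec w]$. The $i$-th component of $f;(g;h)$ is $f_i[(g_1[\vec h/\vec w],\dots,g_n[\vec h/\vec w])/\vec v]$. Here $\vec v=(v_1,\dots,v_n)$ and $\vec w=(v_1,\dots,v_p)$. Since $f\in\RegBeh(m,n)$, every live variable of $f_i$ lies in $V_n$; likewise every live variable of $g_j$ lies in $V_p$. The equality of these two components is then precisely \Cref{lem:subst_lemma}, read in the quotient $\Expr/{\sim}$. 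This is legitimate because $\sim$ is a congruence for substitution (\Cref{lem:congruence}), so choosing expression representatives does not matter.

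The one gap is a mismatch of hypotheses. \Cref{lem:subst_lemma} assumes the \emph{free} variables are contained in $\vec v$ and $\vec w$, whereas membership in $\Omega(n)$ only constrains the \emph{live} variables. I would close this as follows. For any $e$ there is a bisimilar $e'$ with $\fv(e')$ equal to the live variables of $e$: substitute $0$ for every dead free variable. Within the reachable prechart $\langle e\rangle_\partial$ no state outputs a dead variable, so the graph of $e''\mapsto e''[\vec 0/\vec u]$ on reachable derivatives is a bisimulation (use \Cref{rem:semantic-substitution}). After replacing $f_i$ and $g_j$ by such representatives, the lemma applies directly.

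For the unit laws, $\id_m;f$ has $i$-th component $v_i[\vec f/\vec v]=f_i$, which holds by the clause for variables in \Cref{def:subset}, so it is a syntactic identity. For $f;\id_n$ we need $f_i[\vec v/\vec v]\sim f_i$. I would prove this by a bisimulation, taking $R=\{(e[\vec v/\vec v],e)\}$ together with the diagonal. The rules of \Cref{rem:semantic-substitution} give the following. If $e\rhd v_j$, then $e[\vec v/\vec v]\rhd v_j$. Since $v_j$ has no transitions, the only $a$-transitions of $e[\vec v/\vec v]$ are of the form $e'[\vec v/\vec v]$ with $e\tr{a}e'$, and these match $e$'s transitions and stay in $R$. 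Outputs of variables outside $\vec v$ pass through unchanged.

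I expect the main obstacle to be the free-versus-live mismatch in the associativity step. In particular, I must check that the $0$-substitution representative argument interacts correctly with $\alpha$-renaming under $\mu$. The unit laws are routine.
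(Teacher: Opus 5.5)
Your proposal is correct and follows the paper's own argument. Associativity is \Cref{lem:subst_lemma} applied componentwise, and the unit laws are direct substitution computations; there you are more careful than the paper, which simply asserts $f_i[\vec v/\vec v]=f_i$, because you exhibit a bisimulation. Your free-versus-live patch is sound but actually vacuous: by induction on expressions, using the $\mu$-rule and the rules of \Cref{rem:semantic-substitution}, every free variable of a Milner expression is live, so any representative of an element of $\Omega(n)$ already has $\fv\subseteq V_n$ and the substitution lemma applies directly.
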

\begin{proof}
	We respectively prove each of the properties.
	\begin{enumerate}
		\item \begin{align*}
			(f ; g) ; h &= (f_1[\vec{g}/\vec{v}], \dots, f_m[\vec{g}/\vec{v}]);h \\
			&= \left((f_1[\vec{g}/\vec{v}])[\vec{h}, \vec{v}], \dots, (f_m[\vec{g}/\vec{v}])[\vec{h}, \vec{v}]\right)\\
			&= \left(f_1[(g_1[\vec{h}/\vec{v}], \dots, g_n[\vec{h}/\vec{v}])/\vec{v}],\dots, f_m[(g_1[\vec{h}/\vec{v}], \dots, g_n[\vec{h}/\vec{v}])/\vec{v}]\right) \tag{\Cref{lem:subst_lemma}} \\
			&= \left(f_1 [\vec{(g;h)}/\vec{v}], \dots f_m [\vec{(g;h)}/\vec{v}] \right) \\
			&= f ; (g;h)
		\end{align*}
		\item $id_m ; f = (v_1[\vec{f}/\vec{v}], \dots, v_m[\vec{f}/\vec{v}])=(f_1, \dots, f_m) = f$
		\item $f ; id_n = (f_1[\vec{v}/\vec{v}], \dots, f_m[\vec{v}/\vec{v}]]) = (f_1, \dots, f_m) = f$
	\end{enumerate}
\end{proof}
\begin{restatable}{lemma}{binarycoproducts}
	$\RegBeh$ has binary coproducts. In particular, given $k, l \in \N$, the inclusions $\inl_{k,l} \colon k \to k +l$ and $\inr_{k,l} \colon l \to k + l$ are given by $\inl_{k,l} = (v_1, \dots, v_k)$ and $\inr_{k,l} = (v_{k+1}, \dots, v_{k+l})$ respectively, while the mediating map is given by pairing. 
\end{restatable}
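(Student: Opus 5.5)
We have to verify the universal property of the coproduct: for $f\colon k\to m$ and $g\colon l\to m$, the pairing $\langle f,g\rangle\colon k+l\to m$ satisfies $\inl_{k,l};\langle f,g\rangle=f$ and $\inr_{k,l};\langle f,g\rangle=g$, and it is the only morphism with these two properties. First check that the injections are well typed. Each $v_i$ with $i\le k+l$ has its only live variable in $V_{k+l}$, so $\inl_{k,l}\in\RegBeh(k,k+l)$ and $\inr_{k,l}\in\RegBeh(l,k+l)$.

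\textbf{Commuting triangles.} Composition in $\RegBeh$ is substitution, so I unfold it directly. We get
\[
\inl_{k,l};\langle f,g\rangle=(v_1[\vec h/\vec v],\dots,v_k[\vec h/\vec v]),
\]
where $\vec h=(f_1,\dots,f_k,g_1,\dots,g_l)$ and $\vec v=(v_1,\dots,v_{k+l})$. By the variable clause of syntactic substitution (\Cref{def:subset}), $v_i[\vec h/\vec v]=h_i$, and $h_i=f_i$ for $i\le k$. This is the same computation as item 2 of \Cref{lem:comp_associative}. For $\inr$ the same clause gives $v_{k+j}[\vec h/\vec v]=h_{k+j}=g_j$.

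The equalities hold even before quotienting by $\sim$, so they hold in $\Omega$. No bisimulation argument is needed here, because variable substitution is purely syntactic.

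\textbf{Uniqueness.} Let $u=(u_1,\dots,u_{k+l})\colon k+l\to m$ satisfy $\inl_{k,l};u=f$ and $\inr_{k,l};u=g$. The same unfolding gives $\inl_{k,l};u=(u_1,\dots,u_k)$ and $\inr_{k,l};u=(u_{k+1},\dots,u_{k+l})$. Hence $u_i=f_i$ for $i\le k$ and $u_{k+j}=g_j$ for $j\le l$, so $u=\langle f,g\rangle$ componentwise as tuples in $\Omega(m)$.

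\textbf{Main obstacle.} The only delicate point is bookkeeping: whether the dependence of composition on the representative expression is respected. This is handled by the congruence result \Cref{lem:congruence}, which makes substitution well defined on $\Omega$. The initial object $0$, whose unique morphisms are the empty tuples $0_n$, then completes finite coproducts.
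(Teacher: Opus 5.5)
Your proof is correct and follows the paper's argument essentially verbatim. You verify $\inl_{k,l};\langle f,g\rangle = f$ and $\inr_{k,l};\langle f,g\rangle = g$ by unfolding composition as substitution into variables, and you obtain uniqueness by reading off the components $(u_1,\dots,u_k)$ and $(u_{k+1},\dots,u_{k+l})$ of any competing mediator. Your extra remarks on well-typedness of the injections and on well-definedness of substitution on $\Omega$ are harmless additions that the paper leaves implicit.
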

\begin{proof}
	Let $f \colon k \to m$ and $g \colon l \to m$. We can safely assume that $f = (f_1, \dots, f_k)$ and $g = (g_1, \dots, g_l)$. Recall that $\inl_{k,l} = (v_1, \dots, v_k)$ and $\inr_{k,l} = (v_{k+1}, \dots, v_{k+l})$. For the existence proof, define $\langle f , g \rangle \colon k + l \to m$ as a $k + l$-tuple $(f_1, \dots, f_k, g_1, \dots, g_l)$. We show that that the coproduct diagram commutes. We start from the left triangular subdiagram.
	\[\begin{tikzcd}
	k && l \\
	& {k+l} \\
	\\
	& m
	\arrow["\inl_{k + l}", from=1-1, to=2-2]
	\arrow["f"', from=1-1, to=4-2]
	\arrow["\inr_{k + l}"', from=1-3, to=2-2]
	\arrow["g", from=1-3, to=4-2]
	\arrow["{\langle f , g \rangle}"{description}, dashed, from=2-2, to=4-2]
\end{tikzcd}\]
	\begin{align*}
		\inl_{k,l} ; \langle f , g\rangle &= (v_1, \dots, v_k) ; (f_1, \dots f_k, g_1, \dots, g_l) \\
		&= (f_1, \dots, f_k) \\
		&= f
	\end{align*}
	Similarly, for the right subdiagram, we have that:
	\begin{align*}
		\inr_{k,l} ; \langle f , g\rangle &= (v_{k +1}, \dots, v_{k + l}) ; (f_1, \dots f_k, g_1, \dots, g_l) \\
		&= (g_1, \dots, g_l) \\
		&= g
	\end{align*}
	For the uniqueness proof, assume that there exists a map $h \colon k + l \to m$, which makes the coproduct diagram commute. We can safely assume that $h = (h_1, \dots, h_{k + l})$. 
	Since $\inl_{k,l} ; h = f$ and $\inr_{k,l} ; h = g$, we have that:
	\begin{align*}
		(f_1, \dots, f_k) &= f\\ 
		&= \inl_{k,l} ; h\\
		&= (v_1, \dots, v_k) ; (h_1, \dots, h_{k + l}) \\
		&= (h_1, \dots, h_k)
	\end{align*}
	Similarly, we have that:
	\begin{align*}
		(g_1, \dots, g_l) &= f\\ 
		&= \inr_{k,l} ; h\\
		&= (v_{k+1}, \dots, v_{}) ; (h_1, \dots, h_{k + l}) \\
		&= (h_{k+1}, \dots, h_{k + l})
	\end{align*}
	Hence, $h = (f_1, \dots, f_k, g_1, \dots, g_l) = \langle f , g \rangle$ as desired. 
\end{proof}
\begin{lemma}
	$0$ is the initial object of $\RegBeh$.
\end{lemma}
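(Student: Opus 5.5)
To show that $0$ is initial in $\RegBeh$, I would prove directly that for every $n \in \N$ the hom-set $\RegBeh(0,n)$ has exactly one element.

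\textbf{Existence.} By definition, $\RegBeh(0,n)$ is the set of $0$-tuples of elements of $\Omega(n)$. This set contains exactly one element, the empty tuple, which is $0_n$ as introduced among the operations on regular behaviours.

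\textbf{Uniqueness.} Any morphism $f \colon 0 \to n$ is a $0$-tuple, and there is only one $0$-tuple. Hence $f = 0_n$.

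\textbf{Consistency with the categorical structure.} Since $0_n$ is the only morphism $0 \to n$, every composite $0_m ; g$ with $g \colon m \to n$ is automatically equal to $0_n$. One can also verify this directly: by the definition of sequential composition, $0_m ; g$ substitutes $\vec{g}$ into each component of the empty tuple, which again gives the empty tuple. Either way, the required triangles commute trivially.

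\textbf{Main obstacle.} There is essentially none. The only point to check is that the definition of $\RegBeh(m,n)$ as $m$-tuples is used consistently in the degenerate case $m = 0$, with composition, identities ($\id_0 = ()$) and pairing all acting vacuously there. Together with the preceding lemma on binary coproducts, this result gives that $\RegBeh$ has all finite coproducts.
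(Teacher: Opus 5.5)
Your proof is correct and takes the same approach as the paper, which simply notes that for each $n$ the unique morphism $0 \to n$ is the empty tuple $0_n$. This is exactly your existence-and-uniqueness check on $0$-tuples; your extra remarks about composition and finite coproducts are harmless but not needed for the statement.
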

\begin{proof}
	For any $n \in \N$, the unique universal arrow is given by $0_n$, which immediately completes the proof.
\end{proof}
Given $f \colon k \to l$ and $g \colon m \to n$, we can define their \emph{separated sum} $f \oplus g \colon k + m \to l + n $ given by the unique mediating arrow in the following diagram
\[\begin{tikzcd}
	k && {k + m} && m \\
	\\
	l && {l + n} && n
	\arrow["\inl_{k,m}", from=1-1, to=1-3]
	\arrow["f"', from=1-1, to=3-1]
	\arrow["{f \oplus g}"{description}, dashed, from=1-3, to=3-3]
	\arrow["\inr_{k,m}"', from=1-5, to=1-3]
	\arrow["g", from=1-5, to=3-5]
	\arrow["\inl_{l,n}"', from=3-1, to=3-3]
	\arrow["\inr_{l,n}", from=3-5, to=3-3]
\end{tikzcd}\]
We can define $f \oplus g$ concretely by setting $f \oplus g = \langle f, g[(v_{l + 1},\dots, v_{l + n})/(v_1, \dots, v_n)]\rangle$.
\begin{proposition}
	$(\RegBeh, \oplus , 0 )$ is a strict monoidal category.
\end{proposition}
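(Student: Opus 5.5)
The plan is to derive the monoidal structure from the finite coproducts already established: binary coproducts via pairing and the inclusions $\inl_{k,l},\inr_{k,l}$, and the initial object $0$. Because $f \oplus g$ is defined as the unique mediating arrow $\langle f;\inl_{l,n},\, g;\inr_{l,n}\rangle$, every required equation can be checked by composing both sides with the coprojections and appealing to uniqueness of mediating maps. I will use the concrete descriptions throughout: $\inl$ and $\inr$ are tuples of variables, composition is substitution, and \Cref{lem:subst_lemma} handles iterated substitutions.

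The first step is bifunctoriality of $\oplus$.
\begin{itemize}
\item For identities, $\id_k \oplus \id_m = \langle \inl_{k,m}, \inr_{k,m}\rangle = (v_1,\dots,v_{k+m}) = \id_{k+m}$.
\item For composites, I show $(f;f')\oplus(g;g') = (f\oplus g);(f'\oplus g')$. Precomposing both sides with $\inl$ gives $f;f';\inl$ in each case, using the defining triangles and associativity (\Cref{lem:comp_associative}). Precomposing with $\inr$ is symmetric. Uniqueness of the mediating map then gives equality.
\end{itemize}

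The second step is strictness.
\begin{itemize}
\item On objects, $k+(m+p) = (k+m)+p$ and $k+0 = k = 0+k$ hold literally in $\N$.
\item On morphisms, I compare the concrete tuples. Both $f\oplus(g\oplus h)$ and $(f\oplus g)\oplus h$ equal $(f_1,\dots,f_k,\ g_i[\text{shift by } l],\ h_j[\text{shift by } l+n])$. The key fact is that composing two renaming substitutions, shifting by $n$ and then by $l$, is the same as shifting by $l+n$. This follows from \Cref{lem:subst_lemma} applied with variable vectors.
\item For units, $f\oplus 0_0$, where $0_0$ is the empty tuple in $\RegBeh(0,0)$, is just $f$. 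Likewise $0_0 \oplus f$ is $f$ shifted by $0$, which is again $f$.
\end{itemize}
With these, the associator and unitors can be taken to be identities, and the coherence conditions hold trivially.

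The main obstacle is the bookkeeping around substitution. The variable shifts must behave correctly, and there is a subtle point: $\oplus$ and the inclusions are stated on elements of $\Omega(m)$, so they act on equivalence classes. I would discharge this by invoking the congruence property (\Cref{lem:congruence}) and noting that all substitutions involved are variable-for-variable renamings. Consequently, the argument reduces to the syntactic substitution lemma together with well-definedness on the quotient ${\Expr}/{\sim}$. If this direct approach becomes messy, my fallback is to prove associativity of $\oplus$ via uniqueness of mediating arrows out of the threefold coproduct $k+m+p$, which avoids explicit shift computations entirely.
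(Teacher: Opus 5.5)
Your proposal is correct and, on the substantive part, follows the paper's own proof: the paper also verifies the unit laws and associativity by comparing the concrete tuples. There the only real content is that consecutive variable shifts compose to a single shift, which the paper leaves implicit and you justify via \Cref{lem:subst_lemma}. Your check that $\oplus$ is a bifunctor, using the coprojections and uniqueness of mediating arrows, is a step the paper omits, so your version is if anything more complete.
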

\begin{proof}
	We verify that associators and unitors are strict equalities. Let $f \in \RegBeh(k,l), g \in \RegBeh(m,n), h \in \RegBeh(o,p)$. For the left unitor, we have the following:
	\begin{align*}
		0 \oplus f &= () \oplus (f_1, \dots, f_k) \tag{def. of $0$ and $\oplus$}\\
		&= (f_1, \dots, f_k) = f
	\end{align*}
	Similarly, for the right unitor, we have that:
	\begin{align*}
		f \oplus 0 &= () \oplus (f_1, \dots, f_k) \tag{def. of $0$ and $\oplus$}\\
		&= (f_1, \dots, f_k) = f
	\end{align*}
	Finally, for the associator we have that:
	\begin{align*}
		 &(f \oplus g) \oplus h\\
		=& (f_1, \dots, f_k, g_1[(v_{l+1}, \dots, v_{l + n})/\vec{v}], \dots, g_m[(v_{l+1}, \dots, v_{l + n})/\vec{v}]) \oplus h \\
		=& (f_1, \dots, f_k, g_1[(v_{l+1}, \dots, v_{l + n})/\vec{v}], \dots, g_m[(v_{l+1}, \dots, v_{l + n})/\vec{v}], \\
		&\qquad\qquad h_1 [(v_{l + n +1}, \dots, v_{l + n + p})/\vec{v}], \dots, h_o [(v_{l + n + 1}, \dots, v_{l + n + p})/\vec{v}])\\
		=& f \oplus  (g_1[(v_{1}, \dots, v_{n})/\vec{v}], \dots, g_m[(v_{1}, \dots, v_{n})/\vec{v}], \\
		&\qquad\qquad h_1 [(v_{n +1}, \dots, v_{n + p})/\vec{v}], \dots, h_o [(v_{n +1}, \dots, v_{n + p})/\vec{v}])\\
		=& f \oplus  (g_1, \dots, g_m, \\
		&\qquad\qquad h_1 [(v_{n +1}, \dots, v_{n + p})/\vec{v}], \dots, h_o [(v_{n +1}, \dots, v_{n + p})/\vec{v}])\\
		=& f \oplus  (g \oplus h)
	\end{align*}
	The intermediate steps in the calculation above follow from the definition of $\oplus$.
\end{proof}
\subsection{Conway Theories}
Let $\cat{C}$ be a category with finite coproducts, whose objects are natural numbers. We call $\cat{C}$ a \emph{preiteration theory} if for every morphism $f \colon n \to p + n$, there exists a morphism $f^\dagger_{n,p} \colon n \to p$ called \emph{dagger}. We will often omit the subscripts and write $f^\dagger$, when $n$ and $m$ are clear from the context. Note that the definition does not impose any conditions on the dagger. However, $f \colon 0 \to p$, when always we have that $f_{0,p}^\dagger = 0_p$.
\begin{definition}[{\cite[Definition~3.1]{Esik:1999:Group}}]\label{def:conway_theory}
	A Conway Theory is a preiteration theory, in which the following conditions are satisfied
	\begin{itemize}
		\item \textbf{(Scalar parameter identity)} $$(f ; (g \oplus \id_1))^\dagger = f^\dagger ; g$$ for all $f \colon 1 \to p + 1$, $g \colon p \to q$
		\item \textbf{(Scalar composition identity)} 
		$$
		(f ; \langle \id_p\oplus0_1, g \rangle)^\dagger = f ; \langle \id_p,  (g ;  \langle \id_p \oplus 0_1 , f\rangle)^\dagger\rangle
		$$ for all $f,g \colon 1 \to p + 1$
		\item \textbf{(Scalar double dagger identity)} 
		$$f^{\dagger\dagger} = (f ; (\id_p \oplus \nabla_1))^\dagger$$
		for all $f \colon 1 \to p + 2$
		\item \textbf{(Scalar pairing identity)}
		$$\langle f ,g \rangle^\dagger = \langle f^\dagger ; \langle \id_p, h^\dagger \rangle, h^\dagger \rangle$$ for all $f \colon n \to p + 1 + n $, $g \colon 1 \to p + 1 + n$ where 
		$$
		h = g ; \langle \id_{p + 1}, f^\dagger\rangle \colon 1 \to p + 1
		$$
	\end{itemize}
\end{definition}
\begin{remark}[{\cite[Remark~3.2]{Esik:1999:Group}}]\label{rem:defining_dagger}
	Note that in order to define a Conway theory it suffices to define $f^\dagger \colon 1 \to p$ for all $f \colon 1 \to p + 1$ that satisfies first three axioms of \Cref{def:conway_theory} and use \textbf{scalar pairing identity} to inductively define $(-)^\dagger$.
\end{remark}
Because of the above, we can define a dagger on $\RegBeh$ through the following:
\begin{definition}\label{def:Conway-operator}
Given $f \colon 1 \to p + 1$ in $\RegBeh$, we define 
$$
	f^\dagger = \mu v_{p+1}.f
$$
\end{definition}

\begin{lemma}[{\cite[Theorem~2]{Sewell:1995:Algebra}}]
\label{lem:recursion-substitution}
Terms of Milner's Algebra of Regular Behaviours (modulo bisimilarity) satisfy the following rules:
	\begin{enumerate}
		\item $\mu v_z. \left(e [(v_z, v_z) / (v_j, v_k)]\right) = \mu v_j. \mu v_k. e$ for any $v_z$ not free in $e$
		\item $ \mu v_j.\left(e[f/v_j]\right) = e[\mu v_x. \left(f[e/v_j]\right)/v_j]$
	\end{enumerate}
\end{lemma}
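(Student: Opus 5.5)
The plan is to prove both identities by exhibiting explicit bisimulations on the syntactic prechart $(\Expr,\partial)$. Since the lemma is stated modulo bisimilarity, it suffices to relate the two sides by a bisimulation. Transitions are computed with the alternative $\mu$-rule, $\mu v.e \tr{a} e'$ whenever $e[\mu v.e/v]\tr{a} e'$, together with the operational description of substitution in \Cref{rem:semantic-substitution}. The output rule for $\mu$ matters in one respect: $\mu w.e \rhd u$ iff $e\rhd u$ and $u\neq w$. Hence an unguarded occurrence of the bound variable contributes neither outputs nor transitions. Residuals of the shape $(h[\sigma])[\tau]$ will be rearranged using \Cref{lem:subst_lemma}, and capture-avoiding renaming (\Cref{def:subset}) lets us assume all bound variables are fresh. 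Throughout, the relations are parametrised by an arbitrary expression $h$ standing for a derivative, so closure under transitions reduces to a case analysis on how $h$ moves.

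\textbf{Identity (1).} Let $L=\mu v_z.(e[(v_z,v_z)/(v_j,v_k)])$ and $R=\mu v_j.\mu v_k.e$. Take $\Delta$ together with all pairs
\[
\bigl(h[(v_z,v_z)/(v_j,v_k)][L/v_z],\; h[\mu v_k.e/v_k][R/v_j]\bigr),
\]
where $h$ ranges over expressions whose free variables avoid $v_z$. The original pair is the instance $h=e$ (unfolding once), and it is already in this form.

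Outputs agree on both sides: each side outputs exactly those $u\notin\{v_j,v_k,v_z\}$ with $h\rhd u$. For transitions there are three cases.
\begin{enumerate}
\item If $h\tr{a}h'$, both sides move to the pair indexed by $h'$.
\item If $h\rhd v_j$, the left side borrows the transitions of $L$. These are $e'[(v_z,v_z)/(v_j,v_k)][L/v_z]$ for $e\tr{a}e'$. The right side borrows the transitions of $R$, which are $e'[\mu v_k.e/v_k][R/v_j]$. These form the pair indexed by $e'$.
\item If $h\rhd v_k$, the left side again borrows from $L$. The right side borrows from $\mu v_k.e$ and then applies $[R/v_j]$, giving the same $e'$-indexed pair.
\end{enumerate}
Case 2 needs care when $e\rhd v_j$ or $e\rhd v_k$ unguardedly. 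By the $\mu$ output and transition rules, such occurrences contribute nothing on either side, so they match. The symmetric direction is identical.

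\textbf{Identity (2).} Let $L=\mu v_j.(e[f/v_j])$ and $M=\mu v_j.(f[e/v_j])$; the goal is $L\sim e[M/v_j]$. Take $\Delta$ together with the two families
\[
\bigl(h[f/v_j][L/v_j],\; h[M/v_j]\bigr)\quad\text{and}\quad \bigl(k[L/v_j],\; k[e/v_j][M/v_j]\bigr).
\]
The target pair is the second family with $k=e$, after unfolding $L$. Three cases cover the transitions.
\begin{enumerate}
\item A move $h\tr{a}h'$ or $k\tr{a}k'$ stays in the same family.
\item If $h\rhd v_j$, the left side continues through $f$, and the right side through $M$, whose derivatives are $k'[e/v_j][M/v_j]$ for $f\tr{a}k'$. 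Via \Cref{lem:subst_lemma} this lands in the second family with index $k'$.
\item Symmetrically, $k\rhd v_j$ sends both sides through $e$ into the first family.
\end{enumerate}

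\textbf{Main obstacle.} The hard part is bookkeeping for chains of unguarded variables, for instance $e\rhd v_j$ together with $f\rhd v_j$. Here one must check that outputs and transitions obtained by passing through several substitutions coincide exactly. My plan is to first prove a small auxiliary fact by induction on the number of unfoldings: $h[g/v]\rhd u$ iff ($h\rhd u$ with $u\ne v$) or ($h\rhd v$ and $g\rhd u$), and similarly for transitions. After that, I would argue that an unguarded cycle $v_j\to v_j$ is cut by the $\mu$ rule identically on both sides.
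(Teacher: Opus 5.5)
Your proposal is correct in substance, but it takes a different route from the paper. The paper gives no argument of its own: it imports both identities from Sewell's Theorem~2 and uses them only to discharge the scalar composition and double dagger identities (\Cref{conway2}, \Cref{conway3}).

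You instead prove them from scratch, by explicit bisimulations on $(\Expr,\partial)$ indexed by an arbitrary derivative $h$. The argument uses only the $\mu$-rules and \Cref{rem:semantic-substitution}, in the same style as the paper's proof of \Cref{lem:subst_lemma}. I checked that both of your relations are closed under outputs and transitions. You also rightly read $\mu v_x$ as $\mu v_j$, which is how \Cref{conway2} uses item 2; with a genuinely fresh $v_x$ the identity would fail. The citation buys brevity and places the identities in the iteration-theory literature. Your route makes the appendix self-contained and shows explicitly why unguarded variables are harmless.

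The write-up needs a few local repairs.

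\textbf{Outputs in (1).} Your output claim in (1) is false. Take $e=a.v_j+v_m$ and its derivative $h=v_j$: the $h$-indexed pair is $(L,R)$, and both sides output $v_m$, although $h$ outputs nothing outside $\{v_j,v_k,v_z\}$. The correct statement is that both sides output exactly those $u\notin\{v_j,v_k\}$ with either $h\rhd u$, or $e\rhd u$ together with ($h\rhd v_j$ or $h\rhd v_k$). These sets still coincide. Relatedly, the right-hand side of the initial pair in (1) needs two unfoldings, not one. It is simpler to add $(L,R)$ itself, whose outputs and derivatives coincide with those of the $h=e$ pair.

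\textbf{Target pair in (2).} The target pair $(e[f/v_j][L/v_j],\,e[M/v_j])$ is the first family at $h=e$, not the second family at $k=e$.

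\textbf{The ``main obstacle''.} No new induction is needed, since \Cref{rem:semantic-substitution} already gives the one-step behaviour of $h[g/v]$. However, unguarded cycles are not cut identically on both sides. In the first family of (2), the derivatives on each side are indexed as follows: by $h'$ (first family); by $f'$ (second family) if $h\rhd v_j$; and by $e'$ (first family) if moreover $f\rhd v_j$. A further $e\rhd v_j$ lets the left side re-enter $f$ through $L$, while the $\mu v_j$ of $M$ stops the right side. The surplus is harmless only because those $f'$-indexed pairs are already listed. The second family is symmetric. In (1), the right side does borrow the transitions of $R$ when $h\rhd v_k$ and $e\rhd v_j$, though these are again only $e'$-indexed pairs. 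So you should say such occurrences contribute ``nothing new'' rather than ``nothing''.

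\textbf{Appeal to \Cref{lem:subst_lemma}.} Drop it. Its hypothesis that all free variables of $e$ lie in $\vec{v}$ fails for general $h$. It is also unnecessary, since every derivative already lands syntactically in one of your families.
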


\begin{corollary}\label{conway1}
	The dagger on $\mathsf{RegBeg}$ satisfies \textbf{scalar composition} and \textbf{scalar double dagger} identities. 
\end{corollary}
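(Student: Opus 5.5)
The plan is to unfold both sides of each identity into Milner's syntax, using the concrete definitions of composition (simultaneous substitution), pairing, $\oplus$, $\nabla_1$ and $(-)^\dagger = \mu v_{p+1}.(-)$ from \Cref{def:Conway-operator}. Each identity should then reduce to one of the two clauses of \Cref{lem:recursion-substitution}, up to renaming of bound variables. Since $\sim$ is a congruence (\Cref{lem:congruence}), these manipulations may be carried out on representatives and transported to $\Expr/{\sim}$.

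\textbf{Scalar double dagger.} Let $f\colon 1\to p+2$, so $f$ is a single expression with free variables among $v_1,\dots,v_{p+2}$. By definition, $f^\dagger=\mu v_{p+2}.f\colon 1\to p+1$, and hence $f^{\dagger\dagger}=\mu v_{p+1}.\mu v_{p+2}.f$.

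On the other side, $\id_p\oplus\nabla_1 = (v_1,\dots,v_p,v_{p+1},v_{p+1})$. Therefore
\[
f;(\id_p\oplus\nabla_1)=f[(v_{p+1},v_{p+1})/(v_{p+1},v_{p+2})],
\]
and its dagger is $\mu v_{p+1}.\,f[(v_{p+1},v_{p+1})/(v_{p+1},v_{p+2})]$.

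Take $v_z$ fresh, i.e.\ not free in $f$. Clause~1 of \Cref{lem:recursion-substitution}, with $j=p+1$ and $k=p+2$, gives
\[
\mu v_{p+1}.\mu v_{p+2}.f=\mu v_z.\,f[(v_z,v_z)/(v_{p+1},v_{p+2})].
\]
An $\alpha$-renaming of the bound $v_z$ back to $v_{p+1}$ then yields the required expression. This renaming is legitimate because $v_{p+1}$ is not free in the body after the substitution, and $\alpha$-equivalent expressions are bisimilar by the operational rules.

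\textbf{Scalar composition.} Let $f,g\colon 1\to p+1$. Since $0_1$ is the empty tuple, $\id_p\oplus 0_1=(v_1,\dots,v_p)$, and so $\langle \id_p\oplus 0_1,g\rangle=(v_1,\dots,v_p,g)$. Hence the left-hand side is
\[
\mu v_{p+1}.\,f[g/v_{p+1}].
\]
For the right-hand side, $g;\langle\id_p\oplus 0_1,f\rangle=g[f/v_{p+1}]$, whose dagger is $\mu v_{p+1}.\,g[f/v_{p+1}]$. Precomposing with $f$ via $\langle \id_p,-\rangle$ gives
\[
f[\mu v_{p+1}.\,g[f/v_{p+1}]/v_{p+1}].
\]
This is exactly clause~2 of \Cref{lem:recursion-substitution}, instantiated with $e:=f$, with its $f:=g$, and with $j=p+1$ (reading the bound variable $v_x$ there as $v_j$ up to $\alpha$-conversion).

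\textbf{Expected obstacle.} The only delicate point is bookkeeping: checking that the tuple-level compositions really reduce to single-variable substitutions, since the identity components act trivially on $v_1,\dots,v_p$ by \Cref{def:subset}. The other concern is the side conditions on freshness and $\alpha$-renaming needed to match the exact shape of \Cref{lem:recursion-substitution}. Both are routine once the definitions are unfolded.
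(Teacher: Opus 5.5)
Your proposal is correct and matches the paper's argument: like the paper's proof and its detailed \Cref{conway2} and \Cref{conway3}, you unfold composition, pairing, $\oplus$, $\nabla_1$ and $(-)^\dagger=\mu v_{p+1}.(-)$, then apply clause~2 of \Cref{lem:recursion-substitution} for the scalar composition identity and clause~1 for the double dagger identity. Your choice of a fresh $v_z$ followed by $\alpha$-renaming, and your reading of the stray $v_x$ in clause~2 as $v_j$, handle side conditions that the paper glosses over when it instantiates clause~1 directly with $v_z=v_{p+1}$.
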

\begin{proof}
	Follows from \Cref{lem:recursion-substitution}.
\end{proof}
\begin{lemma}
	Let $f \colon 1 \to 1 + p$, $g \colon p \to g$ be morphisms in $\RegBeh$. Then,
	$$
	(f ; (g\oplus \id_1))^\dagger = f^\dagger ; g
	$$
\end{lemma}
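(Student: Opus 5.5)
Unfolding the definitions reduces the statement to a concrete identity about substitution and recursion in Milner's algebra. Take $f \colon 1 \to p+1$, so $f$ is a single regular behaviour with live variables among $v_1,\dots,v_{p+1}$, and take $g = (g_1,\dots,g_p) \colon p \to q$. By definition of $\oplus$, $g \oplus \id_1 = (g_1,\dots,g_p, v_{q+1})$, so
\[
f;(g\oplus\id_1) = f[(g_1,\dots,g_p,v_{q+1})/(v_1,\dots,v_{p+1})].
\]
By \Cref{def:Conway-operator}, the left-hand side is therefore $\mu v_{q+1}.\, f[\vec g, v_{q+1}/\vec v, v_{p+1}]$. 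The right-hand side is $(\mu v_{p+1}.f)[\vec g/\vec v]$. So the claim becomes: substitution of terms $\vec g$ not mentioning the bound variable commutes with $\mu$, modulo renaming the bound variable $v_{p+1}$ to $v_{q+1}$. First I will $\alpha$-rename, using $\mu v_{p+1}.f = \mu v_{q+1}.f[v_{q+1}/v_{p+1}]$ (Milner's behaviours are invariant under bound renaming). Then, by \Cref{def:subset}, since $v_{q+1}$ is neither among $v_1,\dots,v_p$ nor free in any $g_i$ (each $g_i$ has live variables in $V_q$), the substitution passes under the binder syntactically. This leaves only $(f[v_{q+1}/v_{p+1}])[\vec g/\vec v]$ versus the simultaneous substitution $f[\vec g, v_{q+1}/\vec v, v_{p+1}]$.

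For this last step I would argue up to bisimilarity and reuse the relation technique of \Cref{lem:subst_lemma}. I would either apply that lemma directly, taking the inner substitution as $(v_1,\dots,v_p,v_{q+1})$ and the outer as $(\vec g, v_{q+1})$, or exhibit the diagonal relation extended by the pairs of the two substitution shapes and check the transfer conditions with the operational rules of \Cref{rem:semantic-substitution}. Since \Cref{lem:congruence} makes everything well defined on $\Omega$, syntactic representatives suffice.

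The main obstacle is the variable-hygiene issue: live variables versus syntactically free variables. The term $f$ may syntactically mention $v_{q+1}$, or $g_i$ may mention it, even though it is dead. I would handle this by picking representatives whose free variables equal their live variables, which is possible since dead outputs can be removed up to bisimilarity. Alternatively, I would choose a fresh bound variable $z$ throughout and only rename to $v_{q+1}$ at the end, so that the side conditions of \Cref{def:subset} and \Cref{lem:subst_lemma} hold trivially.
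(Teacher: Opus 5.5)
Your argument is the paper's own chain of equalities: unfold $\oplus$ and $\dagger$, split the simultaneous substitution into the renaming $[v_{q+1}/v_{p+1}]$ followed by $[\vec g/\vec v]$, push $[\vec g/\vec v]$ under the binder via \Cref{def:subset}, and $\alpha$-rename back to $\mu v_{p+1}.f$. The only difference is that you justify the splitting step with \Cref{lem:subst_lemma}, where the paper cites Milner's Lemma~5.6(2).

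One correction: your claim that $v_{q+1}$ is not among $v_1,\dots,v_p$ is false whenever $q<p$. In that case $f[v_{q+1}/v_{p+1}][\vec g/\vec v]$ sends $v_{p+1}$ to $g_{q+1}$ instead of $v_{q+1}$, and the renaming $\mu v_{p+1}.f=\mu v_{q+1}.f[v_{q+1}/v_{p+1}]$ can capture a free $v_{q+1}$ of $f$. So the fresh-variable variant you list as an alternative is actually needed there, and it also repairs the same oversight in the paper's displayed chain.
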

\begin{proof}
\begin{align*}
(f ; (g \oplus \id_1))^\dagger & = (f[(g_1, \dots, g_p, v_{q+1})/(v_1, \dots, v_{p}, v_{p+1})])^\dagger \tag{Def. of $\oplus$}\\
& = \mu v_{q+1}.\big(f[(g_1, \dots, g_p, v_{q+1})/(v_1, \dots, v_{p}, v_{p+1})]\big) \tag{Def. of $\dagger$}\\
& = \mu v_{q+1}.\big(f[v_{q+1}/v_{p+1}][\vec{g}/\vec{v}]\big) \tag{\cite[Lemma~5.6 2.]{Milner:1984:Complete}}\\
& = (\mu v_{q+1}.f[v_{q+1}/v_{p+1}])[\vec{g}/\vec{v}] \tag{\Cref{def:subset}}\\
& = (\mu v_{p+1}.f)[\vec{g}/\vec{v}] \tag{\cite[Proposition~4.6 5.]{Milner:1984:Complete}}\\
& = f^\dagger[\vec{g}/\vec{v}] \tag{Def. of $\dagger$}\\
& = f^\dagger ; g \tag*{\hfill\qedhere}
\end{align*}
\end{proof}
\begin{lemma}\label{conway2}
Let $f,g \colon 1 \to 1 + p$ be morphisms of $\RegBeh$. Then,
		$$
		(f ; \langle  \id_p \oplus 0_1 , g \rangle)^\dagger = f ; \langle \id_p, (g ;  \langle  \id_p \oplus 0_1 , f\rangle)^\dagger\rangle
		$$ 
\end{lemma}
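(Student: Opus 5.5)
The plan is to unfold both sides of the identity into Milner's syntax, using the concrete definitions of sequential composition, pairing, $\oplus$ and dagger in $\RegBeh$. The resulting equation should then be an instance of the second rule of \Cref{lem:recursion-substitution}, so no bisimulation needs to be built by hand.

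\textbf{Left-hand side.} I read $f,g$ as tuples of length one in $\Omega(p+1)$; the variable $v_{p+1}$ is the one bound by the dagger. The map $\id_p \oplus 0_1 \colon p \to p+1$ is the tuple $(v_1,\dots,v_p)$, i.e.\ the coproduct injection $\inl_{p,1}$. Hence $\langle \id_p\oplus 0_1, g\rangle \colon p+1 \to p+1$ is $(v_1,\dots,v_p,g)$. Sequential composition is substitution, so
\[
f ; \langle \id_p\oplus 0_1, g\rangle = f[(v_1,\dots,v_p,g)/(v_1,\dots,v_{p+1})] = f[g/v_{p+1}],
\]
where the last step uses that substituting $v_i$ for itself is the identity, as in \Cref{lem:comp_associative}. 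By \Cref{def:Conway-operator}, the left-hand side is therefore $\mu v_{p+1}.\,(f[g/v_{p+1}])$.

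\textbf{Right-hand side.} The same computation with $f$ and $g$ swapped gives $(g;\langle \id_p\oplus 0_1, f\rangle)^\dagger = \mu v_{p+1}.\,(g[f/v_{p+1}])$, an element of $\Omega(p)$. Composing $f$ with $\langle \id_p, \mu v_{p+1}.(g[f/v_{p+1}])\rangle = (v_1,\dots,v_p,\mu v_{p+1}.(g[f/v_{p+1}]))$ then yields
\[
f\big[\mu v_{p+1}.\,(g[f/v_{p+1}])\,/\,v_{p+1}\big].
\]

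\textbf{Matching the two sides.} The identity reduces to
\[
\mu v_{p+1}.\,(f[g/v_{p+1}]) = f\big[\mu v_{p+1}.\,(g[f/v_{p+1}])/v_{p+1}\big],
\]
which is item~2 of \Cref{lem:recursion-substitution} with $e:=f$, $f:=g$ and $v_j := v_{p+1}$. This holds modulo bisimilarity, and that suffices because morphisms of $\RegBeh$ are bisimilarity classes and $\sim$ is a congruence (\Cref{lem:congruence}).

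\textbf{Main obstacle.} The only delicate point is bookkeeping of variable names and binders. The cited rule is stated with a bound variable $v_x$, which must be identified with $v_{p+1}$ by $\alpha$-renaming, using the capture-avoiding clause of \Cref{def:subset}. I also need to check that no variable among $v_1,\dots,v_p$ is captured, and that both sides have live variables in $V_p$, so that each is a morphism $1\to p$. I would handle this by citing Milner's $\alpha$-conversion result (the same \cite[Proposition~4.6]{Milner:1984:Complete} used in the previous lemma).
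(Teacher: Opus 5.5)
Your proposal is correct and follows essentially the same route as the paper: you unfold both sides via the substitution semantics of composition and pairing to $\mu v_{p+1}.(f[g/v_{p+1}])$ and $f[\mu v_{p+1}.(g[f/v_{p+1}])/v_{p+1}]$, and conclude by item~2 of \Cref{lem:recursion-substitution}. Your remarks on $\alpha$-renaming the bound variable $v_x$ to $v_{p+1}$ and on working modulo bisimilarity are details the paper leaves implicit.
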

\begin{proof}
\begin{align*}
(f ; \lc  \id_p \oplus 0_1 , g \rc)^\dagger & = \big(f[(v_1,\dots,v_p, g)/\vec{v}]\big)^\dagger	
\\
& = \mu v_{p+1}.\big(f[(v_1,\dots,v_p, g)/\vec{v}]\big)
\\
& = \mu v_{p+1}.\big(f[g/v_{p+1}]\big)
\\
& = f[\mu v_{p+1}. \left(g[f/v_{p+1}]\right)/v_{p+1}] \tag{\Cref{lem:recursion-substitution}~2.}
\\
& = f[\mu v_{p+1}. \left(g[(v_1,\dots,v_n,f)/(v_1,\dots,v_n,v_{p+1})]\right)/v_{p+1}]\\
& = f[\mu v_{p+1}. (g ;  \lc  \id_p \oplus 0_1 , f\rc)/v_{p+1}]
\\
& = f[(g ;  \lc  \id_p \oplus 0_1 , f\rc)^\dagger/v_{p+1}]
\\
& = f ; \lc \id_p, (g ;  \lc  \id_p \oplus 0_1 , f\rc)^\dagger\rc \qedhere
\end{align*}
\end{proof}
\noindent  And the \textbf{scalar double dagger identity}. 
\begin{lemma}\label{conway3}
Let $f \colon 1 \to 2 + p$ be a morphism of $\RegBeh$. Then,
		$$
		f^{\dagger\dagger} = (f ; (\id_p\oplus \nabla_1))^\dagger
		$$
\end{lemma}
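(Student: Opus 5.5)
We unfold both sides using the definitions of composition, parallel composition, codiagonal and dagger in $\RegBeh$, and reduce the identity to the first item of \Cref{lem:recursion-substitution}. Here $f\colon 1\to p+2$ is a single expression whose free variables lie among $v_1,\dots,v_{p+2}$. Following \Cref{def:Conway-operator}, we have $f^\dagger=\mu v_{p+2}.f\colon 1\to p+1$. Applying the dagger again gives $f^{\dagger\dagger}=\mu v_{p+1}.\mu v_{p+2}.f\colon 1\to p$.

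For the right-hand side, we first compute $\id_p\oplus\nabla_1\colon p+2\to p+1$. By the concrete formula for $\oplus$ and $\nabla_1=\langle\id_1,\id_1\rangle=(v_1,v_1)$, this is the tuple $(v_1,\dots,v_p,v_{p+1},v_{p+1})$. Hence, by the definition of sequential composition,
\[
f;(\id_p\oplus\nabla_1)=f[(v_1,\dots,v_p,v_{p+1},v_{p+1})/(v_1,\dots,v_{p+2})]=f[(v_{p+1},v_{p+1})/(v_{p+1},v_{p+2})].
\]
The second equality uses that identity substitution on $v_1,\dots,v_p$ is trivial; this can be justified either syntactically or by the bisimulation argument used in \Cref{lem:subst_lemma}. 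Taking the dagger yields $\mu v_{p+1}.\bigl(f[(v_{p+1},v_{p+1})/(v_{p+1},v_{p+2})]\bigr)$.

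It remains to show $\mu v_{p+1}.\bigl(f[(v_{p+1},v_{p+1})/(v_{p+1},v_{p+2})]\bigr)=\mu v_{p+1}.\mu v_{p+2}.f$. This is the only step needing care, because \Cref{lem:recursion-substitution}(1) requires the new variable $v_z$ not to be free in $e$, and $v_{p+1}$ is free in $f$. I would handle this by $\alpha$-renaming. Choose a fresh variable $v_z$ not occurring in $f$. Then rename the bound variable on the left: $\mu v_{p+1}.(f[(v_{p+1},v_{p+1})/(v_{p+1},v_{p+2})])=\mu v_z.(f[(v_z,v_z)/(v_{p+1},v_{p+2})])$. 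This renaming holds up to bisimilarity by the invariance of $\mu$ under renaming of bound variables, which the paper already uses as \cite[Proposition~4.6]{Milner:1984:Complete} in the proof of the scalar parameter identity. Now \Cref{lem:recursion-substitution}(1), with $j=p+1$ and $k=p+2$, gives $\mu v_z.(f[(v_z,v_z)/(v_{p+1},v_{p+2})])=\mu v_{p+1}.\mu v_{p+2}.f$, which is $f^{\dagger\dagger}$.

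We also need that all these equalities hold modulo $\sim$ and that both sides have free variables within $V_p$, so that they are morphisms $1\to p$. This follows from the congruence property (\Cref{lem:congruence}) and the fact that $\mu$ binds the variables involved.
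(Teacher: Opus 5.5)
Your proof is correct and follows the paper's argument: unfold both daggers as $\mu v_{p+1}.\mu v_{p+2}.f$ and $\mu v_{p+1}.\bigl(f;(\id_p\oplus\nabla_1)\bigr)$, identify $f;(\id_p\oplus\nabla_1)$ with $f[(v_{p+1},v_{p+1})/(v_{p+1},v_{p+2})]$, and apply \Cref{lem:recursion-substitution}(1). Your extra $\alpha$-renaming to a fresh $v_z$ is a worthwhile refinement. The paper instead instantiates that lemma directly with $v_z=v_{p+1}$, which is free in $f$ and so formally violates the lemma's stated side condition.
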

\begin{proof}
\begin{align*}
f^{\dagger\dagger} &= \mu v_{p+1}.(\mu v_{p+2}.f)
\\
& = \mu v_{p+1}. \left(f [(v_{p+1}, v_{p+1}) / (v_{p+1}, v_{p+2})]\right) \tag{\Cref{lem:recursion-substitution}~ 1.}
\\
& = \mu v_{p+1}. \left(f;(\id_p\oplus \nabla_1)\right)
\\
& = (f ; (\id_p\oplus \nabla_1))^\dagger \tag*{\hfill\qedhere}
\end{align*}
\end{proof}
\begin{restatable}{lemma}{regbehconway}\label{lem:regbehconway}
$\RegBeh$ is a Conway Theory.	
\end{restatable}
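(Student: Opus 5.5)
The plan is to appeal to \Cref{rem:defining_dagger}. It says that a Conway theory is fully specified once we give a scalar dagger $f^\dagger \colon 1 \to p$ for every $f \colon 1 \to p+1$ that satisfies the scalar parameter, scalar composition and scalar double dagger identities. The general dagger on $f \colon n \to p+n$ is then obtained inductively from the scalar pairing identity. This is exactly how the dagger on $\RegBeh$ was introduced in \Cref{sec:semantics} and in \Cref{def:Conway-operator}. So the scalar pairing identity holds by construction, and the remaining work is to check the three scalar identities for $f^\dagger = \mu v_{p+1}.f$.

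Before that, I would record that $\RegBeh$ is a legitimate preiteration theory. Its objects are natural numbers, and by \Cref{lem:comp_associative} together with the binary coproduct and initial object lemmas it has finite coproducts. It also has a dagger of every type $n \to p+n$; for $n=0$ this is $0_p$. One point needs checking: $\mu v_{p+1}.f$ must land in $\Omega(p)$. This holds because $\mu$ binds $v_{p+1}$, so the live variables of $\mu v_{p+1}.f$ lie among $v_1,\dots,v_p$. The operation is also well defined on bisimilarity classes, since $\sim$ is a congruence (\Cref{lem:congruence}).

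The three scalar identities have already been proved:
\begin{itemize}
\item the scalar parameter identity is the lemma stating $(f;(g\oplus\id_1))^\dagger = f^\dagger; g$;
\item the scalar composition identity is \Cref{conway2};
\item the scalar double dagger identity is \Cref{conway3}.
\end{itemize}
All three rest on Sewell's laws (\Cref{lem:recursion-substitution}) and on Milner's substitution properties. Combining these with \Cref{rem:defining_dagger} gives the result.

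The main obstacle is the inductive extension. We need the vector dagger produced by the scalar pairing identity to be well defined and to agree with the dagger informally described for $\RegBeh$, namely simultaneous $\mu$-solutions of systems. I would rely on the cited result of \'Esik, which guarantees that the inductive definition automatically satisfies all the Conway axioms. With that result in hand, no further verification on $\RegBeh$ is needed beyond the scalar case.
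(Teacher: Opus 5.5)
Your proposal is correct and follows the paper's route. Both verify the scalar parameter, composition and double dagger identities for $f^\dagger = \mu v_{p+1}.f$ via Sewell's and Milner's laws, then obtain the full dagger through the scalar pairing identity, as licensed by \Cref{rem:defining_dagger}. You are slightly more careful than the paper's one-line proof, which cites only \Cref{conway1}, \Cref{conway2} and \Cref{conway3} and omits the unlabelled scalar parameter lemma that you rightly include.
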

\begin{proof}
	Follows from \Cref{conway1}, \Cref{conway2} and \Cref{conway3}.
\end{proof}
\subsection{Trace-fixpoint correspondence}
Turns out, that having a category $\cat{C}$ with finite coproducts and equipped with a dagger operator satisfying the axioms of Conway Theories is synonymous with $\cat{C}$ being traced symmetr	ic monoidal category. This is captured by the following theorem that was independently proved by Hasegawa~\cite{Hasegawa:1997:Models} and Haghverdi~\cite{Haghverdi:2000:Categorical}. The formulation of Hasegawa is phrased dually via the setting of products and cartesian categories.
\begin{theorem}[{\cite[Proposition~3.1.9]{Haghverdi:2000:Categorical}}]\label{thm:trace}
For any category with finite coproducts, to give a Conway operator is to give a trace (where finite coproducts are taken as the monoidal structure). 
\end{theorem}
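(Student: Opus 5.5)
The plan is to build explicit, mutually inverse translations between Conway operators and traces, using the coproduct structure to split morphisms out of a coproduct into their components. This is the dual of Hasegawa's argument for cartesian categories, so I would follow that blueprint with products replaced by coproducts throughout.

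\textbf{From dagger to trace.} Any $g \colon p+n \to q+n$ can be written uniquely as a copairing $g = \langle g_1, g_2\rangle$ with $g_1 = \inl_{p,n};g \colon p \to q+n$ and $g_2 = \inr_{p,n};g \colon n \to q+n$. I would set
\[
\Tr^{n}_{p,q}(g) := g_1 ; \langle \id_q, g_2^\dagger\rangle .
\]
Intuitively, the feedback wires are resolved by the fixpoint $g_2^\dagger$ and then substituted into the $p$-part.

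\textbf{From trace to dagger.} Given a trace and $f \colon n \to p+n$, I would set
\[
f^\dagger := \Tr^{n}_{n,p}(\langle f, f\rangle) = \Tr^n_{n,p}(\nabla_n ; f).
\]
This takes the codiagonal, so the "input" copy of $n$ and the fed-back copy of $n$ are identified.

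\textbf{Steps, in order.}

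\begin{enumerate}
\item Derive the \emph{fixed point identity} $f^\dagger = f;\langle \id_p, f^\dagger\rangle$ from the Conway axioms. This comes from the scalar composition identity with $g$ the appropriate injection, extended to arbitrary $n$ via the pairing identity, as in \Cref{rem:defining_dagger}. The identity is used repeatedly below.
\item Check that the two constructions are inverse to each other.
  \begin{itemize}
  \item Starting from a dagger, passing to a trace and back gives $\Tr(\langle f,f\rangle) = f;\langle \id, f^\dagger\rangle = f^\dagger$, by the fixed point identity.
  \item Starting from a trace, passing to a dagger and back requires showing $\Tr(\langle g_1,g_2\rangle) = g_1;\langle \id_q, \Tr(\langle g_2,g_2\rangle)\rangle$. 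I would prove this from naturality in the input together with dinaturality/vanishing, factoring $\langle g_1, g_2\rangle$ through $\id_p \oplus \nabla_n$-type maps.
  \end{itemize}
\item Show that the trace built from a Conway operator satisfies the Joyal--Street--Verity axioms~\cite{Joyal:1996:Traced}. Each axiom should follow from a Conway identity:
  \begin{itemize}
  \item naturality in the output follows from the scalar parameter identity, extended to vectors;
  \item naturality in the input holds by construction, because precomposition only affects $g_1$;
  \item dinaturality (sliding) follows from the composition identity;
  \item vanishing for $n = 0$ follows from $f^\dagger_{0,p} = 0_p$;
  \item vanishing for $n_1 + n_2$ follows from the pairing identity (Beki\'c) combined with the double dagger identity;
  \item superposing is a parameter manipulation;
  \item yanking reduces to $(\inl_{1,1}\text{-type injection})^\dagger = \id_1$, via the fixed point identity and unitality.
  \end{itemize}
\item Conversely, show that the dagger built from a trace satisfies the four Conway identities. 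Each identity is obtained by unfolding the definitions and applying the corresponding trace axiom in reverse. In particular, the double dagger identity comes from vanishing plus naturality with $\nabla_1$, and the pairing identity comes from vanishing together with the characterisation of $\Tr$ proved in step 2.
\end{enumerate}

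\textbf{Main obstacle.} I expect the vanishing axiom for $n_1+n_2$ (equivalently, the scalar pairing identity in the converse direction) to be the hardest step, because it requires the full Beki\'c-style decomposition of a simultaneous fixpoint. My first approach would be to reduce to the scalar case $n_1 = n_2 = 1$ via \Cref{rem:defining_dagger}, and then carry out a direct calculation with copairings, using the composition identity to move $h^\dagger$ past $f^\dagger$.
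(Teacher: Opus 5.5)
Your overall plan is the standard one: Hasegawa's cartesian argument dualised, which is what the cited result of Haghverdi amounts to. Your trace formula $g_1;\langle\id_q,g_2^\dagger\rangle$ is not the one the paper records, $\inl_{p,n};(g;(\id_q\oplus\inr_{p,n}))^\dagger$, but the two agree in any Conway theory by the pairing (Beki\'c) identity. The genuine gap is in step~2, second bullet. Step~4 also rests on it, because it is your only source of the fixed point identity for $\Tr^n(\nabla_n;f)$.

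The identity $\Tr(\langle g_1,g_2\rangle)=g_1;\langle\id_q,\Tr(\langle g_2,g_2\rangle)\rangle$ does not follow from naturality, dinaturality and vanishing. Here is a counterexample. In relations with disjoint union, let $\Tr^U_{A,B}(g)$ be the $A\to B$ block of $g$, i.e.\ discard the feedback. This satisfies both naturalities, dinaturality, both vanishing axioms and superposing, but not yanking. For $g=\sigma_{1,1}$ your left-hand side is $\emptyset$, while the right-hand side is $\id_1$. So yanking must be used.

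A route that works is as follows.
\begin{itemize}
\item Since $g=(g_1\oplus\id_n);\langle\id_{q+n},g_2\rangle$, input naturality gives $\Tr^n_{p,q}(g)=g_1;T$ with $T=\Tr^n_{q+n,q}(\langle\id_{q+n},g_2\rangle)$.
\item $\inl_{q,n};T=\Tr^n_{q,q}((\id_q\oplus g_2);(\nabla_q\oplus\id_n))=\id_q$, by output naturality, superposing and $\Tr^n_{0,q}(g_2)=0_q$.
\item $\inr_{q,n};T=\Tr^n_{n,q}((\id_n\oplus g_2);\langle\inr_{q,n},\id_{q+n}\rangle)$. Sliding $g_2$ around the loop turns this into $\Tr^{q+n}_{n,q}$ of $(\sigma_{n,q}\oplus\id_n);(\id_q\oplus(\nabla_n;g_2))$. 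Vanishing, superposing and yanking on $\sigma_{q,q}$ then reduce it to $\Tr^n_{n,q}(\nabla_n;g_2)$.
\end{itemize}

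Separately, you have misplaced the difficulty in step~3. With your formula, vanishing for $n_1+n_2$ is literally the two-block pairing identity with the inner block solved first; double dagger plays no role. It reduces by induction to the case of one scalar block, not to $n_1=n_2=1$. The real work, since \Cref{def:conway_theory} only provides scalar axioms, is obtaining the vector identities:
\begin{itemize}
\item output naturality and superposing need the vector parameter identity;
\item dinaturality needs the non-square composition identity $(k;g_2)^\dagger=k;(g_2;(\id_q\oplus k))^\dagger$ for $g_2\colon n\to q+n'$ and $k\colon n'\to n$;
\item matching block orders in vanishing needs the permutation identity.
\end{itemize}
These do hold in every Conway theory (\'Esik; Bloom--\'Esik), but that is a substantial theorem you must cite or prove. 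The fixed point identity of step~1 does not give them.
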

That bijective correspondence is concretely given by the following:
\begin{gather*}
	\inferrule{f \colon  n \to p + n}{f^\dagger = \Tr^n_{n,p} (\nabla_n ; f) \colon n \to p} 
	\qquad
	\inferrule{g \colon p + n \to q + n}{\Tr^{n}_{p,q}(g) = \inl_{p,n} ; (g; \langle\id_q, \inr_{q +p,n} \rangle)^\dagger  \colon p \to q}
\end{gather*}

\subsection{Int construction}\label{sec:apx_int}
Given a traced symmetric monoidal category $\cat{C}$, we can construct a compact closed category $\Int{\cat{C}}$. The objects of $\Int{\cat{C}}$ are the pairs $(A^+,A^-)$ of objects of $\cat{C}$. Morphisms $f$ from $(A^+, A^-)$ to $(B^+, B^-)$ are the morphisms $f \colon A^+ \otimes B^- \to A^- \otimes B^+$ of $\cat{C}$. The identity of any object $(A^+, A^-)$ is given by the symmetry of $\cat{C}$, namely $\id_{(A^ +, A^-)} = \sigma_{A^+, A^-}$. The composition $f ; g \colon (A^+, A^-) \to (C^+, C^-)$ of morphisms $f \colon (A^+, A^-) \to (B^{+}, B^-)$ and $g \colon (B^+, B^-) \to (C^+,C^-)$ is defined as
	$$
		\Tr^{B^- \otimes B^+}_{A^+ \otimes C^-, A^- \otimes C^+} (\alpha ; (f \otimes g) ; \beta)
	$$
	in $\cat{C}$, where $\alpha = (\id_{A^+} \otimes \sigma_{C^-, B^-} \otimes \id_{B^+});(\id_{A^+} \otimes \id_{B^-} \otimes \sigma_{C^-, B^+})$ and $\beta = (\id_{A^-} \otimes \id_{B^+} \otimes \sigma_{B^-, C^+}); (\id_{A^-} \otimes \sigma_{B^+, C^+} \otimes \id_{B^-});(\id_{A^-} \otimes \id_{C^+} \otimes \sigma_{B^+, B^-})$.
	
	$\Int{\cat{C}}$ is equipped with the monoidal structure. The tensor product of $(A^+, A^-)$ and $(B^+, B^-)$ is given by taking the tensor product of $\cat{C}$ pointwise, namely $(A^+ \otimes B^+, A^- \otimes B^-)$. The unit of that monoidal product is given by $(I, I)$, where $I$ is the unit of the monoidal product on $\cat{C}$. The tensor product $f \otimes g \colon (A^+ \otimes C^+, B^- \otimes D^-) \to (A^- \otimes C^-, B^+ \otimes D^+)$ of $f \colon (A^+, A^-) \to (B^+, B^-)$ and $g \colon (C^+, C^-) \to (C^+, C^-) \to (D^+, D^-)$ is given by:
	$$
	f \otimes g = (\id_{A^+} \otimes \sigma_{C^+, B^-} \otimes \id_{D^-});(f \otimes g);(\id_{A^-} \otimes \sigma_{B^+, C^-} \otimes \id_{D^+})
	$$
	
	The dual $(A^+, A^-)^{\ast}$ of $(A^+, A^-)$ is given by exchanging the components, that is by $(A^-, A^+)$. Then, the unit $\eta_{(A^+, A^-)} \colon (I,I) \to (A^+, A^-) \otimes (A^+, A^-)^{\ast}$ is a morphism $\sigma_{A^-, A^+} \colon A^- \otimes A^+ \to A^+ \otimes A^-$. The counit $\epsilon_{(A^+, A^-)} \colon (A^+, A^-)^\ast \otimes (A^+, A^-) \to (I,I)$ can be similarly given by $\sigma_{A^-, A^+} \colon A^- \otimes A^+ \to A^+ \otimes A^-$ in $\cat{C}$.
	$\Int{\cat{C}}$ is equipped with a canonical trace, which takes a morphism $f \colon (A^+, A^-) \otimes (U^+, U-) \to (B^+, B^-) \otimes (U^+, U-)$ to
	$$
	\Tr^{(U^+, U^-)}_{(A^+, A^-), (B^+, B^-)} (f) = \left(\id_{(A^+, A-)} \otimes \eta_{(U^+, U-)}\right) ;\left(f \otimes \id_{{(U^+, U^-)}^\ast}\right); \left(\id_{(B^+, B^-)} \otimes \epsilon_{(U^+, U-)}  \right)
	$$

\subsection{Pseudometric structure on the semantic category}
\begin{lemma}\label{lem:subst-stratified-bisim}
	Let $i_1, \dots, i_m \in \N$. For all $n \in \N$ and for all $e,f,g_1, \dots, g_m, h_1, \dots, h_m \in {\Expr}/{\sim}$, we have that
	$$
	e \sim^{(n)} f \wedge \bigwedge_{j=1}^{j\leq m} g_j \sim^{(n)} h_j \implies e[(g_1, \dots, g_m)/ (v_{i_1}, \dots v_{i_m})] \sim^{(n)}  f[(h_1, \dots, h_m)/ (v_{i_1}, \dots v_{i_m})] 
	$$  
\end{lemma}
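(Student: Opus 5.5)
We argue by induction on $n$, proving the statement simultaneously for all $e,f,\vec g,\vec h$. The base case $n=0$ is immediate, since $\sim^{(0)}$ relates every pair. For the inductive step, assume the claim holds at level $n$ for all choices of expressions. Suppose $e \sim^{(n+1)} f$ and $g_j \sim^{(n+1)} h_j$ for all $j$. Write $\vec v = (v_{i_1},\dots,v_{i_m})$, $E = e[\vec g/\vec v]$ and $F = f[\vec h/\vec v]$. We must check the two clauses of $\sim^{(n+1)}$ for $E,F$, using the operational description of substitution from \Cref{rem:semantic-substitution} (extended componentwise to simultaneous substitution).

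\textbf{Outputs.} By \Cref{rem:semantic-substitution}, $E \rhd w$ holds iff one of two things happens: either $e \rhd w$ with $w \notin \vec v$, or $e \rhd v_{i_j}$ and $g_j \rhd w$ for some $j$. Since $e \sim^{(n+1)} f$, the two expressions have the same output set. Since $g_j \sim^{(n+1)} h_j$ with $n+1 \geq 1$, we also get $E(g_j) = E(h_j)$. Hence the output sets of $E$ and $F$ coincide.

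\textbf{Transitions.} Suppose $E \tr{a} E'$. By the same remark, there are two cases.
\begin{itemize}
\item \emph{Case (i):} $e \tr{a} e'$ and $E' = e'[\vec g/\vec v]$. Then $e \sim^{(n+1)} f$ gives some $f'$ with $f \tr{a} f'$ and $e' \sim^{(n)} f'$. Hence $F \tr{a} f'[\vec h/\vec v]$. The induction hypothesis, applied to $e',f'$ together with $g_j \sim^{(n)} h_j$, gives $E' \sim^{(n)} f'[\vec h/\vec v]$. Here we use that $\sim^{(n+1)} \subseteq \sim^{(n)}$, a standard property of the stratification that we prove by a quick separate induction.
\item \emph{Case (ii):} $e \rhd v_{i_j}$ and $g_j \tr{a} E'$. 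Then $f \rhd v_{i_j}$, and $g_j \sim^{(n+1)} h_j$ gives some $h'$ with $h_j \tr{a} h'$ and $E' \sim^{(n)} h'$. Hence $F \tr{a} h'$, as required.
\end{itemize}
The symmetric clause, starting from a transition of $F$, is handled identically.

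The main obstacle is that the statement lives in ${\Expr}/{\sim}$, whereas the operational rules are stated on raw expressions. We handle this by choosing representatives. By \Cref{lem:congruence}, substitution is well defined on classes, so the choice does not matter. By \Cref{lem:quotient_chart}, transitions of a class are exactly the classes of transitions of a representative. Finally, $\sim^{(n)}$ is invariant under bisimilarity (\Cref{lem:hom_stratification}), so we may work with representatives throughout.

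A secondary technical point is variable capture when a substitution passes under a binder. This is avoided by Milner's renaming convention (\Cref{def:subset}), under which the rules of \Cref{rem:semantic-substitution} remain valid for simultaneous substitution.
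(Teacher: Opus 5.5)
Your proof is correct and follows the same route as the paper: induction on $n$, with the output and transition clauses each split into two cases according to the operational description of substitution in \Cref{rem:semantic-substitution}. In the first transition case you apply the induction hypothesis to the derivatives $e',f'$ together with $g_j \sim^{(n)} h_j$, using $\sim^{(n+1)}\subseteq\sim^{(n)}$; this is the correct form of the step that the paper states only loosely.
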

\begin{proof}

	Base case holds immediately, since we always have that $e[(g_1, \dots, g_m)/ (v_{i_1}, \dots v_{i_m})] \sim^{(0)}  f[(h_1, \dots, h_m)/ (v_{i_1}, \dots v_{i_m})] 
$ for all $e,f,g_1, \dots, g_j, h_1, \dots, h_j \in {\Expr}/{\sim}$
	
	Assume that $e \sim^{(n + 1)} f$, $g_j \sim^{(n + 1)} h_j$ for all $j \in \{1, \dots, m\}$. For the successor case assume that $e[(g_1, \dots, g_m)/ (v_{i_1}, \dots v_{i_m})] \sim^{(n)} f[(h_1, \dots, h_m)/ (v_{i_1}, \dots v_{i_m})]$. We will argue that $$e[(g_1, \dots, g_m)/ (v_{i_1}, \dots v_{i_m})] \sim^{(n + 1)}  f[(h_1, \dots, h_m)/ (v_{i_1}, \dots v_{i_m})]$$ To do so, we will study the operational semantics of both (equivalence classes of) expressions. 
	
		Assume that $e[(g_1, \dots, g_m)/ (v_{i_1}, \dots v_{i_m})] \rhd v_k$. Using \Cref{rem:semantic-substitution}, we can observe that it is the case if any of the following is true: 
		\begin{itemize}
			\item $e \rhd v_k$ and $v_k \notin \{v_{i_1}, \dots v_{i_{j}}\}$ 
			\item $e \rhd v_{i_l}$ for some $v_{i_l} \in \{v_{i_1}, \dots, v_{i_j}\}$ and $g_l \rhd {v_k}$
		\end{itemize}
		Consider the first subcase. Since $e \rhd v_k$ (for some $k \notin \{v_{i_1}, \dots, v_{i_j}\}$) and by assumption $e \sim^{n+1} f$, we have that $f \rhd {v_k}$ and hence $f[(h_1, \dots, h_j)/(v_{i_1}, \dots, v_{i_j})] \rhd v_k$. Now, consider the second subcase. By a similar line of reasoning, we can obtain $f \rhd v_l$ and $h_l \rhd v_k$. Hence, again we have that $f[(h_1, \dots, h_j)/(v_{i_1}, \dots, v_{i_j})] \rhd v_k$. In other words, we have shown that $e[(g_1, \dots, g_m)/ (v_{i_1}, \dots v_{i_m})] \rhd v_k$ implies $f[(h_1, \dots, h_j)/(v_{i_1}, \dots, v_{i_j})] \rhd v_k$. One can easily obtain a reverse implication through a symmetric proof.
		
		Now, assume that  $e[(g_1, \dots, g_m)/ (v_{i_1}, \dots v_{i_m})] \tr{a} s$. Using \Cref{rem:semantic-substitution}, we know that such transition can be made only if any of the following is true:
		\begin{itemize}
			\item $s = e'[(g_1, \dots, g_m)/ (v_{i_1}, \dots v_{i_m})]$, for some $e'$ such that $e \tr{a} e'$
			\item For some $v_{i_l}$, such that $v_{i_l} \in \{v_{i_1}, \dots, v_{i_m}\}$, we have that $e \rhd v_l$ and $g_l \tr{a} s$
		\end{itemize}
		Consider the first subcase. We know that $f \tr{a} f'$ and $e' \sim^{(n)} f'$. Using the induction hypothesis, we can conclude that 
		$$
			e[(g_1, \dots, g_m)/(v_{i_1}, \dots, v_{i_m})] \sim^{(n)} f[(h_1, \dots, h_m)/(v_{i_1}, \dots, v_{i_m})]
		$$
		Hence, there exists a $t$, such that $f[(h_1, \dots, h_m)/(v_{i_1}, \dots, v_{i_m})] \tr{a} t$, such that $s \sim^{(n)} t$.
		
		Now, consider the second subcase. We can easily conclude that $f \rhd v_l$ and $h_l \tr{a} t$, with $s \sim^{(n)} t$.
		
		In other words, we have show that $e[(g_1, \dots, g_m)/ (v_{i_1}, \dots, v_{i_m})] \tr{a} s$, then there exists $t$, such that $f[(h_1, \dots, h_m)/(v_{i_1}, \dots, v_{i_m})] \tr{a} t$, such that $s \sim^{(n)} t$. The reverse implication can be again shown via a symmetric argument. Combining all of the above, we can conclude that
		$$e[(g_1, \dots, g_m)/ (v_{i_1}, \dots, v_{i_m})] \sim^{(n+1)} f[(h_1, \dots, h_m)/(v_{i_1}, \dots, v_{i_m})] \tr{a} t$$
\end{proof}
\begin{corollary}\label{cor:seq_nexp}
	Let $e,f,g_1, \dots , g_m,h_1, \dots, g_m \in {\Expr}/{\sim}$. Then,
	$$\mathsf{bd}_{\overline{\partial}}\left(e[(g_1, \dots, g_m/(v_{i_1}, \dots v_{i_m})], f[(h_1, \dots, h_m/(v_{i_1}, \dots v_{i_m})]\right) \leq \max\{\mathsf{bd}_{\overline{\partial}}(e,f),  \max_{j \in \{1, \dots, m\}} \{\mathsf{bd}_{\overline{\partial}}(g_j,h_j)\}\}$$
\end{corollary}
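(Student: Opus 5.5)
The plan is to reduce the quantitative statement to the stratified one, \Cref{lem:subst-stratified-bisim}, using the concrete characterisation of behavioural distance in \Cref{thm:concrete_distance} and \Cref{lem:bound_on_stratified_bisim}. The prechart structure $({\Expr}/{\sim},\overline{\partial})$ is locally finite, since each $\langle e\rangle_\partial$ is finite and the quotient map is a homomorphism. Hence the characterisation applies: $\mathsf{bd}_{\overline\partial}(x,y)\le 2^{-k}$ iff $x\sim^{(k)} y$, and every distance is either $0$ or of the form $2^{-k}$.

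Write $M=\max\{\mathsf{bd}_{\overline{\partial}}(e,f),\max_j \mathsf{bd}_{\overline{\partial}}(g_j,h_j)\}$, and write $L$ for the left-hand distance. The argument splits into two cases.

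\emph{Case $M=0$.} Then $e\sim f$ and $g_j\sim h_j$ for all $j$. In particular $e\sim^{(k)} f$ and $g_j\sim^{(k)} h_j$ for every $k$. By \Cref{lem:subst-stratified-bisim}, the two substituted expressions are $\sim^{(k)}$-related for every $k$. So \Cref{lem:bound_on_stratified_bisim} gives $L\le 2^{-k}$ for all $k$, hence $L=0$. Alternatively, one can use directly that $\sim$ is a congruence (\Cref{lem:congruence}) together with item 3 of \Cref{thm:beh_dist}.

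\emph{Case $M>0$.} Each of the finitely many distances making up the maximum is $0$ or a power of two, so $M=2^{-n}$ for some $n$. Since $\mathsf{bd}(e,f)\le 2^{-n}$ and $\mathsf{bd}(g_j,h_j)\le 2^{-n}$ for all $j$, \Cref{lem:bound_on_stratified_bisim} yields $e\sim^{(n)} f$ and $g_j\sim^{(n)} h_j$. Then \Cref{lem:subst-stratified-bisim} gives the $\sim^{(n)}$-relation between the substituted expressions. Applying \Cref{lem:bound_on_stratified_bisim} once more gives $L\le 2^{-n}=M$.

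The main point needing care is the direction ``distance $\le 2^{-k}$ implies $\sim^{(k)}$'' in \Cref{lem:bound_on_stratified_bisim}, which is stated for arbitrary precharts; here we only need it on the locally finite quotient, so it applies without issue. A second point is that the values of $\mathsf{bd}$ are exactly powers of two or $0$; if one prefers to avoid this, the argument is run for every $k$ with $2^{-k}\ge M$ and one takes the infimum, obtaining $L\le \inf\{2^{-k}\mid 2^{-k}\ge M\}$. That infimum equals $M$ precisely because of the dyadic form given by \Cref{thm:concrete_distance}, so the dyadic-value fact is what we rely on in either formulation. Well-definedness on equivalence classes is guaranteed by \Cref{lem:congruence}.
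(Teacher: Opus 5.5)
Your proof is correct and follows essentially the same route as the paper. The paper also splits on whether the right-hand side is zero, handling that case via the congruence property of $\sim$, and in the positive case combines the dyadic characterisation of $\mathsf{bd}_{\overline\partial}$ (\Cref{thm:concrete_distance}) with \Cref{lem:subst-stratified-bisim}; your version simply spells out the passage from $M = 2^{-n}$ to $\sim^{(n)}$ and back via \Cref{lem:bound_on_stratified_bisim}, which the paper leaves implicit.
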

\begin{proof}
If the right hand side of the inequality is equal to zero, then we have that $e \sim f$ and $g_j \sim h_j$ for $j \in \{1, \dots, m\}$. We can use \Cref{lem:congruence}, to conclude that $e[(g_1, \dots, g_m/(v_{i_1}, \dots v_{i_m})] \sim f[(h_1, \dots, h_m/(v_{i_1}, \dots v_{i_m})]$ and hence $\mathsf{bd}_{\overline{\partial}}(e[(g_1, \dots, g_m/(v_{i_1}, \dots v_{i_m})], f[(h_1, \dots, h_m/(v_{i_1}, \dots v_{i_m})]) = 0$, which implies nonexpansivity.
	
If the right hand side is greater than zero, we can employ the characterisation of $\mathsf{bd}_{\overline{\partial}}$ from \Cref{thm:concrete_distance} and use \Cref{lem:subst-stratified-bisim} to conclude the desired result. 
\end{proof}

\begin{lemma}\label{lem:rec_startified_bisim}
	Let $e, f \in {\Expr}/{\sim_{\partial}}$. Then, for all $n \in \N$, we have that
	$$
	e \sim^{(n)} f \implies \mu v_x. e \sim^{(n)} \mu v_x.f
	$$
\end{lemma}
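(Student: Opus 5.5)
We plan to prove the implication by exhibiting a relation witnessing $n$-step stratified bisimilarity, rather than by a naive induction on $n$. The difficulty is that the transitions of $\mu v_x.e$ unfold the recursion: by the operational rules, $\mu v_x.e \tr{a} e'[\mu v_x.e/v_x]$ whenever $e\tr{a} e'$. Hence the derivatives are substitution instances of derivatives of $e$, not derivatives of $\mu v_x.e$. An induction on $n$ for the bare statement therefore does not close. We will instead prove a strengthened claim that is stable under taking derivatives.

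\textbf{Step 1 (strengthened claim).} For all $k\le n$ and all expressions $g,h$, we show that
\[
g\sim^{(k)} h \;\implies\; g[\mu v_x.e/v_x]\sim^{(k)} h[\mu v_x.f/v_x],
\]
under the standing hypothesis $e\sim^{(n)} f$. The original statement is the special case $g=h=v_x$ with $k=n$, since $v_x[\mu v_x.e/v_x]=\mu v_x.e$ and $\mu v_x.f$ likewise. The proof is by induction on $k$, and the case $k=0$ is trivial.

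\textbf{Step 2 (inductive step).} Let $k+1\le n$. We use the operational description of substitution from \Cref{rem:semantic-substitution}, together with the equivalent unfolding rule for $\mu$, which gives $\mu v_x.e \tr{a} t$ iff $e[\mu v_x.e/v_x]\tr{a} t$. Outputs and transitions of $g[\mu v_x.e/v_x]$ arise in two ways.
\begin{itemize}
\item They may come from $g$ directly. An output $w\neq v_x$ is matched by $h$ because $g\sim^{(k+1)}h$. A transition $g\tr{a}g'$ yields $g[\ldots]\tr{a} g'[\mu v_x.e/v_x]$. We match it with $h\tr{a}h'$ where $g'\sim^{(k)}h'$, and the induction hypothesis applies.
\item They may come from $g\rhd v_x$ followed by a move of $\mu v_x.e$. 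Then $h\rhd v_x$ as well. Outputs of $\mu v_x.e$ are the outputs of $e$ other than $v_x$, and these agree with those of $f$ because $e\sim^{(n)}f$ with $n\ge1$. A transition $\mu v_x.e\tr{a} e'[\mu v_x.e/v_x]$ with $e\tr{a}e'$ is matched using $f\tr{a}f'$ where $e'\sim^{(n-1)}f'$. Since $k\le n-1$, and the relations $\sim^{(m)}$ are decreasing in $m$ (an easy separate induction), we get $e'\sim^{(k)}f'$. The induction hypothesis at $k$ then gives $e'[\mu v_x.e/v_x]\sim^{(k)}f'[\mu v_x.f/v_x]$.
\end{itemize}
Outputs of the form $g\rhd v_x$ disappear under substitution on both sides simultaneously. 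The symmetric clauses are handled in the same way.

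\textbf{Step 3 (conclusion).} Instantiating Step 1 with $g=h=v_x$ and $k=n$ gives $\mu v_x.e\sim^{(n)}\mu v_x.f$. Working on ${\Expr}/{\sim}$ is harmless because $\sim$ is a congruence (\Cref{lem:congruence}) and the quotient map is a homomorphism (\Cref{lem:quotient_chart}). Stratified bisimilarity is invariant under homomorphisms by the argument of \Cref{lem:hom_stratification}, so it suffices to reason on representatives.

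The main obstacle is choosing the right invariant, namely the fact that derivatives of a recursion are substitution instances. Once that invariant is fixed, the inductive step reduces to a bookkeeping check of the monotonicity $\sim^{(n-1)}\subseteq\sim^{(k)}$ for $k\le n-1$. The other delicate point is the treatment of unguarded occurrences of $v_x$ in $e$. With the $\mu$ rule as stated, the output $v_x$ is simply dropped, and the case analysis above handles this.
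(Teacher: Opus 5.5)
Your proof is correct, but it is organised differently from the paper's.

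The paper runs a plain induction on $n$ for the bare statement. From $e \sim^{(n+1)} f$ it obtains $e \sim^{(n)} f$, and hence $\mu v_x.e \sim^{(n)} \mu v_x.f$ by the induction hypothesis. It matches $\mu v_x.e \tr{a} e'[\mu v_x.e/v_x]$ with $\mu v_x.f \tr{a} f'[\mu v_x.f/v_x]$. It closes the step with the general substitution lemma \Cref{lem:subst-stratified-bisim}, applied to $e' \sim^{(n)} f'$ together with $\mu v_x.e \sim^{(n)} \mu v_x.f$.

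So your remark that an induction on $n$ for the bare statement does not close holds only if that lemma is unavailable. With it, the bare induction goes through in a few lines.

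Your route fixes $e \sim^{(n)} f$ and proves, by induction on $k \le n$, a substitution invariant specialised to $[\mu v_x.e/v_x]$ versus $[\mu v_x.f/v_x]$. The unfolding of $\mu$ is absorbed into the case $g \rhd v_x$. In effect you prove a special case of \Cref{lem:subst-stratified-bisim} and the recursion lemma simultaneously. Given that lemma, the two statements are interderivable.

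What each approach buys:
\begin{itemize}
\item Your version is self-contained. It needs no simultaneous-substitution lemma, and the induction hypothesis is applied uniformly to arbitrary $g,h$.
\item The paper's version is modular. \Cref{lem:subst-stratified-bisim} is needed anyway for nonexpansiveness of sequential composition (\Cref{cor:seq_nexp}), so the recursion case costs only a short additional induction.
\end{itemize}
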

\begin{proof}
	Base case holds immediately, as $\mu v_x.e \sim^{(0)} \mu v_x.f$ for all $e, f \in {\Expr}/{\sim}$. 
	
	Assume that $e \sim^{(n+1)} f$. For the successor case assume that $\mu v_x . e \sim^{(n)} \mu v_x. f$. We will argue that $\mu v_x . e \sim^{(n+1)} \mu v_x. f$.
	
	Assume that $\mu v_x.e \rhd v_k$. It is only the case, when $e \rhd {v_k}$ and $v_k \neq v_x$. Since $e \sim^{(n+1)} f$, then $f \rhd {v_k}$ and hence $\mu v_x . f \rhd v_k$. In other words, $\mu v_x.e \rhd v_k$ implies $\mu v_x. f \rhd v_k$. The reverse implication can be easily obtained via a symmetric proof.
	
	Now, assume that $\mu v_x.e \tr{a} s$. It is the case, when $e \tr{a} e'$ and $s = e'[\mu v_x.e / v_x]$. Since $e \sim^{(n+1)} f$, then there exists $f'$, such that $f \tr{a} f'$ and $e \sim^{(n)} f$. We can now use induction hypothesis and \Cref{lem:subst-stratified-bisim} and conclude that $e'[\mu v_x. e / v_x] \sim^{(n)} f' [\mu v. f / v_x]$. Moreover, we have that $\mu v.x f \tr{a} f'[\mu v_x.f/v_x]$. Hence, if $\mu v_x.e \tr{a} s$, then there exists $t$, such that $\mu v_x.f \tr{a} t$ and $s \sim^{(n)} t$. A reverse implication can be obtained via a symmetric proof. 
	
	Combing the above, allows us to conclude that $\mu v_x. e \sim^{(n+1)} \mu v_x . f$.
\end{proof}
\begin{corollary}\label{cor:rec_nexp}
	Let $e,f \in {\Expr}/{\sim}$. We have that
	$$
		\mathsf{bd}_{\overline\partial} (\mu v_x.e,\mu v_x.f)\leq \mathsf{bd}_{\overline\partial} (e,f)
	$$
\end{corollary}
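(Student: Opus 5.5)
This follows the pattern of \Cref{cor:seq_nexp}: we combine \Cref{lem:rec_startified_bisim} with the concrete characterisation of distance from \Cref{thm:concrete_distance}. The prechart $({\Expr}/{\sim},\overline{\partial})$ is locally finite, because each $\langle e\rangle_\partial$ is finite and the quotient map is a homomorphism. Hence \Cref{thm:concrete_distance} applies to any pair of its states. We split into two cases according to whether $\mathsf{bd}_{\overline\partial}(e,f)$ is zero or positive.

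\emph{Zero distance.} If $\mathsf{bd}_{\overline\partial}(e,f)=0$, then \Cref{thm:concrete_distance} gives $e\sim f$. Since $\sim$ is a congruence for $\mu$ (\Cref{lem:congruence}), we obtain $\mu v_x.e\sim\mu v_x.f$. By item \circlednum{3} of \Cref{thm:beh_dist} their distance is then $0$, and the inequality holds.

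\emph{Positive distance.} If $\mathsf{bd}_{\overline\partial}(e,f)>0$, then by \Cref{thm:concrete_distance} we have $\mathsf{bd}_{\overline\partial}(e,f)=2^{-n}$, where $n$ is the largest index with $e\sim^{(n)} f$. \Cref{lem:rec_startified_bisim} then gives $\mu v_x.e\sim^{(n)}\mu v_x.f$. \Cref{lem:bound_on_stratified_bisim} turns this into $\mathsf{bd}_{\overline\partial}(\mu v_x.e,\mu v_x.f)\le 2^{-n}=\mathsf{bd}_{\overline\partial}(e,f)$. This route also avoids a separate subcase for when $\mu v_x.e$ and $\mu v_x.f$ happen to be bisimilar.

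\emph{Main obstacle.} There is no real difficulty; the only point needing care is bookkeeping. The stratified relations must be taken on the single prechart ${\Expr}/{\sim}$, so that \Cref{lem:bound_on_stratified_bisim} applies to both pairs simultaneously. \Cref{lem:rec_startified_bisim} is stated on exactly this prechart, so this holds. Alternatively, one could avoid the case split entirely: by \Cref{lem:bound_on_stratified_bisim}, $\mathsf{bd}(x,y)=\inf\{2^{-k}\mid x\sim^{(k)}y\}$, and \Cref{lem:rec_startified_bisim} shows that the index set for $(\mu v_x.e,\mu v_x.f)$ contains the one for $(e,f)$.
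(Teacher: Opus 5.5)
Your proof is correct and is essentially the paper's argument: the paper proves this by repeating the proof of \Cref{cor:seq_nexp} (zero case via congruence, positive case via \Cref{thm:concrete_distance}) with \Cref{lem:rec_startified_bisim} in place of \Cref{lem:subst-stratified-bisim}, and you additionally spell out local finiteness and the appeal to \Cref{lem:bound_on_stratified_bisim}. One caveat concerns your closing aside only: \Cref{lem:bound_on_stratified_bisim} on its own gives just $\mathsf{bd}(x,y)\le\inf\{2^{-k}\mid x\sim^{(k)}y\}$, whereas that case-free variant needs the reverse inequality for the pair $(e,f)$, which requires nonzero distances to be exact powers of two and hence again \Cref{thm:concrete_distance}.
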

\begin{proof}
	Analogous proof to \Cref{cor:seq_nexp}, but utilising \Cref{lem:rec_startified_bisim} instead.
\end{proof}
\begin{lemma}\label{lem:prefixing_step}
	Let $e,f \in {\Expr}/{\sim}$. Then for all $n \in \N$, $a \in \Sigma$, we have that
	$$
	e \sim^{(n)} f \implies a.e \sim^{(n+1)} a.f 
	$$
\end{lemma}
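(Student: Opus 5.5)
We prove the statement by unfolding the definition of $\sim^{(n+1)}$ directly on the pair $a.e$, $a.f$. There is no induction on $n$, since the premise $e \sim^{(n)} f$ is exactly what the successor clause of the stratification needs at the level of the unique successors. The plan is to read off the operational behaviour of prefixed expressions from \Cref{def:operational_semantics}, carried over to the quotient ${\Expr}/{\sim}$ through \Cref{lem:quotient_chart}.

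The first step is the output condition. No rule of \Cref{def:operational_semantics} produces an output $\rhd v$ for an expression of the form $a.e$; the only rule with conclusion $a.e$ is the transition rule, and outputs arise only from variables, sums and $\mu$. So $a.e$ and $a.f$ both have empty output sets and condition \circlednum{1} holds trivially. One caveat: the prefix rule as printed, with premise $e \tr{a} e'$, looks like a typo. The intended rule, consistent with the chart operation $a.C$ of \Cref{sec:operations} where the new start state has a single $a$-edge to the old start, is the axiom $a.e \tr{a} e$. We use this reading and note it explicitly.

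The second step is the transition condition. Under this reading, the only transition of $a.e$ is $a.e \tr{a} e$, and likewise $a.f \tr{a} f$; there are no transitions on other letters. Suppose $a.e \tr{b} s$. Then $b=a$ and $s=e$, and we answer with $a.f \tr{a} f$, where $e \sim^{(n)} f$ holds by hypothesis. The symmetric direction is identical. Both clauses of the definition of $\sim^{(n+1)}$ are therefore satisfied, and $a.e \sim^{(n+1)} a.f$.

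The only step needing care is working with equivalence classes rather than raw expressions. The transitions of a class $[a.e]$ in the quotient prechart are the images of the transitions of any representative (\Cref{lem:quotient_chart}), and $\sim$ is a congruence for prefixing (\Cref{lem:congruence}). Hence $[a.e]$ has exactly one transition, $[a.e]\tr{a}[e]$, independently of the representative chosen. Because the quotient map is a homomorphism, the stratified relations can be computed either on expressions or on classes (cf.\ \Cref{lem:hom_stratification}). With that settled, the argument above goes through unchanged. This lemma is the ingredient that, combined with \Cref{thm:concrete_distance}, gives the halving of distances under prefixing, which is what the soundness of \textsf{(Pref)} requires.
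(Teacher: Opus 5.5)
Your proof is correct and matches the paper's argument: $a.e$ and $a.f$ have no outputs, their only transitions are $a.e \tr{a} e$ and $a.f \tr{a} f$, and the hypothesis $e \sim^{(n)} f$ then gives the successor clause directly, with no induction. Your two remarks go beyond what the paper says explicitly and are both accurate: the printed prefix rule should read as the axiom $a.e \tr{a} e$, and the class $[a.e]$ has the single transition $[a.e] \tr{a} [e]$ in the quotient prechart.
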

\begin{proof}
	$a.e$ does not output anything and so does $a.f$. Now, assume that $a.e \tr{a} e'$. Then, the only possibility is that $e'=e$. We can match that transition with an expression $a.f$ that performs an $a$-labelled transition to $f$. Since $e \sim^{(n)} f$, then $a.e \sim^{(n+1)} a.f$. The remaining condition works through a symmetric line of reasoning. 
\end{proof}
\begin{restatable}{corollary}{prefixingdiscounts}\label{cor:prefixing_discounts}
	Let $e,f \in {\Expr}/{\sim}$. Then for all $n \in \N$, $a \in \Sigma$, we have that
	$
		\mathsf{bd}_{\overline{\partial}}(a.e,a.f) \leq \frac{1}{2}\mathsf{bd}_{\overline{\partial}}(e,f)
	$
\end{restatable}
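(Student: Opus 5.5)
The corollary follows the same pattern as \Cref{cor:seq_nexp} and \Cref{cor:rec_nexp}. We combine the one-step lemma on stratified bisimilarity (\Cref{lem:prefixing_step}) with the concrete characterisation of behavioural distance from \Cref{thm:concrete_distance}; this characterisation applies because the prechart $({\Expr}/{\sim}, \overline{\partial})$ is locally finite. We split into two cases according to whether $\mathsf{bd}_{\overline{\partial}}(e,f)$ is zero.

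\emph{Case $\mathsf{bd}_{\overline{\partial}}(e,f)=0$.} By \Cref{thm:concrete_distance}, $e \sim f$. Since bisimilarity is a congruence (\Cref{lem:congruence}), it is preserved by prefixing, so $a.e \sim a.f$. Then \Cref{thm:beh_dist} gives $\mathsf{bd}_{\overline{\partial}}(a.e,a.f)=0$, and the inequality holds trivially.

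\emph{Case $\mathsf{bd}_{\overline{\partial}}(e,f)>0$.} By \Cref{thm:concrete_distance}, $\mathsf{bd}_{\overline{\partial}}(e,f)=2^{-n}$, where $n$ is the largest number with $e \sim^{(n)} f$. \Cref{lem:prefixing_step} then gives $a.e \sim^{(n+1)} a.f$. By \Cref{lem:bound_on_stratified_bisim}, this yields
\[
\mathsf{bd}_{\overline{\partial}}(a.e,a.f) \le 2^{-(n+1)} = \tfrac12\, 2^{-n} = \tfrac12\,\mathsf{bd}_{\overline{\partial}}(e,f).
\]

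One could instead argue directly from the fixpoint: $\mathsf{bd}$ is a fixpoint of $\Phi$, and $\beta(a.e)=\{(a,e)\}$, $\beta(a.f)=\{(a,f)\}$ are singletons. The Hausdorff distance of two singletons is just the distance between their elements, and the transition lifting of \Cref{def:edge} halves that distance, which even gives equality. Either route works; I would use the stratification route for uniformity with the earlier corollaries.

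The only point requiring care is checking that the quotient prechart is locally finite, so that \Cref{thm:concrete_distance} applies. This follows because each $\langle e\rangle_\partial$ is finite and the quotient map is a homomorphism.
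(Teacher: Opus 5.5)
Your proposal is correct and follows the paper's proof. Both split on whether $e \sim f$, using congruence in the bisimilar case. Otherwise both write $\mathsf{bd}_{\overline{\partial}}(e,f)=2^{-k}$ with $e \sim^{(k)} f$, lift this to $a.e \sim^{(k+1)} a.f$ via \Cref{lem:prefixing_step}, and convert back to the bound $2^{-(k+1)}$; the paper's text cites the corollary itself at that step where it means \Cref{lem:prefixing_step}, which you cite correctly. Your side remark also works: since $\overline{\partial}(a.e)=\{(a,e)\}$ and $\overline{\partial}(a.f)=\{(a,f)\}$ are singletons, the fixpoint equation alone gives equality, with no need for local finiteness or stratification.
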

\begin{proof}
	We employ the characterisation from~\Cref{thm:concrete_distance}. If $e \sim f$, then by \Cref{lem:congruence} we are done. Otherwise, we have that $\mathsf{bd}_{\overline\partial}(e,f)=2^{-k}$ and $e \sim^{(k)}_{\overline\partial} f$ for some $k \in \N$. By applying~\Cref{cor:prefixing_discounts}, we get that $a.e \sim^{(k+1)} a.f$ and hence $\mathsf{bd}_{\overline\partial}(e,f)\leq 2^{-(k+1)} = \frac{1}{2} 2^{-k} = \frac{1}{2}\mathsf{bd}_{\overline\partial}(e,f)$, as desired. 
\end{proof}
\begin{restatable}{lemma}{seqnonexpansive}\label{lem:seq_nonexpansive}
	Sequential composition in $\RegBeh$ is nonexpansive.
\end{restatable}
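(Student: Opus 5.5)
We need to show that for $f,h \colon m \to n$ and $g,i \colon n \to p$ in $\RegBeh$,
\[
d^{m,p}(f;g,\, h;i) \leq \max\{d^{m,n}(f,h),\, d^{n,p}(g,i)\}.
\]
The plan is to reduce this componentwise to \Cref{cor:seq_nexp}, which already packages the substitution argument at the level of single regular behaviours.

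First unfold the definitions. By definition of sequential composition, $(f;g)_k = f_k[\vec{g}/\vec{v}]$ and $(h;i)_k = h_k[\vec{i}/\vec{v}]$ for each $1 \leq k \leq m$, where $\vec{v}=(v_1,\dots,v_n)$. By definition of $d^{m,p}$ we then have
\[
d^{m,p}(f;g,\, h;i) = \max_{1\le k\le m} \mathsf{bd}\bigl(f_k[\vec{g}/\vec{v}],\, h_k[\vec{i}/\vec{v}]\bigr).
\]
The distance $\mathsf{bd}$ on $\Omega$ coincides with $\mathsf{bd}_{\overline{\partial}}$ on ${\Expr}/{\sim}$ by the remark identifying the two precharts. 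So it suffices to bound each component separately.

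For a fixed $k$, apply \Cref{cor:seq_nexp} with $e=f_k$, $f=h_k$, $g_j$ the components of $g$, $h_j$ the components of $i$, and substituted variables $(v_1,\dots,v_n)$. This gives
\[
\mathsf{bd}\bigl(f_k[\vec{g}/\vec{v}],\, h_k[\vec{i}/\vec{v}]\bigr) \leq \max\Bigl\{\mathsf{bd}(f_k,h_k),\ \max_{j\le n}\mathsf{bd}(g_j,i_j)\Bigr\}.
\]
The first term is at most $d^{m,n}(f,h)$ and the inner maximum is exactly $d^{n,p}(g,i)$. Taking the maximum over $k$ then yields the claim.

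Two degenerate cases need a separate word. If $m=0$, both sides of the composite are empty tuples, so their distance is $0$. If $n=0$, the substitution is empty; then $f_k=f_k[\,]$ and $\mathsf{bd}(f_k,h_k) \le d^{m,n}(f,h)$ holds directly.

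The main point needing care is that \Cref{cor:seq_nexp} must apply without any freshness or scoping side conditions on the variables. The operational rules of \Cref{rem:semantic-substitution} are stated up to bisimilarity and hold for arbitrary simultaneous substitution. The only other check is that the live variables of the results lie in $V_p$. This is immediate because every $g_j,i_j\in\Omega(p)$ and the live variables of $f_k,h_k$ lie in $V_n$, so every output of the composite comes from some $g_j$ or $i_j$. Hence the composites are well-typed elements of $\RegBeh(m,p)$.
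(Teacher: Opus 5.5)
Your proposal is correct and follows essentially the same proof as the paper. The paper also unfolds $d^{m,p}$ componentwise and applies \Cref{cor:seq_nexp} to each pair $f_k[\vec{g}/\vec{v}]$ and $h_k[\vec{i}/\vec{v}]$, then pulls the maximum out to obtain $\max\{d^{m,n}(f,h),\, d^{n,p}(g,i)\}$. Your remarks on the degenerate cases $m=0$ and $n=0$ and on well-typedness are harmless extras that the paper leaves implicit.
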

\begin{proof}
	Let $f,h \in \RegBeh(n,m)$ and $g,i \in \RegBeh(m,k)$. 
	\begin{align*}
		d^{n,k}(f ;g, h;i) &= \sup_{1 \leq j \leq n} \left\{ \mathsf{bd}_{\overline\partial} \left( \langle f_j[(g_1, \dots, g_m)/(v_1, \dots, v_m)],   h_j[(i_1, \dots, i_m)/(v_1, \dots, v_m)] \right)\right\}\\
		&\leq \sup_{1 \leq j \leq n} \left\{\max \left\{\mathsf{bd}_{\overline\partial} (f_j,h_j), \sup_{ 1 \leq l \leq m} \{\mathsf{bd}_{\overline\partial} (g_l, i_l)\}\right\}\right\} \tag{\Cref{cor:seq_nexp}} \\
		&= \max \{ \sup_{1 \leq j \leq n} \{\mathsf{bd}_{\overline\partial} (f_j,h_j)\}, \sup_{ 1 \leq l \leq m} \{\mathsf{bd}_{\overline\partial} (g_l, i_l)\}\}\\
		&= \max \{d^{n,m}(f,h), d^{m,k}(g,i)\}
	\end{align*}
\end{proof}
\begin{lemma}\label{lem:pairs_nonexpansive}
	Let $f,f' \colon m \to k$, $g,g' \colon n \to k$ be morphisms of $\RegBeh$. We have that $d^{m +n, k}(\langle f, g \rangle, \langle f', g' \rangle) \leq \max \{d^{m,k}(f,f'), d^{n,k}(g,g')\}$
\end{lemma}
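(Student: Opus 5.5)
The statement concerns pairing in $\RegBeh$, which is just concatenation of tuples, so the plan is a direct unfolding of the definition of $d$. No appeal to stratified bisimilarity is needed. By definition, $\langle f, g\rangle = (f_1,\dots,f_m,g_1,\dots,g_n)$ and $\langle f', g'\rangle = (f'_1,\dots,f'_m,g'_1,\dots,g'_n)$, both $(m+n)$-tuples of elements of $\Omega(k)$. Hence
\[
d^{m+n,k}(\langle f,g\rangle,\langle f',g'\rangle) = \sup_{1\leq i\leq m+n} \mathsf{bd}\bigl(\langle f,g\rangle_i, \langle f',g'\rangle_i\bigr).
\]

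I will split the index set $\{1,\dots,m+n\}$ into $\{1,\dots,m\}$ and $\{m+1,\dots,m+n\}$. On the first block the components are $f_i$ and $f'_i$. On the second block, index $m+j$ gives $g_j$ and $g'_j$. Since the supremum over a union of finite sets is the maximum of the suprema over each part, the displayed quantity equals
\[
\max\Bigl\{\sup_{1\leq i\leq m}\mathsf{bd}(f_i,f'_i),\ \sup_{1\leq j\leq n}\mathsf{bd}(g_j,g'_j)\Bigr\} = \max\{d^{m,k}(f,f'), d^{n,k}(g,g')\}.
\]
So the inequality in fact holds with equality.

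The only point needing care is the degenerate cases $m=0$ or $n=0$, where one block is empty. Here I use the convention stated with the definition of $d^{m,n}$: empty tuples are at distance $0$, i.e.\ the supremum over the empty set is $0$. Since all distances are nonnegative, $0$ is neutral for $\max$, and the identity above still holds. I expect no genuine obstacle beyond checking this convention. The same componentwise argument will later give nonexpansiveness of $\oplus$, since there the second block is only reindexed by a renaming of variables, which preserves $\mathsf{bd}$ because it is a bisimulation-preserving bijection on $\Omega$.
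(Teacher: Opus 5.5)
Your proposal is correct and follows the paper's proof: you unfold $\langle f,g\rangle$ as the concatenated tuple and split the componentwise supremum into the two blocks. As you note, this actually gives equality (the paper's final $\leq$ step is also an equality), and your handling of the empty-block case, using the convention that the supremum over no components is $0$, is a small addition the paper leaves implicit.
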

\begin{proof}
	\begin{align*}
		d^{m +n, k}(\langle f, g \rangle, \langle f', g' \rangle) &= d^{m +n, k}((f_1, \dots, f_m, g_1, \dots, g_n), (f'_1, \dots, f'_m, g'_1, \dots, g'_n)) \\
		&=\max \left\{d^{m,k}((f_1, \dots, f_m), (f'_1, \dots, f'_m)),d^{n,k}((g_1, \dots, g_m), (g'_1, \dots, g'_m)) \right\}\\
		&\leq \max \left\{d^{m,k}(f,f'),d^{n,k}(g,g') \right\}\\
	\end{align*}
\end{proof}
\begin{restatable}{lemma}{tensnonexpansive}\label{lem:tens_nonexpansive}
	The coproduct in $\RegBeh$ is nonexpansive.
\end{restatable}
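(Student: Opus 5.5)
We need to show that for $f,f'\colon k\to l$ and $g,g'\colon m\to n$ in $\RegBeh$,
\[
d^{k+m,l+n}(f\oplus g, f'\oplus g') \leq \max\{d^{k,l}(f,f'), d^{m,n}(g,g')\}.
\]
The plan is to reduce the statement to the two nonexpansiveness results already available: pairing (\Cref{lem:pairs_nonexpansive}) and substitution (\Cref{cor:seq_nexp}). Recall that
\[
f\oplus g = \langle f, g[(v_{l+1},\dots,v_{l+n})/(v_1,\dots,v_n)]\rangle .
\]
Equivalently, $f\oplus g = \langle f;\inl_{l,n},\, g;\inr_{l,n}\rangle$, since $\inl_{l,n}=(v_1,\dots,v_l)$ and $\inr_{l,n}=(v_{l+1},\dots,v_{l+n})$. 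I will use this coproduct description directly, because it shows that $\oplus$ is built from pairing and sequential composition with fixed morphisms.

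\textbf{Step 1.} Apply \Cref{lem:pairs_nonexpansive} to get
\[
d^{k+m,l+n}(f\oplus g, f'\oplus g') \leq \max\{d^{k,l+n}(f;\inl_{l,n}, f';\inl_{l,n}),\; d^{m,l+n}(g;\inr_{l,n}, g';\inr_{l,n})\}.
\]

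\textbf{Step 2.} Bound each term using \Cref{lem:seq_nonexpansive}, comparing $\inl_{l,n}$ with itself:
\[
d(f;\inl_{l,n}, f';\inl_{l,n}) \leq \max\{d(f,f'), d(\inl_{l,n},\inl_{l,n})\} = d(f,f').
\]
The last equality holds because $d(\inl_{l,n},\inl_{l,n})=0$ by reflexivity of the pseudometric. The same argument with $\inr_{l,n}$ gives $d(g;\inr_{l,n}, g';\inr_{l,n})\leq d(g,g')$. Combining these two bounds with Step 1 yields the claim.

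\textbf{Main obstacle.} The only subtle point is checking that sequential composition with $\inl_{l,n}$ and $\inr_{l,n}$ agrees exactly with the renaming in the concrete definition of $\oplus$. This follows from the definition of composition as simultaneous substitution. For $\inl_{l,n}$ there is no renaming at all. For $\inr_{l,n}$, the composite $g;\inr_{l,n}$ is $g[(v_{l+1},\dots,v_{l+n})/\vec v]$.

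A fully componentwise alternative avoids this identification. For each component, $f_i$ is untouched. For the $g$-components, renaming variables injectively is an isomorphism of precharts that maps each $g_j$ to its renamed copy. Such a renaming preserves each $\sim^{(p)}$ by \Cref{lem:subst-stratified-bisim}. Hence, by \Cref{thm:concrete_distance}, it preserves the distance $\mathsf{bd}(g_j,g'_j)$. Taking the supremum over all components then gives the same bound.
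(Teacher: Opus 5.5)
Your proof is correct. It reaches the result by a slightly different route from the paper, though both rest on the same fact: substitution does not increase behavioural distance.

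The paper works directly on tuples. It shifts the variables of each $g$-component past the codomain of $f$, applies \Cref{cor:seq_nexp} with identical substituents on both sides to obtain $\mathsf{bd}(g'_j,i'_j)\leq \mathsf{bd}(g_j,i_j)$, and then takes the maximum over components. This is exactly your ``fully componentwise alternative''. There, \Cref{cor:seq_nexp} is just the packaged form of \Cref{lem:subst-stratified-bisim} combined with \Cref{thm:concrete_distance}.

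Your main route instead uses $f\oplus g=\langle f;\inl_{l,n},\, g;\inr_{l,n}\rangle$. This is precisely how the paper defines the separated sum, as the mediating arrow of the coproduct square. You then reduce the claim to \Cref{lem:pairs_nonexpansive}, \Cref{lem:seq_nonexpansive} and $d(\inl_{l,n},\inl_{l,n})=0$. The check that $g;\inr_{l,n}$ coincides with the concrete renaming is done correctly.

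What your version buys is modularity. It is purely structural, and would go through in any category with coproducts whose hom-sets are pseudometric spaces with nonexpansive composition and copairing. The paper's version is a hands-on unfolding of the same computation.

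One small imprecision in your alternative: an injective renaming is not a prechart homomorphism in the paper's sense, since it changes the outputs. The work there is really done by \Cref{lem:subst-stratified-bisim}, which gives the inequality you need.
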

\begin{proof}
	Let $f,h \in \RegBeh(k,m)$ and $g,i \in \RegBeh(l,n)$. Given $j \in \N$, such that $1 \leq j \leq l$, we define
	$$
	g'_j = g_j[(v_{m + 1}, \dots, v_{m + n})/(v_1, \dots, v_n)]
	$$ 
	Similarly, we write 
	$$
	i'_j = i_j[(v_{m + 1}, \dots, v_{m + n})/(v_1, \dots, v_n)]
	$$
	Using \Cref{cor:seq_nexp} one can easily obtain that for all $j \in \N$, such that $1 \leq j \leq l$, we have that
	\begin{align*}
		\mathsf{bd}_{\overline\partial} (g'_j, i'_j) \leq \mathsf{bd}_{\overline\partial} (g_j, i_j)
	\end{align*}
	Using that fact, we can prove the following
	\begin{align*}
		d^{k + l, m + n}(f \oplus g, h \oplus i) &= d^{k + l, m + n}(\langle f_1, \dots, f_k, g'_1, \dots, g'_l\rangle,\langle h_1, \dots, h_k, i'_1, \dots, i'_l\rangle)\\
		&=\max \left\{\sup_{1 \leq p \leq k}\{\mathsf{bd}_{\overline\partial} (f_p, h_p)\}, \sup_{1 \leq j \leq l} \{\mathsf{bd}_{\overline\partial} (g'_j, i'_j)\} \right\}\\
		&\leq \max \left\{\sup_{1 \leq p \leq k}\{\mathsf{bd}_{\overline\partial} (f_p, h_p)\}, \sup_{1 \leq j \leq l} \{\mathsf{bd}_{\overline\partial} (g_j, i_j)\} \right\}\\
		&= \max \{d^{k,m}(f,g), d^{l,n}(h,i)\}
	\end{align*}
\end{proof}

\begin{restatable}{lemma}{daggernexp}\label{lem:dagger_nonexpansive}
	The dagger on $\RegBeh$ is nonexpansive.
\end{restatable}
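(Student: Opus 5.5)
We reduce nonexpansiveness of the dagger to the scalar case and then to \Cref{cor:rec_nexp}, following the pattern of \Cref{lem:seq_nonexpansive} and \Cref{lem:tens_nonexpansive}. The goal is
\[
d^{n,p}(f^\dagger,g^\dagger)\leq d^{n,p+n}(f,g)\qquad\text{for all } f,g\colon n\to p+n .
\]

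\textbf{Scalar case.} We first treat $n=1$. For $f,g\colon 1\to p+1$ we have $f^\dagger=\mu v_{p+1}.f$ and $g^\dagger=\mu v_{p+1}.g$ by \Cref{def:Conway-operator}. \Cref{cor:rec_nexp} then gives
\[
\mathsf{bd}(f^\dagger,g^\dagger)\leq \mathsf{bd}(f,g)=d^{1,p+1}(f,g),
\]
which settles this case.

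\textbf{General case.} We proceed by induction on $n$, using the inductive definition of the dagger via the scalar pairing identity (\Cref{rem:defining_dagger}). The case $n=0$ is trivial, since both daggers are the empty tuple. For $n+1$, write the arguments as $\langle f,g\rangle$ and $\langle f',g'\rangle$, with $f,f'\colon n\to p+1+n$ and $g,g'\colon 1\to p+1+n$. Set $\delta=\max\{d(f,f'),d(g,g')\}$, which is the distance between the two pairs by \Cref{lem:pairs_nonexpansive}.

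\begin{enumerate}
\item By the induction hypothesis, $d(f^\dagger,f'^\dagger)\leq d(f,f')\leq\delta$.
\item Put $h=g;\langle\id_{p+1},f^\dagger\rangle$ and $h'=g';\langle\id_{p+1},f'^\dagger\rangle$. Pairing with the identity is nonexpansive (\Cref{lem:pairs_nonexpansive}, together with $d(\id,\id)=0$), and so is sequential composition (\Cref{lem:seq_nonexpansive}). Hence $d(h,h')\leq\max\{d(g,g'),d(f^\dagger,f'^\dagger)\}\leq\delta$.
\item The scalar case gives $d(h^\dagger,h'^\dagger)\leq\delta$.
\item The first component is $f^\dagger;\langle\id_p,h^\dagger\rangle$. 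Applying nonexpansiveness of pairing and composition once more, its distance to the primed counterpart is at most $\delta$.
\item Finally, $\langle f,g\rangle^\dagger=\langle f^\dagger;\langle\id_p,h^\dagger\rangle,\,h^\dagger\rangle$. Taking the maximum over both components via \Cref{lem:pairs_nonexpansive} yields $d(\langle f,g\rangle^\dagger,\langle f',g'\rangle^\dagger)\leq\delta$.
\end{enumerate}

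\textbf{Main obstacle.} The delicate point is bookkeeping in the induction. The arities must match the scalar pairing identity exactly: in particular $f$ has codomain $p+1+n$, so the induction hypothesis is applied with parameter $p+1$. This requires stating the induction uniformly over all parameters $p$, which we do.

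A remaining concern is that \Cref{cor:rec_nexp} is stated on $\Expr/{\sim}$, whereas here elements of $\Omega(p+1)$ may be arbitrary representatives. By the remark identifying the two presentations and \Cref{lem:congruence}, this causes no difficulty.
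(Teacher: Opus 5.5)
Your proposal is correct and follows essentially the same route as the paper: induction on $n$, uniform in the parameter $p$, with the scalar case discharged by \Cref{cor:rec_nexp}, and the inductive step obtained from the scalar pairing identity together with nonexpansiveness of sequential composition and pairing. One tiny nit: the equality $d(\langle f,g\rangle,\langle f',g'\rangle)=\max\{d(f,f'),d(g,g')\}$ comes directly from the definition of the tuple distance as a componentwise supremum, whereas \Cref{lem:pairs_nonexpansive} as stated only gives the inequality $\leq$.
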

\begin{proof}
	Let $f,g \colon n \to p + n$ be morphisms in $\mathsf{RefBeh}$. We proceed by induction on $n$. If $n = 0$, then $f^\dagger = f = 0_p = g = g^\dagger$ and hence $d^{0, p}(f^\dagger, g^\dagger) \leq d^{0, p}(f, g)$.
	
	If $n = 1$, then we have the following
	\begin{align*}
		d^{1,p}(f^\dagger, g^\dagger) &= \mathsf{bd}_{\overline\partial} (\mu v_{p+1}.f, \mu v_{p+1}. g)\leq \mathsf{bd}_{\overline\partial} (f, g) \tag{\Cref{cor:rec_nexp}} 
	\end{align*}
	
	Let $n = n' + 1$. Recall, that we can represent $f$ and $g$ in the following way
	\begin{gather*}
		f = \langle f_1, \dots, f_{n'}, f_{n' + 1} \rangle = \langle \langle f_1, \dots, f_{n'} \rangle , f_{n' + 1}\rangle\qquad
		g = \langle g_1, \dots, g_{n'}, f_{g' + 1} \rangle = \langle \langle g_1, \dots, g_{n'} \rangle , g_{n' + 1}\rangle\qquad
	\end{gather*}
	We can apply the induction hypothesis and obtain the following:
	$$
	d^{n', p + 1}(\langle f_1, \dots, f_{n'}\rangle^\dagger, \langle g_1, \dots, g_{n'}\rangle^\dagger ) \leq d^{n', n' + p + 1}(\langle f_1, \dots, f_{n'}\rangle, \langle g_1, \dots, g_{n'}\rangle ) 
	$$
	Using it, one can show that:
	\begin{align*}
	d^{p + 1 + n' , p + 1}\left(\langle\id_{p + 1},\langle f_1, \dots, f_{n'}\rangle^\dagger \rangle,\langle\id_{p + 1},\langle g_1, \dots, g_{n'}\rangle^\dagger\rangle \right) &=d^{n', p + 1}(\langle f_1, \dots, f_{n'}\rangle^\dagger, \langle g_1, \dots, g_{n'}\rangle^\dagger)\\
	&\leq  d^{n', p + 1 + n'}(\langle f_1, \dots, f_{n'}\rangle, \langle g_1, \dots, g_{n'}\rangle ) 
	\end{align*}
	For the sake of simplicity, let
	$
	k = f_{n' + 1};\langle\id_{p + 1} ,\langle f_1, \dots, f_{n'}\rangle^\dagger\rangle
	$ and $l = g_{n' + 1};\langle\id_{p + 1} ,\langle g_1, \dots, g_{n'}\rangle^\dagger\rangle$.
	Using \Cref{cor:seq_nexp} we know that: 
	\begin{align*}
	d^{1,p+1}(k,l)&\leq \max\left\{d^{1,p + 1}(f_{n' + 1},g_{n' + 1}), 	d^{p + n', p + 1}\left(\langle\id_{p + 1} ,\langle f_1, \dots, f_{n'}\rangle^\dagger\rangle,\langle\id_{p + 1} ,\langle g_1, \dots, g_{n'}\rangle^\dagger \rangle \right) \right\}\\
	&\leq \max \{d^{1,p + 1}(f_{n' + 1},g_{n' + 1}), d^{n', p + 1 + n'}(\langle f_1, \dots, f_{n'}\rangle, \langle g_1, \dots, g_{n'}\rangle ) \} \\
	&\leq d^{n' + 1, p + 1 + n'}(f,g)
	\end{align*}
	Because of the above, we can use \Cref{cor:rec_nexp} and obtain:
	\begin{align*}
		d^{1,p}(k^\dagger, l^\dagger) \leq 	d^{1,p + 1}(k,l) \leq d^{n', p + 1 + n'}(f,g)
	\end{align*}
	Recall the \textbf{scalar pairing identity}, which states that:
	\begin{gather*}
	f^\dagger = \langle\langle f_1, \dots, f_{n'}\rangle^\dagger ; \langle \id_p ,k^\dagger\rangle  \rangle, k^\dagger\rangle \qquad \text{and} \qquad g^\dagger = \langle\langle g_1, \dots, g_{n'}\rangle^\dagger ; \langle \id_p, l^\dagger \rangle  \rangle, l^\dagger\rangle
	\end{gather*}
	
	We have that
	\begin{align*}
		d^{n' + 1, p}(f^\dagger, g^\dagger) &\leq \max \{d^{n', p + 1}(\langle f_1, \dots, f_{n'}\rangle^\dagger; \langle \id_p , k^\dagger\rangle, \langle g_1, \dots, g_{n'}\rangle; \langle \id_p , l^\dagger\rangle), d^{1,p}(k^\dagger, l^\dagger)\}\\
		&\leq \max \{d^{n', 1 + p}(\langle f_1, \dots, f_{n'}\rangle^\dagger, \langle g_1, \dots, g_{n'}\rangle^\dagger), d^{p + 1, p}(\langle id_p , k^\dagger\rangle, \langle id_p ,l^\dagger\rangle),  d^{1,p}(k^\dagger, l^\dagger) \tag{\Cref{cor:seq_nexp}}\\
		&\leq \max \{d^{n', p + 1}(\langle f_1, \dots, f_{n'}\rangle^\dagger, \langle g_1, \dots, g_{n'}\rangle^\dagger), d^{1,p}(k^\dagger, l^\dagger)\\
		&\leq  \max \{^{n', p + 1 + n'}(\langle f_1, \dots, f_{n'}\rangle, \langle g_1, \dots, g_{n'}\rangle ), d^{1,p}(k^\dagger, l^\dagger)\} \tag{Induction hypothesis}\\
		&\leq \max \{^{n', p + 1 + n'}(\langle f_1, \dots, f_{n'}\rangle, \langle g_1, \dots, g_{n'}\rangle ),  d^{n' + 1, p + 1 + n'}(f,g)\} \\
		&=  d^{n' + 1, p + 1 + n'}(f,g)
	\end{align*}
	which completes the proof.
\end{proof}
\begin{corollary}\label{cor:trace_nonexpansive}
	The trace on $\RegBeh$ is nonexpansive.
\end{corollary}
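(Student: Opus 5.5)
The plan is to reduce nonexpansiveness of the trace to the nonexpansiveness results already established: sequential composition (\Cref{lem:seq_nonexpansive}), pairing (\Cref{lem:pairs_nonexpansive}) and the dagger (\Cref{lem:dagger_nonexpansive}). The link is the explicit trace--fixpoint correspondence recalled after \Cref{thm:trace}. For $g \colon p + n \to q + n$ it gives
\[
\Tr^{n}_{p,q}(g) = \inl_{p,n} ; \bigl(g; \langle\id_q, \inr_{q+p,n} \rangle\bigr)^\dagger .
\]
So the trace is a composite of four operations: precomposition with a fixed map, postcomposition with a fixed map, dagger, and precomposition with another fixed map.

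Let $g, g' \colon p+n \to q+n$. I would chain the following inequalities.
\begin{enumerate}
\item By \Cref{lem:seq_nonexpansive}, applied with the second argument fixed (its distance is $d(h,h)=0$ by reflexivity), we get $d\bigl(g;\langle \id_q,\inr\rangle,\, g';\langle\id_q,\inr\rangle\bigr) \leq \max\{d(g,g'),0\} = d(g,g')$.
\item By \Cref{lem:dagger_nonexpansive}, applying the dagger does not increase this distance.
\item By \Cref{lem:seq_nonexpansive} again, precomposing with the fixed map $\inl_{p,n}$ does not increase it either, since $d(\inl,\inl)=0$.
\end{enumerate}
Combining the three steps gives $d^{p,q}(\Tr(g),\Tr(g')) \leq d^{p+n,q+n}(g,g')$.

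Some bookkeeping is needed along the way. The map $\langle\id_q,\inr_{q+p,n}\rangle$ must be typed so that $g;\langle\id_q,\inr\rangle$ has the shape $k \to r + k$ that the dagger requires. Here $k$ should be $p+n$, so the domain of $g$ is fed back as the iteration variable. I would check that the dagger being used is the multi-variable one of \Cref{rem:defining_dagger}, extended via the scalar pairing identity, because that is the operator \Cref{lem:dagger_nonexpansive} is about.

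I expect the main (mild) obstacle to be exactly this type/index matching in the Hasegawa--Haghverdi formula. A second point of care is confirming that the formula is a genuine identity in $\RegBeh$ rather than merely defining some trace; this is supplied by \Cref{thm:trace} together with \Cref{lem:regbehconway}. No new semantic estimate on bisimulation approximants should be needed.
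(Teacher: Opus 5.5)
Your proposal is correct and matches the paper's argument. The paper proves the corollary as an immediate consequence of the trace--dagger correspondence (\Cref{thm:trace}) together with the nonexpansiveness of sequential composition (\Cref{lem:seq_nonexpansive}) and of the dagger (\Cref{lem:dagger_nonexpansive}), which is exactly the chain you spell out; the pairing lemma you list is not actually used in that chain.
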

\begin{proof}
	Immediate consequence of \Cref{thm:trace}, \Cref{lem:seq_nonexpansive} and \Cref{lem:dagger_nonexpansive}.
\end{proof}
\propnexp
\begin{proof}
	Follows from \Cref{lem:seq_nonexpansive}, \Cref{lem:pairs_nonexpansive},\Cref{lem:tens_nonexpansive}, \Cref{lem:dagger_nonexpansive}, \Cref{cor:trace_nonexpansive}.
\end{proof}

Every homset $\Int{\RegBeh}((A^+, A^-),(B^+,B^-))$ of $\Int{\RegBeh}$ can be equipped with a pseudometric space $d^{A^+ + B^-, A^- + B^+}$ associated with the homset $\RegBeh(A^+ + B^-, A^- + B^+)$ of $\RegBeh$. 
\begin{corollary}\label{cor:int_isometry}
	The fully faithful functor $N \colon \RegBeh \to \Int\RegBeh$ is an isometry on homsets.
\end{corollary}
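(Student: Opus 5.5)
Since $N$ sends $f\colon m\to n$ in $\RegBeh$ to the same underlying morphism $f\colon m+0\to 0+n$ viewed as an element of $\Int{\RegBeh}((m,0),(n,0))$, the plan is to compare the two distance functions directly. The pseudometric on this homset of $\Int{\RegBeh}$ is, by definition, $d^{m+0,\,0+n}$, the pseudometric on the homset $\RegBeh(m+0,0+n)$. Because $\RegBeh$ is a strict monoidal category with unit $0$ (established above), we have $m+0=m$ and $0+n=n$ on the nose. It therefore suffices to check that the underlying $\RegBeh$-morphism of $N(f)$ is literally the tuple $f$, with no reordering of its entries.

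The first step unfolds the action of $N$ on morphisms as given by \cite[Proposition~5.1]{Joyal:1996:Traced}: a morphism $f\colon A\to B$ is sent to $f\otimes\id_I$, possibly composed with the symmetries $\sigma_{A,I}$ and $\sigma_{I,B}$ that arise when regrouping $A^+\otimes B^-$ and $A^-\otimes B^+$. With $I=0$, each such symmetry $\sigma_{m,0}$ or $\sigma_{0,n}$ is an identity tuple $\id_m$ or $\id_n$. In $\RegBeh$, $f\oplus 0 = f$ by the strict right unitor computation above. Precomposing or postcomposing with identities changes nothing by \Cref{lem:comp_associative}. Hence the underlying tuple of $N(f)$ is exactly $(f_1,\dots,f_m)$.

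Once this is established, for $f,g\colon m\to n$ we get
\[
d^{(m,0),(n,0)}(N(f),N(g)) = d^{m,n}(f,g) = \sup_{1\le i\le m}\mathsf{bd}(f_i,g_i),
\]
which is the claimed isometry on homsets. Full faithfulness is already provided by \Cref{thm:trace_embeds_int}, so nothing further is needed there.

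The only step needing care is the bookkeeping in the first step: one must confirm that the definition of $\Int{\RegBeh}$ used here (objects as pairs, morphisms $A^+\otimes B^-\to A^-\otimes B^+$) assigns $N(f)$ precisely the morphism $m\to n$ and not a permuted variant. Even if a nontrivial permutation of inputs or outputs did appear, the argument would survive. Permuting the tuple entries preserves the supremum defining $d^{m,n}$. Renaming the output variables bijectively preserves $\mathsf{bd}$, by \Cref{cor:seq_nexp} applied in both directions. So either way the distance is preserved, and the statement follows.
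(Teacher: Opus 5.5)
Your proposal is correct and takes the same approach as the paper. The paper's proof just observes that, by the definition of $N$ and of the pseudometric $d^{A^+ + B^-, A^- + B^+}$ on $\Int{\RegBeh}$-homsets, $d^{(m,0),(n,0)}(N(f),N(g)) = d^{m,n}(f,g)$ holds immediately. Your extra bookkeeping (the symmetries $\sigma_{m,0}$, $\sigma_{0,n}$ are identities, $f\oplus\id_0=f$, and the fallback that permutations and bijective renamings preserve distance) spells out what the paper takes as immediate.
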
\label{cor:embedding_iso}
\begin{proof}
	Let $f,g \in \RegBeh(m,n)$. From, the definition of $N$, we immediately have that
	$$
		d^{N(m), N(n)}(N(f),(N(g))=d^{m,n}(f,g)
	$$ 
\end{proof}
\propnexpint
\begin{proof}
	Immediate consequence of \Cref{cor:trace_nonexpansive}, \Cref{lem:seq_nonexpansive}, \Cref{lem:tens_nonexpansive} and \Cref{cor:int_isometry}.
\end{proof}
\section{Soundness}
\soundnessequality*
\begin{proof}
	We verify that all equations defining $\Syn$ are satisfied. When dealing with left-to-right diagrams, we will make use of the fact that $\RegBeh$ fully faithfully embeds into $\Int{\RegBeh}$ (\Cref{thm:trace_embeds_int}) and hence it suffices to verify the axioms in $\RegBeh$, rather then in their completion to $\Int{\RegBeh}$. \textsf{(A1)} is satisfied because of the yanking property of trace operation defined on $\RegBeh$, while \textsf{(A2)} is its dual in $\Int{\RegBeh}$ and can be verified similarly. \textsf{(B1)}, \textsf{(B2)} and \textsf{(B3)} are satisfied because $+$ defined on ${\Expr}/{\equiv}$ is a commutative monoid with $0$ being its identity. Similarly, \textsf{(B4)}, \textsf{(B5)} and \textsf{(B6)} are satisfied because of the universal property of coproduct on $\RegBeh$ and $\nabla_1$ being the codiagonal morphism. For \textsf{(B8)} we rely on the fact that $\nabla_1 ; \langle v_1 + v_2 \rangle = \langle v_1, v_1\rangle ; \langle v_1 + v_2 \rangle = \langle v_1 + v_2, v_1 + v_2 \rangle$. \textsf{(B8)} holds because $\nabla_1 ; \langle  0\rangle= \langle v_1, v_1\rangle ; \langle 0 \rangle = \langle 0, 0 \rangle $. 
	\textsf{(B9)} is satisfied because $+$ is idempotent. Finally, \textsf{(B10)} corresponds to taking the dagger of $\langle v_1 + v_2 \rangle$ and captures the identity $\mu v_2.(v_1 + v_2) = v_1$ of Milner's Algebra of Regular Behaviours. Finally, \textsf{(C1)} holds, because $\nabla_1 ; \langle a . v_1 \rangle = \langle v_1, v_1 \rangle ; \langle a. v_1 \rangle = \langle a. v_1, a.v_1\rangle$.
\end{proof}
	\begin{lemma}\label{lem:soundness_sublemma}
		All the inference rules defining the distance on $\Syn$ are satisfied in $\Int{\RegBeh}$.
	\end{lemma}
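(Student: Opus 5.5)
The plan is to go through the rules of Fig.~\ref{fig:quantitative-axioms} one at a time. For each rule we assume its premises are valid, meaning the distance in $\Int{\RegBeh}$ between the denotations is bounded as stated, and we show the conclusion is valid. Soundness of the whole system (\Cref{thm:soundness}) then follows by induction on derivations.

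The structural rules come directly from the pseudometric structure on homsets.
\begin{itemize}
\item \textsf{(Refl)}: by the equational soundness lemma, $f=g$ gives $\sem{f}=\sem{g}$, so their distance is $0$.
\item \textsf{(Sym)} and \textsf{(Triang)}: these are symmetry and the triangle inequality of $d^{\sem{v},\sem{w}}$. It is a pseudometric because it is a pointwise maximum of the pseudometric $\mathsf{bd}$.
\item \textsf{(Max)}: this is monotonicity of $\leq$, combined with symmetry.
\item \textsf{(Top)}: this holds because all our pseudometrics are $1$-bounded.
\item \textsf{(Cont)}: if $d\le\varepsilon'$ for every $\varepsilon'>\varepsilon$, then $d\le\varepsilon$, since $\varepsilon=\inf\{\varepsilon'\}$ over the rationals above it.
\end{itemize}

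Rules \textsf{(Seq)} and \textsf{(Tens)} follow from \Cref{cor:sem_enriched}. There, sequential and parallel composition in $\Int{\RegBeh}$ are nonexpansive with respect to the max-product metric, which gives exactly the bound $\max\{\varepsilon_1,\varepsilon_2\}$. Note that $\sem{-}$ is a strict symmetric monoidal functor, so $\sem{f;g}=\sem{f};\sem{g}$, and similarly for $\oplus$.

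The two domain-specific rules need a short semantic computation.

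For \textsf{(Pref)}, both diagrams have type $\objr^m\to\objr^n$. By \Cref{thm:trace_embeds_int} and the local isometry part of \Cref{cor:sem_enriched}, we may compute in $\RegBeh$. The prefixed diagram denotes the tuple $(a_i.f_i)_i$, where $\sem{f}=(f_1,\dots,f_m)$. The reason is that $N(a.v_1)$ composed with a tuple performs substitution, and $(a.v_1)[f_i/v_1]=a.f_i$; here we check that $\oplus$ of the prefix generators yields $(a_1.v_1,\dots,a_m.v_m)$. Then \Cref{cor:prefixing_discounts} gives $\mathsf{bd}(a_i.f_i,a_i.g_i)\le\frac12\mathsf{bd}(f_i,g_i)\le\varepsilon/2$, and we take the supremum over $i$.

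For \textsf{(Codel)}, the left side is $\Bunit^n$. Again working in $\RegBeh$ via $N$, its denotation is the empty tuple in $\RegBeh(0,n)$. The right side also has type $0\to n$ in $\RegBeh$, so it too is the unique empty tuple $0_n$, because $0$ is initial. Both homsets have distance $0$ between their unique elements.

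The main obstacle I expect is making the type bookkeeping precise in \textsf{(Pref)} and \textsf{(Codel)}. We must confirm that the relevant diagrams really are images under $N$ of directional maps, so that the metric computed in $\Int{\RegBeh}$ agrees with the one in $\RegBeh$ (\Cref{cor:sem_enriched}). We must also confirm that the composite of the prefix block with $\sem{f}$ unfolds to prefixing of each component. I would handle this by induction on $m$, using the definition of $\oplus$ in $\RegBeh$ and associativity of substitution (\Cref{lem:subst_lemma}).
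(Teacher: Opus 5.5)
Your proposal is correct and follows the paper's proof essentially step for step. The structural rules come from the $1$-bounded pseudometric structure on the homsets of $\Int{\RegBeh}$, \textsf{(Seq)} and \textsf{(Tens)} from the nonexpansiveness in \Cref{cor:sem_enriched}, \textsf{(Pref)} from \Cref{cor:prefixing_discounts} transported along the isometric embedding $N$, and \textsf{(Codel)} from initiality of $0$ in $\RegBeh$. Your extra details (equational soundness for \textsf{(Refl)}, symmetry for the swapped conclusion of \textsf{(Max)}, and the computation $(a_i.v_i)_i;(f_i)_i=(a_i.f_i)_i$) only spell out what the paper leaves implicit, and the type bookkeeping you flag is immediate because $\Int{\RegBeh}((m,0),(n,0))$ is literally $\RegBeh(m,n)$.
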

	\begin{proof}
		For most of the rules, the proof is straightforward. The soundness of \textsf{(Top)} follows from the fact that the distance on morphisms of $\Int{\RegBeh}$ is $1$-bounded, while \textsf{(Max)} captures the transitivity of partial order on the rational numbers. \textsf{(Refl)}, \textsf{(Sym)} and \textsf{(Triang)} are satisfied because the distance function on each hom-set of $\Int{\RegBeh}$ is a pseudometric space. \textsf{(Cont)} captures the Cauchy completeness of reals, while \textsf{(Seq)} and \textsf{(Tens)} are immediate consquence of \Cref{cor:sem_enriched}, stating that $\Int{\RegBeh}$ is $\PMet$-enriched symmetric monoidal category. For the remaining two rules, we will make use of the fact that the fully faithful embedding $N \colon \RegBeh \to \Int{\RegBeh}$ is an isometry (\Cref{cor:embedding_iso}), hence for the left-to-right diagrams it suffices to check the rules in $\RegBeh$. The soundness of \textsf{(Pref)} is an immediate consequence of \Cref{cor:prefixing_discounts}. Finally, \textsf{(Codel)} follows from the uniqueness of maps from the initial object in $\RegBeh$.
		\end{proof}
	\soundness*
	\begin{proof}
	Induction on the length of derivation and the usage of \Cref{lem:soundness_sublemma}.
	\end{proof}
\section{Completeness}
\begin{lemma}[Trace canonical form]\label{lem:traceform}
For any diagram $d\from \objr^m\to\objr^n$, we can always find a relation-diagram $c\from \objr^{\ell+m}\to\objr^{\ell+n}$ such that
\begin{equation*}
\dbox{d}{m}{n} \quad = \quad \traceaction{c}{m}{n}{\ell}{x}
\end{equation*}
where $\scalar{x}^{\!\!\ell}$ denotes a vertical composite of $\ell$-many $\scalar{a}$ generators.
\end{lemma}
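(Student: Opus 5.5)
We prove the trace canonical form by structural induction on $d$, viewed as a composite of generators, identities and symmetries, after first using \Cref{lem:compact} and the compact closed axioms \textsf{A1}--\textsf{A2} to dispose of the cups and caps. Every occurrence of $\scalar{a}$ in $d$ is pulled out of the diagram: we cut its wire on both sides and reroute its endpoints through a fresh feedback loop. Concretely, the left endpoint is fed to a new output of the remaining diagram and the right endpoint to a new input. After this, all $\ell$ letter boxes sit in a single vertical stack $\scalar{x}^{\!\!\ell}$ on the feedback wires of a trace. What is left, $c\from \objr^{\ell+m}\to\objr^{\ell+n}$, contains only $\Bcomult$, $\Bcounit$, $\Bmult$, $\Bunit$, cups/caps and symmetries. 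That this rerouting is a valid equality is just naturality of the trace (sliding) together with yanking, all of which follow from \textsf{A1}--\textsf{A2} and the SMC axioms.

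The next step is to show that $c$, a loop-containing diagram built from the bimonoid generators and the compact structure, equals a relation-diagram, i.e.\ a block of $\Bcomult,\Bcounit$ followed by a block of $\Bmult,\Bunit$. We would do this in two stages.

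\textbf{Bending wires.} Using \textsf{A1}--\textsf{A2} (and \Cref{lem:compact}), we rewrite $c$ as a trace $\Tr^k(c')$ of a left-to-right diagram $c'$ built only from the four bimonoid generators.

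\textbf{Normalising $c'$.} By the bimonoid laws \textsf{B1}--\textsf{B9} together with idempotence \textsf{B10}, any such left-to-right diagram factors as comonoid block followed by monoid block. This is the standard normal form for the prop of (idempotent) bimonoids, i.e.\ of relations.

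It then remains to eliminate the internal loops of $\Tr^k(c')$. A relation-diagram with a feedback wire is determined by a Boolean matrix. Tracing out one wire $u$ connects every input reaching $u$ to every output reachable from $u$, while the self-loop from $u$ to $u$ is removed by \textsf{B11}. The resulting diagram is again a relation-diagram, by \textsf{B7}, \textsf{B10} and the normal form. Iterating over the $k$ traced wires yields a relation-diagram equal to $c$, which is exactly the shape the statement requires.

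The main obstacle is this last loop-elimination step: showing that a single traced wire around a relation-diagram can be removed using only the equational axioms. Our first attempt would decompose the relation-diagram around the traced wire $u$ into four parts, namely inputs$\to$outputs, inputs$\to u$, $u\to$outputs, and $u\to u$. We would then use \textsf{B7} to separate the $u\to u$ self-loop, which is a copy-then-merge feedback, and collapse it with \textsf{B11}. After that the feedback is a plain wire, which yanking (\textsf{A1}) straightens, and the remaining composite is renormalised with \textsf{B10}.
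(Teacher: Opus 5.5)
Your argument is sound in outline, but it does not follow the paper's route. The paper's proof is a one-line pointer to the corresponding lemma of Piedeleu--Zanasi, with the remark that only the SMC axioms are used. That is essentially your first paragraph: sliding every letter box onto a feedback wire.

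You go further and turn the letter-free core into a relation-diagram in the block sense defined here. You bend it into $\Tr^k(c')$ with $c'$ left-to-right, normalise $c'$, and discharge the traced wires one at a time (\textsf{B11} for a self-loop, yanking otherwise, \textsf{B7} and \textsf{B10} to renormalise). This second half is genuinely required by the statement as phrased. Rewriting with the SMC axioms and \textsf{A1}--\textsf{A2} preserves which generators occur and how their ports are wired. So a letter-free diagram with an unguarded cycle, such as the left-hand side of \textsf{B11}, can only be matched with $\ell=0$, and it cannot reach acyclic block form without \textsf{B11}. A purely structural argument therefore yields at best a letter-free $c$ that may still contain cups, caps and unguarded loops. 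Your proposal supplies the missing normalisation.

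One step needs tightening. The textbook normal form for (idempotent) bimonoids also uses the fourth bimonoid law: co-delete followed by delete equals $\id_\epsilon$. This law is not among \textsf{B7}--\textsf{B9} (bialgebra, merge--delete, copy--co-delete), and it is not derivable. Every listed axiom has nonempty boundary, so all of them hold in the following prop: hom-sets with nonempty boundary are $\{0\}$, scalars are $\{0,1\}$, and composition and $\oplus$ are conjunction. In that prop the composite co-delete;delete is $0$.

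That composite does arise in your procedure. It appears when a traced wire is connected to no other port (with or without a self-loop), and when a letter-free closed circle is eliminated. When $\ell+m+n\geq 1$ you can absorb it, because $\id_{\objr}$ is derivably equal to $\id_{\objr}$ in parallel with it. To see this, expand $\id_{\objr}$ by \textsf{B5} and \textsf{B2}, apply \textsf{B7}, then simplify with \textsf{B9}, \textsf{B8}, \textsf{B5} and \textsf{B2}.

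For a letter-free $d\colon\objr^0\to\objr^0$ such as that composite itself, however, the lemma fails as stated. With $\ell=0$ one must have $c=\id_\epsilon$, which the model above refutes. With $\ell\geq 1$ letters appear. That case is refuted by the prop with every hom-set $\{0,1\}$ under conjunction, sending letters to $0$ and all other generators to $1$. So this degenerate case must be excluded, or the missing bimonoid law added to the axioms.
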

\begin{proof}
The proof is the same as~\cite[Lemma 4.11]{piedeleu2023finite} which only uses the axioms of SMCs.
\end{proof}
\begin{lemma}\label{lem:guarded-precompose}
For any guarded matrix-diagram $c\from \objr^\ell\to\objr^m$ and any two diagrams $d_1,d_2\from \objr^m\to\objr^n$ such that
\[\dbox{d_1}{m}{n}\disteq{\epsilon}\dbox{d_2}{m}{n}\]
we have
\[
\tikzset{x=1em, y=2.1ex}
\InputIfFileExists{c-d1.tikz}{}{\input{./tikz/c-d1.tikz}}
\tikzset{x=1em, y=1.5ex}
\disteq{\epsilon/2}
\tikzset{x=1em, y=2.1ex}
\InputIfFileExists{c-d2.tikz}{}{\input{./tikz/c-d2.tikz}}
\tikzset{x=1em, y=1.5ex}
\]
\end{lemma}
\begin{proof}
We rely on the definition of guarded matrix diagrams. Recall that since $c$ is guarded, we can factor it as
\[\dbox{c}{\ell}{m} = 
\tikzset{x=1em, y=2.1ex}
\InputIfFileExists{matrix-diagram-blocks.tikz}{}{\input{./tikz/matrix-diagram-blocks.tikz}}
\tikzset{x=1em, y=1.5ex}
\]
where $c_0$ is a diagram composed only of $\Bcomult,\Bcounit$, $\vec{a}$ is a vertical composite of $k$ $\scalar{a_i}$ generators, and $c_1$ is a diagram composed only of $\Bmult,\Bunit$. Hence,
\begin{equation*}
	\inferrule{
	\inferrule{
	\inferrule{
	\dbox{d_1}{m}{n}\disteq{\epsilon}\dbox{d_2}{m}{n}\\ \dbox{c_1}{k}{m}\disteq{0}\dbox{c_1}{k}{m}}{
\tikzset{x=1em, y=2.1ex}
\InputIfFileExists{c1-d1.tikz}{}{\input{./tikz/c1-d1.tikz}}
\tikzset{x=1em, y=1.5ex}
\disteq{\epsilon} 
\tikzset{x=1em, y=2.1ex}
\InputIfFileExists{c1-d2.tikz}{}{\input{./tikz/c1-d2.tikz}}
\tikzset{x=1em, y=1.5ex}
 \\ \vec{a}\in\Sigma^k}{\mathsf{(Seq)}}}
	{
\tikzset{x=1em, y=2.1ex}
\InputIfFileExists{prefix-c1-d1.tikz}{}{\input{./tikz/prefix-c1-d1.tikz}}
\tikzset{x=1em, y=1.5ex}
\disteq{\epsilon/2} 
\tikzset{x=1em, y=2.1ex}
\InputIfFileExists{prefix-c1-d2.tikz}{}{\input{./tikz/prefix-c1-d2.tikz}}
\tikzset{x=1em, y=1.5ex}
 \\ \dbox{c_0}{k}{m}\disteq{0}\dbox{c_0}{k}{m}}{\mathsf{(Pref)}}}{
\tikzset{x=1em, y=2.1ex}
\InputIfFileExists{c0-prefix-c1-d1.tikz}{}{\input{./tikz/c0-prefix-c1-d1.tikz}}
\tikzset{x=1em, y=1.5ex}
\disteq{\epsilon/2} 
\tikzset{x=1em, y=2.1ex}
\InputIfFileExists{c0-prefix-c1-d2.tikz}{}{\input{./tikz/c0-prefix-c1-d2.tikz}}
\tikzset{x=1em, y=1.5ex}
}{\mathsf{(Seq)}}
\end{equation*}
The last line is what we wanted to show.
\end{proof}
\representation*
\begin{proof}
The proof is the same as~\cite[Proposition 4.7]{piedeleu2023finite}. All axioms used in that proof are in our theory.
\end{proof}
\begin{lemma}\label{lem:matrix-cocopy}
For any matrix-diagram $d\from \objr^m\to \objr^n$, we have
\begin{align*}

\tikzset{x=1em, y=2.1ex}
\InputIfFileExists{global-merge.tikz}{}{\input{./tikz/global-merge.tikz}}
\tikzset{x=1em, y=1.5ex}
\quad = \quad 
\tikzset{x=1em, y=2.1ex}
\InputIfFileExists{global-merge-1.tikz}{}{\input{./tikz/global-merge-1.tikz}}
\tikzset{x=1em, y=1.5ex}

\end{align*}
\end{lemma}
\begin{proof}
See~\cite[Lemma 4.9]{piedeleu2023finite} (\textsf{co-cpy}). It is a simple  structural induction. For the base cases, all the generators of matrix-diagrams satisfy the equality of the lemma, by axioms \textsf{(B5)}, \textsf{(B7)},\textsf{(B8)}, and \textsf{(C1)}. The inductive cases are straightforward.
\end{proof}
\unrolling*
\begin{proof}
\begin{align*}

\tikzset{x=1em, y=2.1ex}
\InputIfFileExists{d-star.tikz}{}{\input{./tikz/d-star.tikz}}
\tikzset{x=1em, y=1.5ex}
 & \myeq{B7} 
\tikzset{x=1em, y=2.1ex}
\InputIfFileExists{d-star-1.tikz}{}{\input{./tikz/d-star-1.tikz}}
\tikzset{x=1em, y=1.5ex}
\myeq{A1} 
\tikzset{x=1em, y=2.1ex}
\InputIfFileExists{d-star-2.tikz}{}{\input{./tikz/d-star-2.tikz}}
\tikzset{x=1em, y=1.5ex}

\\
& \myeq{SMC} 
\tikzset{x=1em, y=2.1ex}
\InputIfFileExists{d-star-3.tikz}{}{\input{./tikz/d-star-3.tikz}}
\tikzset{x=1em, y=1.5ex}
\myeq{A1} 
\tikzset{x=1em, y=2.1ex}
\InputIfFileExists{d-star-4.tikz}{}{\input{./tikz/d-star-4.tikz}}
\tikzset{x=1em, y=1.5ex}

\\
&\myeq{Lemma~\ref{lem:matrix-cocopy}} 
\tikzset{x=1em, y=2.1ex}
\InputIfFileExists{d-star-5.tikz}{}{\input{./tikz/d-star-5.tikz}}
\tikzset{x=1em, y=1.5ex}
 \myeq{A1} 
\tikzset{x=1em, y=2.1ex}
\InputIfFileExists{d-star-unroll.tikz}{}{\input{./tikz/d-star-unroll.tikz}}
\tikzset{x=1em, y=1.5ex}

\end{align*}
\end{proof}
\globalcocopy*
\begin{proof}
First, by \Cref{thm:representation}, we can find a matrix-diagram, $a\from \objr^{\ell+m}\to \objr^{\ell+m}$ and a relation-diagram $o\from \objr^{\ell+m}\to \objr$ such that
\[\dbox{d}{m}{n} \; = \;
\tikzset{x=1em, y=2.1ex}
\InputIfFileExists{automata-rep.tikz}{}{\input{./tikz/automata-rep.tikz}}
\tikzset{x=1em, y=1.5ex}
\] 
We will show that 
\[
\tikzset{x=1em, y=2.1ex}
\InputIfFileExists{c-star-o-merge.tikz}{}{\input{./tikz/c-star-o-merge.tikz}}
\tikzset{x=1em, y=1.5ex}
\quad\disteq{0}\quad
\tikzset{x=1em, y=2.1ex}
\InputIfFileExists{merge-c-star-o.tikz}{}{\input{./tikz/merge-c-star-o.tikz}}
\tikzset{x=1em, y=1.5ex}
\]
from which the statement of the lemma immediately follows, by pre-composing with $
\tikzset{x=1em, y=2.1ex}
}
\tikzset{x=1em, y=1.5ex}
$.

Since $c\from m+\ell\to m+\ell$ is a guarded matrix-diagram, so is
\[
\tikzset{x=1em, y=2.1ex}
\InputIfFileExists{cxc.tikz}{}{\input{./tikz/cxc.tikz}}
\tikzset{x=1em, y=1.5ex}
\]
Therefore, by Lemma~\ref{lem:guarded-precompose}, we get
\[
\tikzset{x=1em, y=2.1ex}
\InputIfFileExists{prefix-c-star-o-merge.tikz}{}{\input{./tikz/prefix-c-star-o-merge.tikz}}
\tikzset{x=1em, y=1.5ex}
 \quad\disteq{1/2}\quad
\tikzset{x=1em, y=2.1ex}
\InputIfFileExists{prefix-merge-c-star-o.tikz}{}{\input{./tikz/prefix-merge-c-star-o.tikz}}
\tikzset{x=1em, y=1.5ex}
 \]
and thus, 
\begin{align*}

\tikzset{x=1em, y=2.1ex}
\InputIfFileExists{unroll-c-star-o-merge.tikz}{}{\input{./tikz/unroll-c-star-o-merge.tikz}}
\tikzset{x=1em, y=1.5ex}
\\
\qquad \disteq{1/2}\;\quad
\tikzset{x=1em, y=2.1ex}
\InputIfFileExists{unroll-merge-c-star-o.tikz}{}{\input{./tikz/unroll-merge-c-star-o.tikz}}
\tikzset{x=1em, y=1.5ex}

\end{align*}
We also have
\begin{equation*}
\begin{gathered}

\tikzset{x=1em, y=2.1ex}
\InputIfFileExists{unroll-merge-c-star-o.tikz}{}{\input{./tikz/unroll-merge-c-star-o.tikz}}
\tikzset{x=1em, y=1.5ex}
\quad\disteq{0}\quad
\tikzset{x=1em, y=2.1ex}
\InputIfFileExists{unroll-merge-c-star-o-1.tikz}{}{\input{./tikz/unroll-merge-c-star-o-1.tikz}}
\tikzset{x=1em, y=1.5ex}

\\
 \disteq{0} \;
\tikzset{x=1em, y=2.1ex}
\InputIfFileExists{unroll-merge-c-star-o-2.tikz}{}{\input{./tikz/unroll-merge-c-star-o-2.tikz}}
\tikzset{x=1em, y=1.5ex}

\; \disteq{0} \;
\tikzset{x=1em, y=2.1ex}
\InputIfFileExists{unroll-merge-c-star-o-3.tikz}{}{\input{./tikz/unroll-merge-c-star-o-3.tikz}}
\tikzset{x=1em, y=1.5ex}

\end{gathered}
\end{equation*}
where the last step uses Lemma~\ref{lem:matrix-cocopy} and \textsf{(Refl)} to merge the two occurrences of the matrix-diagram $c$. Resuming, we get
\begin{equation*}
\begin{gathered}

\tikzset{x=1em, y=2.1ex}
\InputIfFileExists{unroll-merge-c-star-o-3.tikz}{}{\input{./tikz/unroll-merge-c-star-o-3.tikz}}
\tikzset{x=1em, y=1.5ex}
\; \disteq{0} \;
\tikzset{x=1em, y=2.1ex}
\InputIfFileExists{unroll-merge-c-star-o-4.tikz}{}{\input{./tikz/unroll-merge-c-star-o-4.tikz}}
\tikzset{x=1em, y=1.5ex}
 \\
\; \disteq{0} \;
\tikzset{x=1em, y=2.1ex}
\InputIfFileExists{unroll-merge-c-star-o-5.tikz}{}{\input{./tikz/unroll-merge-c-star-o-5.tikz}}
\tikzset{x=1em, y=1.5ex}
\;
\disteq{0} \;
\tikzset{x=1em, y=2.1ex}
\InputIfFileExists{merge-c-star-o.tikz}{}{\input{./tikz/merge-c-star-o.tikz}}
\tikzset{x=1em, y=1.5ex}

\end{gathered}
\end{equation*}
We can show in the same way that
\[
\tikzset{x=1em, y=2.1ex}
\InputIfFileExists{unroll-c-star-o-merge.tikz}{}{\input{./tikz/unroll-c-star-o-merge.tikz}}
\tikzset{x=1em, y=1.5ex}
\;\disteq{0}\;  
\tikzset{x=1em, y=2.1ex}
\InputIfFileExists{c-star-o-merge.tikz}{}{\input{./tikz/c-star-o-merge.tikz}}
\tikzset{x=1em, y=1.5ex}
\]
Thus, we have shown that 
\[
\tikzset{x=1em, y=2.1ex}
\InputIfFileExists{c-star-o-merge.tikz}{}{\input{./tikz/c-star-o-merge.tikz}}
\tikzset{x=1em, y=1.5ex}
\quad\disteq{1/2}\quad
\tikzset{x=1em, y=2.1ex}
\InputIfFileExists{merge-c-star-o.tikz}{}{\input{./tikz/merge-c-star-o.tikz}}
\tikzset{x=1em, y=1.5ex}
\]
In the same way, we can show that 
\[
\tikzset{x=1em, y=2.1ex}
\InputIfFileExists{c-star-o-merge.tikz}{}{\input{./tikz/c-star-o-merge.tikz}}
\tikzset{x=1em, y=1.5ex}
\quad\disteq{2^{-n}}\quad
\tikzset{x=1em, y=2.1ex}
\InputIfFileExists{merge-c-star-o.tikz}{}{\input{./tikz/merge-c-star-o.tikz}}
\tikzset{x=1em, y=1.5ex}
\]
for any $n\in\N$ and thus, by the continuity axiom \textsf{(Cont)}, we conclude that
\[
\tikzset{x=1em, y=2.1ex}
\InputIfFileExists{c-star-o-merge.tikz}{}{\input{./tikz/c-star-o-merge.tikz}}
\tikzset{x=1em, y=1.5ex}
\quad\disteq{0}\quad
\tikzset{x=1em, y=2.1ex}
\InputIfFileExists{merge-c-star-o.tikz}{}{\input{./tikz/merge-c-star-o.tikz}}
\tikzset{x=1em, y=1.5ex}
\]
as we wanted to show.
\end{proof}
\begin{lemma}\label{lem:merge-repr}
	Let $e,f\from\objr\to\objr^n$, such that $\sem{e} = N(s)$ and $\sem{f} = N(d)$, where $s, t \in \RegBeh(1,n)$. We have that
	\begin{align*}
		\sem{
\tikzset{x=1em, y=2.1ex}
\InputIfFileExists{merge-cxd.tikz}{}{\input{./tikz/merge-cxd.tikz}}
\tikzset{x=1em, y=1.5ex}
} &= N \left( \langle s, t\rangle \right)
	\end{align*}
\end{lemma}
\begin{proof}
	\begin{align*}
		\sem{
\tikzset{x=1em, y=2.1ex}
\InputIfFileExists{merge-cxd.tikz}{}{\input{./tikz/merge-cxd.tikz}}
\tikzset{x=1em, y=1.5ex}
} &=(N(s)\oplus N(f)); N(\nabla_1) \\
		&= N((s \oplus t) ; \nabla_1 ) \tag{Functoriality of $N$}\\
		&= N( \langle s,t[v_2/v_1] \rangle ; \langle \id_1, \id_1\rangle )\\
		&=N( \langle s , t \rangle ) \\
	\end{align*}
\end{proof}
\begin{lemma}\label{lem:distance_on_merge}
	Let $e_1, e_2, f_1,f_2 \colon \objr \to \objr^n$. We have that
	\begin{align*}
		d^{N(2),N(n)}\left(\sem{
\tikzset{x=1em, y=2.1ex}
\InputIfFileExists{merge-cxd-1.tikz}{}{\input{./tikz/merge-cxd-1.tikz}}
\tikzset{x=1em, y=1.5ex}
}, \sem{
\tikzset{x=1em, y=2.1ex}
\InputIfFileExists{merge-cxd-2.tikz}{}{\input{./tikz/merge-cxd-2.tikz}}
\tikzset{x=1em, y=1.5ex}
}\right)\\ \quad= \max \left\{d^{N(1),N(n)}\left(\sem{\dbox{e_1}{}{n}},\sem{\dbox{e_2}{}{n}} \right), d^{N(1),N(n)} \left(\sem{\dbox{f_1}{}{n}}, \sem{\dbox{f_2}{}{n}} \right)\right\}
	\end{align*}
\end{lemma}
\begin{proof}
	Since $e_1, e_2, f_1,f_2$ are left-to-right diagrams, we can safely assume that there exist $s_1, s_2, t_1, t_2 \in \RegBeh(1,n)$ such that $\sem{c_1} = N(s_1)$,$\sem{c_2} = N(s_2)$,$\sem{d_1} = N(t_1)$. We have the following
	\begin{align*}
		&d^{N(2),N(n)}\left(\sem{
\tikzset{x=1em, y=2.1ex}
\InputIfFileExists{merge-cxd-1.tikz}{}{\input{./tikz/merge-cxd-1.tikz}}
\tikzset{x=1em, y=1.5ex}
}, \sem{
\tikzset{x=1em, y=2.1ex}
\InputIfFileExists{merge-cxd-2.tikz}{}{\input{./tikz/merge-cxd-2.tikz}}
\tikzset{x=1em, y=1.5ex}
}\right)\\
		&\qquad = d^{N(2),N(n)}(N(\langle e_1, f_1 \rangle), N\langle e_2, f_2 \rangle)\tag{\Cref{lem:merge-repr}}\\
		&\qquad = d^{2,n}(\langle e_1, f_1 \rangle, \langle e_2, f_2 \rangle)\tag{\Cref{cor:int_isometry}}\\
		&\qquad= \max \{\mathsf{bd}_{\overline \partial} (e_1, e_2), \mathsf{bd}_{\overline \partial} (f_1, f_2)\}\tag{Def. in distance of $\RegBeh$}\\
		&\qquad = \max \{d^{1,n}(e_1, e_2), d^{1,n}(f_1, f_2)\}\\
		&\qquad = \max \{d^{N(1), N(n)}(N(e_1), N(e_2)), d^{N(1), N(n)}(N(f_1), N(f_2))\}\tag{\Cref{lem:merge-repr}}\\
		&\qquad = \max \left\{d^{N(1),N(n)}\left(\sem{\dbox{e_1}{}{n}},\sem{\dbox{e_2}{}{n}} \right), d^{N(1),N(n)} \left(\sem{\dbox{f_1}{}{n}}, \sem{\dbox{f_2}{}{n}} \right)\right\}
	\end{align*}
\end{proof}

Let $F = \{f_i \colon \objr \to \objr^n\}_{i \in I}$ be an indexed collection of string diagrams. Given a finite indexed collection $A = \{f_{i_1}, \dots, f_{i_k}\}_{k \in K} \subseteq F$ of string diagrams from the set $F$, we define its convolution to be the string diagram $R_A \colon \objr \to \objr^n$ given by
	$$
	
\tikzset{x=1em, y=2.1ex}
\InputIfFileExists{set-repr.tikz}{}{\input{./tikz/set-repr.tikz}}
\tikzset{x=1em, y=1.5ex}

	$$ 
	If two finite indexed collections $A$ and $B$ of elements of $F$ are equivalent modulo ACI (Associativity, Commutativity, Idempotence) then their convolutions are at distance zero from each other. Hence, given a finite subset of $F$ we can unambiguously talk about its convolution.
\begin{lemma}
\label{lem:convolution-aci}
Any two finite indexed collections $A = \{f_{i_1}, \dots, f_{i_k}\}_{k \in K}$ and $B = \{g_{i_1}, \dots, g_{i_k}\}_{k \in K}$ that are equal as sets, then their convolution are at distance zero from each other.
\end{lemma}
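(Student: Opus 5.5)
The convolution $R_A$ is the diagram that copies the single input wire with a tree of $\Bcomult$ generators (ending in $\Bcounit$ when $A$ is empty), feeds the $j$-th branch into $f_{i_j}$, and then merges the corresponding $n$-tuples of output wires pointwise with $\Bmult$. Since $A$ and $B$ are equal as sets, $B$ can be reached from $A$ by finitely many elementary moves. These moves are re-bracketing, swapping two adjacent entries, and inserting or deleting a duplicate entry. So I will derive $R_A \disteq{0} R_{A'}$ for each elementary move $A \leadsto A'$ and chain the steps with \textsf{(Triang)}, since $0+0=0$. Each step will in fact be a strict equality in the equational theory, lifted to distance zero by \textsf{(Refl)}. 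Distance $0$ is only needed if we insist on routing through \Cref{thm:co-copy-delete}, which is stated quantitatively.

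Associativity and commutativity are immediate. Re-bracketing the copy tree uses \textsf{B1}, and re-bracketing the merge trees uses \textsf{B4} componentwise. Swapping two adjacent entries $f_{i_j}, f_{i_{j+1}}$ uses \textsf{B3} on the copying side. The resulting wire crossing is slid past $f_{i_j}\oplus f_{i_{j+1}}$ by naturality of the symmetry (SMC axioms), and then absorbed on the output side by \textsf{B6} applied to each of the $n$ merge nodes. The empty collection and singleton cases are handled by \textsf{B2}/\textsf{B5}, so unit laws cause no trouble.

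The main obstacle is idempotence: showing that two copies of the same diagram $f$ placed side by side, fed from a copy and merged at the outputs, equal a single $f$. My first approach is to use \Cref{thm:co-copy-delete} on $f\colon \objr\to\objr^n$. It lets us replace ``$f\oplus f$ followed by pointwise merging of the outputs'' by ``merge the two inputs, then apply $f$'', at distance $0$. The configuration then becomes copy followed by merge on a single wire, then $f$. That copy-merge composite collapses to the identity by \textsf{B10}, leaving $f$. Using \textsf{(Seq)} with \textsf{(Refl)} on the surrounding context propagates this inside the larger convolution, after the duplicate has been moved next to its twin by the commutativity step. Composing all moves with \textsf{(Triang)} gives $R_A\disteq{0}R_B$.

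If the quantitative co-copy is awkward to embed in context, the fallback is to first apply \Cref{thm:representation} to $f$. Then \Cref{lem:matrix-cocopy} together with \textsf{B7} can be used to push the merges through the representation, but the feedback loop makes this harder. That is why I expect to rely on \Cref{thm:co-copy-delete} and accept a distance-$0$ (rather than strict) result.
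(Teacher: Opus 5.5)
Your proposal is correct and takes essentially the same route as the paper. The paper also reduces the lemma to associativity (\textsf{B1}/\textsf{B4}), commutativity (\textsf{B3}, naturality of the symmetry, \textsf{B6}) and idempotence of convolution up to $\disteq{0}$, and it proves idempotence exactly as you do, via \Cref{thm:co-copy-delete} followed by \textsf{B10}. The fallback through \Cref{thm:representation} is unnecessary; note also that embedding the idempotence step in a larger convolution needs \textsf{(Tens)} as well as \textsf{(Seq)} and \textsf{(Refl)}.
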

\begin{proof}
It is enough to show that the convolution of two string diagrams is an associative, commutative and idempotent operation (up to $\disteq{0}$).
\begin{description}
\item[Associativity:] For any $f_1,f_2$, and $f_3$, we have
\begin{align*}

\tikzset{x=1em, y=2.1ex}
\InputIfFileExists{convolution-assoc.tikz}{}{\input{./tikz/convolution-assoc.tikz}}
\tikzset{x=1em, y=1.5ex}
 &= 
\tikzset{x=1em, y=2.1ex}
\InputIfFileExists{convolution-assoc-1.tikz}{}{\input{./tikz/convolution-assoc-1.tikz}}
\tikzset{x=1em, y=1.5ex}

\\
& = 
\tikzset{x=1em, y=2.1ex}
\InputIfFileExists{convolution-assoc-2.tikz}{}{\input{./tikz/convolution-assoc-2.tikz}}
\tikzset{x=1em, y=1.5ex}

\end{align*}
\item[Commutativity:]
\begin{align*}

\tikzset{x=1em, y=2.1ex}
\InputIfFileExists{convolution-commut.tikz}{}{\input{./tikz/convolution-commut.tikz}}
\tikzset{x=1em, y=1.5ex}
 &= 
\tikzset{x=1em, y=2.1ex}
\InputIfFileExists{convolution-commut-1.tikz}{}{\input{./tikz/convolution-commut-1.tikz}}
\tikzset{x=1em, y=1.5ex}

\\
& = 
\tikzset{x=1em, y=2.1ex}
\InputIfFileExists{convolution-commut-2.tikz}{}{\input{./tikz/convolution-commut-2.tikz}}
\tikzset{x=1em, y=1.5ex}

\\
& = 
\tikzset{x=1em, y=2.1ex}
\InputIfFileExists{convolution-commut-3.tikz}{}{\input{./tikz/convolution-commut-3.tikz}}
\tikzset{x=1em, y=1.5ex}

\end{align*}
\item[Idempotence:]
\begin{align*}

\tikzset{x=1em, y=2.1ex}
\InputIfFileExists{convolution-idemp.tikz}{}{\input{./tikz/convolution-idemp.tikz}}
\tikzset{x=1em, y=1.5ex}
 &\disteq{0} 
\tikzset{x=1em, y=2.1ex}
\InputIfFileExists{convolution-idemp-1.tikz}{}{\input{./tikz/convolution-idemp-1.tikz}}
\tikzset{x=1em, y=1.5ex}
 \qquad & (\text{Lemma}~\ref{thm:co-copy-delete})
\\
& = \dbox{f_1}{}{n} &
\end{align*}
\end{description}
\end{proof}
\sumconvolutions*
\begin{proof}
First, since $c$ and $d$ are left-to-right, there exists expressions $s,t\in\RegBeh(1,n)$, such that $\sem{c} = N(s)$ and $\sem{d} = N(t)$. Then,
\begin{align*}
\sem{
\tikzset{x=1em, y=2.1ex}
\InputIfFileExists{convolution-cxd.tikz}{}{\input{./tikz/convolution-cxd.tikz}}
\tikzset{x=1em, y=1.5ex}
} & = \sem{\Bcomult}; (\sem{c}\oplus \sem{d});\sem{\Bmult}
\\
& = N(v_1+v_2) ; (N(s)\oplus N(f)) ;  N(\nabla_1) & 
\\
& = N((v_1+v_2); (s\oplus f); \nabla_1) & (\text{Functoriality of $N$})
\\
& = N(s+f) & (\text{Definition of $N$})
\\
& = N(s) + N(t) = \sem{c} + \sem{d} &
\end{align*}

\end{proof}
\begin{lemma}
\label{lem:diagram-to-eq-system}
For any diagram $f\from \objr^m\to \objr^n$, if
$$
\dbox{f}{m}{n} \; \disteq{0} \;
\tikzset{x=1em, y=2.1ex}
\InputIfFileExists{automata-rep-all-states.tikz}{}{\input{./tikz/automata-rep-all-states.tikz}}
\tikzset{x=1em, y=1.5ex}

$$ 
for some guarded matrix-diagram $c\from \objr^{\ell+m}\to \objr^{\ell+m}$ and a relation-diagram $o\from \objr^{\ell+m}\to \objr^n$,
then, for all $i\in\{1,\dots,m\}$ we can find $\{a_j, f_{i_j}\}_{1\leq j\leq k}$, and $\{v_{q_j}\}_{1\leq j\leq \ell}$ such that
$$
\dbox{f_i}{}{n} \;\disteq{0} 
\tikzset{x=1em, y=2.1ex}
\InputIfFileExists{state-eq.tikz}{}{\input{./tikz/state-eq.tikz}}
\tikzset{x=1em, y=1.5ex}

$$
where $f_i$, $1\leq i\leq m$ and $v_{q_j}$, $1\leq i\leq \ell$  are defined as above.
\end{lemma}
\begin{proof}
First,  by unrolling (Lemma~\ref{lem:unroll}),we have
\begin{align*}
\dbox{f}{m}{n} \; \disteq{0} \;
\tikzset{x=1em, y=2.1ex}
\InputIfFileExists{automata-rep-all-states.tikz}{}{\input{./tikz/automata-rep-all-states.tikz}}
\tikzset{x=1em, y=1.5ex}
\;\disteq{0}  \;
\tikzset{x=1em, y=2.1ex}
\InputIfFileExists{automata-rep-all-states-1.tikz}{}{\input{./tikz/automata-rep-all-states-1.tikz}}
\tikzset{x=1em, y=1.5ex}

\end{align*}
Thus, for any $i\in\{1,\dots,m\}$, say $i=m$ for simplicity, we get
\begin{align*}
\dbox{f_m}{}{n} \;&\disteq{0} \;
\tikzset{x=1em, y=2.1ex}
\InputIfFileExists{automata-rep-last-state-1.tikz}{}{\input{./tikz/automata-rep-last-state-1.tikz}}
\tikzset{x=1em, y=1.5ex}
  \;\disteq{0} \;
\tikzset{x=1em, y=2.1ex}
\InputIfFileExists{automata-rep-last-state-2.tikz}{}{\input{./tikz/automata-rep-last-state-2.tikz}}
\tikzset{x=1em, y=1.5ex}
\\
& \disteq{0} \;
\tikzset{x=1em, y=2.1ex}
\InputIfFileExists{automata-rep-last-state-3.tikz}{}{\input{./tikz/automata-rep-last-state-3.tikz}}
\tikzset{x=1em, y=1.5ex}

\\
& \disteq{0} \;
\tikzset{x=1em, y=2.1ex}
\InputIfFileExists{automata-rep-last-state-4.tikz}{}{\input{./tikz/automata-rep-last-state-4.tikz}}
\tikzset{x=1em, y=1.5ex}

\\
& \disteq{0} \;
\tikzset{x=1em, y=2.1ex}
\InputIfFileExists{automata-rep-last-state-5.tikz}{}{\input{./tikz/automata-rep-last-state-5.tikz}}
\tikzset{x=1em, y=1.5ex}

\\
& \disteq{0} \;
\tikzset{x=1em, y=2.1ex}
\InputIfFileExists{automata-rep-last-state-6.tikz}{}{\input{./tikz/automata-rep-last-state-6.tikz}}
\tikzset{x=1em, y=1.5ex}

\end{align*}
where $c^i_m$ is either $\scalar{a}$ for some $a\in \Sigma$, when there is an $a$-transition connecting its only input wire to some $f_j$, or $\Bcounit \;\;\Bunit$ otherwise (recall that $c$ is guarded), and $o^i_m$ is either an identity, when the only input of $o_m$ is connected to some output wire, or $\Bcounit \;\;\Bunit$ otherwise. Since all $f_j$ connected to some $\Bcounit \;\;\Bunit$ can be removed (using co-deleting), we get the equality we wanted.
\end{proof}

\fundamental*
\begin{proof}
	Follows from \Cref{thm:representation} and \Cref{lem:diagram-to-eq-system}.
\end{proof}

\begin{lemma}
\label{lem:fixpoint}
For any $f\from \objr^m\to \objr^n$ and $f_i$, $1\leq i\leq m$ defined as above, for all $i\in\{1,\dots,m\}$, we have that
\[\sem{f_i}= \sum_{j=1}^k a_{j}.\sem{f_{i_j}} + \sum_{j=1}^{\ell} \sem{v_{q_k}} \quad \text{ if }\quad \dbox{f_i}{}{n} \;\disteq{0} 
\tikzset{x=1em, y=2.1ex}
\InputIfFileExists{state-eq.tikz}{}{\input{./tikz/state-eq.tikz}}
\tikzset{x=1em, y=1.5ex}
\]
where, for $1\leq j\leq \ell$,
each $v_{q_j}\from \objr\to \objr^n$ is a diagram encoding the output variables, as defined in \Cref{lem:fundamental}.
\end{lemma}
\begin{proof}
This is a consequence of Lemma~\ref{lem:sum-convolution}.
\end{proof}

\begin{remark}\label{rem:repr_transitions}
	Let $F$ be a set of string diagrams of the type $\objr \to \objr^n$. The set $\Sigma \times F + V_n$ is isomorphic to the set
	$$
	G = \left\{
\tikzset{x=1em, y=2.1ex}
\InputIfFileExists{prefix-transition.tikz}{}{\input{./tikz/prefix-transition.tikz}}
\tikzset{x=1em, y=1.5ex}
 \mid a \in \Sigma, f_j \in F \right\} \cup \left\{\dbox{v_s}{}{n} \mid 1 \leq s \leq n\right\}
	$$
\end{remark}
\begin{lemma}\label{lem:transition_approximation}
	Let $F$ be a set of string diagrams of the type $\objr \to \objr^n$ that is equipped with a $1$-bounded pseudometric $d_F \colon F \times F \to [0,1]$. Assume that for all $f_i, f_k \in F$, $\varepsilon \in \Qp$, such that $d_F(f_i, f_k)\leq \varepsilon$, we have that $\dbox{f_i}{}{n} \disteq{\varepsilon} \dbox{f_k}{}{n}$ is derivable. For all $g_u, g_v \in G$, with $G$ defined as above and all $\varepsilon \in \Qp$, such that $d_F^\uparrow(g_u, g_v) \leq \varepsilon$, we have that $\dbox{g_u}{}{n} \disteq{\varepsilon} \dbox{g_v}{}{n}$ is derivable.
\end{lemma}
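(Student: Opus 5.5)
We argue by case analysis on the shape of $g_u$ and $g_v$, following the three clauses of \Cref{def:edge}. Via the isomorphism of \Cref{rem:repr_transitions}, each element of $G$ is either a prefixed diagram (an $\scalar{a}$ generator followed by some $f_j\in F$), corresponding to $(a,f_j)\in\Sigma\times F$, or a variable diagram $v_s$, corresponding to $v_s\in V_n$. The lifted distance $d_F^\uparrow(g_u,g_v)$ is computed through this correspondence.

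\textbf{Case 1: $g_u=g_v$ as elements of $\Sigma\times F+V_n$.} Then $d_F^\uparrow(g_u,g_v)=0$ and the two diagrams are syntactically identical. We derive $g_u\disteq{0}g_v$ by \textsf{(Refl)} and weaken it to any $\varepsilon\geq 0$ by \textsf{(Max)}. To obtain the orientation required by \textsf{(Max)}, we first apply \textsf{(Sym)}; since the diagrams coincide, this changes nothing.

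\textbf{Case 2: the elements differ and are not both prefixed by the same letter.} Then $d_F^\uparrow(g_u,g_v)=1$, so $\varepsilon\geq 1$. We derive $g_u\disteq{1}g_v$ by \textsf{(Top)} and weaken it with \textsf{(Max)}, composed with \textsf{(Sym)} as needed for orientation.

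\textbf{Case 3: $g_u$ corresponds to $(a,f_i)$ and $g_v$ to $(a,f_k)$ for the same letter $a$.} Then $d_F^\uparrow(g_u,g_v)=\tfrac12 d_F(f_i,f_k)\leq\varepsilon$, so $d_F(f_i,f_k)\leq 2\varepsilon$, and $2\varepsilon\in\Qp$. The hypothesis of the lemma therefore makes $f_i\disteq{2\varepsilon}f_k$ derivable. Applying \textsf{(Pref)} with $m=1$ and $\vec a=(a)$ yields $g_u\disteq{\varepsilon}g_v$, because the prefixed diagram in $G$ is precisely the composite of $\scalar{a}$ with $f_j$ that appears in the conclusion of \textsf{(Pref)}.

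The only point needing care is this third case. We must confirm that the shape of the diagrams in $G$ matches the premise and conclusion of \textsf{(Pref)} exactly; if it does not, we first rewrite with equalities and \textsf{(Refl)}/\textsf{(Triang)}. We must also make sure the halving is applied in the right direction, choosing the premise distance $2\varepsilon$ rather than $\varepsilon$. Since $\varepsilon$ may be rational and greater than $\tfrac12$, giving $2\varepsilon>1$, the hypothesis still applies because $d_F$ is 1-bounded.
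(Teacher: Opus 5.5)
Your proof is correct and follows essentially the same route as the paper: a case split on the clauses of $d_F^\uparrow$. Same-letter prefixes get \textsf{(Pref)} applied to the derivable $f_i \disteq{2\varepsilon} f_k$, identical elements get \textsf{(Refl)} followed by \textsf{(Max)}, and all remaining cases get \textsf{(Top)} followed by \textsf{(Max)}. Your extra \textsf{(Sym)} step to fix the orientation of \textsf{(Max)}, whose conclusion as stated swaps the two sides, is a detail the paper glosses over, but it does not change the argument.
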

\begin{proof}
Let $\varepsilon \geq d_F^\uparrow(g_u, g_v)$. First, consider the case, when $\dbox{g_u}{}{n} = 
\tikzset{x=1em, y=2.1ex}
\InputIfFileExists{prefix-transition-1.tikz}{}{\input{./tikz/prefix-transition-1.tikz}}
\tikzset{x=1em, y=1.5ex}
$ and $\dbox{g_v}{}{n} = 
\tikzset{x=1em, y=2.1ex}
\InputIfFileExists{prefix-transition-2.tikz}{}{\input{./tikz/prefix-transition-2.tikz}}
\tikzset{x=1em, y=1.5ex}
$. We have that $d_F^\uparrow(g_u, g_v)= \frac{1}{2} d_F(f_i, f_k)$ and hence $2 \varepsilon \geq d_F(f_i, f_k)$. By the assumption, we know that $\dbox{f_i}{}{n} \disteq{2\varepsilon} \dbox{f_k}{}{n}$ is derivable. Using \textsf{(Pref)}, we can derive $
\tikzset{x=1em, y=2.1ex}
\InputIfFileExists{prefix-transition-1.tikz}{}{\input{./tikz/prefix-transition-1.tikz}}
\tikzset{x=1em, y=1.5ex}
 \disteq{\varepsilon} 
\tikzset{x=1em, y=2.1ex}
\InputIfFileExists{prefix-transition-2.tikz}{}{\input{./tikz/prefix-transition-2.tikz}}
\tikzset{x=1em, y=1.5ex}
$, which is the same as $\dbox{g_u}{}{n} \disteq{\varepsilon} \dbox{g_v}{}{n}$. In all the remaining cases, $d_F^\uparrow$ behaves like a discrete pseudometric, hence there are two remaining subcases. In the situation when $\dbox{g_u}{}{n} = \dbox{g_v}{}{n}$, we have that $d_F^\uparrow(g_u, g_v)=0$ and hence we can derive $\dbox{g_u}{}{n} \disteq{\varepsilon} \dbox{g_v}{}{n}$ by first applying \textsf{(Refl)} and then \textsf{(Max)}. Othwerise, when $\dbox{g_u}{}{n} \neq \dbox{g_v}{}{n}$, we have that $d_F^\uparrow(g_u, g_v)=1$ and hence we can derive $\dbox{g_u}{}{n} \disteq{\varepsilon} \dbox{g_v}{}{n}$ by first applying \textsf{(Top)} and then \textsf{(Max)}.
\end{proof}
\begin{lemma}\label{lem:hausdorff_approx}
	Let $F$ be a finite set of string diagrams of the type $\objr \to \objr^n$ that is equipped with a $1$-bounded pseudometric $d_F \colon F \times F \to [0,1]$. Assume that for all $f_i, f_k \in F$, $\varepsilon \in \Qp$, such that $d_F(f_i, f_k)\leq \varepsilon$, we have that $\dbox{f_i}{}{n} \disteq{\varepsilon} \dbox{f_k}{}{n}$ is derivable. Then, for all $A,B \subseteq F$ and all $\varepsilon \in \Qp$, such that $\mathcal{H}(d_F)(A,B) \leq \varepsilon$, we have that $\dbox{R_A}{}{n} \disteq{\varepsilon} \dbox{R_B}{}{n}$ is also derivable.
\end{lemma}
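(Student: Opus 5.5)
The plan is to replace the Hausdorff distance by its coupling presentation from \Cref{rem:hausdorff_duality}, and then simulate a coupling syntactically. A coupling pairs up the summands of $R_A$ and $R_B$; the convolution structure then lets us compare the two diagrams summand-by-summand using \textsf{(Tens)} and \textsf{(Seq)}.

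First I dispose of the degenerate cases. If $A=B=\emptyset$, then $R_A$ and $R_B$ are the same diagram (the convolution of the empty collection), so \textsf{(Refl)} followed by \textsf{(Max)} gives the claim. If exactly one of $A,B$ is empty, then one of the $\sup\inf$ terms ranges over an empty infimum, so $\mathcal{H}(d_F)(A,B)=1$ and hence $\varepsilon\geq 1$. In that case \textsf{(Top)} and \textsf{(Max)} suffice.

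Now assume $A,B$ are both nonempty. Since $F$ is finite, $\Gamma(A,B)$ is a finite set, so the infimum in \Cref{rem:hausdorff_duality} is attained. Hence there is a relation $R=\{(x_1,y_1),\dots,(x_r,y_r)\}\subseteq A\times B$ with $\pi_1(R)=A$, $\pi_2(R)=B$, and $d_F(x_k,y_k)\leq\mathcal{H}(d_F)(A,B)\leq\varepsilon$ for every $k$. By hypothesis each equation $x_k\disteq{\varepsilon}y_k$ is derivable.

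Consider the indexed collections $(x_1,\dots,x_r)$ and $(y_1,\dots,y_r)$. They may contain repetitions, but as sets they equal $A$ and $B$ respectively. So by \Cref{lem:convolution-aci}, $R_A\disteq{0}R_{(x_1,\dots,x_r)}$ and $R_{(y_1,\dots,y_r)}\disteq{0}R_B$. The two indexed convolutions have the same shape: a fixed block of $r$-fold $\Bcomult$'s, then $x_1\oplus\dots\oplus x_r$ (resp. $y_1\oplus\dots\oplus y_r$), then a fixed block of $\Bmult$'s merging the $r$ copies of $\objr^n$. Applying \textsf{(Tens)} $r-1$ times gives $\bigoplus_k x_k\disteq{\varepsilon}\bigoplus_k y_k$, since the maximum of the bounds is $\varepsilon$. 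Two applications of \textsf{(Seq)}, with \textsf{(Refl)} on the copy and merge blocks, yield $R_{(x_k)}\disteq{\varepsilon}R_{(y_k)}$. Finally, \textsf{(Triang)} with the two zero-distance equations gives $R_A\disteq{\varepsilon}R_B$.

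The main obstacle is the bookkeeping in the last step. Convolution is defined on sets modulo ACI, so the argument must pass through indexed lists with duplicates, aligned by the coupling, and only then return to sets. This is exactly where \Cref{lem:convolution-aci}, and through it the idempotence supplied by global co-copying (\Cref{thm:co-copy-delete}), is essential. A second point to check is that the empty-infimum convention in \Cref{def:hausdorff} gives distance $1$, consistent with the bound being $1$, so that the degenerate case is handled correctly by \textsf{(Top)}.
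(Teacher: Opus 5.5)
Your proposal is correct and follows essentially the same route as the paper: you handle the empty cases with \textsf{(Refl)}/\textsf{(Top)} and \textsf{(Max)}, pick an optimal coupling via Remark~\ref{rem:hausdorff_duality}, stack the coupled pairs with \textsf{(Tens)}, and attach the copy and merge blocks with \textsf{(Seq)}. Your detour through indexed lists with duplicates, Lemma~\ref{lem:convolution-aci} and \textsf{(Triang)}, and your finiteness argument that the minimum over couplings is attained, simply make explicit what the paper leaves implicit when it asserts that the stacked diagrams ``are convolutions of the sets $A$ and $B$''.
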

\begin{proof} Pick an arbitrary $\varepsilon \in \Qp$, such that $\mathcal{H}(d_F)(A,B) \leq \varepsilon$.
	If $A=B=\emptyset$, then by the usage of \textsf{(Refl)} and \textsf{(Max)} we are done. Similarly, when only one of $A$ and $B$ is empty, that we can immediately obtain the desired result using \textsf{(Top)} and \textsf{(Max)} rules. From now on, we can safely assume that $A$ and $B$ are nonempty. Recall the characterisation of Hausdorff distance from \Cref{rem:hausdorff_duality}. One can easily observe that in the case when $A$ and $B$ are nonempty, the set $\Gamma(A,B)$ of relational couplings between $A$ and $B$ is nonempty and hence 
	$$
	\mathcal{H}(d_F)(A,B) = \min \left\{\sup_{(f_i,f_k) \in R} d_F(f_i,f_k) \mid R \in \Gamma(A,B)\right\} \leq \varepsilon
	$$
	There must exist some optimal coupling $R_{\min} \in \Gamma(A,B)$, which witnesses the above minimum. Hence, we have that $
	\sup_{(f_i, f_k) \in R_{\min}} d_F(f_i, f_k) \leq \varepsilon
	$, which in turn implies that $d_F(f_i, f_k) \leq \varepsilon$ for all $(f_i, f_k) \in R_{\min}$. Using the assumption, we know that for all pairs $(f_i, f_k) \in R_{\min}$, we have that
	$$
	\dbox{f_i}{1}{n} \disteq{\varepsilon} \dbox{f_k}{1}{n}
	$$
	For the sake of simplicity, assume that $R_{\min} = \{(f_{i_1}, f_{k_1}), \dots, (f_{i_j}, f_{k_j})\}$. Using the $\mathsf{(Tens)}$ rule we can stack in parallel all these pairs and obtain:
	$$
	
\tikzset{x=1em, y=2.1ex}
\InputIfFileExists{par-stack-hausdorff-l.tikz}{}{\input{./tikz/par-stack-hausdorff-l.tikz}}
\tikzset{x=1em, y=1.5ex}
 \disteq{\varepsilon} 
\tikzset{x=1em, y=2.1ex}
\InputIfFileExists{par-stack-hausdorff-r.tikz}{}{\input{./tikz/par-stack-hausdorff-r.tikz}}
\tikzset{x=1em, y=1.5ex}
 
	$$
	Using the \textsf{(Seq)} rule, we can derive that
	$$
	
\tikzset{x=1em, y=2.1ex}
\InputIfFileExists{repr-hausdorff-l.tikz}{}{\input{./tikz/repr-hausdorff-l.tikz}}
\tikzset{x=1em, y=1.5ex}
 \disteq{\varepsilon} 
\tikzset{x=1em, y=2.1ex}
\InputIfFileExists{repr-hausdorff-r.tikz}{}{\input{./tikz/repr-hausdorff-r.tikz}}
\tikzset{x=1em, y=1.5ex}
 
	$$
	From the definition of relational couplings, we have that $\pi_1(R_{\min}) = A$ and $\pi_2(R_{\min}) = B$ and hence the diagrams above are convolutions of the sets $A$ and $B$ respectively. This allows us to conclude that $\dbox{R_A}{1}{n} \disteq{\varepsilon} \dbox{R_B}{1}{n}$ is derivable.
\end{proof}

\approximation*
\begin{proof}
	Pick an arbitrary $f_g,f_h \in Q_f$.
	By induction on $p$. When $p=0$, $\Phi_{\beta}^{(p)}$ is a discrete pseudometric on the set $Q_f$ and hence for all $\varepsilon \geq \Phi_{\beta}^{(p)}(f_g, f_h)$, we can derive $\dbox{f_g}{}{n} \disteq{\varepsilon} \dbox{f_h}{}{n}$ using \textsf{(Refl)}, \textsf{(Top)} and \textsf{(Max)} rules, similarly to the proof of \Cref{lem:transition_approximation}.	
	For the induction step, when $p = p' + 1$. Recall that $\Phi_\beta^{p'+1}(f_g, f_h) = \mathcal{H}\left({\Phi_\beta^{(p')}}^\uparrow\right)(\beta(f_g), \beta(f_h))$. Pick an arbitrary $\varepsilon \geq \mathcal{H}\left({\Phi_\beta^{(p')}}^\uparrow\right)(\beta(f_g), \beta(f_h))$. In order to derive that $\dbox{f_g}{}{n} \disteq{\varepsilon} \dbox{f_h}{}{n}$, we will rely on \Cref{lem:hausdorff_approx}. In order to use it, we need to be able to derive approximations to the distance given by ${\Phi_\beta^{(p')}}^\uparrow$ on the string diagrams representing the elements of the set $\Sigma \times Q_f + V_n$ (see \Cref{rem:repr_transitions}). For this we will use \Cref{lem:transition_approximation}, which requires that for all $f_{g'}, f_{h'} \in Q_f$, $\varepsilon' \geq \Phi^{(p')}_\beta$ one can derive that $\dbox{f_{g'} }{}{n} \disteq{\varepsilon'}\dbox{f_{h'}}{}{n}$. This in turn is guaranteed by the induction hypothesis, which completes the proof.
\end{proof}
\begin{restatable}{corollary}{extractedhomom}\label{lem:extracted_homom}
	A function mapping each state $f_i \in Q_d$ to $\sem{f_i}$ is a prechart homomorphism from $(Q_d, \beta)$ to $\Omega$ 
\end{restatable}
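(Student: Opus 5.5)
We have to show that the map $h\colon Q_f\to\Omega$ given by $h(f_i)=\sem{f_i}$ is a prechart homomorphism. By definition this means that its graph $G(h)$ is a bisimulation between $(Q_f,\beta)$ and the prechart on $\Omega$. Both sides are finite sets of transitions, so it is enough to show that for every $i$
\[\overline{\partial}(\sem{f_i}) = \{(a,\sem{f_{j}}) \mid f_i\tr{a} f_j\} \cup \{v_s \mid f_i \rhd v_s\},\]
that is, $\overline{\partial}\circ h = \mathcal{B}h\circ\beta$ (on the right, the first set is the image of the transitions under $\Sigma\times h$). Once this equality holds, both bisimulation clauses follow directly: outputs agree, and each transition on one side is matched by the corresponding transition on the other, with targets related by $G(h)$.

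To obtain the equality I would use \Cref{lem:fundamental}. It gives $f_i \disteq{0}$ the convolution of the prefixed diagrams $a_j.f_{i_j}$ and the output diagrams $v_{q_j}$, and these components are exactly what $\beta(f_i)$ is defined from. By soundness (\Cref{thm:soundness}) distance $0$ means equal semantics, since $\mathsf{bd}=0$ holds exactly for bisimilar behaviours and hence for the same element of $\Omega$. So \Cref{lem:fixpoint} gives $\sem{f_i}=\sum_j a_j.\sem{f_{i_j}}+\sum_j \sem{v_{q_j}}$ in $\Omega$. Here I would use \Cref{lem:sum-convolution} iterated over the convolution, together with the facts that $\sem{\scalar{a}}$ composed with $\sem{f_{i_j}}$ is $a.\sem{f_{i_j}}$ (composition is substitution into $a.v_1$) and that $\sem{v_{q_j}}$ is the variable $v_{q_j}$.

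Next I would read off the derivatives with the operational rules of \Cref{def:operational_semantics} applied to a representative expression. A finite sum of prefixes and variables has exactly the transitions $\xrightarrow{a_j}\sem{f_{i_j}}$ and exactly the outputs $v_{q_j}$. By \Cref{lem:quotient_chart} this description passes to the quotient, which yields the displayed equality.

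The main obstacle is consistency of the labelling. The prechart $\beta$ is defined from the syntactic shape of $f_i$ ("contains" a prefix-transition), while \Cref{lem:fundamental} only determines that shape up to $\disteq{0}$. So I must fix the normal form produced in \Cref{lem:diagram-to-eq-system} once and for all and take $\beta$ to be read off from it. I also need to check that discarding duplicates via \Cref{lem:convolution-aci} changes neither side, since both are sets. With these choices fixed, the remaining steps are bookkeeping.
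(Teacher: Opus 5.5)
Your proposal is correct and follows essentially the paper's argument. The paper combines \Cref{lem:fixpoint} (the semantics of $f_i$ is the sum of the prefixed targets and the output variables) with the quotient transition structure of \Cref{lem:quotient_chart}, which gives $\overline{\partial}\circ h = \mathcal{B}h\circ\beta$. Your observation that $\beta$ must be read off one fixed normal form from \Cref{lem:diagram-to-eq-system} is a worthwhile precision that the paper leaves implicit.
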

\begin{proof}
	Immediately follows from \Cref{lem:fixpoint} and the definition on transition structure on ${\Expr}/{\sim}$ (given by \Cref{lem:quotient_chart}). Essentially, homomorphisms are maps that preserve and reflect prechart transitions~\cite[Example~2.1]{Rutten:2000:Universal} and $(Q_d, \beta)$ is precisely defined to satisfy this.
\end{proof}
\extracteddist*
\begin{proof}
	Follows from \Cref{lem:extracted_homom} and \Cref{thm:beh_dist}.
\end{proof}

\onemcompleteness*
\begin{proof}
	Let $\dbox{f_g}{}{n} \disteq{\varepsilon} \dbox{f_h}{}{n}$ be valid, that is $\mathsf{bd}_{\overline \partial}(\sem{f_g}, \sem{f_h}) \leq \varepsilon$.
	We will rely on \textsf{(Cont)} rule. In order to deduce that $\dbox{f_g}{}{n} \disteq{\varepsilon} \dbox{f_h}{}{n}$ we need to show that for all $\varepsilon' > \varepsilon$, we have that $\dbox{f_g}{}{n} \disteq{\varepsilon'} \dbox{f_h}{}{n}$ is derivable. Since $\varepsilon' > \varepsilon$, we have that $\mathsf{bd}_{\overline \partial}(\sem{f_g}, \sem{f_h}) < \varepsilon'$. Because of \Cref{cor:dist-chart-diag} and \Cref{cor:kleene_locally_finite}, we have that
	$$
		\inf_{p \in \N}\left\{\Phi^{(p)}_\beta(f_g, f_h)\right\} < \varepsilon'
	$$
	We will argue that there exists $p \in \N$, such that $\Phi^{(p)}_\beta(f_g,f_h) < \varepsilon'$. For the sake of an argument, assume that for all $p \in \N$, we have that $\Phi^{(p)}_\beta(f_g,f_h) \geq \varepsilon'$. This would make $\varepsilon'$ into the lower bound of the $\omega$-cochain $\left\{\Phi^{(p)}_\beta(f_g, f_h)\right\}_{p \in \N}$ and hence $\varepsilon' \leq 	\inf_{p \in \N}\left\{\Phi^{(p)}_\beta(f_g, f_h)\right\} < \varepsilon'$, which leads to contradiction. Combining that argument with \Cref{lem:approximation} allows us to conclude that $\dbox{f_g}{}{n} \disteq{\varepsilon'} \dbox{f_h}{}{n}$ is derivable, which completes the proof. 
\end{proof}
\completeness*
\begin{proof}
	Assume that $\dbox{f}{m}{n} \disteq{\varepsilon} \dbox{g}{m}{n}$ is valid.
	Recall that because of \Cref{lem:injections}, we have that
	\begin{gather*}
		\dbox{f}{m}{n} \disteq{0} 
\tikzset{x=1em, y=2.1ex}
\InputIfFileExists{d-decomposition.tikz}{}{\input{./tikz/d-decomposition.tikz}}
\tikzset{x=1em, y=1.5ex}
 \qquad \text{and}\qquad \dbox{g}{m}{n} \disteq{0} 
\tikzset{x=1em, y=2.1ex}
\InputIfFileExists{d-decomposition-r.tikz}{}{\input{./tikz/d-decomposition-r.tikz}}
\tikzset{x=1em, y=1.5ex}
  
	\end{gather*}
	Assume that $\sem{f_i} = N(s_i)$ and $\sem{g_i}=N(t_i)$ for $1 \leq i \leq m$. Because of \Cref{lem:distance_on_merge}, we have that $d^{1,n}(s_i, t_i) \leq \varepsilon$.
	We will consider the following diagram
	$$
\tikzset{x=1em, y=2.1ex}
\InputIfFileExists{comp-par-lr.tikz}{}{\input{./tikz/comp-par-lr.tikz}}
\tikzset{x=1em, y=1.5ex}
 \disteq{0} 
\tikzset{x=1em, y=2.1ex}
\InputIfFileExists{comp-par-states.tikz}{}{\input{./tikz/comp-par-states.tikz}}
\tikzset{x=1em, y=1.5ex}
$$
	Using \Cref{lem:fundamental}, we can show that each of the $\objr \to \objr^{m}$ subdiagrams is in the form allowing to use \Cref{lem:one-to-m-completeness}. In turn, that lemma allows to derive any valid equations between the subdiagrams mentioned above. In particular, we can derive that $\diagbox{f_i}{}{m} \disteq{\varepsilon} \diagbox{g_i}{}{m}$ for all $1 \leq i \leq m$. We can use \textsf{(Tens)} rule to derive
	$$
		
\tikzset{x=1em, y=2.1ex}
\InputIfFileExists{par-stack-completeness-l.tikz}{}{\input{./tikz/par-stack-completeness-l.tikz}}
\tikzset{x=1em, y=1.5ex}
 \disteq{\varepsilon} 
\tikzset{x=1em, y=2.1ex}
\InputIfFileExists{par-stack-completeness-r.tikz}{}{\input{./tikz/par-stack-completeness-r.tikz}}
\tikzset{x=1em, y=1.5ex}

	$$
	We can then apply \textsf{(Seq)} to postcompose co-copying to the diagrams above to obtain
	$$
	 
\tikzset{x=1em, y=2.1ex}
\InputIfFileExists{d-decomposition.tikz}{}{\input{./tikz/d-decomposition.tikz}}
\tikzset{x=1em, y=1.5ex}
 \disteq{\varepsilon}  
\tikzset{x=1em, y=2.1ex}
\InputIfFileExists{d-decomposition-r.tikz}{}{\input{./tikz/d-decomposition-r.tikz}}
\tikzset{x=1em, y=1.5ex}

	$$
	By previous reasoning and \textsf{(Triang)} rule this is the same as
	$$
	\diagbox{f}{m}{n} \disteq{\varepsilon} \diagbox{g}{m}{n}
	$$
	which completes the proof.
\end{proof}

\compact*
\begin{proof}
The lemma holds in any compact closed category and relies on the ability to bend wires using $
\tikzset{x=1em, y=2.1ex}
\InputIfFileExists{cap-down.tikz}{}{\input{./tikz/cap-down.tikz}}
\tikzset{x=1em, y=1.5ex}
$ and $
\tikzset{x=1em, y=2.1ex}
\InputIfFileExists{cup-down.tikz}{}{\input{./tikz/cup-down.tikz}}
\tikzset{x=1em, y=1.5ex}
$. Explicitly, given a diagram of the first form, we can obtain one of the second form as follows:
\begin{equation*}

\tikzset{x=1em, y=2.1ex}
\InputIfFileExists{wrong-way-left.tikz}{}{\input{./tikz/wrong-way-left.tikz}}
\tikzset{x=1em, y=1.5ex}
\quad \mapsto\quad 
\tikzset{x=1em, y=2.1ex}
\InputIfFileExists{bent-wires.tikz}{}{\input{./tikz/bent-wires.tikz}}
\tikzset{x=1em, y=1.5ex}

\end{equation*}
The inverse mapping is given by the same wiring with the opposite direction. That they are inverse transformations follows immediately from the defining axioms of compact closed categories (A1-A2). 
\begin{equation*}
 
\tikzset{x=1em, y=2.1ex}
\InputIfFileExists{bent-wires.tikz}{}{\input{./tikz/bent-wires.tikz}}
\tikzset{x=1em, y=1.5ex}
 \quad\mapsto\quad  
\tikzset{x=1em, y=2.1ex}
\InputIfFileExists{unbent-wires.tikz}{}{\input{./tikz/unbent-wires.tikz}}
\tikzset{x=1em, y=1.5ex}
 \myeq{A1} 
\tikzset{x=1em, y=2.1ex}
\InputIfFileExists{wrong-way-left.tikz}{}{\input{./tikz/wrong-way-left.tikz}}
\tikzset{x=1em, y=1.5ex}

\end{equation*}
The other bijection is constructed analogously. 
\end{proof}
Intuitively, Lemma~\ref{lem:compact} tells us that we can always bend incoming wires to the left and outgoing wires to the right to obtain a $\objr^m\to\objr^n$ diagram from any given diagram.
Let $S(f)\from\objr^m\to\objr^n$ be the diagram obtained by applying the bijections of Lemma~\ref{lem:compact} to a diagram $f\from v\to w$ until all the objects occurring in its domain and codomain are $\objr$. 
\begin{lemma}\label{lem:right-to-right}
Given two diagrams $f,g\from v\to w$, $S(f)=S(g)$ iff $f=g$. 
\end{lemma}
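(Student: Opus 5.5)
The plan is to show that $S$ is built by composing $f$ with fixed bending diagrams (caps and cups, plus symmetries). Equality in $\Syn$ is the smallest congruence containing the axioms, so it is preserved by composing with any fixed diagram. We then undo the bending with the inverse bijection of Lemma~\ref{lem:compact} and use \textsf{A1}--\textsf{A2}.

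Concretely, first fix for each type $v \to w$ the sequence of bijections used to define $S$. For definiteness, always bend the leftmost offending $\objl$ wire in the domain, then in the codomain, and repeat until none remain. Each single bending step from Lemma~\ref{lem:compact} has the form $f \mapsto \kappa_1 ; (\text{id} \oplus f \oplus \text{id}) ; \kappa_2$ (up to symmetries), where $\kappa_1,\kappa_2$ are fixed diagrams built from $\id$, $\sigma$ and one cup or cap. These diagrams depend only on the type $v\to w$, not on $f$. Composing the steps, $S(f) = K_1 ; (\id \oplus f \oplus \id) ; K_2$ for diagrams $K_1,K_2$ that depend only on $v,w$. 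The point to check here is that $S$ is well defined as a function on the quotient. This holds because the construction is a fixed context applied to $f$, and that is exactly what the congruence property requires.

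For the forward implication, assume $f = g$. Since $=$ is a congruence with respect to $;$ and $\oplus$, we get $K_1;(\id\oplus f\oplus\id);K_2 = K_1;(\id\oplus g\oplus\id);K_2$, that is, $S(f)=S(g)$.

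For the converse, let $T$ denote the composite of the inverse bijections of Lemma~\ref{lem:compact}, applied in reverse order. By the same reasoning, $T$ is also of the form $h \mapsto L_1;(\id\oplus h\oplus\id);L_2$ for fixed $L_1, L_2$, so $S(f)=S(g)$ implies $T(S(f)) = T(S(g))$. Lemma~\ref{lem:compact} states that each single step is a bijection, with inverse given by the opposite bending, and its proof shows $T(S(f)) = f$ using only the yanking axioms \textsf{A1}--\textsf{A2} and the SMC laws. Hence $f = T(S(f)) = T(S(g)) = g$.

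The main obstacle I expect is bookkeeping rather than any conceptual difficulty. We must be sure that the bending is a genuine fixed context, with the symmetries that move a wire to the end of the word chosen uniformly. We must also check that the inverse steps, when composed, yank correctly in the presence of these symmetries. For this I would argue by induction on the number of $\objl$ occurrences in $v$ and $w$, handling one wire at a time. At each step I would invoke naturality of symmetries and \textsf{A1}/\textsf{A2} exactly as in the proof of Lemma~\ref{lem:compact}.
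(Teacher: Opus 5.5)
Your proposal is correct and takes essentially the same route as the paper. Both arguments transport the equation through the fixed cup/cap bending context by congruence, then recover the original orientation by bending back and applying the yanking axioms \textsf{A1}--\textsf{A2}; your fixed contexts $K_1,K_2$ and inverse $T$ just make explicit what the paper's sketch leaves informal.
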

\begin{proof}
The idea is that, if $S(f)=S(g)$, we can always show that $f=g$ using a similar derivation, by simply applying the transformation of Lemma~\ref{lem:compact} before using the derivation that $S(f)=S(g)$, and then recover the original orientation of the wires by bending them back into their original place afterwards; and the same idea applies to show that $f=g$ implies $S(f)=S(g)$. 
\end{proof}
\generalcompleteness*
\begin{proof}
	It is not too hard to see that a mapping $\sem{f} \to \sem{S(f)}$ is an isometry. Namely, given a pair $f,g \colon v \to w$ we obtain $S(f)$ and $S(g)$ by postcomposing 
\tikzset{x=1em, y=2.1ex}
\InputIfFileExists{cap-down.tikz}{}{\input{./tikz/cap-down.tikz}}
\tikzset{x=1em, y=1.5ex}
 (or precomposing 
\tikzset{x=1em, y=2.1ex}
\InputIfFileExists{cup-down.tikz}{}{\input{./tikz/cup-down.tikz}}
\tikzset{x=1em, y=1.5ex}
) which because of \Cref{cor:sem_enriched} preserves distances. Hence, if $\diagbox{f}{v}{w} \disteq{\varepsilon} \diagbox{g}{v}{w}$ is valid, so is $\dbox{S(f)}{m}{n} \disteq{\varepsilon} \dbox{S(g)}{m}{n}$. The rest follows as a consequence of \Cref{lem:mncompleteness}.
\end{proof}

\end{document}